\documentclass[a4paper, reqno, 11pt]{amsart} 
\usepackage{graphicx} 

\usepackage{amssymb}

\usepackage{epsfig,bbm}  		
\usepackage{epic,eepic}         

\usepackage{bm}                 
\usepackage{amsfonts}
\usepackage{amsthm}
\usepackage{amsmath}
\usepackage{amscd}
\usepackage[latin2]{inputenc}
\usepackage{t1enc}
\usepackage[mathscr]{eucal}
\usepackage{indentfirst}
\usepackage{graphicx}
\usepackage{graphics}
\numberwithin{equation}{section}
\usepackage[margin=2.9cm]{geometry}
\usepackage{hyperref}
\usepackage{dsfont}
\usepackage{csquotes}
\usepackage{stmaryrd}
\usepackage{blindtext}
\usepackage{tikz-cd}
\usepackage{mathrsfs}
\usepackage{cite}
\usepackage{mathtools}
\usepackage{extarrows}
\usepackage{epstopdf}
\usepackage[T1]{fontenc}
\usepackage{braket}
\usepackage{yfonts}
\usepackage{array}
\usepackage{multirow}

\usepackage{tikz}
\usepackage{color}

\let\oldtocsection=\tocsection
 
\let\oldtocsubsection=\tocsubsection
 
\let\oldtocsubsubsection=\tocsubsubsection
 
\renewcommand{\tocsection}[2]{\hspace{0em}\oldtocsection{#1}{#2}}
\renewcommand{\tocsubsection}[2]{\hspace{1em}\oldtocsubsection{#1}{#2}}
\renewcommand{\tocsubsubsection}[2]{\hspace{2em}\oldtocsubsubsection{#1}{#2}}

\usepackage{scalerel,stackengine}
\stackMath
\newcommand\reallywidehat[1]{%
\savestack{\tmpbox}{\stretchto{%
  \scaleto{%
    \scalerel*[\widthof{\ensuremath{#1}}]{\kern.1pt\mathchar"0362\kern.1pt}%
    {\rule{0ex}{\textheight}}
  }{\textheight}%
}{2.4ex}}%
\stackon[-6.9pt]{#1}{\tmpbox}%
}

\theoremstyle{definition}
\newtheorem{definition}[equation]{Definition}
\newtheorem{example}[equation]{Example}
\newtheorem{proposition}[equation]{Proposition}
\newtheorem{lemma}[equation]{Lemma}
\newtheorem{theorem}[equation]{Theorem}
\newtheorem{remark}[equation]{Remark}
\newtheorem{corollary}[equation]{Corollary}

\newcommand{\sfgamma}{\mathsf{\Gamma}}
\newcommand{\cbrak}[1]{\llbracket #1 \rrbracket}
\newcommand{\ip}[1]{\langle #1 \rangle}
\newcommand{\cf}[1]{\mathcal{#1}}
\newcommand{\red}[1]{\underline{#1}}
\newcommand{\frk}[1]{\mathfrak{#1}}

\newcommand{\de}{\mathrm{d}}

\newcommand{\del}{\partial}

\newcommand{\e}{\mathrm{e}}
\newcommand{\ii}{\mathrm{i}}
\newcommand{\rmp}{\mathtt{p}}

\newcommand{\IZ}{\mathbbm{Z}}
\newcommand{\IC}{\mathbbm{C}}

\newcommand{\IN}{\mathbbm{N}}
\newcommand{\IR}{\mathbbm{R}}

\newcommand{\IT}{\mathbb{T}}

\newcommand{\IV}{\mathbb{V}}
\newcommand{\IW}{\mathbb{W}}

\newcommand{\frg}{\mathfrak{g}}
\newcommand{\frf}{\mathfrak{f}}

\newcommand{\frh}{\mathfrak{h}}

\newcommand{\frd}{\mathfrak{d}}

\renewcommand{\Im}{\ensuremath{\mathrm{im}}}

\newcommand{\Cl}{\mathsf{Cl}}
\newcommand{\gric}{\mathrm{GRic}}

\def\e{{\,\rm e}\,}

\newcommand{\cN}{{\mathcal N}}

\newcommand{\cS}{{\mathcal S}}
\newcommand{\cB}{{\mathcal B}}

\newcommand{\cH}{{\mathcal H}}

\newcommand{\cQ}{{\mathcal Q}}
\newcommand{\cR}{{\mathcal R}}

\newcommand{\cF}{{\mathcal F}}

\newcommand{\cK}{{\mathcal K}}

\newcommand{\cG}{{\mathcal G}}

\newcommand{\cJ}{{\mathcal J}}

\newcommand{\ccR}{{\mathscr R}}

\newcommand{\ccG}{{\mathscr G}}

\newcommand{\sfD}{{\mathsf{D}}}
\newcommand{\sfS}{{\mathsf{S}}}
\newcommand{\sfs}{{\mathsf{s}}}

\newcommand{\sfT}{{\mathsf{T}}}

\newcommand{\sfH}{{\mathsf{H}}}
\newcommand{\sfG}{{\mathsf{G}}}

\newcommand{\sfO}{{\mathsf{O}}}

\newcommand{\sfU}{{\mathsf{U}}}
\newcommand{\sfGamma}{{\mathsf{\Gamma}}}
\newcommand{\sfOmega}{{\mathsf{\Omega}}}
\newcommand{\sfCl}{\mathsf{Cl}}

\newcommand{\unit}{\mathds{1}}   			
\newcommand{\midwedge}{\text {\Large$\wedge$}}

\newcommand{\midodot}{\text {\Large$\odot$}}

\usepackage{enumitem}

\newcommand{\gr}{\text{gr}}
\newcommand{\ann}{\mathrm{Ann}}
\newcommand{\rel}{\dashrightarrow}

\newcommand{\sfgammabas}{\sfgamma_{\text{bas}}}

\newcommand{\rk}{\mathrm{rk}}

\newcommand{\ad}{\mathsf{ad}}

\renewcommand{\div}{{\rm div}}

\newtheoremstyle{case}{}{}{}{}{}{:}{ }{}
\theoremstyle{remark}

\makeatletter
\newcommand{\oset}[3][0ex]{%
  \mathrel{\mathop{#3}\limits^{
    \vbox to#1{\kern-3\ex@
    \hbox{$\scriptstyle#2$}\vss}}}}
\makeatother

\makeatletter
\newcommand{\osett}[3][0ex]{%
  \mathrel{\mathop{#3}\limits^{
    \vbox to#1{\kern-2\ex@
    \hbox{$\scriptstyle#2$}\vss}}}}
\makeatother

\makeatletter
\newcommand{\osettt}[3][0ex]{%
  \mathrel{\mathop{#3}\limits^{
    \vbox to#1{\kern-1\ex@
    \hbox{$\scriptstyle#2$}\vss}}}}
\makeatother

\makeatletter

\DeclareRobustCommand\mathflip[1]{%
    \mathpalette\@mathflip{#1}%
}

\newcommand\@mathflip[2]{%
    \mskip4mu
    \pdfsave
    \pdfsetmatrix{-1 0 .5 1}%
    \hb@xt@\z@{\hss$\m@th#1 #2$\hss}%
    \pdfrestore
    \mskip7mu
}

\DeclareRobustCommand\mathflipnu[1]{%
    \mathpalette\@mathflipnu{#1}%
}

\newcommand\@mathflipnu[2]{%
    \mskip2mu
    \pdfsave
    \pdfsetmatrix{-1 0 .5 1}%
    \hb@xt@\z@{\hss$\m@th#1 #2$\hss}%
    \pdfrestore
    \mskip8mu
}

\DeclareRobustCommand\mathflips[1]{%
    \mathpalette\@mathflips{#1}%
}

\newcommand\@mathflips[2]{%
    \mskip4mu
    \pdfsave
    \pdfsetmatrix{-1 0 0 1}%
    \hb@xt@\z@{\hss$\m@th#1 #2$\hss}%
    \pdfrestore
    \mskip7mu
}

\makeatother

\mathtoolsset{showonlyrefs}

\title{Generalised Complex and Spinor Relations}

\author[T.~C.~De Fraja]{Thomas C.~De Fraja}
\address[Thomas C.~De Fraja]
{Department of Mathematics and Maxwell Institute for Mathematical
  Sciences\\ Heriot-Watt
  University\\ Edinburgh EH14 4AS\\ United Kingdom}
\email{tcd2000@hw.ac.uk}

\author[V.~E.~ Marotta]{Vincenzo Emilio Marotta}
\address[Vincenzo Emilio Marotta]
{{Dipartimento di Matematica e Geoscienze, Universit\`{a} di Trieste,   Via A. Valerio 12/1, 34127 Trieste, Italy}}
\email{vincenzoemilio.marotta@units.it}

\author[R.~J. Szabo]{Richard J.~Szabo}
  \address[Richard J.~Szabo]
  {Department of Mathematics and Maxwell Institute for Mathematical Sciences\\
  Heriot-Watt University\\
  Edinburgh EH14 4AS \\
  United Kingdom}
  \email{R.J.Szabo@hw.ac.uk}

\date{}

\begin{document}

\begin{abstract}
    Courant algebroid relations are used to define notions of relations between Dirac structures and spinors. It is shown under which circumstances a spinor relation gives a Courant algebroid relation and how it descends to a relation between Dirac structures. A converse to this result is proved: a T-duality relation induces a spinor relation that links the Dirac generating operators defining  T-dual Courant algebroids, generalising the isomorphism of twisted cohomology associated with topological T-duality. We  introduce the notion of relation between generalised complex structures and  characterise their reduction. We also define relations between generalised K{\"a}hler structures, and rephrase them in terms of bi-Hermitian structures which induce T-duality relations between  $\cN=(2,2)$ supersymmetric sigma-models. We prove existence results for T-dual structures, and demonstrate  compatibility of T-duality relations with Type~II supergravity equations.
\end{abstract}

\maketitle

{\baselineskip=12pt
\tableofcontents
}

\section{Introduction}
Courant algebroids, as introduced in~\cite{Courant1990, Weinstein1997, Severa-letters, Hitchin:2003cxu}, have  proved to be an indispensable geometric tool for understanding many aspects of supergravity and string theory, see e.g.~\cite{Coimbra:2011nw,Jurco2016courant} and references therein, as well as aiding in the understanding of Hamiltonian reduction and its various generalisations, see e.g.~\cite{courant1988beyond,bursztyn2007dirac, LiBland2009,vsevera2001poisson}. In this paper we continue our explorations~\cite{DeFraja:2023fhe,deFraja2025Ricci} of the uses of Courant algebroid relations in providing geometric formulations of T-duality and its  generalisations in string theory. To put the content of the present paper into context, let us start by briefly summarising the history of Courant algebroids following~\cite{kosmann2013courant}.

\medskip

\subsection{Courant Algebroids}~\\[5pt]
\label{sub:CAhistory}
Let $M$ be a smooth manifold. Courant algebroids were first described by Courant and Weinstein in~\cite{courant1988beyond,Courant1990} in their search for integrability conditions for certain subbundles, called \emph{Dirac structures}, of the bundle $TM \oplus T^*M$ endowed with a natural skew-symmetric bracket arising from the Lie bracket of vector fields and the Lie derivative of one-forms along vector fields, which was subsequently called the \emph{Courant bracket}.
They were looking for a suitable geometric description of Dirac's theory of constrained dynamical systems. 
These bundles provide a natural framework for unifying Poisson and (pre)symplectic structures.\footnote{The graphs of Poisson bivectors and presymplectic two-forms define isotropic subbundles of $TM\oplus T^*M$, with respect to the canonical pairing between vectors and one-forms, making them key examples of  Dirac structures.} At the same time,  Dorfman was independently studying Dirac structures from a more algebraic perspective~\cite{dorfman1987dirac}, seeking the same integrability condition. The integrability condition they sought is given by closure under the Courant bracket --- analogous to the integrability of distributions of the tangent bundle $TM$ equipped with the Lie bracket --- which describes the integrability of both Poisson and symplectic structures.

After the emergence of Lie algebroids as the infinitesimal counterparts of Lie groupoids, MacKenzie and Xu \cite{mackenzie1994lie} defined  Lie bialgebroids as the infinitesimal counterparts of Poisson groupoids. Since the double of a Lie bialgebra is a Drinfel'd double \cite{drinfel1988quantum}, this naturally led Liu, Weinstein and Xu to consider how this correspondence generalises to the algebroid setting, i.e. they sought to describe the double of a Lie bialgebroid. In~\cite{Weinstein1997} they found this to be a Courant algebroid, and the first axiomatic definition of a Courant algebroid was given therein: A Courant algebroid is a vector bundle $E$ over a manifold $M$, equipped with a bracket on its space of sections, a symmetric non-degenerate pairing, and an \emph{anchor map} taking sections of $E$ to vector fields on $M$, together satisfying a  set of compatibility conditions.

The definition of Courant algebroids subsequently evolved with Roytenberg's introduction of the non-skew-symmetric Dorfman bracket \cite{roytenberg1999courant}, denoted by $\cbrak{ \, \cdot \, , \, \cdot \,}$, replacing the skew-symmetric Courant bracket, where skew-symmetry is sacrificed in order to satisfy the Jacobi identity. The Dorfman bracket became the defining bracket for Courant algebroids, and the number of defining axioms was slowly whittled down from five to three \cite{uchino2002remarks,kosmann2005quasi}.

As these developments occurred, other manifestations of Courant algebroids emerged, including a one-to-one correspondence with a certain type of differential graded (dg) symplectic manifolds \cite{roytenberg2002structure}. 

More relevant to the subject of the present paper, in his letters to Weinstein \cite{Severa-letters}, \v Severa explains how Courant algebroids appear in the variational problems for two-dimensional non-linear sigma-models. Also relevant to this work, Kosmann-Schwarzberg and Roytenberg \cite{Kosmann-Schwarzbach:2003skv, roytenberg2002structure} showed how one can recover the Courant bracket as a ``derived bracket'' by introducing spinors associated with the pseudo-Euclidean vector bundle structure of Courant algebroids. 

Spinors for the generalised tangent bundle $$\IT M := TM \oplus T^*M$$ are given by sections of $\midwedge^\bullet T^*M$ and the Clifford action of sections of $TM \oplus T^*M$ is induced by the contraction with the vector field component and the exterior product with the one-form component:
\begin{align}\label{eqn:Cl-action}
    (X + \alpha) \cdot \omega = \iota_X \omega + \alpha \wedge \omega \ .
\end{align}
Then the Dorfman bracket on $TM \oplus T^*M$ is the derived bracket given by two nested commutators of the de Rham differential with this Clifford action:
\begin{align}
    \cbrak{X + \alpha, Y +\beta} \cdot \omega \coloneqq [[\de \, , (X+\alpha) \,\cdot\,], (Y + \beta) \,\cdot\,] \, \omega \ ,
\end{align}
for all $X+ \alpha, Y + \beta \in \sfgamma(TM \oplus T^*M)$ and $\omega \in \sfOmega^\bullet(M)$. 
This bracket can also be twisted by replacing $\de$ with the twisted differential $$\de_H = \de + H \, \wedge \ , $$ where $H$ is a closed three-form.

The role of spinors in generalised geometry is two-fold. On the physics side, the spinor fields define the Ramond--Ramond sector of Type II supergravity, and the twisted differential defines the integrability condition for these fields. This, among other structures placed on the Courant algebroid, will allow for the complete geometric description of T-duality for the full bosonic sector of Type II supergravity.
On the mathematical side, they give an equivalent description of generalised complex geometry, as introduced by Gualtieri in his thesis~\cite{gualtieri:tesi}. 

\medskip

\subsection{Generalised Complex Geometry}~\\[5pt]
Generalised complex geometry concerns the generalisation of complex structures on $TM$ to the bundle $T M \oplus T^*M$; it allows for the unification of symplectic and complex structures in the framework of Courant algebroids. These are the two structures that are interchanged in mirror symmetry, which has a very similar flavour to the T-duality of string theory. For instance, the Strominger-Yau-Zaslow (SYZ) conjecture~\cite{strominger1996mirror} asserts that mirror symmetry has a geometric incarnation as T-duality. This is expanded on in \cite{ben2004mirror} using the language of generalised complex geometry, or more precisely generalised K\"ahler geometry \cite{gualtieri2014generalized} and generalised Calabi-Yau geometry \cite{Hitchin:2003cxu}. 
Though we do not discuss mirror symmetry in detail in this paper, generalised complex geometry is an important tool for studying it, and thus a concrete mathematical framework for relating generalised complex structures is an important step. 

Generalised K{\"a}hler structures can be introduced by mirroring the definition of K{\"a}hler structure on the vector bundle $TM \oplus T^*M$.  
Gualtieri \cite{gualtieri:tesi} then defined them as described above and proved the remarkable result that any generalised K{\"a}hler structure corresponds to a bi-Hermitian structure.   
Generalised complex and K{\"a}hler geometries arise naturally in the context of two-dimensional supersymmetric sigma-models. In fact, generalised K{\"a}hler structures were discovered (in the language of bi-Hermitian structures) in the pioneering work \cite{Gates:1984nk} characterising the target spaces which support $\cN=(2,2)$  sigma-models.

A non-linear sigma-model in two dimensions with $\cN=(2,2)$ supersymmetry has a pair of $\cN=2$ supersymmetries (right-handed and left-handed). It is shown in \cite{Zabzine:2005qf} that one of the $\cN=2$ supersymmetry generators can always be written in terms of a generalised complex structure on the (twisted) standard Courant algebroid $\IT M$ on the target space $M$ of the sigma-model. Moreover, $\cN=2$ supersymmetry exists if and only if $M$ admits a generalised complex structure \cite{Zabzine:2005qf}.
It was shown by Gates, Hull and Ro{\v c}ek in \cite{Gates:1984nk} that the non-linear sigma-model is invariant under $\cN=(2,2)$ supersymmetry transformations when the target space $M$ admits a generalised K{\"a}hler structure which is defined by the generalised complex structure associated with one of the supersymmetry generators and the generalised metric appearing in the action functional of the sigma-model.

\medskip

\subsection{Courant Algebroid Relations}~\\[5pt]
Courant algebroid relations are inspired by the arrows in Weinstein's symplectic ``category'', which are given by Lagrangian correspondences. Symplectic manifolds $(M_i, \omega_i)$ with symplectic two-forms $\omega_i$ for $i=1,2$ are in \emph{Lagrangian correspondence} if there is a Lagrangian submanifold $L$ of the symplectic manifold $(M_1 \times M_2, \rmp_1^*\, \omega_1 - \rmp_2^*\,\omega_2)$:\footnote{Here we use the canonical projections $\rmp_i \colon M_1 \times M_2 \to M_i$, for $i=1,2$.}
\begin{align}\label{eqn:Lagcorresp}
    L \ \subset \ M_1 \times M_2  \qquad \text{and} \qquad (\rmp_1^* \, \omega_1 - \rmp_2^*\,\omega_2)\big|_L = 0\ .
\end{align}
For example, the graph $\gr(\varphi) \subset M_1 \times M_2$ of a symplectomorphism $\varphi \colon (M_1, \omega_1) \to (M_2, \omega_2)$ always gives a Lagrangian correspondence. 

If a Lie group $\sfG$ with Lie algebra $\frg$ has a Hamiltonian action on a symplectic manifold $(M,\omega)$ with moment map $\mu \colon M \to \frg^*$, then $\mu^{-1}(0)$ gives a Lagrangian correspondence between $(M,\omega)$ and its symplectic reduction $(\mu^{-1}(0)/\sfG, \red \omega)$. Symplectic reduction can be generalised to any Poisson manifold $P$ by considering a Lie group-valued moment map $\mu \colon P \to \sfG^*$, where $\sfG$ is now a Poisson-Lie group and $\sfG^*$ integrates the dual space $\frg^*$ \cite{lu1991momentum}. 
This can be generalised further to quasi-Poisson manifolds, as introduced in \cite{alekseev1998lie}, where the Schouten-Nijenhuis bracket of the quasi-Poisson bivector is given by a trivector rather than vanishing as in the Poisson case. When the action of $\sfG$ is `quasi-Hamiltonian', the moment map is then a map $\mu \colon P \to \sfD/\sfG$, where $(\frd, \frg)$ is a Manin pair\footnote{A Manin pair $(\frd, \frg)$ consists of a Lie algebra $\frd$ with symmetric pairing of split signature, and a Lagrangian subalgebra $\frg$.} and $\sfD$ integrates $\frd$ \cite{alekseev2000manin,bursztyn2007dirac}. 

As mentioned in Section~\ref{sub:CAhistory}, the theory of Manin pairs (or equivalently Lie bialgebras) can be encapsulated by the study of pairs $(E,L)$, where $E$ is a Courant algebroid equipped with a Dirac structure $L\subset E$.\footnote{The usual notion of a Manin pair is obtained by considering Courant algebroids over a point.} Thus an adequate theory of quasi-Poisson reduction should resemble the Lagrangian correspondence of symplectic reduction, replacing the symplectic reduction by Courant algebroid reduction. In \cite{LiBland2009, bursztyn2008courant} this reduction was described as a \emph{Courant morphism}, defined as a Dirac structure in $E \times \overline{\red E}$ supported on the graph of a function between the underlying manifolds. Here $\red E$ is the reduced Courant algebroid and $\overline{\red E}$ flips the sign of the pairing on $\red E$, similar to flipping the sign of $\omega_2$ in the Lagrangian correspondence in \eqref{eqn:Lagcorresp}. 
Courant algebroid relations are the generalisation of Courant morphisms to cases when there is no map between the spaces on which the Courant algebroids are fibred over. Thus the theory of quasi-Hamiltonian reduction is encompassed by Courant algebroid relations. 

{\v S}evera argues that T-duality should be an \enquote{almost isomorphism} between two quasi-Hamiltonian spaces,  with the quasi-Hamiltonian spaces being obtained through quasi-Hamiltonian reduction, and the \enquote{almost isomorphism} described by a Courant algebroid relation \cite{Severa-unpubl}. In \cite{DeFraja:2023fhe, Vysoky2020hitchiker} the description of T-dualities in terms of Courant algebroid relations is completed and extended. In particular, T-duality is described in \cite{DeFraja:2023fhe} as a Courant algebroid relation $R$ between $\red E{}_1 \to \cQ_1$ and $\red E{}_2 \to \cQ_2$, where the Courant algebroids $\red E{}_1$ and $\red E{}_2$ come from the reduction of a Courant algebroid $E \to M$. The base manifold $M$ is assumed to be foliated by foliations $\cF_1$ and $\cF_2$ with respect to which the reductions are performed, with $\cQ_1=M/\cF_1$ and $\cQ_2=M/\cF_2$. The main result of \cite{DeFraja:2023fhe} is that, given a generalised metric on $\red E{}_1$ satisfying some invariance conditions, there exists a unique generalised metric on $\red E{}_2$ such that the Courant algebroid relation $R$ behaves like an ``isometry'' between them. This fully encodes the well-known T-duality prescription for non-linear sigma-models of closed bosonic strings.

\medskip

\subsection{The Fourier-Mukai Transform}~\\[5pt]
\label{sub:TDualityFM_intro}
In Section~\ref{sub:CAhistory} we encountered the space of spinors for the exact Courant algebroid $\IT M$, namely the module $\midwedge^\bullet T^*M$ over the Clifford algebra $\sfCl(\IT M)$. Consider Type~II supergravity with $H$-flux given by a closed three-form $H\in \sfOmega_{\tt cl}^3(M)$. The Ramond--Ramond fluxes are spinors $F \in \sfOmega^\bullet(M)$ satisfying the Bianchi identity
\begin{align}
    \de F + H \wedge F = 0 \ .
\end{align}
As before, the $H$-twisted exterior derivative is $\de_H = \de + H \, \wedge\,$,  
thus the Ramond--Ramond fluxes are closed under $\de_H$, and hence of interest is their (twisted) cohomology class in the cohomology group $\sfH^\bullet(M,H)$ of the differential complex $(\sfOmega^\bullet(M), \de_H)$.

The Fourier-Mukai transform characterises T-duality as an isomorphism of twisted cohomology of torus bundles. Let $\sfT^k$ be a torus of dimension $k$.
In \cite{Bouwknegt2003topology, cavalcanti2011generalized} T-duality is formulated for two principal $\sfT^k$-bundles $\cQ_1$ and $\cQ_2$ over a common base manifold $\cB$. This allows for the introduction of the \emph{correspondence space} $$M = \cQ_1 \times_\cB \cQ_2 \ , $$ with respective projections $\varpi_1$ and $\varpi_2$ to $\cQ_1$ and $\cQ_2$. The correspondence space is a principal $\sfT^{2k}$-bundle over $\cB$ which sits in the commutative diagram 
\begin{equation}\label{eqn:correspondence}
\begin{tikzcd}[row sep = 1cm]
 & M \arrow[dl,"\varpi_1",swap] \arrow[dr,"\varpi_2"] & \\
 \cQ_1 \arrow[dr] &  & \cQ_2 \arrow[dl] \\ 
 & \cB &
\end{tikzcd}
\end{equation}

Suppose that both $\cQ_1$ and $\cQ_2$ are endowed with  $\sfT^k$-invariant closed three-forms $\red H{}_1$ and $\red H{}_2,$ respectively.
Then $(\cQ_1,\red H{}_1)$ and $(\cQ_2,\red H{}_2)$ are said to be T-dual if 
\begin{align}
 \varpi_1^*\, \red H{}_1 - \varpi_2^*\, \red H{}_2 = \de B \ ,
\end{align}
for some $\sfT^{2k}$-invariant two-form $B \in \mathsf{\Omega}^2_{\sfT^{2k}}(M)$ whose restriction to $\ker(\varpi_{1*}) \otimes \ker(\varpi_{2 *})$ is non-degenerate. 

The Fourier-Mukai transform $$\varrho \colon \sfOmega^\bullet_{\sfT^k}(\cQ_1) \longrightarrow \sfOmega^\bullet_{\sfT^k}(\cQ_2)$$ is then defined by
\begin{align}\label{eqn:FMTintroduction}
 \varrho(\omega) \coloneqq \int_{\sfT^k}\, \e^{B}\, \wedge \varpi_1^*\,\omega 
\end{align}
for any \smash{$\omega \in \mathsf{\Omega}^\bullet_{\sfT^k}(\cQ_1)$}, where the fibrewise integration is the pushforward of forms by the projection $\varpi_{2} \colon M\to\cQ_2$; this integration is well-defined because the fibres are compact manifolds. It  is shown in~\cite{Bouwknegt2003topology} that $\varrho$ defines a degree-shifting isomorphism between the twisted differential complexes \smash{$\big(\mathsf{\Omega}^\bullet_{\sfT^k}(\cQ_1) , \de_{\red H{}_1}\big)$} and 
\smash{$\big(\mathsf{\Omega}^\bullet_{\sfT^k}(\cQ_2) , \de_{\red H{}_2}\big)$}. It describes the transformation of Ramond--Ramond fields in Type~II string theory under fibrewise T-duality.
    
The Fourier-Mukai transform $\varrho$ becomes an isomorphism of irreducible Clifford modules once an isomorphism 
\begin{align}
    \mathscr{R} \colon \mathsf{\Gamma}_{\sfT^k}(\IT \cQ_1) \longrightarrow \mathsf{\Gamma}_{\sfT^k}(\IT\cQ_2)    
\end{align}
 of $C^\infty(\cB)$-modules is chosen such that
\begin{align}\label{eqn:Cliffordmoduleiso}
 \varrho(e \cdot \omega) = (-1)^k \, \mathscr{R}(e) \cdot \varrho(\omega) \ ,   
\end{align}
for any \smash{$e \in \mathsf{\Gamma}_{\sfT^k}(\IT\cQ_1)$} and \smash{$\omega \in \mathsf{\Omega}^\bullet_{\sfT^k}(\cQ_1),$} where $e \cdot \omega$ is the Clifford action defined in~\eqref{eqn:Cl-action}. In~\cite{cavalcanti2011generalized} this isomorphism was shown to exist and to be unique.

As shown in \cite{DeFraja:2023fhe}, applying the construction of the T-duality relation outlined earlier, where the foliations $\cF_1$ and  $\cF_2$ are generated by the two $\sfT^k$-actions on $M = \cQ_1 \times_\cB \cQ_2$, we obtain a Courant algebroid relation $R \colon \IT \cQ_1 \rel \IT \cQ_2$. In terms of $\sfT^k$-invariant sections of $\IT \cQ_i$, this recovers $\mathscr{R}$. From there we can consider the Clifford algebra  $\Cl(R)$ generated by $R$. Sections of $\Cl(R)$ act on $\sfOmega^\bullet(\cQ_1) \times \sfOmega^\bullet(\cQ_2)$.\footnote{Since $R \subset \IT \cQ_1 \times \overline{\IT \cQ_2}$ is isotropic, the Clifford algebra $\sfCl(R)$ is defined by considering $R\subset \IT \cQ_1 \times \IT \cQ_2$. The isotropic condition ensures that the Clifford action is well defined.} If $(e_1, e_2) \in R$, then by \eqref{eqn:Cliffordmoduleiso} it follows that
\begin{align}
    (e_1, e_2) \cdot \big(\omega, \varrho(\omega)\big) :=  \big(e_1 \cdot \omega, (-1)^k\,e_2 \cdot \varrho(\omega)\big) = \big(e_1 \cdot \omega, \varrho( e_1 \cdot \omega)\big) \ \in \ \gr(\varrho) \ .
\end{align}
Thus the graph $\gr(\varrho)\subset\sfOmega_{\sfT^k}^\bullet(\cQ_1) \times \sfOmega_{\sfT^k}^\bullet(\cQ_2)$ is invariant under this action, and hence defines a spinor representation. One can show that this is an irreducible representation. 

This motivates the definition of an \emph{$R$-Clifford relation}, for an arbitrary Courant algebroid relation $R \colon E_1 \rel E_2$. Here $E_i$ are Courant algebroids with spinor bundles $\sfS_{E_i}$ for $\Cl(E_i)$, for  $i=1,2$. An $R$-Clifford relation is then a subbundle $\sfS_R \subset \sfS_{E_1} \times \sfS_{E_2}$ providing an irreducible representation for $\Cl(R)$, i.e. $\sfS_R$ is a spinor bundle for $\Cl(R)$.

\medskip

\subsection{Outline and Summary of Main Results}~\\[5pt]
Let us briefly outline the structure of this paper. In \textbf{Section \ref{sect:section2}} we review the language of Clifford algebras over finite-dimensional vector spaces and introduce the notion of relation between spinors,  linking it to the notion of relation between quadratic vector spaces. \textbf{Section \ref{sect:section3}} is dedicated to the basics of Courant algebroids with particular emphasis on the description of Dirac structures, generalised complex structures and generalised K{\"a}hler structures. In \textbf{Section \ref{sect:section4}} we turn to the description of Courant algebroid relations and introduce a notion of relation between Dirac structures, generalising existing notions of morphisms of Dirac structures. We then describe in detail the application of Courant algebroid relations to T-duality. 
\textbf{Section \ref{sect:section5}} tackles the notion of Courant algebroid relation from the spinor standpoint. The notion of Clifford algebra relation is introduced and used to describe related Dirac generating operators. This is then applied to the case of T-duality, in order to extend some of its properties to the setting of \cite{DeFraja:2023fhe}, and supergravity. In \textbf{Section \ref{sect:section6}} Courant algebroid relations for generalised complex structures are introduced and their applications to reduction and T-duality are discussed. Similarly, in \textbf{Section \ref{sect:section7}} the notion of relation between generalised K{\"a}hler structures is introduced and its main example in terms of T-duality is given, together with explicit examples. \textbf{Appendix~\ref{appendix}} discusses some features of fibred integration that are used in the calculations of the main text.

Let us now describe the main results of this paper in more detail, glossing over several technical points which are addressed in the main text.
We start off in {Section \ref{sect:section2}} by introducing a notion of relation between spinors on quadratic vector spaces, i.e. linear Clifford relations. 
We then transfer this notion to pseudo-Euclidean vector bundles and find its connection with Courant algebroid relations, which is the main topic of {Section \ref{sect:section4}}. Since Dirac structures are naturally characterised as spinor line bundles, we also find a suitable notion of relation between Dirac structures descending from the Courant algebroid relation between their ambient bundles.  

Let $R$ be an isotropic subbundle of the product of anchored pseudo-Euclidean vector bundles $E_1 \to M_1$ and $E_2 \to M_2$ supported on $C \subseteq M_1 \times M_2$ with $\rho(R^\perp) \subseteq TC$. By taking the intersection of $R$ with $E_1$ times the zero section of $E_2$ we may define the kernel of $R$ and denote it by ${\rm K}_R$. We similarly define the cokernel of $R$ and denote it by ${\rm CK}_R$. 

\begin{definition}[\textbf{Definition \ref{def:Diracrel}}]
   Let $R \colon E_1 \dashrightarrow E_2$ be a Courant algebroid relation supported on a submanifold $C \subseteq M_1 \times M_2$.  Let $L_1$ and $L_2$ be Dirac structures for $E_1$ and $E_2,$ respectively. The restriction of $R$ to $L_i$ for $i=1,2$ is given by
\begin{align}
R \big\rvert_{L_i} \coloneqq \set{(e_1, e_2) \in R \ | \ e_i \in L_i} \ .  
\end{align}
Then $R$ is a \emph{relation between the Dirac structures $L_1$ and $L_2$}, or a \emph{Dirac relation}, if
\begin{align}
R \big\rvert_{L_1} = R\cap (L_1 \times L_2) \big\rvert_C + {\rm CK}_R \qquad \text{and} \qquad  R\big\rvert_{L_2} = R\cap (L_1 \times L_2) \big\rvert_C + {\rm K}_R  
\end{align}
are regular vector subbundles supported on $C.$ In that case we write $L_1\sim_R L_2$.
\end{definition}

We also show that this definition extends the previous notions of ``morphism'' between Dirac structures, such as morphisms of Manin pairs, and that it fits the description of reduction of Dirac structures.

We then turn to a description of T-duality between Dirac structures as encompassed by

\begin{proposition}[\textbf{Proposition \ref{prop:TdualDirac}}]
  Under the assumptions of \cite[Theorem 5.27]{DeFraja:2023fhe}, let $\red L{}_1$ be an invariant Dirac structure on $\red E{}_1$. Then there exists a unique Dirac structure $\red L{}_2$ on $\red E{}_2$ such that the T-duality relation $R:\red E{}_1\rel\red E{}_2$ is a Dirac relation.  
\end{proposition}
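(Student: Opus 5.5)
The natural approach is to exploit the structure of the T-duality relation from \cite[Theorem 5.27]{DeFraja:2023fhe}. There $R \colon \red E{}_1 \rel \red E{}_2$ is built from a Courant algebroid $E \to M$ and two reductions by $\cF_1$ and $\cF_2$. Under those hypotheses $R$ is supported on $C = M$, embedded via $(\varpi_1,\varpi_2)$ in $\cQ_1\times\cQ_2$. On invariant (basic) sections, $R$ is the graph of an isomorphism of pseudo-Euclidean bundles
\begin{align}
\mathscr{R} \colon \varpi_1^*\red E{}_1 \longrightarrow \varpi_2^*\red E{}_2
\end{align}
over $M$, obtained by composing the identifications of both pullbacks with $K_1^\perp/K_1 \simeq K_2^\perp/K_2$, where $K_i \subset E$ are the isotropic subbundles defining the reductions. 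The first step is to record that in this situation ${\rm K}_R = 0$ and ${\rm CK}_R = 0$. Indeed, the condition of \cite[Theorem 5.27]{DeFraja:2023fhe}, which is the analogue of the non-degeneracy of $B$ on $\ker(\varpi_{1*})\otimes\ker(\varpi_{2*})$, says exactly that $K_1\cap K_2^\perp \subseteq K_2$ and conversely. With trivial kernel and cokernel, Definition~\ref{def:Diracrel} reduces to the requirement that $R|_{L_1} = R\cap(L_1\times L_2)|_C = R|_{L_2}$ be a regular subbundle. Since $R$ is a graph, this means $\mathscr{R}(\varpi_1^*\red L{}_1) = \varpi_2^*\red L{}_2$.

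For existence, I would set $L' := \mathscr{R}(\varpi_1^*\red L{}_1)$. This is a subbundle of $\varpi_2^*\red E{}_2$ over $M$. It is Lagrangian because $\mathscr{R}$ is an isometry. It is involutive because $\mathscr{R}$ intertwines the Dorfman brackets on invariant sections: this is the bracket compatibility of $R$, i.e. $R$ is a Courant algebroid relation, and $\red L{}_1$ is spanned locally by invariant sections. The next step is to show that $L'$ descends along $\varpi_2$. Concretely, the fibre of $L'$ must depend only on the point of $\cQ_2$, that is, $L'$ must be invariant along the leaves of $\cF_2$. I would argue this by lifting $\red L{}_1$ to the subbundle $\widetilde L := K_1 + \natural_1^{-1}(\varpi_1^*\red L{}_1) \cap K_1^\perp$ of $E$. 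Then $L'$ corresponds to $(\widetilde L\cap K_2^\perp + K_2)/K_2$. Invariance of $\red L{}_1$ (the hypothesis) means $\widetilde L$ is preserved by the Dorfman action of the basic sections generating $\cF_2$, namely $\sfgamma(K_2)$ in the setting of \cite{DeFraja:2023fhe}. This gives the $\cF_2$-basic property, so $L' = \varpi_2^*\red L{}_2$ for a unique Lagrangian subbundle $\red L{}_2$. Involutivity of $\red L{}_2$ then follows because its basic sections pull back to sections of $L'$, and the reduced bracket is computed on basic representatives.

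Uniqueness is immediate. Any $\red L{}_2'$ with $\red L{}_1\sim_R \red L{}_2'$ must satisfy $\varpi_2^*\red L{}_2' = \mathscr{R}(\varpi_1^*\red L{}_1)$, and $\varpi_2$ is a surjective submersion, so pullback is injective on subbundles.

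I expect the main obstacle to be the descent step, i.e. showing that $\mathscr{R}(\varpi_1^*\red L{}_1)$ is $\cF_2$-invariant. This is where the invariance assumption on $\red L{}_1$ must be used carefully, in particular the requirement that $K_1$ and $K_2$ be compatible, with the brackets of $K_2$-sections preserving $K_1^\perp$. It may also require checking that $\widetilde L\cap K_2^\perp$ has constant rank, which I would derive from transversality $K_1\cap K_2^\perp\subseteq K_2$.
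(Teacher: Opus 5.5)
Your proposal is correct and is essentially the paper's argument. The paper's proof simply defers to the construction of the T-dual generalised metric in the proof of \cite[Theorem~5.27]{DeFraja:2023fhe}, and your scheme is that same construction adapted to Dirac structures: trivial ${\rm K}_R$ and ${\rm CK}_R$ turn the Dirac-relation condition into a pointwise graph condition, you lift $\red L{}_1$ to $\widetilde L=\natural_1^{-1}(\varpi_1^*\red L{}_1)$ and reduce by $K_2$, and uniqueness follows from surjectivity of $\varpi_2$. When you write it out, make the descent step precise: combine the compatibility $\cbrak{\sigma_2(X),e}=\ad^{\sigma_1}_X e$ for $X\in\frk{k}_2$ with the $D_1$-invariance of $\red L{}_1$ to get the condition $\cbrak{\sfgamma(K_2),\sfgamma(\widetilde L\cap K_2^\perp)}\subset\sfgamma(\widetilde L+K_2)$ of Proposition~\ref{prop:reducedKsub-bundle}; since $\widetilde L$ is itself a Dirac structure on $E$, integrability of $\red L{}_2$ then follows directly from Corollary~\ref{cor:Diracreduct}.
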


In \textbf{Section \ref{sect:section5}} we turn to the setting of anchored pseudo-Euclidean vector bundles and their spinor relations, in order to build a well-defined notion of Courant algebroid relation from the standpoint of spinors.

\begin{definition}[\textbf{Definition \ref{def:Rcliffrelation}}]
    Let $\sfS_{E_i}$ be spinor modules for $E_i$ for $i=1,2$. A vector subbundle ${\sf S}_R \subset (\sfS_{E_1}\times\sfS_{E_2})  \rvert_C$ is an \emph{$R$-Clifford relation} if it is a spinor module of the typical fibre of $\Cl(R)$, where the Clifford action is defined with the fibres of the bundle $\big(\Cl(E_1)\times \Cl(E_2)\big) \big\rvert_C$ acting on the fibres of $(\sfS_{E_1}\times\sfS_{E_2}) \rvert_C $. Elements $\nu \in\sfS_{E_1}$ and $\mu \in \sfS_{E_2}$ are \emph{$R$-related}, denoted $\nu \sim_R \mu$, if $(\nu,\mu) \in {\sf S}_R$.
\end{definition}

This definition fits the reduction of spinors discussed in \cite{Drummond:2011eq}. 
We  connect this definition with Dirac relations via

\begin{proposition}[\textbf{Proposition \ref{prop:LreliffUrelsmooth}}]
    Let $R \colon E_1 \rel E_2$ be a Courant algebroid relation supported on $C$ with an $R$-Clifford relation $\sfS_R \subset (\sfS_{E_1} \times \sfS_{E_2}) \rvert_C$. Let $L_1 \subset E_1$ and $L_2 \subset E_2$ be Dirac structures, and let $\Omega_1 \subset \sfS_{E_1}$ and $\Omega_2 \subset \sfS_{E_2}$ be the corresponding pure spinor line bundles. If $\Omega_1 \sim_R \Omega_2$, then $L_1 \sim_R L_2$.  
\end{proposition}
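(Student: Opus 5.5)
We want: if $\Omega_1 \sim_R \Omega_2$, i.e. for every point $c=(m_1,m_2)\in C$ there are nonzero pure spinors $\nu\in(\Omega_1)_{m_1}$, $\mu\in(\Omega_2)_{m_2}$ with $(\nu,\mu)\in(\sfS_R)_c$, then $R|_{L_1}=R\cap(L_1\times L_2)|_C+{\rm CK}_R$, and symmetrically for $L_2$. We also need these to be regular subbundles. The plan is to work fibrewise at each $c\in C$ and get regularity from smoothness of $\Omega_i$ and $\sfS_R$ at the end.

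\textbf{Step 1: translate $R|_{L_1}$ into a spinor condition.} Recall that $L_i$ is the annihilator of its pure spinor: $L_i=\{e\in E_i \mid e\cdot\nu_i=0\}$. Take $(e_1,e_2)\in R_c$ with $e_1\in L_1$. The pair $(\nu,\mu)$ lies in the spinor module $\sfS_R$ of $\Cl(R)$, so $(e_1,e_2)\cdot(\nu,\mu)=(e_1\cdot\nu,\pm e_2\cdot\mu)$. The sign convention is the one used in the definition of the $\Cl(R)$-action, as in the Fourier-Mukai discussion. Since $e_1\cdot\nu=0$, this gives $(0,\pm e_2\cdot\mu)\in\sfS_R$.

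\textbf{Step 2: conclude $e_2\in L_2$ up to the cokernel.} I expect this to be the main obstacle, and I would handle it by a purity/dimension count. Since $\sfS_R$ is an irreducible $\Cl(R)$-module, its intersection with $\{0\}\times\sfS_{E_2}$ is a $\Cl({\rm CK}_R)$-submodule. Here ${\rm CK}_R=R\cap(0\times E_2)$ acts only on the second factor. I would show that an element $(0,\xi)\in\sfS_R$ which is the image of the pure pair must force $\xi=0$ modulo elements obtained from ${\rm CK}_R$-directions. Concretely, decompose $e_2=e_2'+k$, with $k$ in the part of ${\rm CK}_R$ that does not already annihilate $\mu$ and $e_2'$ chosen so that $(e_1,e_2')\in R\cap(L_1\times L_2)$.

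To make this precise, I would compute the "annihilator of $(\nu,\mu)$ inside $R$", namely $A=\{r\in R\mid r\cdot(\nu,\mu)=0\}=R\cap(L_1\times L_2)$. Because $(\nu,\mu)$ is a pure spinor of the irreducible module $\sfS_R$ of the quadratic space $R^\perp/R$-reduction (more precisely, of the Clifford algebra associated to $R$ as a coisotropic-type reduction), $A$ has maximal dimension: $\dim A$ equals the half-rank determined by $R$. Counting dimensions, $\dim(R|_{L_1})=\dim A+\dim{\rm CK}_R-\dim(A\cap{\rm CK}_R)$ provided the inclusion $A+{\rm CK}_R\subseteq R|_{L_1}$ holds. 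That inclusion is clear, since elements of ${\rm CK}_R$ have zero first component. I would then compare with the upper bound $\dim R|_{L_1}\le\dim R-\dim(R/(R|_{L_1}))$, which comes from the Lagrangian property of $L_1$ and the isotropy of $R$. Equality of dimensions gives the reverse inclusion. The symmetric argument with ${\rm K}_R$ gives the statement for $R|_{L_2}$.

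\textbf{Step 3: regularity.} Local smooth nonvanishing sections of $\Omega_1,\Omega_2$ with $(\nu,\mu)\in\sfS_R$ exhibit $A=R\cap(L_1\times L_2)|_C$ as the kernel of a smooth bundle map $R\to\sfS_R$ of constant rank. The rank is constant by the maximality in Step 2. Hence $A$ is a smooth subbundle. Adding ${\rm CK}_R$, which is regular because $R$ is a Courant algebroid relation, gives a sum of constant rank by the dimension formula, hence a regular subbundle. This yields $L_1\sim_R L_2$ in the sense of Definition~\ref{def:Diracrel}.
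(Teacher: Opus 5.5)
Your Step~1 is correct, but Step~2 does not close the argument, so there is a genuine gap. The dimension count is vacuous. The ``upper bound'' $\dim R|_{L_1}\le\dim R-\dim(R/R|_{L_1})$ is an identity. The quantity $\dim A+\dim{\rm CK}_R-\dim(A\cap{\rm CK}_R)$ is just $\dim(A+{\rm CK}_R)$, so asserting that it equals $\dim R|_{L_1}$ is the claim itself, and you never compute $\dim R|_{L_1}$ independently.

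The assertion that $A=R\cap(L_1\times L_2)$ has ``maximal dimension'' needs $(\nu,\mu)$ to be pure for the Clifford algebra of $R$, i.e.\ its annihilator Lagrangian in $R^\times$. Note that $R^\perp/R=0$, since $R$ is Lagrangian. You assert this purity rather than prove it, and it is essentially a restatement of the conclusion. Likewise, writing $e_2=e_2'+k$ with $(e_1,e_2')\in R\cap(L_1\times L_2)$ presupposes that such an $e_2'$ exists, which is what must be shown.

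Step~3 inherits the gap, since constant rank was supposed to come from Step~2. Also, ${\rm CK}_R$ need not be regular just because $R$ is a Courant algebroid relation; Definition~\ref{def:kerandcoker} notes that ${\rm K}_R$ and ${\rm CK}_R$ are in general singular.

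The missing observation is one you nearly make. $\sfS_R\cap(\{0\}\times\sfS_{E_2})$ is invariant under all of $\Cl(R)$, not only under $\Cl({\rm CK}_R)$, because $(e_1,e_2)\cdot(0,\xi)=(0,\pm e_2\cdot\xi)$. By irreducibility it is either $0$ or all of $\sfS_R$, and the latter is excluded when $\rmp_1(\sfS_R)\neq0$. This is Proposition~\ref{prop:Uisomorphismsmooth}.

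Apply it to the element $(0,\pm e_2\cdot\mu)\in\sfS_R$ from your Step~1. It gives $e_2\cdot\mu=0$ at once, hence $e_2\in L_2$, since $L_2$ is the null space of $\mu$. So $R|_{L_1}=R\cap(L_1\times L_2)|_C$ with no cokernel correction needed, and symmetrically for $L_2$. This fibrewise argument, inherited from Proposition~\ref{prop:LreliffUrel}, is the paper's entire proof. No dimension count and no purity of $(\nu,\mu)$ is required.
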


In order to complete the description of  Courant algebroid relations and Dirac relations in terms of spinors, we characterise them in terms of related Dirac generating operators (Definition~\ref{def:DGOrelation}) through

\begin{proposition}[\textbf{Proposition \ref{prop:DGOimpliesCArel}}]
    Let $R \subset E_1 \times \overline{E}_2$ be an almost Dirac structure supported on $C\subseteq M_1 \times M_2$ with trivial kernel and cokernel. Suppose that $\Cl(R)$ has a faithful irreducible spinor representation $\sfS$ and that $\sfS$ is a $(\sfD_1, \sfD_2)$-DGO relation, where $\sfD_i$ is a Dirac generating operator for $E_i$. Then $R$ is a Courant algebroid relation.
\end{proposition}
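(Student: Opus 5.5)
The plan is to verify the one condition that separates an almost Dirac structure supported on $C$ from a Courant algebroid relation: involutivity of $R$ under the Dorfman bracket of $E_1\times\overline{E}_2$. The remaining conditions, isotropy, maximality and $\rho(R)\subseteq TC$, are either assumed or follow from the almost Dirac hypothesis. Involutivity is meant in the sense appropriate to relations: for sections $e,e'$ of $E_1\times\overline E_2$ that restrict to sections of $R$ over $C$, the restriction of $\cbrak{e,e'}$ to $C$ again lies in $R$.

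The first step is to encode this spinorially. On $\sfS_{E_1}\times\sfS_{E_2}$ I set $\sfD:=(\sfD_1,\pm\sfD_2)$, with the sign (together with the chirality grading of $\sfS_{E_2}$) chosen so that $\sfD$ is a Dirac generating operator for the product Courant algebroid $E_1\times\overline E_2$. This is the same sign convention that lets $R\subset E_1\times E_2$ act on the product spinor bundle. With this choice the product Dorfman bracket is the derived bracket
\begin{align}
\cbrak{e,e'}\cdot\psi=[[\sfD,e\,\cdot\,],e'\,\cdot\,]\,\psi \ .
\end{align}
The DGO-relation hypothesis (Definition~\ref{def:DGOrelation}) says precisely that $\sfD$ preserves the sections of $\sfS$ over $C$, up to terms that vanish on restriction to $C$.

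Now take $e,e'$ restricting to sections of $R$ and a section $\psi$ of $\sfS$. By the $R$-Clifford property, $e\cdot\psi$ and $e'\cdot\psi$ lie in $\sfS$. Expanding the double commutator gives four terms, each a composite of Clifford actions by $e,e'$ and $\sfD$, each of which preserves $\sfS$ along $C$. Hence $\cbrak{e,e'}\big|_C\cdot\psi\in\sfS$ for every $\psi\in\sfS$; that is, $\cbrak{e,e'}|_C$ lies in the stabiliser $\{u\in E_1\times E_2 \ | \ u\cdot\sfS\subseteq\sfS\}$.

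The key algebraic step is to show that this stabiliser is exactly $R$, which gives $\cbrak{e,e'}|_C\in R$. I would argue fibrewise. Because $R$ has trivial kernel and cokernel, the fibre of $\sfS$ embeds in the product of spinor spaces much like a graph, as in the Fourier-Mukai example of the introduction, and the irreducible faithful representation of $\Cl(R)$ is the pure spinor line picture in disguise. Concretely:
\begin{itemize}
\item Suppose $u\notin R$. Maximal isotropy of $R$ gives some $r\in R$ with $\ip{u,r}\neq0$.
\item The relation $u\cdot r\cdot\psi+r\cdot u\cdot\psi=\ip{u,r}\psi$, combined with irreducibility of $\sfS$ under $\Cl(R)$, lets one show that $u$ together with $\Cl(R)$ would generate a strictly larger Clifford algebra acting on $\sfS$.
\item Using the dimension count for faithful irreducible modules, this contradicts faithfulness.
\end{itemize}
The trivial kernel and cokernel are needed here, so that the only elements of $E_1\times E_2$ preserving $\sfS$ are those of $R$. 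I expect this stabiliser computation to be the main obstacle, since it depends on the signature and parity bookkeeping in the Clifford algebra of the product.

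A secondary technical point is the smoothness and extension issue: $R$ is only supported on $C$, so the brackets must be shown to be independent of the chosen extensions of sections off $C$. I would handle this as in the standard treatment of Courant algebroid relations, using $\rho(R)\subseteq TC$ to see that terms which vanish on $C$ do not contribute after restriction.
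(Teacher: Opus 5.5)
Your overall architecture matches the paper's: the derived bracket shows that $\cbrak{e,e'}|_C$, acting with parity $j$, preserves $\sfS$. One then has to show that any $u=(u_1,u_2)\in E_1\times E_2$ preserving $\sfS_c$ lies in $R_c$. The gap is in this second step, which carries the whole content of the proposition.

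\emph{The identity you use is false.} The relation $u\cdot r\cdot\psi+r\cdot u\cdot\psi=\ip{u,r}\,\psi$ does not hold for $u\notin R$. On $\sfS_{E_1}\times\sfS_{E_2}$, the anticommutator $u\cdot r+r\cdot u$ acts componentwise by the \emph{pair} of scalars $\big(2\ip{u_1,r_1}_1,\,2\ip{u_2,r_2}_2\big)$, not by the pairing of $E_1\times\overline{E}_2$. The product spinor bundle is not a Clifford module for $E_1\times\overline{E}_2$, which is also why calling $\sfD$ a Dirac generating operator ``for the product Courant algebroid'' is off. Only the componentwise derived-bracket identity is available, and that part of your argument is fine.

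\emph{There is also no contradiction with faithfulness to be had.} $\Cl(R_c)$ already maps onto $\mathrm{End}(\sfS_c)$, so adjoining $u$ generates nothing larger. Moreover, an odd-dimensional extension of $\Cl(R_c)$ can act on a module of the same dimension as $\sfS_c$; the chirality element of $\Cl(R_c)$ already supplies an extra anticommuting generator. So no dimension count excludes $u\notin R$.

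\emph{The missing ingredient is the graph property.} Proposition~\ref{prop:Uisomorphismsmooth} gives $\sfS_c\cap(\{0\}\times\sfS_{E_2})=\{0\}$, and either of the following arguments then closes the gap.
\begin{itemize}
\item \emph{The paper's argument.} Trivial kernel gives $\rmp_1(R_c)=\ker(R)^\perp=E_1$, so you can pick $r\in R_c$ with $r_1=u_1$. Then $(0,u_2-r_2)$ preserves $\sfS_c$, so $(0,(u_2-r_2)\cdot\sfs_2)\in\sfS_c$ must vanish. Faithfulness then forces $u_2=r_2$, i.e.\ $u\in R_c$.
\item \emph{Your anticommutator idea, corrected.} The componentwise scalar operator $(a,b)=\big(2\ip{u_1,r_1}_1,2\ip{u_2,r_2}_2\big)$ preserves $\sfS_c$. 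For $(\nu,\mu)\in\sfS_c$ with $\mu\neq0$ this gives $(0,(b-a)\mu)\in\sfS_c$, hence $a=b$. Thus $\ip{u_1,r_1}_1-\ip{u_2,r_2}_2=0$ for all $r\in R_c$, and $u\in R_c^\perp=R_c$ because $R_c$ is Lagrangian. This route does not even need faithfulness.
\end{itemize}

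Either way you rely on the nondegeneracy hypothesis $\rmp_i(\sfS)\neq\{0\}$ of Proposition~\ref{prop:Uisomorphismsmooth}, which is only implicit in the statement. Without it your stabiliser claim fails: $\sfS=\{0\}\times\sfS_{E_2}|_C$ satisfies all the other stated hypotheses, yet it is preserved by all of $E_1\times E_2$.
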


The converse to this result happens to be elusive in this general setting. We thus resort to proving it in the realm of T-duality relations. We first show 

\begin{proposition}[\textbf{Proposition \ref{prop:regCAonB}}]
\label{prop:5.17_intro}
    In the setting of Section~\ref{sub:TDualityFM_intro}, there is a transitive Courant algebroid $R_\cB$ over $\cB$ whose fibre is isomorphic (as a vector bundle) to a fibre of $R$, where 
\begin{equation}
\begin{tikzcd}
 R \colon (\IT\cQ_1,\red H{}_1) \ar[r,dashed] & (\IT\cQ_2,\red H{}_2)
 \end{tikzcd}
\end{equation} 
is a T-duality relation.
\end{proposition}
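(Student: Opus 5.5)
The plan is to construct $R_\cB$ explicitly as a reduction of the correspondence-space Courant algebroid by the full torus $\sfT^{2k}$.

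\textbf{Setup.} Put on $M=\cQ_1\times_\cB\cQ_2$ the exact Courant algebroid $E=(\IT M,\varpi_1^*\red H{}_1)$; it is isomorphic via $\e^{B}$ to $(\IT M,\varpi_2^*\red H{}_2)$. Recall from \cite{DeFraja:2023fhe} that the T-duality relation is built by reducing $E$ along $\cF_1$ and $\cF_2$. Concretely, with $K_i=\ker(\varpi_{i*})$, one takes the subbundles $K_i^\perp$ and sets
\begin{align}
R=\set{([e]_1,[\e^{B}e]_2) \ | \ e\in K_1^\perp\cap K_2^\perp \ \text{(suitably twisted)}}
\end{align}
over $C=\set{(\varpi_1(m),\varpi_2(m))\,|\,m\in M}\cong M$. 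Hence a fibre of $R$ is a quotient
\begin{align}
R_m\cong \frac{K_1^\perp\cap K_2^\perp}{(K_1\oplus K_2)\cap(K_1\oplus K_2)^\perp}\ \text{at } m \ ,
\end{align}
where the $K_i$ are viewed inside $TM\subset\IT M$, possibly after $B$-shifts.

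\textbf{Construction of $R_\cB$.} Apply the standard reduction of exact Courant algebroids by the free $\sfT^{2k}$-action, in the Bursztyn--Cavalcanti--Gualtieri sense. Take the isotropic, $\sfT^{2k}$-invariant subbundle $K=\widetilde{K_1\oplus K_2}\subset E$, i.e.\ the lift of the vertical bundle $\ker(\pi_*)$ of $\pi\colon M\to\cB$, corrected by $B$ so that it is isotropic and closed under the bracket. The reduced bundle is then
\begin{align}
R_\cB=\big(K^\perp/K\big)/\sfT^{2k} \ \longrightarrow\ \cB \ .
\end{align}

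\textbf{Transitivity and the fibre identification.} The reduced anchor is surjective because $\rho(K^\perp)\supseteq$ a complement of $\ker\pi_*$; this makes $R_\cB$ transitive, in general non-exact, of rank $2\dim\cB+2k$. Since $\dim C=\dim M$, this is also $\rk R=\dim M+\dim\cB$. The fibre of $R_\cB$ over $\pi(m)$ is $(K^\perp/K)_m$, and the fibre identification with $R_m$ comes from the explicit description above, using non-degeneracy of $B$ on $K_1\otimes K_2$.

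\textbf{Main obstacle.} The key check is that $K$ is isotropic and involutive for the twisted Dorfman bracket. Isotropy requires $\iota_X\iota_Y\varpi_1^*\red H{}_1$ to be compatible with $B$ on vertical vectors. Involutivity requires the invariance of $\red H{}_i$ and the equation $\varpi_1^*\red H{}_1-\varpi_2^*\red H{}_2=\de B$, which make the generators of $K$ act by Courant automorphisms. I would first try the choice $K=\set{X-\iota_XB'}$ with $B'$ the appropriate restriction of $B$, and verify the moment-map-type condition via Cartan calculus.
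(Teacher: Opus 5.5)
Your reduction produces the wrong Courant algebroid. The subbundle $K$ you reduce by is an isotropic, $B$-corrected lift of $\ker\pi_*$. It has rank $2k$ inside $E$, which has rank $2(\dim\cB+2k)$. Hence $K^\perp/K$ has rank $2\dim\cB$, not $2\dim\cB+2k$ as you assert. Moreover, $E$ is exact and $\rho(K)=\ker\pi_*$ is the entire vertical bundle of $M\to\cB$, so the reduction is exact by Theorem~\ref{thm:reduction}, i.e.\ a twisted $\IT\cB$, contrary to your ``in general non-exact''. It therefore cannot be fibrewise isomorphic to $R$, whose rank you correctly compute as $2\dim\cB+2k$. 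The $2k$ missing directions are precisely the bundle of quadratic Lie algebras $\ccG$ in $R_\cB\simeq T\cB\oplus\ccG\oplus T^*\cB$, on which everything later depends (e.g.\ $\sfS_R=\midwedge^\bullet T^*\cB\otimes\sfS_\ccG$).

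The root cause is visible in your own fibre formula. The subbundle relevant to $R$ is $K_1\oplus K_2$, which is \emph{not} isotropic, because $B$ pairs $K_1$ with $K_2$ non-degenerately. For exactly this reason $(K_1\oplus K_2)\cap(K_1\oplus K_2)^\perp=0$, so $R_m\cong(K_1^\perp\cap K_2^\perp)_m$ with nothing quotiented out. Forcing isotropy changes the object. Also, the isotropy half of the check you flag as the main obstacle is automatic: any $\{X-\iota_XB'\ |\ X\in\ker\pi_*\}$ with $B'$ skew is isotropic.

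To repair the argument you must descend without quotienting by an isotropic subbundle. One way is to take $\sfT^{2k}$-invariant sections of the non-degenerate subbundle $K_1^\perp\cap K_2^\perp$. This is the non-isotropic Bursztyn--Cavalcanti--Gualtieri reduction with $K=K_1\oplus K_2$, where $K^\perp/(K\cap K^\perp)=K^\perp$; it is transitive, non-exact and of the right rank. Equivalently, $R_\cB\cong(\IT\cQ_1,\red H{}_1)/\sfT^k$, since $\rmp_1\colon R_c\to\IT_{c_1}\cQ_1$ is an isomorphism when the kernel and cokernel of $R$ vanish.

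The paper works directly on $R$. It views the (invariant) sections of $R$ as a locally free, finitely generated $C^\infty(\cB)$-module via $\pi_1^*\times\pi_2^*$, and obtains $R_\cB$ by the smooth Serre--Swan theorem, with fibre $R_c$ independent of the choice of $c$ over $b$. It then transports three structures to $R_\cB$:
\begin{itemize}
\item the anchor $\pi_{1*}\red\rho{}_1=\pi_{2*}\red\rho{}_2$;
\item the pairing $\langle\cdot,\cdot\rangle_1=\langle\cdot,\cdot\rangle_2$ on $R$;
\item the componentwise Dorfman bracket, which closes because $R$ is involutive.
\end{itemize}
Transitivity holds because every $X\in T\cB$ has a lift $(\dot X,\ddot X)\in R$.
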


This can be taken as motivation for working in the setting of T-duality relations, since the existence of this bundle is not guaranteed in any other setting.
We then show

\begin{theorem}[\textbf{Theorem \ref{prop:canonicalDGOforR}}]
    Let $R\colon (\IT \cQ_1, \red H{}_1) \rel (\IT \cQ_2, \red H{}_2)$ be a T-duality relation, and let $R_\cB$ be the transitive Courant algebroid constructed in Proposition~\ref{prop:5.17_intro}; then $R_\cB$ is isomorphic to $T\cB \oplus \ccG \oplus T^*\cB$ where $\ccG$ is a bundle of quadratic Lie algebras over $\cB$ (see Section~\ref{ssec:SpinorrelationTduality}). Let $\sfS_R$ be the $R$-Clifford relation defined in \eqref{eqn:Tdualityrelspinrep}, and let $\lambda\in \det(\ccG^\star_1)$ be such that the spinor line bundle $\Omega^\lambda$ generated by $\lambda$ is invariant. Then $\sfS_R$ is a $(\de_{\red H{}_1}, \de_{\red H{}_2})$-DGO relation.
\end{theorem}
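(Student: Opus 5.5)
We will identify $\sfS_R$ with the graph of the Fourier--Mukai transform and reduce the DGO-relation property to the chain-map property of $\varrho$. Unwinding Definition~\ref{def:DGOrelation}, the condition to check is the following. For every section $(\nu_1,\nu_2)$ of $\sfS_R$, the pair $\big(\de_{\red H{}_1}\nu_1,\ \pm\,\de_{\red H{}_2}\nu_2\big)$ is again a section of $\sfS_R$, with the sign fixed by the degree shift, as in \eqref{eqn:Cliffordmoduleiso}.

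\textbf{Step 1: describe $\sfS_R$ explicitly.} The relation $\sfS_R$ of \eqref{eqn:Tdualityrelspinrep} is generated from the invariant pure spinor line $\Omega^\lambda$ by the Clifford action of $\Cl(R)$. Using the splitting $R_\cB\cong T\cB\oplus\ccG\oplus T^*\cB$, I will show that every element of $\sfS_R$ has the form
\[
\big(\omega,\ \varrho(\omega)\big), \qquad \omega\in\sfOmega^\bullet_{\sfT^k}(\cQ_1),
\]
up to the overall normalisation fixed by $\lambda$. Two facts make this work. First, \eqref{eqn:Cliffordmoduleiso} shows that $\gr(\varrho)$ is $\Cl(R)$-invariant. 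Second, on the top-degree generator $\lambda\in\det(\ccG^\star_1)$, the transform $\varrho$ is computed by fibre integration against $\e^{B}$, and this yields a generator of the corresponding line on the $\cQ_2$ side. Irreducibility, as noted in Section~\ref{sub:TDualityFM_intro}, then forces $\sfS_R$ to coincide with $\gr(\varrho)$ restricted to the relevant invariant forms. Invariance of $\Omega^\lambda$ is exactly what keeps everything in the $\sfT^k$-invariant sector, where $\varrho$ is defined.

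\textbf{Step 2: the chain-map identity.} I will prove
\[
\varrho\big(\de_{\red H{}_1}\omega\big)=\pm\,\de_{\red H{}_2}\,\varrho(\omega)
\]
following \cite{Bouwknegt2003topology}. Write $\de_{\red H{}_2}\varrho(\omega)=\int_{\sfT^k}\de_{\varpi_2^*\red H{}_2}\big(\e^{B}\wedge\varpi_1^*\omega\big)$. This uses two results from Appendix~\ref{appendix}: fibre integration over compact fibres without boundary commutes with $\de$ up to a sign, and the projection formula lets $\varpi_2^*\red H{}_2$ pass through the integral. Then expand
\[
\de\big(\e^B\wedge\varpi_1^*\omega\big)=\e^B\wedge\big(\de B\wedge\varpi_1^*\omega+\varpi_1^*\de\omega\big)
\]
and substitute $\de B=\varpi_1^*\red H{}_1-\varpi_2^*\red H{}_2$. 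The $\red H{}_2$ terms cancel, leaving $\varrho(\de_{\red H{}_1}\omega)$. The sign $(-1)^k$ must be tracked consistently with the one in \eqref{eqn:Cliffordmoduleiso}.

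\textbf{Step 3: conclude.} Combining Steps 1 and 2, $(\de_{\red H{}_1},\pm\de_{\red H{}_2})$ maps $\gr(\varrho)=\sfS_R$ into itself. I will also check that the remaining conditions of Definition~\ref{def:DGOrelation} hold:
\begin{itemize}
\item compatibility of the pair of operators with the Clifford action of $\Cl(R)$ through the derived bracket, which follows from $R$ being a Courant algebroid relation;
\item the behaviour on sections supported on $C=M$.
\end{itemize}

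\textbf{Main obstacle.} I expect Step 1 to be the hardest: matching the abstract module $\sfS_R$, built from $\lambda$ and the transitive algebroid $R_\cB$, with the concrete graph of $\varrho$. If the identification $\sfS_R=\gr(\varrho)$ fails, I would fall back on an invariant local frame of $\ccG$. Since $\sfS_R$ is generated by $\Cl(R)$ acting on $(\lambda,\varrho(\lambda))$, the derived-bracket identity $[[\sfD,e\cdot],f\cdot]=\cbrak{e,f}\cdot$, combined with closure of $R$ under the bracket, would let me reduce the check to the single generator. I would then verify that generator directly using $\de_{\red H{}_1}\lambda$ and Step 2.
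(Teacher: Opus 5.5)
Your argument only covers the principal torus bundle case and does not prove the theorem as stated.

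The Fourier--Mukai transform $\varrho$ exists only under specific conditions. The spaces $\cQ_1,\cQ_2$ must be principal $\sfT^k$-bundles with compact fibres, one must work with $\sfT^k$-invariant forms, and there must be a $\sfT^{2k}$-invariant $B$ on $\cQ_1\times_\cB\cQ_2$ with $\de B=\varpi_1^*\red H{}_1-\varpi_2^*\red H{}_2$. That is exactly Example~\ref{ex:FourierMukai}. The theorem, however, concerns an arbitrary geometric T-duality relation with $\cQ_1,\cQ_2$ fibred over a common $\cB$. The fibres need not be compact or tori, and the bundle $\ccG$ of quadratic Lie algebras need not be abelian. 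Poisson--Lie T-duality, where the condition on $\lambda$ becomes unimodularity of $\frh_1$, is one of the intended cases. In that generality Step~1 has no $\gr(\varrho)$ to compare $\sfS_R$ with, and Step~2 has no fibre integration, so the argument collapses. In the torus case your route is legitimate, but it only reproduces the equivalence recorded in the example following the theorem, by importing the chain-map property of $\varrho$ from \cite{Bouwknegt2003topology}. Even there, irreducibility gives only $\sfS_R\cong\gr(\varrho)$ as modules. Two $R$-Clifford relations in $(\midwedge^\bullet T^*\cQ_1\times\midwedge^\bullet T^*\cQ_2)|_{\red C}$ can differ by a pointwise scale $\kappa$ (Remark~\ref{rem:kappa}). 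Invariance under $\de_{\red H{}_1}\times(-1)^j\de_{\red H{}_2}$ fails for non-constant $\kappa$, so the normalisation must actually be checked.

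Your fallback does not close the gap either. The derived-bracket identity controls only double commutators. The single commutator $[\de_{\red H{}_1}\times(-1)^j\de_{\red H{}_2},\,r\,\cdot\,]$ is a first-order differential operator, not a Clifford multiplication: for $\widehat\ell=(\dot\ell,\ddot\ell)$ with $\ell\in L^*$, its first component contains $\pounds_{\dot\ell}$. So invariance of $\sfS_R$ does not reduce to checking the generator $u=(\lambda,1)$. You would also need $[\de_{\red H{}_1}\times(-1)^j\de_{\red H{}_2},\,r\,\cdot\,]\,u\in\sfGamma(\sfS_R)$ for a generating set of sections $r$ of $R$. This is precisely where the $D_1$-invariance of $\Omega^\lambda$ has to enter, and your proposal never uses that hypothesis outside the torus case.

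The missing idea is the paper's direct computation:
\begin{enumerate}
\item Express the canonical Dirac generating operator $\sfD$ on $\sfS_R=\midwedge^\bullet T^*\cB\otimes\sfS_\ccG$ through \eqref{eqn:canonicalDGO}, in terms of the data $W,\cH,\nabla,C$ of $R_\cB$.
\item Choose $\nabla^{\sfS_\ccG}$ so that $u$ is parallel.
\item Use the $D_1$-invariance of $\Omega^\lambda$ to show that the connection term vanishes on $\ell\cdot u$ for $\widehat\ell\in\frk{k}_{\tau_1}$.
\item Check that the $\cH$, $\overline W$ and $C$ terms reassemble into $-\red H{}_i\wedge$ plus the exterior derivatives of the fibre factors $\iota_{\dot\ell}\lambda$ and $\ddot\ell$.
\end{enumerate}
This gives $\sfD=\de_{\red H{}_1}\times(-1)^j\de_{\red H{}_2}$ on $\sfS_R$. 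Invariance is then automatic, since $\sfD$ is by construction an operator on $\sfS_R$.
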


All these elements give the isomorphism of twisted cohomology that is typical of T-duality, with the notable difference that the topology of the fibres of the dual spaces are not constrained to be compact. These results thus not only perfectly fit  T-duality for torus bundles, but also Poisson-Lie T-duality where the requirement on $\lambda$ translates into the existence of an invariant divergence on one of the quotient spaces.

Motivated by these results, we formulate T-duality for Type II supergravity as

\begin{proposition}[\textbf{Proposition \ref{prop:TdualitytypeII}}]
     Let $(g_A,\red H{}_A,\phi_A, F_A)$ and $(g_B, \red H{}_B, \phi_B, F_B)$ be Type~II bosonic fields on T-dual principal circle bundles $\cQ_A \to \cB$ and $\cQ_B\to\cB$ over the same base $\cB$, respectively. Let
    \begin{equation}
    \begin{tikzcd}
        R \colon (\IT \cQ_A, \red H{}_A, V^+_{g_A}, \div_{\phi_A})\ar[r,dashed] & (\IT \cQ_B, \red H{}_B, V^+_{g_B}, \div_{\phi_B})
        \end{tikzcd}
    \end{equation}
    be a T-duality relation with $R$-Clifford relation $\sfS_R$ such that $F_A \approx_R F_B$ (see Section~\ref{sub:TDualitySUGRA}).
    Then $(g_A,\red H{}_A,\phi_A, F_A)$ satisfy the Type II supergravity equations~\eqref{eqn:typeII1}--\eqref{eqn:typeII2} for $\nu = -1$ if and only if $(g_B,\red H{}_B,\phi_B, F_B)$ satisfy the Type II supergravity equations~\eqref{eqn:typeII1}--\eqref{eqn:typeII2} for $\nu = \ii\,$.
\end{proposition}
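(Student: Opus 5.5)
We will reduce the statement to the generalised Ricci flat conditions already transported by T-duality relations in \cite{deFraja2025Ricci}, together with the Clifford intertwining property of the $R$-Clifford relation $\sfS_R$. The Type~II equations \eqref{eqn:typeII1}--\eqref{eqn:typeII2} split into two parts. The first is the NS--NS part: generalised Ricci flatness of $(V^+_{g}, \div_{\phi})$, corrected by a Ramond--Ramond stress tensor that is bilinear in $F$ (a Mukai pairing $\langle F, e_+\cdot e_-\cdot F\rangle$, or its $\nu$-dependent variant). The second is the Ramond--Ramond part: the Bianchi identity/equation of motion $\de_{\red H} F=0$ plus the self-duality condition $F = \nu\,\star\,\sigma(F)$, suitably normalised, which can be written as a Clifford action of the generalised-metric chirality operator on $F$.

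\textbf{Step 1.} Recall from the construction of T-duality relations that $R$ is a generalised isometry between $V^+_{g_A}$ and $V^+_{g_B}$ and is compatible with the divergences $\div_{\phi_A}$, $\div_{\phi_B}$, so that the generalised Ricci tensors and scalars are $R$-related \cite{deFraja2025Ricci}. \textbf{Step 2.} Use Theorem~\ref{prop:canonicalDGOforR} for circle bundles. Choose $\lambda$ to be the invariant density built from the dilaton, so that the induced divergence is $\div_{\phi}$. Then $\sfS_R$ is a $(\de_{\red H{}_A},\de_{\red H{}_B})$-DGO relation. Hence $F_A\approx_R F_B$ gives $\de_{\red H{}_A}F_A=0$ if and only if $\de_{\red H{}_B}F_B=0$, up to the sign $(-1)^k=-1$.

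\textbf{Step 3.} Transport the quadratic and self-duality terms. Any $(e_A,e_B)\in R$ acts compatibly, i.e. $e_B\cdot F_B$ corresponds to $\pm e_A\cdot F_A$. Moreover the Mukai pairing is preserved by the Fourier--Mukai type isomorphism up to a sign fixed by the degree shift. Therefore the RR stress tensors $\langle F, e_+\cdot e_-\cdot F\rangle$ agree under $R$ for $e_\pm\in V^\pm$. Combined with Step~1, this shows the NS--NS equations hold on one side if and only if they hold on the other.

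\textbf{Step 4.} Handle the self-duality condition. This is where the two values of $\nu$ appear. The chirality operator is the Clifford product of an oriented orthonormal frame of $V^+$ (or $V^-$). Under $R$, the frame of $V^+_{g_A}$ maps to a frame of $V^+_{g_B}$. However, the fibre direction of $V^-$ flips orientation/parity under circle T-duality, and the Fourier--Mukai map exchanges even and odd forms, changing the Clifford sign $\sigma$. Tracking these signs shows that the condition with $\nu=-1$ on the $A$ side becomes the condition with $\nu=\ii$ on the $B$ side. A factor $\ii$ arises because the parity changes: $\star^2$ on $\sfOmega^{\rm even}$ versus $\sfOmega^{\rm odd}$ in the relevant signature differs by a sign, so $\nu^2$ flips.

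\textbf{Main obstacle.} I expect Step~4, this sign and phase bookkeeping, to be the main obstacle. I would first verify it in a local adapted frame: take $g=g_\cB+ r^2\theta^2$ and compute explicitly how $\varrho$ intertwines $\star_A\sigma$ with $\star_B\sigma$. Only then would I globalise using invariance.
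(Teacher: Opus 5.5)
\textbf{There is a genuine gap.} Your overall plan matches the paper's: the generalised Ricci tensor and scalar are preserved by $R$ via \cite{deFraja2025Ricci}, the Bianchi identity is transported by the DGO relation, and then the quadratic RR term and self-duality are handled. The problem is that you put the $\nu$-dependence in the wrong equation, so the step that must produce the switch $\nu=-1\leftrightarrow\nu=\ii$ is asserted incorrectly.

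In \eqref{eqn:typeII1}, $\nu$ appears only in the coefficient $\frac{\ii}{8\nu}$ of the RR term $(\!(u_+\cdot F,v_-\cdot F)\!)$. That pairing is not the reversal-type Mukai pairing of Remark~\ref{rem:Mukai}. It is built on half-density-twisted spinors with $\vartheta\,\sfs=\ii^{\,|\sfs|}\,\sfs$, so a degree shift by one costs a phase, not a sign. Proposition~\ref{prop:pairingspinor} makes this precise: $(\!(\widetilde\sfs_1,\widetilde\sfs_1^{\,\prime})\!)_1=\ii^{\,n}(-1)^{n^2+n|\sfs_1'|}(\!(\widetilde\sfs_2,\widetilde\sfs_2^{\,\prime})\!)_2$. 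For a circle ($n=1$) with $F_A$ even, $|u_{A\pm}\cdot F_A|$ is odd, so this is exactly a factor $\ii$. Combined with $\gric_{g_B,\div_{\phi_B}}=\gric_{g_A,\div_{\phi_A}}$, this turns $-\frac{\ii}{8}(\!(\,\cdot\,,\,\cdot\,)\!)_A$ into $\frac18(\!(\,\cdot\,,\,\cdot\,)\!)_B=\frac{\ii}{8\,\ii}(\!(\,\cdot\,,\,\cdot\,)\!)_B$. That computation is the entire content of the $\nu$ switch. If the RR terms agreed only up to sign, as your Step~3 claims, the preserved $\gric$ would transport \eqref{eqn:typeII1} with $\nu\mapsto\pm\nu$, contradicting the statement.

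Your Step~4 then tries to recover the phase from the self-duality condition, but in this paper that condition carries no $\nu$. It reads $\cR_{V^+}\cdot u=u$ with $\cR_{V^+}=2^{\frac d2}\,e_1\cdot\,\cdots\,\cdot e_d$. It is transported via $\big(\cR_{V_1^+}\cdot u_1,(-1)^{nd}\,\cR_{V_2^+}\cdot u_2\big)\in\sfS_R$ and Corollary~\ref{cor:URinjectivesmooth}, with $(-1)^{nd}=1$ for $d=10$. No orientation flip of $V^-$ or $\star^2$ bookkeeping enters.

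So the missing ingredient is the phase law for the spinor pairing under $\approx_R$. It also needs the parity information you do not use: $F_B$ is odd because $\sfS_R$ shifts degree by one, and oddness of $|u_{A\pm}\cdot F_A|$ fixes $(-1)^{n^2+n|\sfs_1'|}=+1$.

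A minor point on Step~2: $\lambda\in\det(\ccG_1^\star)$ is a $D_1$-invariant fibre volume form (e.g.\ the connection form of the circle bundle), not a density built from the dilaton. The dilaton enters only through the hypothesis that $R$ relates $\div_{\phi_A}$ and $\div_{\phi_B}$.
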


Having developed a robust notion of Dirac relation in both usual and spinor terms, we define relations between generalised complex structures in {Section \ref{sect:section6}}.

\begin{definition}[\textbf{Definition \ref{def:generalisedcomplexrel}}]
   Let $R \colon E_1 \dashrightarrow E_2$ be a Courant algebroid relation supported on $C \subseteq M_1 \times M_2$. Let $\cJ_1$ and $\cJ_2$ be generalised complex structures on $E_1$ and  $E_2$, respectively. 
Then $R$ is a \emph{generalised complex relation} if $(\cJ_1\times\cJ_2)(R) = R$. 
\end{definition}

We describe many of the properties of generalised complex relations, which turn out to be very similar to the properties of generalised isometries shown in \cite{DeFraja:2023fhe}.
This definition perfectly fits in the spinor framework via

\begin{proposition}[\textbf{Proposition \ref{prop:UreliffJrel}}]
    Let $R\colon E_1 \rel E_2$ be a Courant algebroid relation, and let $\cJ_i$ be a generalised  complex structure on $E_i$ with associated pure spinor line bundle $\Omega_i$ for $i=1,2$. If  $\Omega_i\sim_{R} \Omega_2$, then $R$ is a generalised  complex relation.
\end{proposition}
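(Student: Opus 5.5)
The plan is to reduce the statement to Proposition~\ref{prop:LreliffUrelsmooth} applied to the complexified Courant algebroids. A generalised complex structure $\cJ_i$ on $E_i$ is determined by its $+\ii$-eigenbundle $L_i\subset E_i\otimes\IC$. This is a complex Dirac structure with $L_i\cap\overline{L_i}=0$, and its $-\ii$-eigenbundle is $\overline{L_i}$. The pure spinor line bundle $\Omega_i$ is by definition the annihilator line of $L_i$. The complexification $R_\IC=R\otimes\IC$ is again a (complex) Courant algebroid relation. Moreover, the $R$-Clifford relation $\sfS_R$ complexifies to an $R_\IC$-Clifford relation, since $\Cl(R)\otimes\IC=\Cl(R_\IC)$. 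So the hypothesis $\Omega_1\sim_R\Omega_2$ can be read in the complex setting, and Proposition~\ref{prop:LreliffUrelsmooth} applies verbatim over $\IC$. It gives $L_1\sim_{R_\IC}L_2$.

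Next I will convert the Dirac relation into invariance of $R$ under $\cJ_1\times\cJ_2$. The key step is to show
\begin{align}
R_\IC=\big(R_\IC\cap(L_1\times L_2)\big)\oplus\big(R_\IC\cap(\overline{L_1}\times\overline{L_2})\big) \ .
\end{align}
Granting this, on the first summand $\cJ_1\times\cJ_2$ acts by $+\ii$ and on the second by $-\ii$. Hence $(\cJ_1\times\cJ_2)(R_\IC)\subseteq R_\IC$, and taking real parts gives $(\cJ_1\times\cJ_2)(R)=R$, by invertibility and equality of ranks.

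To establish the splitting, I will use the definition of Dirac relation. It gives $R_\IC|_{L_1}=R_\IC\cap(L_1\times L_2)+{\rm CK}_R$, and similarly for $L_2$ with ${\rm K}_R$. Conjugating, and using that $R$, ${\rm K}_R$ and ${\rm CK}_R$ are real, gives the same identities for $\overline{L_i}$. Thus $\overline{L_1}\sim_{R_\IC}\overline{L_2}$, which is equivalently read off from $\overline{\Omega_1}\sim_R\overline{\Omega_2}$. I will take $(e_1,e_2)\in R_\IC$ and decompose $e_1=e_1^++e_1^-$ according to $E_1\otimes\IC=L_1\oplus\overline{L_1}$. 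The aim is to show that $(e_1^+,e_2^+)$ and $(e_1^-,e_2^-)$ both lie in $R_\IC$.

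I expect this last point to be the main obstacle, because $R$ is only a relation and not a graph. My first attempt will be a dimension count. The relation $R$ is Lagrangian in $E_1\times\overline{E_2}$ over $C$, and $L_1\times L_2$ is Lagrangian and transverse to $\overline{L_1}\times\overline{L_2}$. The two intersections are isotropic, they are orthogonal to each other, and each pairs trivially with itself. Their sum is direct, since $(L_1\times L_2)\cap(\overline{L_1}\times\overline{L_2})=0$. So it suffices to show that
\begin{align}
\dim\big(R_\IC\cap(L_1\times L_2)\big)+\dim\big(R_\IC\cap(\overline{L_1}\times\overline{L_2})\big)=\rk R \ .
\end{align}
For this I will use the two Dirac relation identities, which force $R_\IC|_{L_1}$ and $R_\IC|_{\overline{L_1}}$ to have complementary ranks modulo the (co)kernels. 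If the counting becomes delicate, a fallback is available when $\sfS_R$ is the graph-type module. There one can check directly that $\Cl(R)$ preserves the line $\Omega_1\times\Omega_2$ up to the annihilating part. The $\pm\ii$ grading of spinors by $\cJ_i$ (Gualtieri's $U_k$ decomposition) then intertwines through $\sfS_R$, and the Clifford action of $R$ respects it. This directly yields $(\cJ_1\times\cJ_2)R\subseteq R$.
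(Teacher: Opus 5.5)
Your strategy is the paper's own: Proposition~\ref{prop:LreliffUrelsmooth} gives $L_1\sim_R L_2$, and Remark~\ref{rem:3rddefinition} is then used to upgrade this to $(\cJ_1\times\cJ_2)(R)=R$. The step you flag as the main obstacle is, however, a genuine gap, and your dimension count cannot close it. Remark~\ref{rem:3rddefinition} passes over the same point, since its conjugation argument only produces $\overline L_1\sim_R\overline L_2$.

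The Dirac-relation identities for $L_1$ and $\overline L_1$ control $R$ only modulo ${\rm K}_R$ and ${\rm CK}_R$. At best they yield the weak splitting of $R^\times$. For a concrete failure, take $E=\IT\IR^4$ with $\cJ=\cJ_\omega$, $\omega=\de x^1\wedge\de y^1+\de x^2\wedge\de y^2$, and $K=T\cF$ for the foliation by $(x^2,y^2)$-planes. Then $\cJ(K)\cap K^\perp=0$, and $\cJ$ reduces to $\red\cJ=\cJ_{\de x^1\wedge\de y^1}$. Example~\ref{eg:Diracreductionrel} (complexified), together with conjugation, gives both $L\sim_{Q(K)}\red L$ and $\overline L\sim_{Q(K)}\overline{\red L}$. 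Yet $Q(K)^{+\ii}=\{(e,\natural(e))\,\vert\,e\in L\cap(K^\perp)^\IC\}$ and its conjugate each have rank $2$, while $\rk Q(K)=6$. Indeed $(\partial_{x^2},0)\in{\rm K}_{Q(K)}$ is sent by $\cJ\times\red\cJ$ to $(\pm\,\de y^2,0)\notin Q(K)$. This agrees with the paper's own observation that $Q(K)$ is a generalised almost complex relation only when $\cJ(K)=K$.

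The missing idea is to use the spinor hypothesis a second time, to kill the kernel and cokernel, rather than only through its consequence $L_1\sim_R L_2$.
\begin{itemize}
\item Take a nonzero $(\psi_1,\psi_2)\in\sfS_R^\IC$ with $\psi_i\in\Omega_i$. By Proposition~\ref{prop:Uisomorphismsmooth}, both components are nonzero.
\item For $(k,0)\in{\rm K}_R$ one has $(k,0)\cdot(\psi_1,\psi_2)=(k\cdot\psi_1,0)\in\sfS_R^\IC$. This must vanish by the same proposition, so $k\in N_{\psi_1}=L_1$. Since $k$ is real, $k\in L_1\cap\overline L_1=0$. The same argument gives ${\rm CK}_R=0$.
\item A Lagrangian $R$ with trivial kernel and cokernel is pointwise the graph of an isometry $\Phi_c$. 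The relation $L_1\sim_R L_2$ now reads $\Phi_c(L_1)\subseteq L_2$. Hence $\Phi_c(L_1)=L_2$ by rank, and $\Phi_c(\overline L_1)=\overline L_2$ by reality.
\item So $\Phi_c\circ\cJ_1=\cJ_2\circ\Phi_c$, as in Example~\ref{ex:gencomplexrelgraph}, which is exactly your splitting.
\end{itemize}

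This also shows that an honest (not weak) $R$-Clifford relation is essential. With the weak relation $\sfS_\mu$ of Example~\ref{ex:symplectic_red_spinors}, the spinor lines are related even though $Q(\cF)$ is not a generalised complex relation. Your spinor-grading fallback, as sketched, does not supply this input.
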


Since T-duality is one of our main frameworks, we may use its tools to prove that a T-duality prescription exists for generalised complex structures as well through

\begin{proposition}[\textbf{Proposition \ref{prop:TdualGCS}}]
    If $\cJ_1$ is an invariant generalised complex structure on $\cQ_1$, then there is a generalised complex structure $\cJ_2$ on $\cQ_2$ such that 
    \begin{equation}
    \begin{tikzcd}
    R\colon(\IT \cQ_1, \red H{}_1, \cJ_1) \ar[r,dashed] & (\IT \cQ_2, \red H{}_2, \cJ_2)
    \end{tikzcd}
    \end{equation}
    is a generalised complex relation.
\end{proposition}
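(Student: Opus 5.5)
The plan is to transport $\cJ_1$ through the T-duality relation $R$, mirroring the proof of Proposition~\ref{prop:TdualDirac} for Dirac structures. A generalised complex structure is equivalent to a pair of complex conjugate Dirac structures $L_1 = \ker(\cJ_1 - \ii)$ and $\overline{L_1}$ in the complexification $\IT\cQ_1\otimes\IC$ with $L_1\cap\overline{L_1}=0$. So it suffices to produce a complex Dirac structure $L_2\subset \IT\cQ_2\otimes\IC$ with $L_2\cap\overline{L_2}=0$ such that $R$ relates $L_1$ to $L_2$ and $\overline{L_1}$ to $\overline{L_2}$. We then set $\cJ_2$ to be $\ii$ on $L_2$ and $-\ii$ on $\overline{L_2}$.

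First, $\cJ_1$ is invariant, so $L_1$ is an invariant (complex) Dirac structure. Apply Proposition~\ref{prop:TdualDirac}, complexified, to get the unique $L_2$ with $L_1\sim_R L_2$. The construction in that proposition is the pointwise image of $L_1$ under $R$; on invariant sections this is the isomorphism $\mathscr{R}$ (or its generalisation from \cite{DeFraja:2023fhe}). Since $R$ is a real subbundle, complex conjugation commutes with this transport, so the Dirac structure related to $\overline{L_1}$ is $\overline{L_2}$.

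Second, I check that $L_2\cap\overline{L_2}=0$. For a T-duality relation the kernel and cokernel are trivial on invariant sections, so $R$ induces a fibrewise isometric isomorphism $\IT\cQ_1\rvert_{q_1}\to\IT\cQ_2\rvert_{q_2}$ for $(q_1,q_2)\in C$. The image of $L_1\oplus\overline{L_1}=\IT\cQ_1\otimes\IC$ is then $L_2+\overline{L_2}$, and it is all of $\IT\cQ_2\otimes\IC$ by dimension count. Hence the sum is direct and $\cJ_2$ is a well-defined real orthogonal almost complex structure on $\IT\cQ_2$. It is integrable because $L_2$ is Dirac by Proposition~\ref{prop:TdualDirac}.

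Finally, $(\cJ_1\times\cJ_2)(R)=R$. Take $(e_1,e_2)\in R\otimes\IC$ and decompose $e_1=e_1^++e_1^-$ with $e_1^+\in L_1$ and $e_1^-\in\overline{L_1}$. By the isomorphism property, $e_2=e_2^++e_2^-$ with $(e_1^\pm,e_2^\pm)\in R$ and $e_2^\pm$ in $L_2$, $\overline{L_2}$ respectively. Then $(\cJ_1 e_1,\cJ_2 e_2)=\ii(e_1^+,e_2^+)-\ii(e_1^-,e_2^-)\in R\otimes\IC$, and it is real, so it lies in $R$.

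The main obstacle is the second step. Proposition~\ref{prop:TdualDirac} is phrased for real Dirac structures and for the relation supported on $C$, not as a global isomorphism. I must verify that its proof extends to complex Dirac structures, and that $L_2$ is well defined on $\cQ_2$. The latter requires that different points $q_1$ in the same leaf over $q_2$ give the same subspace; this is where the invariance of $\cJ_1$ under the foliation, via the invariance conditions of \cite[Theorem 5.27]{DeFraja:2023fhe}, enters. I would handle this exactly as for the generalised metric in \cite{DeFraja:2023fhe}, i.e. by working with basic sections on $M$.
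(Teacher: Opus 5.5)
Your argument is correct, but it obtains integrability by a different mechanism from the paper's.

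The paper argues on the spinor side:
\begin{itemize}
\item The almost structure comes from Theorem~\ref{thm:relatedspinorline}, which transports the $D_1$-invariant pure spinor line $\Omega_1$ of $\cJ_1$ to a line $\Omega_2$ with $\Omega_1\sim_R\Omega_2$. Internally this is the same transport of the null Lagrangian $L_1$ by $D_1$-invariant sections that you use.
\item Remark~\ref{rem:3rddefinition} and Proposition~\ref{prop:UreliffJrel} turn this into a generalised almost complex relation.
\item Integrability of $\cJ_2$ is deduced from the fact that $\sfS_R$ is a $(\de_{\red H{}_1},\de_{\red H{}_2})$-DGO relation (Theorem~\ref{prop:canonicalDGOforR}), via Proposition~\ref{prop:integrabilityrelation}. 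Concretely, $\de_{\red H{}_1}\psi_1=e_1\cdot\psi_1$ is carried over to $\de_{\red H{}_2}\psi_2=e_2\cdot\psi_2$.
\end{itemize}

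You stay with Lagrangian subbundles throughout. $L_2$ and its involutivity come from the complexified Dirac transport of Proposition~\ref{prop:TdualDirac}, so ultimately from the bracket compatibility of $R$ itself. Then $L_2\cap\overline{L_2}=0$ and $(\cJ_1\times\cJ_2)(R)=R$ follow because each $R_c$ is the graph of an isometry $\phi_c$, so that $\cJ_2=\phi_c\,\cJ_1\,\phi_c^{-1}$.

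Your route is more elementary and needs fewer hypotheses. It uses no canonical Dirac generating operator. It also does not need a $D_1$-invariant choice of $\lambda$, which Theorem~\ref{prop:canonicalDGOforR} requires and which is not automatic beyond torus bundles (compare the unimodularity condition in the Poisson--Lie example). In return, the paper's route yields the related pure spinor lines and their compatibility with the twisted differentials. That is exactly what is used afterwards for the type-change formula (Proposition~\ref{prop:Tdualspinor}) and for the Calabi--Yau case (Proposition~\ref{prop:TdualGCY}).

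Two small remarks:
\begin{itemize}
\item The triviality of ${\rm K}_R$ and ${\rm CK}_R$ is a pointwise statement, not one about invariant sections. It holds because a geometric T-duality relation is a generalised isometry \cite[Proposition~4.17]{DeFraja:2023fhe}.
\item If you use Proposition~\ref{prop:TdualDirac} as a black box, the well-definedness of $L_2$ on $\cQ_2$ that you list as the main obstacle is already part of its conclusion. Only the passage to complex Dirac structures remains, and the paper states that its results carry over to the complex case.
\end{itemize}
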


In Proposition \ref{prop:Tdualspinor} we prove that the T-duality relation also keeps track of the type of the generalised complex structures: the type of the T-dual generalised complex structure can be computed from the type of the starting generalised complex structure and data from the decomposition of spinors induced by the existence of a T-duality relation. That is, we fully recover the type change of generalised complex structures in our general T-duality framework.

We naturally complete in {Section \ref{sect:section7}} our discussion by converging on the description of relations for generalised K{\"a}hler structures.

\begin{definition}[\textbf{Definition \ref{def:genkahlerrel}}]
     Let $R\colon E_1 \rel E_2$ be a Courant algebroid relation supported on $C \subseteq M_1 \times M_2$. Let $\cK_1 = (\cJ_1^+, \cJ_1^-)$ and $ \cK_2 = (\cJ_2^+,\cJ_2^-)$ be generalised  K\"ahler structures on $E_1$ and $E_2$, respectively.
    Then $R$ is a \emph{generalised K\"ahler relation} if $R$ is a generalised complex relation for both $(E_1, \cJ^+_1) \rel (E_2, \cJ^+_2)$ and $(E_1, \cJ_1^-) \rel (E_2, \cJ_2^-)$.
    If $\cK_1$ and $\cK_2$ are generalised Calabi-Yau structures, then $R$ is a \emph{generalised Calabi-Yau relation}.
\end{definition}

This definition naturally implies that the generalised K{\"a}hler relation is also a generalised isometry between the associated generalised metrics. Moreover, it gives a natural definition for the relation between the associated bi-Hermitian structures. 

Again the realm of T-duality is rich with opportunities to test our results, which we do in

\begin{proposition}[\textbf{Proposition \ref{prop:TdualGKS}}]
    If $\cQ_1$ has an invariant generalised K\"ahler structure $\cK_1$, then there is a generalised K\"ahler structure $\cK_2$ on $\cQ_2$ such that 
    \begin{equation}
    \begin{tikzcd}
    R \colon (\IT \cQ_1, \red H{}_1, \cK_1) \ar[r,dashed] & (\IT \cQ_2, \red H{}_2, \cK_2)
    \end{tikzcd}
    \end{equation}
    is a generalised K\"ahler relation.
\end{proposition}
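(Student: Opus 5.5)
The plan is to reduce the statement to the T-duality result for generalised complex structures, Proposition~\ref{prop:TdualGCS}, applied separately to the two commuting structures $\cJ_1^\pm$ of $\cK_1$, and then to check that the resulting pair $(\cJ_2^+,\cJ_2^-)$ is again generalised K\"ahler on $\IT\cQ_2$. Write $\cK_1=(\cJ_1^+,\cJ_1^-)$. Invariance of $\cK_1$ means both $\cJ_1^\pm$ are invariant. Proposition~\ref{prop:TdualGCS} then gives generalised complex structures $\cJ_2^\pm$ on $\IT\cQ_2$ with $(\cJ_1^\pm\times\cJ_2^\pm)(R)=R$. If the two conditions of a generalised K\"ahler structure hold for $(\cJ_2^+,\cJ_2^-)$, then $R$ is a generalised K\"ahler relation by Definition~\ref{def:genkahlerrel}.

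The key structural input is that the T-duality relation of \cite[Theorem 5.27]{DeFraja:2023fhe} has trivial kernel and cokernel. On invariant sections it therefore defines an isomorphism $\ccR\colon\sfgamma_{\rm inv}(\red E{}_1)\to\sfgamma_{\rm inv}(\red E{}_2)$ whose graph is $R$, as with $\mathscr R$ in Section~\ref{sub:TDualityFM_intro}. The relation $(\cJ_1\times\cJ_2)(R)=R$ is then equivalent to $\cJ_2\circ\ccR=\ccR\circ\cJ_1$ on invariant sections. This identity is also presumably how $\cJ_2$ is defined in the proof of Proposition~\ref{prop:TdualGCS}, namely $\cJ_2=\ccR\,\cJ_1\,\ccR^{-1}$ extended from invariant sections by $C^\infty$-linearity. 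Isotropy of $R$ in $\red E{}_1\times\overline{\red E}{}_2$ says exactly that $\ccR$ preserves the pairings. The same argument also shows $\cJ_2^\pm$ is unique given $\cJ_1^\pm$, which justifies treating the two structures independently.

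With this I verify the two conditions.
\begin{itemize}
\item \emph{Commutation.} The identity $\cJ_2^+\cJ_2^-=\ccR\,\cJ_1^+\cJ_1^-\,\ccR^{-1}=\ccR\,\cJ_1^-\cJ_1^+\,\ccR^{-1}=\cJ_2^-\cJ_2^+$ holds on invariant sections.
\item \emph{Positivity.} Set $\cG_2:=-\cJ_2^+\cJ_2^-=\ccR\,\cG_1\,\ccR^{-1}$, where $\cG_1=-\cJ_1^+\cJ_1^-$ is the generalised metric of $\cK_1$. Since $\ccR$ is an isometry, $\ip{\cG_2\,\cdot,\cdot}_2$ is positive definite wherever $\ip{\cG_1\,\cdot,\cdot}_1$ is.
\end{itemize}
Because invariant sections span each fibre of $\red E{}_2$ pointwise, both identities hold as bundle identities. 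It should follow that $\cG_2$ is exactly the T-dual generalised metric of \cite{DeFraja:2023fhe}, so $R$ is also a generalised isometry, consistent with the remark after Definition~\ref{def:genkahlerrel}.

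The main obstacle is the passage from invariant sections to the whole bundle $\red E{}_2$ and the use of $R$ as the graph of an isomorphism. For a relation supported on $C$, the compositions above need care: one must check that $\ccR$ is well-defined fibrewise over $C$, i.e.\ that the lift of $e_1$ through $R$ does not depend on the point of $C$ over a given point of $\cQ_2$. I would handle this first by invoking the invariance hypotheses of \cite[Theorem 5.27]{DeFraja:2023fhe}, which ensure that $R$ restricts to fibrewise isomorphisms compatible with the foliations. After that, each step is a direct consequence of Proposition~\ref{prop:TdualGCS}.
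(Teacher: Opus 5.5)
Your proposal is correct and follows the paper, which obtains the statement directly by applying Proposition~\ref{prop:TdualGCS} to each of $\cJ_1^\pm$; your explicit check of the commutation and positivity conditions for $(\cJ_2^+,\cJ_2^-)$ supplies the verification that the paper leaves implicit. The obstacle you flag is not a real one: $R$ is a generalised isometry with trivial kernel and cokernel, so each $R_c$ is the graph of a fibrewise isometry $\Phi_c$, and choosing any $c=(c_1,c_2)\in\red C$ over a given $c_2\in\cQ_2$ gives $\cJ_2^\pm(c_2)=\Phi_c\,\cJ_1^\pm(c_1)\,\Phi_c^{-1}$ with the same $\Phi_c$ for both signs, so both axioms follow pointwise without invariant sections and without checking that $\Phi_c$ is independent of $c$.
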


This result can be interpreted as encoding T-duality for $\cN=(2,2)$ supersymmetric sigma-models. A similar result holds for generalised Calabi-Yau structures as well (Proposition~\ref{prop:TdualGCY}). 

Lastly, we discuss in detail some examples of T-duality for generalised K{\"a}hler structures in our formulation, such as mirror symmetry (Example~\ref{ex:mirror}) and semi-flat Calabi-Yau metrics (Example~\ref{ex:semi-flat}). We also formulate a  new example in which, starting from a trivial $\sfT^2$-bundle over $\cB$, we twist it with a closed three-form and endow it with a generalised Calabi-Yau structure. We  construct its T-dual generalised Calabi-Yau manifold as a principal circle bundle over $\cB \times \sfT^1$ which admits genuine twisted symplectic structures (Example~\ref{ex:harmonic-symplectic}). In particular, this produces a local non-K\"ahler example of a generalised Calabi-Yau structure.

\medskip

\subsection{Acknowledgements}~\\[5pt]
We thank Jeff Streets for comments on the manuscript.
The work of T.C.D. was supported by an EPSRC Doctoral Training Partnership Award. The work of V.E.M. was supported by PNRR MUR projects PE0000023-NQSTI and in part by the GACR Grant EXPRO 19-28268X. This article is based upon work from COST Actions CaLISTA CA21109 and THEORY-CHALLENGES CA22113 supported by COST (European Cooperation in Science and Technology). 

\section{Linear Algebra of Spinors and Relations} \label{sect:section2}

We begin with a discussion on the linear algebra of relations and spinors, and introduce the language of relations for spinors.

\medskip

\subsection{Linear Relations}~\\[5pt]
\label{sub:linear_relations}
Let $(W_1,\ip{\,\cdot \, ,\, \cdot \, }_1)$ and $(W_2, \ip{\, \cdot \, ,\, \cdot \,}_2)$ be finite-dimensional real vector spaces endowed with a symmetric non-degenerate bilinear pairing. Let $W_1 \times \overline{W}_2$ be the product vector space $W_1 \times W_2$ endowed with the pairing $\ip{\, \cdot \, , \, \cdot \, }_1 - \ip{\, \cdot \, , \, \cdot \,}_2$.

\begin{definition}
    A \emph{linear relation} between $(W_1,\ip{\,\cdot \, ,\, \cdot \, }_1)$ and $(W_2, \ip{\, \cdot \, ,\, \cdot \,}_2)$, denoted $W_1 \sim_U W_2$, is a Lagrangian subspace $U \subset W_1 \times \overline{W}_2$. We also write $U:W_1\rel W_2$.
\end{definition}

\begin{example}
    Let $f\colon W_1 \to W_2$ be a linear map such that $\ip{f(w),f(w')}_2 = \ip{w, w'}_1$ for every $w, w' \in W_1$. Then $\gr(f) \subset W_1 \times \overline{W}_2$ defines a Lagrangian subspace, and hence a linear relation.
\end{example}

\begin{example}\label{ex:natural}
    Let $V_1$ and $V_2$ be finite-dimensional vector spaces and let $f \colon V_1 \to V_2$ be a linear map. Consider the vector spaces $\IV_i \coloneqq V_i \oplus V^*_i$, for $i=1,2$, endowed with the pairing 
    \begin{align}
    \ip{v_i + \alpha_i, w_i+ \beta_i}_i = \tfrac12\,(\iota_{v_i} \beta_i + \iota_{w_i} \alpha_i) \ ,
    \end{align}
    for all $v_i, w_i \in V_i$ and $\alpha_i, \beta_i \in V^*_i$, where $\iota$ denotes the canonical dual pairing (contraction) between vectors and forms. Then $\IV_1$ and $\IV_2$ are related by
    \begin{align}
        U_f \coloneqq \set{(v_1+\alpha_1, v_2+ \alpha_2) \in \IV_1 \times \IV_2 \, \, \vert \, \, v_2 = f(v_1) \ , \ \alpha_1 = f^*(\alpha_2)  } \ ,
    \end{align}
    that is a Lagrangian subspace of $\IV_1 \times \overline{\IV}_2$.
\end{example}

We shall now introduce linear relations between Lagrangian subspaces. To do this, let $U$ be a linear relation between $W_1$ and $W_2$. We introduce the \emph{kernel} and \emph{cokernel of $U$} which are  given by
\begin{align} \label{eqn:KandCK}
    {\rm K}_U \coloneqq (W_1 \times \set{0}) \cap U  \qquad \text{and} \qquad {\rm CK}_U \coloneqq (\set{0} \times W_2) \cap U \ ,
\end{align}
respectively. We also denote $\ker(U) = \rmp_1({\rm K}_U)$ and ${\rm coker}(U) = \rmp_2({\rm CK}_U)$, where $\rmp_i \colon W_1 \times W_2 \to W_i$ are the natural projections.

\begin{definition}\label{def:linearlagrel}
Let  $W_1 \sim_U W_2$ and let $V_i \subset W_i$ be a Lagrangian subspace, for $i=1,2$. Define
  \begin{align}
      U \rvert_{V_i} \coloneqq \set{(w_1, w_2) \in U \  \rvert  \ w_i \in V_i } \ ,
  \end{align}
for $i=1,2.$ Then $U$ is a \emph{linear relation} between the Lagrangian subspaces $V_1$ and $V_2$, denoted $V_1 \sim_U V_2$, if 
  \begin{align}
      U \rvert_{V_1} = U\cap (V_1 \times V_2) + {\rm CK}_U \qquad \text{and} \qquad
     U \rvert_{V_2} = U\cap (V_1 \times V_2) + {\rm K}_U \ ,
 \end{align}
as  vector subspaces.
\end{definition}

Note that the inclusion $\supseteq$ always holds for both equations. 

\begin{remark}
    If $W_1 \sim_U W_2$ and $V_1 \sim_U V_2$ for Lagrangian subspaces $V_1 \subset W_1$ and $V_2 \subset W_2$, then if $(v_1, w_2) \in U\rvert_{V_1}$, there is some $v_2 \in V_2$ such that $(v_1,v_2) \in U$.

    If $U = \gr(f)$ is the graph of a linear map $f\colon W_1 \to W_2$, then ${\rm CK}_U = \set 0$, ${\rm K}_U = \gr(f\rvert_{\ker(f)})$ and $\ker(\gr(f)) = \ker(f)$. For Lagrangian subspaces $V_i \subset W_i$, the condition that $U\rvert_{V_1} = U\cap (V_1 \times V_2)$ is equivalent to $f(V_1) \subseteq V_2$. The condition that $U\rvert_{V_2} = U \cap (V_1 \times V_2)$ is equivalent to $f^{-1}(V_2) \subseteq V_1 + \ker(f)$. In the case that $f$ is an isomorphism these conditions reduce to $f(V_1) = V_2$.
\end{remark}

\medskip

\subsection{Clifford Algebras}~\\[5pt]
\label{sub:Clifford_alg}
Let $V$ be a finite-dimensional real vector space. The vector space $$\IV = V\oplus V^*$$ has a Clifford algebra structure generated by the relation $$\texttt{v} \cdot \texttt{v} = \ip{\texttt{v},\texttt{v}} \ , $$ for $\texttt{v}\in \IV$, where $\ip{\,\cdot\,,\,\cdot\,}$ is the natural pairing from Example~\ref{ex:natural}. We denote this Clifford algebra by~$\Cl(\IV)$. 

A \emph{Clifford module} over $\Cl(\IV)$ is a vector space $\sfS$ together with an algebra homomorphism $\Psi \colon \Cl(\IV) \to {\sf End}(\sfS)$. The elements of $\sfS$ are called \emph{spinors}. A Clifford module $\sfS$ is called a \emph{spinor module} if it is irreducible, i.e. if there are no non-trivial proper subspaces $\sfS' \subset \sfS$ such that $\sfS'$ is also a Clifford module. 

\begin{remark}[\textbf{Spinor Lines}] \label{rmk:spinorlines}
Let ${\rm Lag}(\IV)$ be the set of Lagrangian subspaces of $\IV$. For any $L \in {\rm Lag}(\IV)$, the subspace $\Omega^L \subset \sfS$ defined by
\begin{align}
    \Omega^L \coloneqq \set{\psi \in \sfS \ | \ \texttt{v}\cdot \psi = 0 \ , \ \forall\, \texttt{v} \in L } 
\end{align}
is one-dimensional. For any $\psi \in \sfS$, we define its \emph{null space} $N_\psi \subset \IV$ by
\begin{align}
    N_\psi \coloneqq \set{\texttt{v} \in \IV \ | \ \texttt{v} \cdot \psi=0} \ .
\end{align}
Note that $N_\psi$ is always an isotropic subspace. We  call $\psi \in \sfS$ a \emph{pure spinor} if $N_\psi$ is a Lagrangian subspace. If $\psi$ is a pure spinor, then so is $c\, \psi$ for any $c \in \IR$. Thus each pure spinor has an associated \emph{spinor line} $\Omega^\psi:=\Omega^{N_\psi} \subset \sfS$. Hence there is a one-to-one correspondence between Lagrangian subspaces $L \in {\rm Lag}(\IV)$ and pure spinor lines $\Omega^L \subset \sfS$.
\end{remark}

\begin{example}
    The space of forms $\sfS = \midwedge^\bullet V^*$ defines a spinor module for $\Cl(\IV)$, with $\Psi ( \texttt{v})(\sfs) = \texttt{v} \cdot \sf s$, for every $\texttt{v}\in \IV$ and $\sfs \in \sfS$. Explicitly, the $\Cl(\IV)$-module structure is given via the action
\begin{align}
    (v+\alpha) \cdot \nu = \iota_v \nu + \alpha \wedge \nu \ .
\end{align}
More generally, if $W$ is any vector space with split signature bilinear form, and $L \subset W$ is any Lagrangian subspace of $W$, then $\sfS = \midwedge^\bullet L^*$ defines a spinor module for $\Cl(W)$, with $\Psi$ similarly defined.

    Consider $1 \in \midwedge^\bullet V^*$, the unit spinor. Then
    \begin{align}
        N_1 = \set{v+ \alpha\in\IV \ \vert \ \iota_v 1 + \alpha \wedge 1 = 0} = V \ ,
    \end{align}
    and hence $1$ is a pure spinor, with associated Lagrangian subspace $V$. The associated line is $\Omega^V = \IR$ as a subspace of $\midwedge^\bullet V^*$. 
    
    At the opposite extreme, let $\lambda \in \det V^*$. Then $N_\lambda = V^*$ and $\Omega^{V^*} = \det V^*$, hence $\lambda$ is a pure spinor with associated Lagrangian subspace $V^*$.
    See \cite[Section 2.5]{gualtieri:tesi} for a full description of pure spinors associated to Lagrangian subspaces of $\IV$.
\end{example}

\begin{remark}[\textbf{Mukai Pairing on Spinors}]
\label{rem:Mukai}
    There is a bilinear form on spinors, called the Mukai pairing, which is well-behaved under the spin representation. For spinors $\sfs,\sfs'\in\sfS=\midwedge^\bullet V^*$, it is defined by
    \begin{align}
        (\!(\sfs,\sfs')\!) = \big(\vartheta\,\sfs\wedge \sfs'\big)^{\rm top} \ \in \ \det V^* \ ,
    \end{align}
    where $\vartheta$ is the antiautomorphism of the Clifford algebra $\Cl(\IV)$ determined by the map of tensor products $\mathtt{v}_1\otimes\cdots\otimes\mathtt{v_k}\mapsto \mathtt{v}_k\otimes\cdots\otimes\mathtt{v_1}$, and $(\,\cdot\,)^{\rm top}$ indicates taking the top degree component of a form. See~\cite[Section~2.4]{gualtieri:tesi} for further details and properties of this pairing.
\end{remark}

\medskip

\subsection{Linear Clifford Relations}\label{ssec:linearspinorrel}~\\[5pt]
Let $V$ and $W$ be  vector spaces. We can define two distinct $\Cl(\IV)\times \Cl(\IW)$-module structures on the product $\midwedge^\bullet V^* \times \midwedge^\bullet W^*$ through
\begin{align}\label{eqn:productcliffordaction}
    (\texttt{v} , \texttt{w})\cdot(\nu ,\mu) = (\texttt{v}\cdot \nu ,\pm\, \texttt{w}\cdot \mu) \ ,
\end{align}
for $\texttt{v} \in \IV$ and $\texttt{w}\in \IW$. Then
\begin{align}
    (\texttt{v} , \texttt{w})^2\cdot(\nu ,\mu) = (\texttt{v}^2\cdot \nu , \texttt{w}^2\cdot \mu) \ ,
\end{align}
and so this defines a Clifford action if and only if $\texttt{v}^2=\ip{\texttt{v},\texttt{v}}_\IV = \ip{\texttt{w},\texttt{w}}_\IW = \texttt{w}^2$. 

Let $U \subset \IV \times \overline{\IW}$ be an isotropic subspace. Then $\ip{\texttt{v},\texttt{v}}_\IV = \ip{\texttt{w},\texttt{w}}_{\IW}$ for each $u = (\texttt{v}, \texttt{w}) \in U$. It follows that $U$ has a Clifford algebra structure defined by $$u^2 = \ip{u,u}:=\ip{\texttt{v},\texttt{v}}_\IV = \ip{\texttt{w},\texttt{w}}_\IW \ . $$ Denote by $\Cl^+(U)$ the algebra $\Cl(U)$ acting on $\midwedge^\bullet V^* \times \midwedge^\bullet W^*$ with $+$ parity in Equation \eqref{eqn:productcliffordaction}, and by $\Cl^-(U)$ the algebra with opposite parity. In general, $\midwedge^\bullet V^* \times \midwedge^\bullet W^*$ is a reducible representation of $\Cl^+(U)$ and~$\Cl^-(U)$.

\begin{definition}\label{def:RCliffrel}
    A subspace ${\sf S}_U \subset \midwedge^\bullet V^* \times \midwedge^\bullet W^*$ is a \emph{U-Clifford relation of parity $i\in \set{-,+} = \IZ_2$} if $\sfS_U$ is a spinor module for $\Cl^i(U)$. Elements $\nu \in \midwedge^\bullet V^* $ and $\mu \in \midwedge^\bullet W^*$ are said to be \emph{$U$-related}, denoted $\nu \sim_U \mu$, if $(\nu,\mu) \in {\sf S}_U$.
\end{definition}

We show that this definition leads to a notion of \enquote{isomorphism} of spinor modules through

\begin{proposition}\label{prop:Uisomorphism}
    Let $\rmp_i$ be the projections from $\IV \times \IW$ to the first and second factor for $i=1,2$ respectively, and denote by the same symbols the projections from $\midwedge^\bullet V^* \times \midwedge^\bullet W^* $. Let $\sfS_U$ be a $U$-Clifford relation of parity $j$, with $\rmp_i(U) \neq \set{0}$ and $\rmp_i(\sfS_U) \neq \set{0}$ for $i=1,2$. Then
    \begin{align}
        \sfS_U \cap (\set{0} \times \midwedge^\bullet W^* ) = \sfS_U \cap (\midwedge^\bullet V^* \times \set{0} ) = \set{(0,0)} \ .
\end{align}
\end{proposition}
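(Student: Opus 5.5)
The plan is to use irreducibility of $\sfS_U$ directly: I will show that each of the two intersections is a Clifford submodule of $\sfS_U$, and then rule out the case where it is all of $\sfS_U$ using the non-triviality of the projections $\rmp_i(\sfS_U)$.

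\textbf{Step 1: invariance.} Set $\sfS_U^{(2)} \coloneqq \sfS_U \cap (\set{0}\times\midwedge^\bullet W^*)$. For $u=(\texttt{v},\texttt{w})\in U$ and $(0,\mu)\in \sfS_U^{(2)}$, the product action \eqref{eqn:productcliffordaction} gives
\begin{align}
u\cdot(0,\mu) = (\texttt{v}\cdot 0,\ \pm\,\texttt{w}\cdot\mu) = (0,\ \pm\,\texttt{w}\cdot\mu) \ .
\end{align}
This element lies in $\{0\}\times\midwedge^\bullet W^*$. It also lies in $\sfS_U$, because $\sfS_U$ is a $\Cl^j(U)$-module. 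Since $U$ generates $\Cl^j(U)$, the subspace $\sfS_U^{(2)}$ is closed under the whole Clifford algebra, so it is a Clifford submodule of $\sfS_U$. The identical argument shows that $\sfS_U^{(1)} \coloneqq \sfS_U\cap(\midwedge^\bullet V^*\times\set{0})$ is a submodule.

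\textbf{Step 2: irreducibility.} Since $\sfS_U$ is a spinor module, hence irreducible, each $\sfS_U^{(i)}$ is either $\set{(0,0)}$ or all of $\sfS_U$. Suppose $\sfS_U^{(2)}=\sfS_U$. Then every element of $\sfS_U$ has vanishing first component, so $\rmp_1(\sfS_U)=\set{0}$, which contradicts the hypothesis. Likewise, $\sfS_U^{(1)}=\sfS_U$ would force $\rmp_2(\sfS_U)=\set{0}$, again a contradiction. Hence both intersections are trivial.

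The only point requiring care is the convention for irreducibility. The argument above tacitly assumes $\sfS_U\neq\set{0}$, which follows from $\rmp_i(\sfS_U)\neq\set{0}$, so that the dichotomy ``zero or everything'' applies. The hypothesis $\rmp_i(U)\neq\set{0}$ does not appear to be needed in this argument. It presumably serves to exclude degenerate situations where one factor is acted on trivially, and I would remark on this rather than use it. Beyond that I expect no real obstacle: the key observation is simply that the componentwise Clifford action preserves each coordinate slice.
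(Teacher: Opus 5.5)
Your proof is correct and is essentially the paper's argument: the paper takes the cyclic submodule $\Cl^j(U)\cdot(0,\mu)\subset\sfS_U$ generated by a single element of the intersection, rather than the whole intersection $\sfS_U\cap(\set{0}\times\midwedge^\bullet W^*)$, and then concludes from irreducibility and $\rmp_1(\sfS_U)\neq\set{0}$ exactly as you do. You are also right that the hypothesis $\rmp_i(U)\neq\set{0}$ plays no role, since the unit of $\Cl^j(U)$ already puts $(0,\mu)$ in that cyclic submodule.
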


\begin{proof}
    Suppose $(0, \mu) \in \sfS_U$ for some $\mu \in \midwedge^\bullet W^*$. By the invariance of $\sfS_U$, we obtain
    \begin{align}
        \sfS' \coloneqq \mathsf{Cl}^j(U) \cdot (0, \mu) \ \subset \ \sfS_U \ .
    \end{align} 
    The set $\sfS'$ is invariant under $\mathsf{Cl}^j(U)$, thus by the irreducibility of $\sfS_U$ either $\sfS' = \sfS_U$ or $\sfS' = \set{(0,0)}$. By the assumptions $\rmp_2(U) \neq \set{0}$ and $\rmp_1(\sfS_U) \neq \set{0}$, the latter holds, hence $\mu = 0$. The other equality follows similarly.
\end{proof}

\begin{corollary}\label{cor:URinjective}
    If $\sfS_U$ is a $U$-Clifford relation with $\rmp_i(U) \neq \set{0}$ and $\rmp_i(\sfS_U) \neq \set{0}$ for $i=1,2$, and $(\nu, \mu), (\nu, \mu')\in \sfS_U$, then $\mu = \mu'$.
\end{corollary}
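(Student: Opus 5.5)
The claim is that $\sfS_U$ is the graph of a map from $\rmp_1(\sfS_U)$ to $\midwedge^\bullet W^*$. The plan is to use linearity of $\sfS_U$ and reduce to Proposition~\ref{prop:Uisomorphism}.

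Since $\sfS_U$ is a $U$-Clifford relation, it is by definition a spinor module for $\Cl^j(U)$, and in particular a vector subspace of $\midwedge^\bullet V^* \times \midwedge^\bullet W^*$. So if $(\nu,\mu)$ and $(\nu,\mu')$ both lie in $\sfS_U$, their difference
\begin{align}
(\nu,\mu) - (\nu,\mu') = (0, \mu - \mu')
\end{align}
also lies in $\sfS_U$. Hence $(0,\mu-\mu') \in \sfS_U \cap (\set{0}\times \midwedge^\bullet W^*)$.

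The hypotheses $\rmp_i(U)\neq\set{0}$ and $\rmp_i(\sfS_U)\neq\set{0}$ for $i=1,2$ are exactly those of Proposition~\ref{prop:Uisomorphism}. That proposition says this intersection is $\set{(0,0)}$, so $\mu-\mu'=0$, i.e. $\mu=\mu'$.

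There is no real obstacle here. The only point to check is that a spinor module for $\Cl^j(U)$ is a linear subspace, so that differences stay in $\sfS_U$; this holds by definition. The substantive work, which uses irreducibility, was already done in Proposition~\ref{prop:Uisomorphism}. By the symmetric equality in that proposition, the same argument also shows $\nu$ is determined by $\mu$, so $\sfS_U$ is the graph of an injective linear map $\rmp_1(\sfS_U)\to\midwedge^\bullet W^*$.
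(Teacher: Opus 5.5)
Your proof is correct and is the intended argument: linearity of $\sfS_U$ puts $(0,\mu-\mu')$ in $\sfS_U \cap (\set{0}\times\midwedge^\bullet W^*)$, and Proposition~\ref{prop:Uisomorphism} shows this intersection is trivial. The paper gives no separate proof and presents the statement as an immediate consequence of that proposition, which is exactly what you have done.
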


When the  kernel ${\rm K}_U$ and cokernel ${\rm CK}_U$ of $U$ are non-trivial, there is more freedom in choosing $\sfS_U$ to account for their presence. To address this freedom, we can consider only the injective part of the relation. We define $\red{\rm CK}_U = (\set 0 \times \IW ) \cap (U / {\rm K}_U)$ and
\begin{align}
    U^\times = (U/{\rm K}_U)/\red{\rm CK}_U \ \subset \ \IV/{\rm ker}(U) \times \IW / {\rm coker}(U) \ .
\end{align}
We denote by $[\texttt{v},\texttt{w}]$ the equivalence class of $(\texttt{v},\texttt{w}) \in U$, and by $[(\texttt{v},\texttt{w})]$ the projection under the quotient map $U \to U^\times$. This projection descends to the Clifford algebra, since $\ip{u+k, u+k'} = \ip{u,u}$ for any $u\in U$ and $k,k'$ in ${\rm K}_U$ or ${\rm CK}_U$. The quotient is then taken componentwise on the tensor product, i.e. $[u\otimes u'] = [u]\otimes [u']$. This therefore yields the Clifford algebra $\Cl(U^\times)$.

Suppose that there exists an action of $\Cl(U^\times)$ on $\midwedge^\bullet V^* \times \midwedge^\bullet W^*$. We say that this action is \emph{compatible with $\Cl^i(U)$} if for each $[\texttt{v},\texttt{w}] \in \Cl(U^\times)$ there is an element $(\texttt{v},\texttt{w})\in \Cl^i(U)$ with $[(\texttt{v},\texttt{w})] = [\texttt{v},\texttt{w}]$ such that
\begin{align}
    (\texttt{v},\texttt{w}) \cdot (\nu, \mu) = [\texttt{v},\texttt{w}] \cdot (\nu,\mu) \ .
\end{align}

\begin{definition}\label{def:RCliffrel2}
    A subspace $\sfS_U \subset \midwedge^\bullet V^* \times \midwedge^\bullet W^*$ is a \emph{weak $U$-Clifford relation of parity $i$} if there exists an action of $\Cl(U^\times)$ on $\midwedge^\bullet V^* \times \midwedge^\bullet W^*$ compatible with $\Cl^i(U)$ for which $\sfS_U$ is a spinor module.
\end{definition}

\begin{proposition}\label{prop:strong=weak}
    A $U$-Clifford relation defines a weak $U$-Clifford relation.
\end{proposition}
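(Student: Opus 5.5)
The plan is to show that the kernel and cokernel of $U$ act trivially on any $U$-Clifford relation $\sfS_U$. Then the $\Cl^i(U)$-action on $\sfS_U$ factors through $\Cl(U^\times)$, and $\sfS_U$ is a spinor module for the induced action.

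\textbf{Step 1: the null elements.} Take $k=(\texttt{v},0)\in{\rm K}_U$. Since $U$ is isotropic, $\ip{\texttt{v},\texttt{v}}_\IV=0$, so $k^2=0$ in $\Cl(U)$. Moreover $k$ anticommutes with every $u\in U$, because $ku+uk=2\ip{k,u}=0$. The same holds for elements of ${\rm CK}_U$, and more generally for any lift to $U$ of an element of $\red{\rm CK}_U$ that lies in $U\cap({\rm K}_U+\{0\}\times\IW)$.

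\textbf{Step 2: these elements act by zero.} Let $N$ be the subspace of $U$ spanned by ${\rm K}_U$ and the lifts of $\red{\rm CK}_U$. By Step 1, $N$ is totally isotropic and $N\subseteq U^\perp\cap U$. Consider
\begin{align}
\sfS'=\set{\psi\in\sfS_U \ | \ n\cdot\psi=0 \ \ \forall\, n\in N} \ .
\end{align}
Since each $n\in N$ anticommutes with all of $U$, the subspace $\sfS'$ is $\Cl^i(U)$-invariant. It is nonzero: taking a basis $n_1,\dots,n_r$ of $N$ and a nonzero $\psi$, the element $n_1\cdots n_r\cdot\psi$ or a suitable truncated product is nonzero and annihilated by all of $N$, using $n_j^2=0$ and the anticommutation. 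Irreducibility then gives $\sfS'=\sfS_U$, so $N$ acts by zero on $\sfS_U$. The same argument applies to the two-sided ideal generated by $N$.

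\textbf{Step 3: define the $\Cl(U^\times)$-action.} Step 2 says the ideal $\Cl(U)\,N\,\Cl(U)$ acts by zero on $\sfS_U$. The quotient of $\Cl(U)$ by this ideal is $\Cl(U/N)=\Cl(U^\times)$, since the pairing descends as noted before Definition~\ref{def:RCliffrel2}. Hence $[\texttt{v},\texttt{w}]\cdot\psi := (\texttt{v},\texttt{w})\cdot\psi$ is well defined on $\sfS_U$ for any lift $(\texttt{v},\texttt{w})$. The definition requires an action on all of $\midwedge^\bullet V^*\times\midwedge^\bullet W^*$. To get one, choose a complement $U'$ of $N$ in $U$, which is isomorphic to $U^\times$ as a quadratic space. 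Let $[u]$ act by the $\Cl^i$-action of its lift in $U'$; this is a Clifford action because $U'\to U^\times$ is an isometry. On $\sfS_U$ it agrees with the previous formula by Step 2, so compatibility with $\Cl^i(U)$ holds.

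\textbf{Step 4: irreducibility.} Any $\Cl(U^\times)$-invariant subspace of $\sfS_U$ is $\Cl^i(U)$-invariant, because the two actions coincide on $\sfS_U$. Irreducibility for $\Cl(U^\times)$ therefore follows from irreducibility for $\Cl^i(U)$.

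The main obstacle is Step 2, specifically producing a nonzero vector annihilated by all of $N$ and checking that the kernel of $\Cl(U)\to\Cl(U^\times)$ is exactly the ideal generated by $N$. The lifts of $\red{\rm CK}_U$ need care, since they are only defined modulo ${\rm K}_U$. I would handle this by quotienting in two stages, first by ${\rm K}_U$ and then by $\red{\rm CK}_U$, applying the argument of Step 2 at each stage.
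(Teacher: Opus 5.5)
Your proof is correct, but it reaches the key step by a different route from the paper.

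\textbf{The paper's route.} The paper gets triviality of the kernel action in one line from Proposition~\ref{prop:Uisomorphism}. For $(k,0)\in{\rm K}_U$ and $(\nu,\mu)\in\sfS_U$, the element $(k,0)\cdot(\nu,\mu)=(k\cdot\nu,0)$ lies in $\sfS_U\cap(\midwedge^\bullet V^*\times\{0\})=\{(0,0)\}$, and the cokernel is handled symmetrically. It then simply asserts that $\sfS_U$ descends to a $\Cl(U^\times)$-module.

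\textbf{Your route.} You observe that ${\rm K}_U$ and ${\rm CK}_U$ lie in the radical of the quadratic form $u^2=\ip{\texttt{v},\texttt{v}}_\IV=\ip{\texttt{w},\texttt{w}}_\IW$ on $U$. Hence they square to zero and anticommute with all of $U$, so their common annihilator in $\sfS_U$ is a nonzero $\Cl^i(U)$-invariant subspace. This is the general fact that the radical generates a nilpotent two-sided ideal, which must act by zero on any irreducible module.

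\textbf{What each approach buys.} Your argument needs nothing beyond irreducibility. The paper's appeal to Proposition~\ref{prop:Uisomorphism} tacitly requires $\rmp_1(\sfS_U)\neq\{0\}$ and $\rmp_2(\sfS_U)\neq\{0\}$, which are not hypotheses of the statement. Your Step~3 also defines the action on all of $\midwedge^\bullet V^*\times\midwedge^\bullet W^*$ via a complement $U'$ of $N$. That supplies the global compatible action Definition~\ref{def:RCliffrel2} actually asks for, which the paper leaves implicit, and it makes compatibility automatic. The paper's route is shorter and reuses an injectivity result it needs elsewhere anyway.

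\textbf{Two small remarks.} First, to produce the nonzero common annihilator, take $S\subseteq\{1,\dots,r\}$ maximal with $\prod_{j\in S}n_j\cdot\psi\neq 0$. A longest nonzero prefix of a fixed product need not be killed by the remaining $n_j$. Second, the two-stage quotient you worry about is unnecessary. The lifts of $\red{\rm CK}_U$ are exactly the elements $(\texttt{v},\texttt{w})\in U$ with $\texttt{v}\in\ker(U)$, i.e.\ the elements of ${\rm K}_U\oplus{\rm CK}_U$. So $N={\rm K}_U\oplus{\rm CK}_U$ and $U^\times=U/N$ directly.
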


\begin{proof}
    Suppose that $\sfS_U$ is a $U$-Clifford relation in the sense of Definition \ref{def:RCliffrel}. For $(k,0) \in {\rm K}_U$ and $(\nu, \mu)\in \sfS_U$, we obtain
    \begin{align}
        (k,0)\cdot (\nu, \mu) = (k\cdot \nu, 0) \ .
    \end{align}
    Then $k\cdot \nu = 0$ by Proposition \ref{prop:Uisomorphism}, hence ${\rm K}_U$ acts trivially on $\sfS_U$, and similarly for $\red{\rm CK}_U$. Hence $\sfS_U$ descends to a spinor module for $\Cl(U^\times)$.
\end{proof}

\begin{example}\label{ex:barf}
    Suppose  $\boldsymbol{f} \colon \IV \to \IW$ is an isomorphism given by $$\boldsymbol{f} = f^{-1} \oplus f^* \ , $$ where $f \colon W \to V$ is an isomorphism. Let \smash{$\sfS_{\gr(\boldsymbol{f})}$} be a $\gr(\boldsymbol{f})$-Clifford relation of parity $+$. Then, noting that $\Cl(\gr(\boldsymbol{f}))$ is the graph of the induced isomorphism from $\Cl(\IV)$ to $\Cl(\IW)$, 
    \begin{align}
        \big(v+\alpha, \boldsymbol{f}(v+\alpha)\big) \cdot (\nu, \mu) = \big(\iota_v\nu + \alpha \wedge \nu\,,\, \iota_{f^{-1}(v)}\mu + f^*(\alpha) \wedge \mu \big) \ ,
    \end{align}
    for every \smash{$(\nu , \mu) \in \sfS_{\gr(\boldsymbol{f})}$}, and $v+\alpha \in \IV$.
    
    If $\nu$ is a zero-form then $\mu$ is also a zero-form, otherwise there is $w\in W$ such that $f(w)\cdot\mu = 0$, contradicting Corollary \ref{cor:URinjective}. It follows that, on zero-forms, \smash{$\sfS_{\gr(\boldsymbol{f})}$} is generated by $(1,\kappa)$ for some $\kappa\in \IR\setminus\set{0}$. 
    
    Similarly, if $\nu\in V^*$ then $\mu \in W^*$ and $\mu(w) = \kappa\,\nu(f(w))$. Let $w\in W$ be such that $\nu(f(w)) = 1$. Then
    \begin{align}
        (\nu,\mu) = (\nu,f^* \nu)\cdot\big(f(w), w\big)\cdot (\nu,\mu) = (\nu, f^*\nu) \cdot \big(1, \mu(w)\big) = (\nu, \kappa \, f^*\nu) \ .
    \end{align}
    Thus \smash{$\nu \sim_{\gr(\boldsymbol{f})} \kappa\, f^*\nu$}, and hence $\mu = \kappa\, f^*\nu$ by Corollary \ref{cor:URinjective}. 
    
    By induction, it follows that \smash{$\sfS_{\gr(\boldsymbol{f})} = \gr(\kappa\, f^*)$}.
\end{example}

\begin{remark}\label{rem:kappa}
    In light of Example~\ref{ex:barf}, if $\sfS_U\subset \midwedge^\bullet V^* \times \midwedge^\bullet W^*$ is a $U$-Clifford relation, then
\begin{align}
    \sfS_U^\kappa = \set{(\nu, \kappa \, \nu_U) \ \vert \  (\nu, \nu_U) \in \sfS_U}
\end{align}
 for each $\kappa \in \IR \setminus \set 0$ is also a $U$-Clifford relation of the same parity. 
This is similar to the ambiguity addressed in \cite[Part~II]{Cortes:2019roa} regarding the choice of connection on the spinor bundle.

In order to remove this ambiguity, let $n=\dim_\IR V$ and consider
\begin{align}
    \det (\sfS_U^\kappa)^* = \set{(\mu, \tfrac{1}{\kappa^{2^n}} \mu_U) 
 \ \vert \ (\mu, \mu_U) \in \det \sfS_U^*} \ .
\end{align}
It follows that $\sfS_U \otimes |\det \sfS_U^*|^{\frac{1}{2^n}} = \sfS_U^\kappa \otimes |\det (\sfS_U^\kappa)^*|^{\frac{1}{2^n}}$ for every $\kappa \in \IR \setminus \set 0$, where $|\det \sfS_U^*|^{\frac{1}{2^n}}$ indicates the one-dimensional vector space of $\frac1{2^n}$-densities on $\sfS_U$, i.e. maps $\mu:\det\sfS_U\to\IR$ such that \smash{$\mu(\lambda\,\sfs) = |\lambda|^{\frac1{2^n}}\,\mu(\sfs)$} for all $\lambda\in\IR$ and $\sfs\in\det \sfS_U$. This motivates
\end{remark}

\begin{definition}\label{def:canonicalRclifford}
    If $\sfS_U$ is a weak $U$-Clifford relation of parity $i$ for $U \subset \IV \times \overline{\IW}$, the \emph{canonical $U$-Clifford relation of parity $i$} is
    \begin{align}
    \cS_U = \sfS_U \otimes |\det \sfS_U^* |^{\frac{1}{2^n}}
\end{align}
where $n = \dim_\IR V$. As a spinor module, $\Cl^i(U)$ acts trivially on the second component of $\cS_U$.

If $\Omega_\IV \subset \midwedge^\bullet V^*$ and $\Omega_\IW \subset \midwedge^\bullet W^*$ are pure spinor lines, then $\Omega_\IV$ and $\Omega_\IW$ are \emph{$U$-related} if
\begin{align}
    (\Omega_\IV \times \Omega_\IW)\otimes |\det \sfS_U^* |^{\frac{1}{2^n}} \ \subset \ \cS_U \ ,
\end{align}
 denoted as $\Omega_\IV \sim_U \Omega_\IW$.
\end{definition}

\begin{remark}
Definition \ref{def:canonicalRclifford} does not depend on the choice of $\sfS_U$. Indeed, suppose there exists some $U$-Clifford relation $\sfS_U$. Let $$(u_1, u_2) \otimes (\mu_1, \mu_2) \ \in \ (\Omega_\IV \times \Omega_\IW)\otimes |\det \sfS_U^* |^{\frac{1}{2^n}} \ , $$ and $\kappa \in \IR \setminus \set 0$. Then
\begin{align}
    (u_1, u_2) \otimes (\mu_1, \tfrac 1\kappa\, \mu_2) = (u_1, \tfrac 1 \kappa\, u_2) \otimes (\mu_1, \mu_2) \ \in \ (\Omega_\IV \times \Omega_\IW)\otimes |\det \sfS_U^* |^{\frac{1}{2^n}} \ .
\end{align}
Hence $(\Omega_\IV \times \Omega_\IW)\otimes |\det (\sfS_U^\kappa)^* |^{\frac{1}{2^n}} \subseteq (\Omega_\IV \times \Omega_\IW)\otimes |\det \sfS_U^* |^{\frac{1}{2^n}}$. The opposite inclusion holds similarly, therefore
\begin{align}
    (\Omega_\IV \times \Omega_\IW)\otimes |\det \sfS_U^* |^{\frac{1}{2^n}} = (\Omega_\IV \times \Omega_\IW)\otimes |\det (\sfS_U^\kappa)^* |^{\frac{1}{2^n}} \ .
\end{align}
\end{remark}

\begin{proposition}\label{prop:LreliffUrel}
    Let $U \colon \IV \rel \IW$ be a linear relation. Let $L_1 \subset \IV$ and $L_2 \subset \IW$ be Lagrangian subspaces, with $\Omega_1 \subset \midwedge^\bullet V^*$ and $\Omega_2 \subset \midwedge^\bullet W^*$ the corresponding pure spinor lines. Suppose there is a $U$-Clifford relation $\sfS_U \subset \midwedge^\bullet V^* \times \midwedge^\bullet W^*$.
    If $\Omega_1 \sim_U \Omega_2$, then $L_1 \sim_U L_2$. 
\end{proposition}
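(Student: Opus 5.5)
Unwinding the definitions, $\Omega_1 \sim_U \Omega_2$ says that for a generator $\psi_1$ of $\Omega_1$ and $\psi_2$ of $\Omega_2$ the pair $(\psi_1,\psi_2)$ lies in $\sfS_U$, up to the density factor, on which $\Cl^i(U)$ acts trivially. So we may work with $\sfS_U$ itself. By Proposition \ref{prop:Uisomorphism} and Corollary \ref{cor:URinjective}, elements of $\sfS_U$ with one component zero are zero, and the first component determines the second. The inclusions $\supseteq$ in Definition \ref{def:linearlagrel} always hold, so we only prove $\subseteq$. By the symmetry of the setup (exchange the roles of $\IV,\IW$, and of ${\rm K}_U$ and ${\rm CK}_U$), it suffices to prove
$$U\rvert_{L_1} \subseteq U\cap(L_1\times L_2) + {\rm CK}_U .$$

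\textbf{Key step.} Let $u=(\texttt{v},\texttt{w})\in U$ with $\texttt{v}\in L_1$, so $\texttt{v}\cdot\psi_1=0$. Then
$$u\cdot(\psi_1,\psi_2) = (0,\pm\,\texttt{w}\cdot\psi_2)\in \sfS_U ,$$
and Proposition \ref{prop:Uisomorphism} forces $\texttt{w}\cdot\psi_2=0$. Hence $\texttt{w}\in N_{\psi_2}=L_2$, using Remark \ref{rmk:spinorlines}. This gives $u\in U\cap(L_1\times L_2)$, which is even stronger than needed. The same argument with the roles swapped gives $U\rvert_{L_2}\subseteq U\cap(L_1\times L_2)$.

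\textbf{Main obstacle.} This is the applicability of Proposition \ref{prop:Uisomorphism}, which needs $\rmp_i(U)\neq\{0\}$ and $\rmp_i(\sfS_U)\neq\{0\}$. I would treat the degenerate cases separately.

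If $\rmp_2(U)=\{0\}$, then $U\subseteq \IV\times\{0\}$ is Lagrangian, so $U={\rm K}_U$ and $\rmp_1(U)=\IV$. In this case $U\rvert_{L_1}=U\cap(L_1\times L_2)$ trivially. For $U\rvert_{L_2}=U$ we just use the term ${\rm K}_U=U$. The case $\rmp_1(U)=\{0\}$ is symmetric.

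If instead $\rmp_1(\sfS_U)=\{0\}$, then $\psi_1=0$. This contradicts $\psi_1$ generating a line, provided $\Omega_1$ genuinely embeds, i.e. $\sfS_U$ contains a pair with nonzero first component. I would record this as the implicit non-degeneracy assumption, or, when it fails, reduce to the kernel case just treated.

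More generally, when ${\rm K}_U$ or ${\rm CK}_U$ is nonzero, one can pass to $U^\times$ via Proposition \ref{prop:strong=weak}. The kernel and cokernel act trivially on $\sfS_U$, so the argument above runs on $U^\times$ and then lifts back, with the lifts absorbed into the ${\rm K}_U$ and ${\rm CK}_U$ summands.
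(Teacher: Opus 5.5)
Your key step is the paper's proof: act on $(\psi_1,\psi_2)\in\sfS_U$ by $(\texttt{v},\texttt{w})\in U\rvert_{L_1}$, use Proposition \ref{prop:Uisomorphism} to get $\texttt{w}\cdot\psi_2=0$ and hence $U\rvert_{L_1}\subseteq L_1\times L_2$, then argue symmetrically for $L_2$. The case analysis is unnecessary, because the proof of Proposition \ref{prop:Uisomorphism} only uses $\rmp_1(\sfS_U)\neq\{0\}$ (resp. $\rmp_2(\sfS_U)\neq\{0\}$), and this already holds since $\psi_1,\psi_2\neq0$. Your aside that $\rmp_2(U)=\{0\}$ gives $\rmp_1(U)=\IV$ is wrong (that case forces $\IW=0$ with $U$ Lagrangian in $\IV$), but nothing depends on it.
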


\begin{proof}
    Take $(\psi_1,\psi_2) \in \Omega_1 \times \Omega_2$ such that $(\psi_1,\psi_2) \in \sfS_U$ and consider $(\texttt{l}, \texttt{w})\in U|_{L_1}$. Then
    \begin{align}
        (\texttt{l}, \texttt{w}) \cdot (\psi_1, \psi_2) = (0, \pm\,\texttt{w}\cdot \psi_2) \ .
    \end{align}
    Since $(\psi_1,\psi_2) \in \sfS_U$, by Proposition \ref{prop:Uisomorphism} it follows that $\texttt{w}\cdot \psi_2 = 0$ and hence $\texttt{w}\in L_2$, i.e. $U|_{L_1} \subset L_1 \times L_2$. Similarly, $U|_{L_2} \subset L_1 \times L_2$, and hence $L_1 \sim_U L_2$.
\end{proof}

\begin{remark}\label{rem:fullrankcase}
    Suppose that $L_1 \sim_U L_2$ and that either $\rmp_1(U) \supseteq L_1$ or $\rmp_2(U) \supseteq L_2$; when $U$ is a Lagrangian subspace with trivial kernel and cokernel, this holds automatically. Assume without loss of generality the latter holds. Let $\psi_1\in\Omega_1$.
    By Proposition \ref{prop:Uisomorphism}, there is some $\psi_2'$ such that $(\psi_1, \psi_2') \in \sfS_U$. Since $\rmp_2(U) \supseteq L_2$, for each $\texttt{l}_2 \in L_2$ there is $\texttt{l}_1 \in L_1$ such that $(\texttt{l}_1, \texttt{l}_2) \in U\cap (L_1 \times L_2)$, and hence
    \begin{align}
        (\texttt{l}_1, \texttt{l}_2) \cdot (\psi_1, \psi_2') = (0, \pm\,\texttt{l}_2 \cdot \psi_2')
    \end{align}
    must vanish. Thus $\psi_2' \in \Omega_2$ so that $\Omega_1 \sim_U \Omega_2$.
\end{remark}

\begin{remark}\label{rem:weakcase}
    For a weak $U$-Clifford relation $\sfS_U$, given $(\texttt{l}, \texttt{w}) \in U|_{L_1}$ and $[\texttt{l},\texttt{w}] \in U|_{L_1}^\times$, in general the equality
    \begin{align}\label{eqn:L1compatibility}
        [\texttt{l},\texttt{w}] \cdot (\psi_1, \psi_2) = (\texttt{l},\texttt{w}) \cdot (\psi_1, \psi_2)
    \end{align}
    does not hold.
    In this light, when $U \colon \IV \rel \IW$ is a linear relation and $\IV$ (resp. $\IW$) is equipped with a Lagrangian subspace $L_1$ (resp. $L_2$), we say that $\sfS_U$ is an \emph{$L_1$-weak} (resp. \emph{$L_2$-weak}) \emph{$U$-Clifford relation} if $\sfS_U$ is a weak $U$-Clifford relation satisfying the condition \eqref{eqn:L1compatibility} (resp. the corresponding condition with $L_2$ in place of $L_1$). Then the conclusion of Proposition~\ref{prop:LreliffUrel} holds.
\end{remark}

\begin{remark}
    Thus far we have been working over the field of real numbers $\IR$. Everything discussed up to this point in this section holds equally well over the field of complex numbers $\IC$. For later use, let us introduce some notation in the complex case. If $W$ is a real vector space, its complexification $W\otimes_\IR \IC$ is denoted by $W^\IC$. For a subspace $V \subset W^\IC$, we denote the complex conjugate of $V$ by $\overline V$.
\end{remark}

\section{Courant Algebroids, Dirac Structures and Generalised Complex Geometry} \label{sect:section3}
We follow \cite{gualtieri:tesi, Alekseev2001} for the brief review of this section, discussing how Courant algebroids and spinor bundles are deeply interconnected, as well as some aspects of generalised complex geometry. This builds on the notions from Section~\ref{sub:Clifford_alg}.

\medskip

\subsection{Spinors and Courant Algebroids}~\\[5pt]
Let $M$ be a smooth manifold with ${\rm dim}_\IR(M)=n$. 
Any pseudo-Euclidean vector bundle $E\to M$ of rank $2n$ with split signature pairing always admits a ${\sf Spin}(n,n)$-structure \cite{gualtieri:tesi}. 

For instance, consider the generalised tangent bundle $$\IT M \coloneqq TM \oplus T^*M$$ equipped with the canonical split signature metric $\ip{\, \cdot\, , \, \cdot\,}$ induced by the duality pairing:
\begin{align}
    \ip{X+\alpha, Y+\beta} \coloneqq \tfrac12\,(\iota_X \beta+ \iota_Y \alpha) \ ,
\end{align}
for all $X+ \alpha, Y + \beta \in \sfgamma(\IT M)$.
Then for any $X + \xi \in \sfGamma(\IT M)$ and $\omega \in \sfOmega^\bullet(M),$ the action 
\begin{align} \label{eqn:cliffaction}
    (X + \xi) \cdot \omega \coloneqq \iota_X \omega + \xi \wedge \omega
\end{align}
satisfies
\begin{align}
   (X + \xi)^2 \cdot \omega = \ip{X+ \xi,X + \xi} \, \omega \ .
\end{align}
This makes $\sfGamma(\midwedge^\bullet T^*M)$ into an irreducible Clifford module, so that
\begin{align}
    \Cl(TM \oplus T^*M) \simeq {\sf End}(\midwedge^\bullet T^*M) \ .
\end{align}

Let $H \in \sfOmega_{\tt cl}^3(M)$ be a closed three-form. Introduce the \emph{twisted differential} $\de_H \colon \sfOmega^\bullet(M) \to \sfOmega^\bullet(M)$ given by 
\begin{align}
    \de_H \omega \coloneqq \de \omega + H \wedge \omega \ ,
\end{align}
for any $\omega \in \sfOmega^\bullet(M).$ We may then define the twisted Dorfman bracket $\cbrak{ \,\cdot \,,\, \cdot\,}_H$ as a derived bracket of $\de_H$ and the Clifford action \eqref{eqn:cliffaction} given by
\begin{align}\label{eqn:Dorfderived}
    \cbrak{e_1, e_2}_H \cdot \omega \coloneqq [[\de_H , e_1 \,\cdot\,], e_2 \,\cdot\, ] \, \omega \ ,
\end{align}
for all $e_1, e_2 \in \sfGamma(\IT M)$ and $\omega \in \sfOmega^\bullet(M)$. The bracket $[\,\cdot\,,\,\cdot\,]$ on the right-hand side of \eqref{eqn:Dorfderived} is the graded commutator of operators in ${\sf End}(\midwedge^\bullet T^*M)$, i.e. if $A,B \in {\sf End}(\midwedge^\bullet T^*M)$ are of degree $|A|$ and $|B|$ respectively, and $\omega \in \sfOmega^\bullet(M)$, then
\begin{align}
    [A,B] \cdot \omega= A\cdot B\cdot \omega - (-1)^{|A|\, |B|}\, B\cdot A\cdot \omega \ .
\end{align}
Explicitly
\begin{align}\label{eqn:Hbracket}
    \llbracket X + \alpha, Y + \beta \rrbracket_H = [X,Y] + \pounds_X \beta - \iota_Y\, \de \alpha + \iota_Y\, \iota_X H \ ,
\end{align}
where $\pounds_X$ denotes the Lie derivative along a vector field $X\in\sfGamma(TM)$. When $H=0$ we drop the subscript in the notation for the bracket.

For any pseudo-Euclidean vector bundle $E \to M$ with inner product $\ip{\,\cdot \, , \, \cdot\,}$ of split signature, we can always find a Clifford bundle $\Cl(E) \to M$ admitting an irreducible Clifford module $\sfS \to M,$ with the Clifford action giving an isomorphism $\Cl(E) \simeq {\sf End}(\sfS).$ The role of the twisted differential $\de_H$ in the general case is taken by

\begin{definition} \label{def:dgo}
    A \emph{Dirac generating operator} $\sfD$ for the pseudo-Euclidean vector bundle $E \to M$ is an odd operator on $\sfGamma(\sfS)$ satisfying
    \begin{enumerate}
        \item $[\sfD, f] \in \sfGamma(E),$ for all $f \in C^\infty(M),$ i.e. $[\sfD,f]$ is the operator given by the Clifford action of some section of $E;$ \\[-3mm]
        \item $[[\sfD, e_1], e_2] \in \sfGamma(E),$ for all $e_1, e_2 \in \sfGamma(E);$ and \\[-3mm]
        \item $\sfD^2 \in C^\infty(M),$ i.e. $\sfD^2$ acts as a multiplication by a function.
    \end{enumerate}
\end{definition}

The structures on the generalised tangent bundle $\IT M$ are replaced by

\begin{definition}\label{def:CourantAlg}
A \emph{Courant algebroid} is a quadruple\footnote{Throughout this paper we will denote the Courant algebroid $(E,\llbracket\,\cdot\,,\,\cdot\,\rrbracket,\langle\,\cdot\,,\,\cdot\,\rangle ,\rho)$ simply by $E$ when there is no ambiguity. If there is more than one Courant algebroid $E$ involved in the discussion, we label its operations with a subscript $_E\,$.} $(E,\llbracket\,\cdot\,,\,\cdot\,\rrbracket,\langle\,\cdot\,,\,\cdot\,\rangle ,\rho)$, where $E$ is a vector bundle over a manifold $M$
with a fibrewise non-degenerate pairing $\langle\,\cdot\,,\,\cdot\,\rangle \in\mathsf{\Gamma}(\midodot^2E^*),$ a vector bundle morphism 
$\rho \colon E \rightarrow TM $
called the \emph{anchor}, and an $\IR$-bilinear bracket operation   
\begin{equation} 
\llbracket\,\cdot\,,\,\cdot\,\rrbracket \colon \mathsf{\Gamma}(E) \times \mathsf{\Gamma}(E) \longrightarrow \mathsf{\Gamma}(E)
\end{equation}
called the \emph{Dorfman bracket}, which together satisfy
\begin{enumerate}[label=(\roman{enumi})]
    \item $\pounds_{\rho(e)}\langle e_1, e_2\rangle  = \langle\llbracket e,e_1\rrbracket, e_2\rangle  + \langle e_1, \llbracket e, e_2\rrbracket\rangle  \ ,$  \label{eqn:metric1} \\[-3mm]
    \item $\langle\llbracket e, e\rrbracket, e_1\rangle  = \tfrac{1}{2}\, \pounds_{\rho(e_1)}\langle e, e\rangle  \ ,$  \label{eqn:metric2} \ and \\[-3mm]
    \item $\llbracket e, \llbracket e_1,e_2 \rrbracket  \rrbracket  = \llbracket \llbracket e,e_1\rrbracket  ,e_2 \rrbracket  +  \llbracket e_1, \llbracket e,e_2 \rrbracket  \rrbracket  \ , $ \label{eqn:Jacobiid}
\end{enumerate}
for all $e,e_1, e_2 \in \mathsf{\Gamma}(E)$. 

A Courant algebroid is \emph{regular} if its anchor $\rho:E\to TM$ has constant rank, and \emph{transitive} if $\rho$ is surjective.
\end{definition}

\begin{remark}
The Jacobi identity~\ref{eqn:Jacobiid} together with the pairing preserving condition~\ref{eqn:metric1} imply that the anchor $\rho$ is a bracket homomorphism:
\begin{align}
\rho(\llbracket e_1, e_2\rrbracket)= [\rho(e_1), \rho(e_2)] \ ,
\end{align}
for all $e_1 , e_2 \in \mathsf{\Gamma}(E)$, where the bracket on the right-hand side is the usual Lie bracket of vector fields in $\sfGamma(TM)$. We refer to \cite[Section 2]{Kotov:2010wr} for a complete account of all the main properties of Courant algebroids.
\end{remark}

The general result connecting Definitions \ref{def:dgo} and \ref{def:CourantAlg} can  be formulated as

\begin{theorem}[\!\!\!\textbf{\cite[Corollary 2.3]{Alekseev2001}}] \label{thm:CourantDirac}
    Let $\sfD$ be a Dirac generating operator for the pseudo-Euclidean vector bundle $E \to M.$ Then there is a canonical Courant algebroid structure on $E$ whose anchor and Dorfman bracket are respectively given by
    \begin{align}
        \pounds_{\rho(e)} f \coloneqq [[\sfD, f], e] \ ,
    \end{align}
    for all $e \in \sfGamma(E)$ and $f \in C^\infty(M),$ and
    \begin{align}
        \cbrak{e_1, e_2} \coloneqq [[\sfD, e_1], e_2] \ ,
    \end{align}
    for all $e_1, e_2 \in \sfGamma(E).$
\end{theorem}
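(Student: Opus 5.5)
The plan is to check the Courant axioms directly in the operator algebra ${\sf End}(\sfGamma(\sfS))\simeq\sfGamma(\Cl(E))$, using only graded commutators. Sections of $E$ are odd operators and functions are even. The identity $[e_1,e_2]=e_1e_2+e_2e_1=2\ip{e_1,e_2}$ identifies the pairing with a graded commutator. First I would check that the formulas are well defined. By axiom (1) of Definition~\ref{def:dgo}, $[\sfD,f]$ is a section of $E$, so $[[\sfD,f],e]=2\ip{[\sfD,f],e}$ is a function. I would show it is a derivation in $f$ and $C^\infty(M)$-linear in $e$ using $[\sfD,fg]=[\sfD,f]g+f[\sfD,g]$. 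This defines a vector field $\rho(e)$, and a bundle map $\rho$ because the dependence on $e$ is tensorial. By axiom (2), $\cbrak{e_1,e_2}=[[\sfD,e_1],e_2]$ lies in $\sfGamma(E)$.

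Next I would verify the three axioms. For \ref{eqn:metric1}, set $\delta_e:=[[\sfD,e],\,\cdot\,]$. Since $[\sfD,e]$ is even, $\delta_e$ is an even derivation of the graded commutator. Applying it to $2\ip{e_1,e_2}=[e_1,e_2]$ gives $\delta_e[e_1,e_2]=[\cbrak{e,e_1},e_2]+[e_1,\cbrak{e,e_2}]$. On the left, $\delta_e f=[[\sfD,e],f]$, and the graded Jacobi identity rewrites this as $[[\sfD,f],e]$ up to a term $[\sfD,[e,f]]$, which vanishes because $[e,f]=0$. This yields $\pounds_{\rho(e)}f$ and hence \ref{eqn:metric1}. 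For \ref{eqn:metric2}, I would use the Jacobi identity again: $[[\sfD,e],e]=\tfrac12[\sfD,[e,e]]=[\sfD,\ip{e,e}]$, a section of $E$ by axiom (1). Pairing this with $e_1$ gives $\tfrac12[[\sfD,\ip{e,e}],e_1]=\tfrac12\pounds_{\rho(e_1)}\ip{e,e}$.

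The main obstacle is the Jacobi identity \ref{eqn:Jacobiid}, since it is the only place where $\sfD^2\in C^\infty(M)$ enters. Because $\delta_e$ is a derivation, $\cbrak{e,\cbrak{e_1,e_2}}=\delta_e[[\sfD,e_1],e_2]$ expands to $[[\delta_e\sfD,e_1],e_2]+\cbrak{\cbrak{e,e_1},e_2}+\cbrak{e_1,\cbrak{e,e_2}}$. So the identity holds once I show $[[[[\sfD,e],\sfD],e_1],e_2]=0$. Graded Jacobi gives $[[\sfD,e],\sfD]=[\sfD^2,e]$, with a factor that I will track carefully. Since $\sfD^2=g$ is a function, $[g,e]=0$ because functions commute with Clifford multiplication, and the extra term vanishes. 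The remaining care is bookkeeping of signs and factors of $2$ in the graded Jacobi identity, for instance $[\sfD,[\sfD,e]]=[\sfD^2,e]$ for odd $\sfD$. I would also confirm that every operator that is claimed to be a section actually acts by Clifford multiplication, which is needed to read equalities of operators as equalities of sections. This follows from faithfulness of $\Cl(E)\simeq{\sf End}(\sfS)$.
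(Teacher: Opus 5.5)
Your proposal is correct. The paper gives no proof of its own; it simply cites Alekseev--Xu, whose argument is this same derived-bracket verification: the graded Jacobi identity yields axioms (i) and (ii), and centrality of $\sfD^2\in C^\infty(M)$ gives $[[\sfD,e],\sfD]=-[\sfD,[\sfD,e]]=-[\sfD^2,e]=0$, which is all the Jacobi identity needs (the sign you left open is $-1$, but it is immaterial since the term vanishes).
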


A converse to Theorem~\ref{thm:CourantDirac} was first formulated in \cite{Alekseev2001}. Its circumscription to {regular} Courant algebroids is given by

\begin{theorem}[\!\!\!\textbf{\cite[Theorem 41]{Cortes:2019roa}}] \label{thm:DGOconverse}
    Let $E$ be a regular Courant algebroid with split signature pairing. Every spinor bundle $\sfS$ over $\Cl(E)$  locally admits a Dirac generating operator.
\end{theorem}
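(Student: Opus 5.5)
The plan is to construct a Dirac generating operator locally, by first trivialising the Courant algebroid and then correcting a flat-connection Dirac operator. The statement is local, so I fix a point of $M$ and work on a small open neighbourhood $\sfU$. Regularity lets me use the local structure theory of regular Courant algebroids (Chen--Stiénon--Xu type splitting). After possibly shrinking $\sfU$, there is an isotropic splitting
\begin{align}
E|_\sfU \simeq F \oplus \ccG \oplus F^* ,
\end{align}
where $F=\rho(E)$ is the (integrable, constant rank) image distribution and $\ccG = \ker\rho/(\ker\rho)^\perp$ is a bundle of quadratic Lie algebras. With respect to this splitting, the Dorfman bracket is given by a connection $\nabla$ on $\ccG$, a curvature-type $\ccG$-valued two-form $R$ and a three-form $H$ on $F$, satisfying the usual Bianchi-type compatibility identities. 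Shrinking further, I may assume $F$ is spanned by coordinate vector fields of a foliation chart, and that $\ccG$ is trivialised by a frame in which the pairing is constant.

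Next I build the spinor bundle explicitly. Over $\sfU$, any spinor bundle for $\Cl(E)$ is isomorphic to
\begin{align}
\midwedge^\bullet F^* \otimes \sfS_\ccG \otimes \cL ,
\end{align}
where $\sfS_\ccG$ is a spinor module for the quadratic fibre $\Cl(\ccG)$ and $\cL$ is a line bundle. This works because two irreducible modules differ only by a line bundle, and the split signature makes $\sfS_\ccG$ exist (perhaps after complexification or choosing the module for the fibre signature). Locally $\cL$ is trivial, so I fix this model with a trivialisation. On it I define the candidate operator
\begin{align}
\sfD = \de_F \otimes 1 + \tfrac12\,\nabla^{\rm spin} + \tfrac16\, c_{ijk}\,\Gamma^i\Gamma^j\Gamma^k + R\cdot{} + H\wedge{} ,
\end{align}
with the terms defined as follows. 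Here $\de_F$ is the leafwise de Rham differential, which makes sense in the foliated chart. $\nabla^{\rm spin}$ is the spin lift of $\nabla$ acting via the $\midwedge^1F^*$ factor. The cubic term uses the structure constants $c_{ijk}$ of $\ccG$ in the constant frame (the Kostant cubic Dirac term). Finally, $R$ and $H$ act through the Clifford action.

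Then I verify the three axioms of Definition~\ref{def:dgo}.
\begin{enumerate}
\item $[\sfD,f]=\de_F f\,\cdot$, which is Clifford multiplication by a section of $F^*\subset E$.
\item $[[\sfD,e_1],e_2]$ should reproduce the Dorfman bracket of the split model. Since $\sfD$ is a sum of terms of Clifford degree $\le 3$ plus derivations, this double commutator has Clifford degree $\le 1$. Its explicit form follows term by term, as in the derived bracket computation \eqref{eqn:Dorfderived}.
\item $\sfD^2$ must be a function.
\end{enumerate}

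I expect condition (3) to be the main obstacle. Expanding $\sfD^2$ gives components of Clifford degree 4 and degree 2. These vanish by the compatibility equations for $(\nabla, R, H)$: the Jacobi identity of $\ccG$, the flatness condition $\nabla$-curvature $=\ad_R$, $\nabla R=0$, and $\de H = \langle R,R\rangle$. This mirrors the Bianchi identity argument behind $\de_H^2=0$. What remains is a degree-0 term, for instance the constant $\tfrac{1}{24}\,c_{ijk}c^{ijk}$ from Kostant's identity, and such a term is allowed. To organise the computation, I first check it in the two extreme cases: exact $E$ (where the operator is just $\de_H$) and $E=\ccG$ over a point (Kostant's cubic Dirac operator). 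I then treat the cross terms using the derivation property of $\nabla^{\rm spin}$ with respect to Clifford multiplication.
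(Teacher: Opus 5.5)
The paper does not prove this statement itself (it is quoted from \cite{Cortes:2019roa}), but your dissection-based construction is the standard one. It has exactly the shape of the operator \eqref{eqn:canonicalDGO} that the paper later borrows from the same source: a de Rham part, a spinor connection compatible with $\nabla$, the three-form term, a term built from the $\ccG$-valued two-form, and the cubic term in the structure tensor $C$ of $\ccG$. So your route is essentially the one the paper relies on.

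A few points to keep in mind when you carry it out:
\begin{itemize}
\item $\midwedge^\bullet F^*\otimes\sfS_\ccG$ must be the graded tensor product.
\item Your numerical coefficients are not determined in advance; they are only fixed by matching condition (2).
\item The spinor connection must be the traceless spin lift in your constant frame. Shifting it by a one-form $\beta$ adds $\de_F\beta\wedge$ to $\sfD^2$, which breaks condition (3) unless $\de_F\beta=0$.
\item The mixed degree-four term $\alpha^a\wedge\nabla_{X_a}C$ in $\sfD^2$ vanishes because $\nabla$ is a metric derivation of $[\,\cdot\,,\,\cdot\,]_\ccG$. This condition is missing from your list.
\item The Bianchi-type condition you need is $\de^\nabla R=0$, not $\nabla R=0$.
\end{itemize}
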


Since any vector bundle with split signature pairing admits a spinor bundle, we can thus construct a Dirac generating operator over any regular split signature Courant algebroid. We will make use of this result in Section \ref{ssec:twisteddifferential}.

\begin{remark}\label{rem:Courantcond}
For any Courant algebroid $E$ over $M$ there is a map $\rho^*\colon T^*M\to E$ given by
\begin{align}\label{eqn:rhostar}
\langle\rho^*(\alpha),e\rangle  \coloneqq \iota_e\, \rho^{\tt t}(\alpha) \ ,
\end{align}
for all $\alpha\in \mathsf{\Omega}^1(M)$ and $e\in\mathsf{\Gamma}(E)$, where $\rho^{\tt t}\colon T^*M\to E^*$ is the transpose of $\rho.$ Then the symmetric part of the Dorfman bracket can be expressed as
\begin{align}
    \llbracket e_1, e_2\rrbracket + \llbracket e_2, e_1\rrbracket = \rho^*\,\de \, \ip{e_1 , e_2} \ ,
\end{align}
for all $e_1,e_2\in\sfGamma(E)$.

The subbundle $\ker(\rho)^\perp\subset E$ orthogonal to $\ker(\rho)$ with respect to the pairing  coincides with the image
\begin{align}
    \ker(\rho)^\perp = \Im(\rho^*) \ .
\end{align}
The pairing preserving condition~\ref{eqn:metric1} of Definition~\ref{def:CourantAlg} implies
\begin{align}
    \rho\circ\rho^* = 0 \ , 
\end{align}
from which it follows that the kernel of the anchor is a coisotropic subbundle of $E$:
\begin{align}
    \ker(\rho)^\perp \ \subset \ \ker(\rho) \ .
\end{align}
\end{remark}

Remark~\ref{rem:Courantcond} motivates

\begin{definition}
A Courant algebroid $E$ over $M$ is \emph{exact} if the chain complex
\begin{align}\label{eqn:exCou}
0 \longrightarrow T^*M \xlongrightarrow{\rho^*} E \xlongrightarrow{\rho} TM \longrightarrow 0
\end{align}
is a short exact sequence of vector bundles.
\end{definition}

By definition, an exact Courant algebroid is transitive and $\ker(\rho)=\ker(\rho)^\perp$ is a Lagrangian subbundle of $E$.

An isotropic splitting $$\sigma\colon TM \longrightarrow E$$ of the short exact sequence \eqref{eqn:exCou} induces a closed three-form $H_\sigma \in \mathsf{\Omega}_{\tt cl}^3(M)$ given by
\begin{align}\label{eqn:severaclass}
    H_\sigma(X,Y,Z) =  \ip{\llbracket\sigma(X),\sigma(Y)\rrbracket, \sigma(Z)}
\end{align}
for $X,Y,Z \in \mathsf{\Gamma}(TM)$. Here $H_\sigma = H$ from the twisted differential $\de_H$, hence $\llbracket\,\cdot\,,\, \cdot\, \rrbracket = \llbracket\,\cdot\,,\, \cdot\, \rrbracket_H$ from Equation \eqref{eqn:Dorfderived}. An isotropic splitting $\sigma$ induces an isomorphism $\sigma \oplus \tfrac{1}{2}\,\rho^*$ between the exact Courant algebroid $E$ and the twisted standard Courant algebroid $(\IT M, H_\sigma)$ with anchor the canonical projection $\rmp_1:\IT M\to TM$.

In a splitting $E \simeq \IT M$, automorphisms $\Phi$ of exact Courant algebroids are given by diffeomorphisms $\varphi:M\to M$ and $B$-field transformations as $$\Phi = \big(\varphi_* \oplus (\varphi^{-1})^*\big) \circ \e^B \ . $$ Here $B$ is a two-form seen as a map $B \colon TM \to T^*M$, and $\e^{B}\, \colon \IT M \rightarrow \IT M$ is the vector bundle map given by
\begin{align}
\e^{B}\, (X + \alpha) = X + B(X) + \alpha \ .
\end{align}

\begin{remark}[\textbf{\v{S}evera's Classification}\cite{Severa-letters}]
The isomorphism classes of exact Courant algebroids $E \to \IT M $ determined by choices of splitting of the short exact sequence \eqref{eqn:exCou} are in one-to-one correspondence with cohomology classes in ${\sf H}^3( M,\IR).$  
Recall that a closed three-form $H_\sigma$ is defined by Equation \eqref{eqn:severaclass}.
Since the difference between two splittings $\sigma-\sigma'$ defines a two-form $B\in \mathsf{\Omega}^2(M)$ via $(\sigma-\sigma')(X) = \rho^*(\iota_X B)$, the three-form $H_\sigma$ is shifted by $\de B$ under a change of splitting. Hence there is a well-defined cohomology class $[H_\sigma]\in {\sf H}^3(M,\IR)$ associated to the exact sequence \eqref{eqn:exCou} which completely determines the exact Courant algebroid. This is called the \v{S}evera class of the exact Courant algebroid. A general discussion can be found in~\cite[Section 2.2]{Garcia-Fernandez:2020ope}.
\end{remark}

\medskip

\subsubsection{Reduction of Courant Algebroids}~\\[5pt]
 We briefly explain the reduction of Courant algebroids by isotropic subbundles, following \cite{Zambon2008reduction}. We begin with the notion of basic sections for an isotropic subbundle $K$ of a Courant algebroid $E$. Let $K^\perp\supset K$ denote the orthogonal bundle of $K$ with respect to the pairing on $E$.

\begin{definition}
    The space of sections of $K^\perp$ which are \emph{basic with respect to} $K$ is given by
    \begin{align}
        \mathsf{\Gamma}_{\text{bas}}(K^\perp) \coloneqq \set{e \in \mathsf{\Gamma}(K^\perp) \ \vert \ \llbracket \mathsf{\Gamma}(K) ,e \rrbracket \subset \mathsf{\Gamma}(K)} \ .
    \end{align}
\end{definition}
 
If $\mathsf{\Gamma}_{\text{bas}}(K^\perp)$ spans $K^\perp$ pointwise, then $K\subset E$ is involutive and $\rho(K)\subset TM$ is an integrable distribution.

The reduction of Courant algebroids by a foliation is the content of the Bursztyn-Cavalcanti-Gualtieri-Zambon Theorem, see \cite{Bursztyn2007reduction, Zambon2008reduction}.

\begin{theorem}[\!\!\!\textbf{\cite[Theorem~3.7]{Zambon2008reduction}}]
\label{thm:reduction}
    Let $E$ be a Courant algebroid over $M$, and let $K$ be an isotropic subbundle of $E$.
    Suppose that $\rho(K^\perp) = TM$, that $K^\perp$ is  spanned pointwise by basic sections and that the quotient of $M$ by the foliation $\cf F$ integrating  $\rho(K)$ is smooth with surjective submersion $\varpi \colon M \to \cf Q$. Then there is a Courant algebroid $\red E$ over $\cf Q$ fitting into the pullback diagram
    \[
\begin{tikzcd}
K^\perp / K \arrow{rr}{} \arrow[swap]{dd}{} & & \red E \arrow[]{dd}{} \\ & & \\
M \arrow[]{rr}{\varpi} & & \cf Q 
\end{tikzcd}
\]
of vector bundles. The reduced Courant algebroid $\red E$ is exact if $E$ is exact. 
\end{theorem}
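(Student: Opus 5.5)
The plan is to construct $\red E$ directly as a quotient and read off its structure from basic sections. Set $E' := K^\perp/K$, a vector bundle over $M$. The pairing on $E$ descends to a non-degenerate pairing on $E'$, since $K$ is exactly the radical of $\ip{\,\cdot\,,\,\cdot\,}$ restricted to $K^\perp$. The bundle $\red E$ is then obtained by descending $E'$ along $\varpi$, using a flat $K$-connection on $E'$. The reduced structure is $\red\rho(\red e) := \varpi_*\rho(e)$ and $\cbrak{\red e_1,\red e_2} := \red{\cbrak{e_1,e_2}}$, where $e,e_1,e_2\in\sfgammabas(K^\perp)$ are basic representatives.

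The key steps are as follows.

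\textbf{Step 1: the partial connection.} For $k\in\sfGamma(K)$ and $e\in\sfGamma(K^\perp)$, define $\nabla_{\rho(k)}[e] := [\cbrak{k,e}]$ in $\sfGamma(K^\perp/K)$. I need to check three things.
\begin{itemize}
\item $\cbrak{k,e}\in\sfGamma(K^\perp)$. This follows from axiom \ref{eqn:metric1} together with isotropy of $K$ and involutivity.
\item The definition depends only on $\rho(k)$. If $\rho(k)=0$, then $\cbrak{k,e} = -\cbrak{e,k}+\rho^*\de\ip{k,e}$. The second term vanishes because $\ip{k,e}=0$. For the first, $\cbrak{e,fk}=f\cbrak{e,k}+(\pounds_{\rho(e)}f)\,k$. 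Writing $e$ as a combination of basic sections and applying $\cbrak{\sfgammabas(K^\perp),K}\subset K$, which follows from the defining property of basic sections plus the symmetrisation formula, gives $\cbrak{e,k}\in\sfGamma(K)$. The map is $C^\infty$-linear in $k$ modulo $K$ for the same reasons.
\item Flatness, which follows from the Jacobi identity \ref{eqn:Jacobiid}.
\end{itemize}
The hypothesis that basic sections span $K^\perp$ pointwise means that flat sections span $E'$ pointwise. The parallel sections along each leaf of $\cF$ therefore give the descent data defining $\red E\to\cQ$, which yields the pullback square $\varpi^*\red E\cong K^\perp/K$. One point needs care: holonomy along the leaves must be trivial. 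I would argue this from the fact that basic sections are global sections spanning $E'$, so global parallel frames exist locally near each leaf, and $\cQ$ is assumed smooth.

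\textbf{Step 2: the structure maps are well defined.}
\begin{itemize}
\item The bracket of basic sections is basic, by Jacobi.
\item The bracket $\cbrak{e_1,e_2}$ lies in $K$ whenever one argument lies in $\sfGamma(K)\cap\sfgammabas(K^\perp)$. Given the identities of Step 1, this uses the symmetric-part formula.
\item $\rho$ of a basic section is projectable. This holds because $[\rho(k),\rho(e)]=\rho(\cbrak{k,e})\in\rho(K)$, so $\varpi_*\rho(e)$ is well defined.
\item Pairings of basic sections are constant along leaves, by axiom \ref{eqn:metric1}, so they are pullbacks of functions on $\cQ$.
\end{itemize}
The Courant axioms for $\red E$ then follow from those of $E$ applied to basic representatives, because basic sections generate $\sfGamma(\red E)$ as a $C^\infty(\cQ)$-module.

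\textbf{Step 3: exactness.} If $E$ is exact, $\rank\red E = \rank E-2\rank K$, and we need $\rank\red E = 2\dim\cQ$. Exactness of $E$ gives $\ker\rho\cap K^\perp = (K+\Im\rho^*)^\perp$. Combined with $\rho(K^\perp)=TM$, this yields $\rank K=\rank\rho(K)$, so $K\cap\ker\rho=0$. Hence $\dim\cQ = n-\rank K$ and the ranks match. Surjectivity of $\red\rho$ follows from $\rho(K^\perp)=TM$. Finally, $\ker\red\rho$ being the image of $T^*\cQ$ follows from identifying $\varpi^*T^*\cQ=\ann(\rho(K))$ with $\rho^*(\ann\rho(K))\subset K^\perp$.

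I expect the main obstacle to be Step 1, in particular the independence of the bracket from the choice of $k$ with given $\rho(k)$ and the global descent along leaves. The remaining steps are routine bookkeeping with the axioms of Definition~\ref{def:CourantAlg}.
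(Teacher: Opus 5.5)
The paper gives no proof of its own here; it simply quotes \cite[Theorem~3.7]{Zambon2008reduction}. Your argument is correct and is essentially the standard proof from that reference (an adaptation of \cite{Bursztyn2007reduction}): the flat $T\cF$-connection $[e]\mapsto[\cbrak{k,e}]$ on $K^\perp/K$ has as its flat sections exactly the classes of basic sections; leafwise holonomy is trivial because basic sections forming a basis at one point give parallel frames on whole leaves; the reduced operations are computed on basic representatives; and exactness follows from $K\cap\ker\rho=0$ together with the rank count. The one ingredient you use without proof, involutivity of $K$, follows from the same spanning hypothesis, since $\ip{\cbrak{k,k'},e}=-\ip{k',\cbrak{k,e}}=0$ for every basic section $e$.
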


Theorem~\ref{thm:reduction} yields a pointwise isomorphism
\begin{align}\label{eqn:bethisom}
    \mathscr{J}_{q, m} \colon \red E{}_q \longrightarrow (K^\perp/K)_m
\end{align}
for any $q\in\cQ$ and $m \in \varpi^{-1}(q)$.
We denote by $\natural \colon K^\perp\to \red E$ the vector bundle morphism covering $\varpi \colon M\to\cQ$ given by the quotients by $K$ and $\cF$.
We denote the anchor map inherited from $\rho \colon E\to TM$ by $\red\rho \colon \red E\to T\cQ$ through the commutative diagram
 \[
\begin{tikzcd}
K^\perp \arrow{rr}{\rho} \arrow[swap]{dd}{\natural\,} & &  TM \arrow[]{dd}{\varpi_*} \\ & & \\
\red E \arrow[]{rr}{\red\rho} & & T\cf Q 
\end{tikzcd}
\]

We shall now characterise splittings of an exact Courant algebroid that descend to the reduced Courant algebroid.

\begin{definition}\label{defn:adaptedsplittings}
Let $E$ be an exact Courant algebroid over $M$ with an isotropic subbundle $K$. A splitting $\sigma \colon  TM \to E$ is \emph{adapted to $K$} if
\begin{enumerate}[label= (\alph{enumi}), labelwidth=0pt]
\item the image of $\sigma$ is isotropic,\\[-3mm]
\item $\sigma(TM)\subset K^\perp \ ,$ \  and \\[-3mm]
\item $\sigma(X)\in \mathsf{\Gamma}_{\mathrm{bas}}(K^\perp)$ for any vector field $X$ on $M$ which is projectable to $\cQ$.
\end{enumerate} 
\end{definition}

If $\sigma$ is an adapted splitting, then
\begin{align}
    \sigma(T\cF) = \sigma\big(\rho(K)\big) \ \subseteq \ K \ ,
\end{align}
by~\cite[Remark~5.2]{Zambon2008reduction}.
The condition of having basic sections spanning $K^\perp$ pointwise for the reduction to be possible also implies the existence of adapted splittings, as shown in \cite[Proposition 5.5]{Zambon2008reduction}. 
In this case a splitting $\sigma:TM\to E$ induces an isotropic splitting $\red\sigma:T\cQ\to\red E$ of the reduced Courant algebroid $\red E$ over $\cQ$ by~\cite[Proposition~5.7]{Zambon2008reduction}.

For an exact Courant algebroid $E$ with a given splitting $\sigma \colon TM\to E$, we will make use of the $\sigma$-adjoint action~\cite[Definition~5.21]{DeFraja:2023fhe}
\begin{align} 
\ad^\sigma:\sfgamma(E)\times\sfgamma(E)\longrightarrow\sfgamma(E)
\end{align}
which is defined by
\begin{align} \label{eqn:sigmatwistact}
    {\sf ad}^\sigma_e\, e' \coloneqq \llbracket e , e' \rrbracket - \rho^*\,{\sigma}^*\llbracket e , {\sigma}(\rho(e')) \rrbracket \ ,
\end{align}
for all $e, \, e' \in \sfgamma(E).$ If $X\in\sfGamma(TM)$, we denote $\mathsf{ad}_X^\sigma := \mathsf{ad}_{\sigma(X)}^\sigma$.
See \cite[Appendix A]{deFraja2025Ricci} for some properties of the $\sigma$-adjoint action ${\sf ad}^\sigma$. A further useful property is

\begin{proposition}\label{prop:adaptedsplittingadaction}
     If $K\subset E$ is an isotropic subbundle such that $K^\perp$ is spanned pointwise by basic sections, and $\sigma$ is a splitting adapted to $K$, then $$\cbrak{\sfgamma(K), \sfgamma(K^\perp)} = \ad^\sigma_{\sfgamma(K)} \sfgamma(K^\perp) \ . $$
\end{proposition}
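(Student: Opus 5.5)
The plan is to prove a stronger, termwise statement. I will show that for every $k\in\sfgamma(K)$ and every $e'\in\sfgamma(K^\perp)$ the correction term in \eqref{eqn:sigmatwistact} vanishes:
\begin{align}
\rho^*\,\sigma^*\cbrak{k,\sigma(\rho(e'))}=0 \ .
\end{align}
Then $\ad^\sigma_k e'=\cbrak{k,e'}$ for all such $k$ and $e'$, and the two sets in the statement coincide. Throughout, $\sigma^*$ is the transpose of $\sigma$ with respect to the pairing, so the claim is equivalent to $\ip{\cbrak{k,\sigma(X)},\sigma(Y)}=0$ for $X=\rho(e')$ and all vector fields $Y$ on $M$.

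The first step is the case where $X$ is projectable to $\cQ$. By condition (c) of Definition~\ref{defn:adaptedsplittings}, $\sigma(X)\in\sfgammabas(K^\perp)$. By the definition of basic sections this gives $\cbrak{k,\sigma(X)}\in\sfgamma(K)$. By condition (b), $\sigma(Y)\in\sfgamma(K^\perp)$, so $\ip{\cbrak{k,\sigma(X)},\sigma(Y)}=0$. Hence the correction term vanishes whenever $X$ is projectable.

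The second step removes the projectability assumption using the Leibniz rule of the Dorfman bracket in its second argument. For $f\in C^\infty(M)$,
\begin{align}
\cbrak{k,\sigma(fX)}=f\,\cbrak{k,\sigma(X)}+\big(\pounds_{\rho(k)}f\big)\,\sigma(X) \ .
\end{align}
The second term is killed by $\sigma^*$, because $\ip{\sigma(X),\sigma(Y)}=0$ by isotropy of $\sigma$ (condition (a)). Therefore the map $X\mapsto\rho^*\sigma^*\cbrak{k,\sigma(X)}$ is $C^\infty(M)$-linear in $X$, i.e.\ tensorial. This means it suffices to check vanishing pointwise, and we may work locally.

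Locally, in a foliated chart for $\cF$ (the foliation integrating $\rho(K)$, which has the smooth leaf space $\cQ$), $TM$ admits a frame of projectable vector fields $X_i$, for instance coordinate fields. Write $\rho(e')=\sum_i f_i\,X_i$ on such a chart. Tensoriality together with the first step gives the vanishing of the correction term on each chart, hence globally.

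The only genuinely delicate point is this reduction from general to projectable vector fields. Concretely, one must confirm that the extra term produced by the Leibniz rule is proportional to $\sigma(X)$ and is therefore annihilated by $\sigma^*$ through isotropy of the adapted splitting. The standing hypothesis that $K^\perp$ is spanned pointwise by basic sections is what guarantees that adapted splittings exist and that $\cF$ is a genuine foliation with local projectable frames.
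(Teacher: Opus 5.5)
Your proof is correct. It shares the paper's core idea, namely that adaptedness forces the bracket into $K$, after which the correction term is killed. It packages the argument differently in two places.

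\textbf{The paper's route.} The paper takes a basic section $e\in\sfgammabas(K^\perp)$ as the second argument. It uses $\rho\cbrak{k,\sigma(\rho(e))}=[\rho(k),\rho(e)]=\rho\cbrak{k,e}\in\rho(K)$ to conclude $\cbrak{k,\sigma(\rho(e))}\in\sfgamma(K)$. It then invokes $(\sigma\circ\rho)|_K=\unit_K$ from \cite[Lemma 2.34]{DeFraja:2023fhe} to write this bracket as $\sigma$ of a vector field, and kills it with $\sigma^*\circ\sigma=0$. Finally it passes to all of $\sfgamma(K^\perp)$ using that basic sections span $K^\perp$ pointwise, leaving the needed tensoriality in $e$ implicit.

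\textbf{Your route.} You localise in the vector-field slot $X=\rho(e')$ rather than in $e'$. You kill the term directly by condition (b), i.e.\ $K\perp\sigma(TM)$, instead of the external lemma plus isotropy. You also make the $C^\infty(M)$-linearity explicit via the Leibniz rule. This buys a more self-contained argument that uses the basic-spanning hypothesis only through the setting (smooth leaf space, adapted splitting). Your Step 1 also makes explicit, via condition (c), why $\cbrak{k,\sigma(\rho(e))}$ lies in $K$. The paper's anchor computation on its own only shows that $\rho(e)$ is projectable.

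\textbf{One point to tidy.} Condition (c) is stated for vector fields on $M$ projectable to $\cQ$. The coordinate fields of a foliated chart are only locally defined, so they are not directly admissible. Since your map $X\mapsto\rho^*\sigma^*\cbrak{k,\sigma(X)}$ is tensorial, it suffices that globally defined projectable vector fields span each $T_mM$. This holds because $\varpi$ is a surjective submersion: take horizontal lifts, with respect to any complement of $T\cF$, of vector fields on $\cQ$, together with global sections of $T\cF$.
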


\begin{proof}
    Let $e \in \sfgammabas(K^\perp)$ and $k\in \sfgamma(K)$. Consider 
    \begin{align}
        \cbrak{k,e} - \ad^\sigma_k\, e = \rho^*\, \sigma^* \cbrak{k, \sigma(\rho( e))} \ .
    \end{align}
    Since
    \begin{align}
        \rho(\cbrak{k, \sigma (\rho (e))}) = [\rho(k), \rho(e)] = \rho(\cbrak{k,e}) \ \in \  \rho(K) \ ,
    \end{align}
    it follows that $\cbrak{k, \sigma( \rho( e))} \in \sfgamma(K)$. Since $(\sigma \circ \rho)|_K = \unit_K$ by \cite[Lemma 2.34]{DeFraja:2023fhe}, it follows that $\cbrak{k, \sigma (\rho( e))} = \sigma \, \rho(\cbrak{k, \sigma (\rho (e))})$. The result now follows from $\sigma^* \circ \sigma = 0$, using that basic sections span $K^\perp$ pointwise.
\end{proof}

\medskip

\subsection{Dirac Structures}~\\[5pt]
We give a brief introduction to Dirac structures.

\begin{definition}
 Let $(E,\llbracket\,\cdot\,,\,\cdot\,\rrbracket,\langle\,\cdot\,,\,\cdot\,\rangle,\rho)$   be a Courant algebroid over a smooth manifold $M$. An \emph{almost Dirac structure} $L$ is a Lagrangian subbundle $L \subset E.$ An almost Dirac structure $L$ is \emph{integrable} or a \emph{Dirac structure} if it is also involutive with respect to the Dorfman bracket. A \emph{Dirac structure $L$ supported on $N$} is an involutive Lagrangian subbundle $L \subset E \rvert_N,$ where $N \subset M$ is a smooth (embedded) submanifold such that $\rho(L) \subseteq TN.$ 
\end{definition}

\begin{remark}
If $L$ is a Dirac structure, the triple $(L, \rho \rvert_L , \llbracket\,\cdot\,,\,\cdot\,\rrbracket_L)$ defines a Lie algebroid over $M$, since
\begin{align}
    \llbracket e_1, e_2\rrbracket + \llbracket e_2, e_1\rrbracket = \rho^*\,\de \, \ip{e_1 , e_2} = 0 \ ,
\end{align}
for any $e_1, e_2 \in \mathsf{\Gamma}(L),$ i.e. the restriction of the Dorfman bracket to $L$ is skew-symmetric and satisfies the Jacobi identity. For a Dirac structure $L$ supported on a submanifold $N$, the restriction of the anchor $\rho$ is defined in order to provide the data for a Lie algebroid $L$ over $N.$
\end{remark}

\begin{example}[\textbf{Foliations}]
Let $M$ be a smooth manifold endowed with a regular foliation $\cF.$ Then the (untwisted) standard Courant algebroid $\IT M$ admits a Dirac structure $$L= T\cF \oplus {\rm Ann}(T\cF)$$ characterised by the involutivity condition
\begin{align} \label{eqn:Diracfoliation}
    \llbracket X + \alpha , Y + \beta \rrbracket = [X,Y] +\iota_X\, \de \beta - \iota_Y\, \de \alpha \ \in \ \mathsf{\Gamma}(L) \ ,
\end{align}
for all $X + \alpha, Y + \beta \in \mathsf{\Gamma}(L),$ where ${\rm Ann}(T\cF)$ is the annihilator of $T\cF$ in $T^*M$. The bracket \eqref{eqn:Diracfoliation} is skew-symmetric and satisfies the Jacobi identity, i.e. it is a Lie bracket. Thus $L$ together with the projection $\rmp_1$ to $TM$ and the restriction of the Dorfman bracket to $L$ defines a Lie algebroid.
\end{example}

\begin{example}[\textbf{Pre-Symplectic Structures}] \label{eg:symplecticDirac}
Let $M$ be an even-dimensional manifold endowed with a two-form $\omega.$ Then the graph
\begin{align}
    L^\omega = \gr(\omega) = \set{X + \iota_X\omega \ \vert \ X \in TM}
\end{align}
is an almost Dirac structure on the standard Courant algebroid $TM \oplus T^*M$. 
To check involutivity, for all $X, Y \in \mathsf{\Gamma}(TM)$ we compute
\begin{align}
  \llbracket X + \iota_X  \omega , Y + \iota_Y \omega \rrbracket &= [X,Y] +\pounds_X \, \iota_Y \omega - \iota_Y\, \de\, \iota_X \omega  \\[4pt]
  &= [X,Y] + \iota_{[X,Y] }\omega + \iota_Y\, \iota_X \, \de \omega \ .
\end{align}
Hence $L^\omega$ is  involutive, i.e.~it is a Dirac structure, if and only if $\de \omega = 0.$ In other words, $L^\omega$ is a Dirac structure if and only if $(M, \omega)$ is a pre-symplectic manifold.     
\end{example}

\medskip

\subsubsection{Dirac Reduction}~\\[5pt] \label{subsect:DiracReduction}
We discuss reduction of Dirac structures following \cite[Proposition~4.1]{Zambon2008reduction}. 

\begin{proposition} \label{prop:reducedKsub-bundle}
 Let $E$ be a Courant algebroid over $M$ endowed with an isotropic subbundle $K$ satisfying the assumptions of Theorem~\ref{thm:reduction}. Let $L$ be a subbundle of $E$ such that $L \cap K^\perp$ has constant rank. If
 \begin{align} \label{eqn:reductionW}
 \llbracket \mathsf{\Gamma}(K) , \mathsf{\Gamma}(L \cap K^\perp) \rrbracket  \ \subset \ \mathsf{\Gamma}(L+K) \ ,  
 \end{align}
 then $L$ descends to a subbundle $$\red L \coloneqq \natural (L \cap K^\perp + K)$$ of the reduced Courant algebroid $\red E$ over $\cQ.$
\end{proposition}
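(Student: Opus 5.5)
The plan is to show that the subspace $\red L{}_q \coloneqq \natural\big((L\cap K^\perp + K)_m\big)$ is independent of the choice of $m\in\varpi^{-1}(q)$. Once that holds, $\red L$ is a well-defined family of subspaces of $\red E$ over $\cQ$, and the constant rank of $L\cap K^\perp$ makes it a subbundle. I will work through the pointwise isomorphisms $\mathscr{J}_{q,m}\colon \red E{}_q\to (K^\perp/K)_m$ of \eqref{eqn:bethisom}. Recall how these are built: $\red E$ is identified with the bundle over $\cQ$ whose sections are the basic sections $\sfgammabas(K^\perp)$ modulo $\sfgamma(K)$. Two fibres $(K^\perp/K)_m$ and $(K^\perp/K)_{m'}$ over the same leaf are therefore identified by parallel transport along basic sections, i.e.\ by the flat $K$-connection $\nabla_k [e] = [\cbrak{k,e}]$ on $K^\perp/K$. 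Flatness and well-definedness of this connection come from the Jacobi identity and the basic-section assumptions of Theorem~\ref{thm:reduction}.

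First I will check that $W\coloneqq (L\cap K^\perp + K)/K \subset K^\perp/K$ has constant rank. Its rank is $\rk(L\cap K^\perp) - \rk(L\cap K)$. Here I expect to need either that $L\cap K$ also has constant rank, or an argument that this is automatic; if neither is available, I will add it as a regularity assumption implicitly contained in ``$L\cap K^\perp$ has constant rank''. This is a minor point.

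The main step is to show that $W$ is invariant under the $K$-connection, i.e.\ $\nabla_k\sfgamma(W)\subset\sfgamma(W)$. Take $e\in\sfgamma(L\cap K^\perp)$ and $k\in\sfgamma(K)$. By \eqref{eqn:reductionW}, $\cbrak{k,e}\in\sfgamma(L+K)$. Since $K\subset K^\perp$ is involutive and $e$ is a section of $K^\perp$, $\cbrak{k,e}$ is a section of $K^\perp$: this follows because $\ip{\cbrak{k,e},k'} = \pounds_{\rho(k)}\ip{e,k'} - \ip{e,\cbrak{k,k'}} = 0$, using that $K$ is involutive (which follows from the basic sections spanning $K^\perp$). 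Hence
\[
\cbrak{k,e}\in\sfgamma\big((L+K)\cap K^\perp\big)=\sfgamma(L\cap K^\perp + K),
\]
where the last equality uses $K\subset K^\perp$ (the modular law). For the other generator, $\cbrak{k,k'}\in\sfgamma(K)$. So $\nabla$ preserves $W$; the nonlinearity of the Dorfman bracket in the first slot is harmless, since the connection is tensorial in $k$ modulo $K$.

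Finally, a subbundle of constant rank that is invariant under a flat partial connection along the leaves is spanned locally by parallel sections. This follows from the local triviality of flat $K$-connections along a foliation with smooth leaf space. These parallel sections are exactly the classes of basic sections, so they descend to local sections of $\red E$ over $\cQ$. Consequently $\mathscr{J}_{q,m}^{-1}(W_m)$ is the same subspace for every $m$ in the fibre over $q$, and it equals $\red L{}_q$. This shows $\red L$ is a smooth subbundle of $\red E$. If leaves are disconnected in $\varpi^{-1}(q)$, I will rely on connectedness of the fibres of $\varpi$, which is implicit in the theorem's setup.

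I expect the main obstacle to be the careful identification of basic sections with flat sections of the leafwise connection, particularly the tensoriality in $k$. I would handle this using Proposition~\ref{prop:adaptedsplittingadaction} together with an adapted splitting, replacing $\cbrak{k,\cdot}$ by $\ad^\sigma_k$.
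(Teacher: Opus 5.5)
Your argument is correct and is the standard one behind \cite[Proposition~4.1]{Zambon2008reduction}, which the paper quotes without proof: \eqref{eqn:reductionW} together with $\cbrak{\sfgamma(K),\sfgamma(K^\perp)}\subset\sfgamma(K^\perp)$ makes $(L\cap K^\perp+K)/K$ invariant under the flat action $[e]\mapsto[\cbrak{k,e}]$ of $K$ on $K^\perp/K$, so it is spanned by classes of basic sections and descends. Your regularity caveat is genuinely needed for an arbitrary subbundle (in $\IT\IR^2$ with $K=\mathrm{span}\{\partial_x\}$ and $L=\mathrm{span}\{\partial_x+y\,\partial_y\}$ every stated hypothesis holds, yet $\red L$ drops from rank $1$ to rank $0$ at $y=0$), but it is automatic for Lagrangian $L$, the only case the paper uses, since $L\cap K^\perp=(L+K)^\perp$ gives $\rk(L\cap K)=\rk(L\cap K^\perp)+\rk(K)-\tfrac12\,\rk(E)$; and the tensoriality in $k$ needs no adapted splitting, because $\cbrak{f\,k,e}=f\,\cbrak{k,e}-(\rho(e)f)\,k$ for $e\in\sfgamma(K^\perp)$.
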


\begin{corollary} \label{cor:Diracreduct}
If $L$ is a Dirac structure for $E$, then $\red L$ is a Dirac structure for $\red E$.
\end{corollary}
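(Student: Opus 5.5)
By Proposition~\ref{prop:reducedKsub-bundle}, $\red L = \natural(L\cap K^\perp + K)$ is a well-defined subbundle of $\red E$, provided the bracket condition \eqref{eqn:reductionW} holds and $L\cap K^\perp$ has constant rank; the latter I take as part of the standing hypotheses inherited from that proposition. It remains to verify the condition for a Dirac structure $L$ and then to show that $\red L$ is Lagrangian and involutive. The work is done pointwise on $K^\perp/K$, using the isomorphism $\mathscr J_{q,m}$ of \eqref{eqn:bethisom}, and with basic sections for the bracket.

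For \eqref{eqn:reductionW}, take $k\in\sfGamma(K)$ and $l\in\sfGamma(L\cap K^\perp)$, and test $\cbrak{k,l}$ against $L\cap K^\perp$, which is the annihilator of $L+K$ because $L^\perp=L$. For $l'\in\sfGamma(L\cap K^\perp)$, axiom~\ref{eqn:metric1} gives $\ip{\cbrak{k,l},l'} = \pounds_{\rho(k)}\ip{l,l'} - \ip{l,\cbrak{k,l'}} = -\ip{l,\cbrak{k,l'}}$. Since $L$ is isotropic, $\cbrak{k,l'} = -\cbrak{l',k} + \rho^*\de\ip{k,l'} = -\cbrak{l',k}$, using $\ip{k,l'}=0$ because $l'\in K^\perp$. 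Therefore $\ip{\cbrak{k,l},l'} = \ip{l,\cbrak{l',k}}$. To show this vanishes, I would write $k$ as a combination of sections of $K$ and use that $\cbrak{l',fk} = f\cbrak{l',k} + (\pounds_{\rho(l')}f)\,k$; the second term pairs to zero with $l$. I expect this to be the \emph{main obstacle}, because $k$ does not lie in $L$, so the involutivity of $L$ cannot be applied directly. The route I would try first is to apply \ref{eqn:metric1} again with $e=l'$: this gives $\ip{l,\cbrak{l',k}} = \pounds_{\rho(l')}\ip{l,k} - \ip{\cbrak{l',l},k} = -\ip{\cbrak{l',l},k}$. Here $\cbrak{l',l}\in\sfGamma(L)$ by involutivity, but to conclude that it pairs to zero with $K$ one still needs $\cbrak{l',l}\in K^\perp$. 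I would obtain this by working with basic sections. Since $K^\perp$ is spanned by basic sections, pick $l$ and $l'$ locally of the form $b+k''$ with $b$ basic and $k''\in K$, and use that the bracket of basic sections is basic and hence lies in $K^\perp$. If this local adjustment does not stay inside $L$, the fallback is to state constant rank together with \eqref{eqn:reductionW} as the standing hypothesis, as in \cite{Zambon2008reduction}.

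Next, $\red L$ is Lagrangian. At each point, $(L\cap K^\perp + K)/K$ is the standard linear reduction of the Lagrangian $L$ by the coisotropic $K^\perp$. It is isotropic, because $L$ is isotropic, $K\subset K^\perp$, and the pairing on $K^\perp/K$ is induced. Its dimension is $\dim(L\cap K^\perp)+\dim K-\dim(L\cap K)$. Since $L\cap K^\perp = (L+K)^\perp$, the dimension identity $\dim(L+K)=\dim L+\dim K-\dim(L\cap K)$ makes this equal to $\tfrac12\dim(K^\perp/K)$.

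Finally, for involutivity, sections of $\red L$ lift to sections of $L\cap K^\perp + K$ that are basic. The bracket on $\red E$ is computed by $\natural\cbrak{e_1,e_2}$ for basic lifts $e_1,e_2$. Expanding $\cbrak{l_1+k_1,l_2+k_2}$, the term $\cbrak{l_1,l_2}$ lies in $L$ by integrability, the terms involving $k_i$ lie in $L+K$ by \eqref{eqn:reductionW} and the basic property, and the total is basic, hence in $K^\perp$. The result therefore lies in $(L+K)\cap K^\perp = L\cap K^\perp + K$, which projects to $\red L$.
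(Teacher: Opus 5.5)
Your attempt to derive \eqref{eqn:reductionW} from the integrability of $L$ cannot succeed, because the condition is not a consequence of integrability. The corollary is meant, and is only true, under both hypotheses of Proposition~\ref{prop:reducedKsub-bundle}: constant rank of $L\cap K^\perp$ \emph{and} \eqref{eqn:reductionW}. The paper's proof relies on both when it lifts sections of $\red L$ to basic sections.

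Your identity $\ip{\cbrak{k,l},l'}=-\ip{\cbrak{l',l},k}$ is correct. But since $\cbrak{l',l}\in\sfGamma(L)$ already, it only shows that, for Dirac $L$, \eqref{eqn:reductionW} is \emph{equivalent} to involutivity of $L\cap K^\perp$, and nothing forces that. Local adjustments cannot help either. For $k\in\sfGamma(K)$ and $l,l'\in\sfGamma(L\cap K^\perp)$, the expression $\ip{\cbrak{k,l},l'}$ is $C^\infty(M)$-linear in all three slots, since the extra terms are multiples of $\ip{k,l}$, $\ip{k,l'}$ or $\ip{l,l'}$. So it is a pointwise tensor and does not depend on the representatives you choose. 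Moreover, the existence of basic sections spanning $L\cap K^\perp+K$ is, by Remark~\ref{rmk:reductedKsub-bundle}, a consequence of \eqref{eqn:reductionW}, so invoking them here is circular.

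Here is a concrete counterexample. Take the setting of Example~\ref{eg:symplecticred} with $\de\omega=0$ but without assuming $\ker(\omega)=T\cF$. For $Y\in\sfGamma(T\cF)$ and $X,X'\in\sfGamma(T\cF^\omega)$ one finds $2\,\ip{\cbrak{Y,X+\iota_X\omega},X'+\iota_{X'}\omega}=(\pounds_Y\omega)(X,X')=-\omega(Y,[X,X'])$. Now let $M=\IR^4$, $\omega=\de x_1\wedge\de y_1+\de x_2\wedge\de y_2$, and let $T\cF$ be spanned by $V=\partial_{y_1}+y_1\,\partial_{x_2}$. The leaf space is $\IR^3$, all hypotheses of Theorem~\ref{thm:reduction} hold, and $L\cap K^\perp$ has constant rank $3$. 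Then $T\cF^\omega=\ker(-\de x_1+y_1\,\de y_2)$, and $X=\partial_{y_1}$, $X'=\partial_{y_2}+y_1\,\partial_{x_1}$ give $-\omega(V,[X,X'])=1\neq 0$. In fact $(L\cap K^\perp+K)/K$ does not descend along the leaves here, so $\red L$ is not even defined. Your ``fallback'' of taking \eqref{eqn:reductionW} as a standing hypothesis is therefore the only correct reading, not an optional one.

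With \eqref{eqn:reductionW} taken as a hypothesis, the rest of your argument is essentially the paper's proof: basic lifts, involutivity of $L$, closure of basic sections under the bracket, and $\red L$ Lagrangian by construction. Your lift into basic sections of $L\cap K^\perp+K$ is, if anything, more careful than lifting directly into $\sfGamma_{\rm bas}(L\cap K^\perp)$, since you handle the $K$-components via \eqref{eqn:reductionW} and $\ip{l_1,k_2}=0$.

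There is one small slip in the Lagrangian count. The quantity $\dim(L\cap K^\perp)+\dim K-\dim(L\cap K)$ is the dimension of $L\cap K^\perp+K$ itself, which equals $\tfrac12\,\rk_\IR(E)$. The quotient by $K$ has dimension $\tfrac12\,\rk_\IR(E)-\rk_\IR(K)=\tfrac12\,\rk_\IR(K^\perp/K)$, which is what you need.
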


\begin{proof}
Consider any $\red e, {\red e}^\prime \in \mathsf{\Gamma}(\red L).$ They can be lifted to $e, e^\prime \in \mathsf{\Gamma}_{\rm bas}(L \cap K^\perp),$ because $L \cap K^\perp$ has constant rank. Then
\begin{align}
    \llbracket e, e^\prime \rrbracket_E \ \in \ \mathsf{\Gamma}_{\rm bas}(L \cap K^\perp) \ ,
\end{align}
since $L$ is involutive and 
\begin{align}
\llbracket\mathsf{\Gamma}_{\rm bas}(K^\perp), \mathsf{\Gamma}_{\rm bas}(K^\perp)  \rrbracket_E \ \subseteq \ \mathsf{\Gamma}_{\rm bas}(K^\perp) \ .
\end{align}
Therefore 
\begin{align} \label{eqn:reducedbracket}
\natural(\llbracket e, e^\prime \rrbracket_E) =: \llbracket \red e, {\red e}^\prime \rrbracket_{\red E} \ \in \ \mathsf{\Gamma}(\red L) \ .
\end{align}
Moreover, $\red L$ is Lagrangian by construction.
\end{proof}

\begin{remark}\label{rmk:reductedKsub-bundle}
If $K \subset L \subset K^\perp,$ then Equation \eqref{eqn:reductionW} becomes
\begin{align}
 \llbracket \mathsf{\Gamma}(K) , \mathsf{\Gamma}(L) \rrbracket   \ \subset \ \mathsf{\Gamma}(W) \ .    
\end{align}
Any subbundle $L$ satisfying the above properties is the pointwise span of basic sections $\mathsf{\Gamma}_{\mathrm{bas}}(L)$. That is, there are elements $\ell_i\in L$ such that $\llbracket\mathsf{\Gamma}(K),\ell_i\rrbracket \subset \mathsf{\Gamma}(K)$ and $L=\mathrm{Span}_\IR\lbrace \ell_i \rbrace$ pointwise. The proof can be found in \cite[Theorem 4.1]{calvo2010deformation}.\footnote{In their notation, $S=K$, $D=D^S=L$, and canonicity means existence of basic sections. Note that we do not assume that $L$ is a Dirac structure, and hence it is not involutive, thus we only have the implication c)$\implies$a) of  \cite[Theorem 4.1]{calvo2010deformation} and not the full equivalence.}
\end{remark}

\begin{example}[\textbf{Pre-Symplectic Reduction}] \label{eg:symplecticred}
Let us start from Example \ref{eg:symplecticDirac}. We also assume that $M$ is foliated by $\cF$ such that $\cQ = M / \cF$ is smooth, with surjective submersion denoted by $\varpi \colon M \to \cQ.$ By setting $$K = T\cF \oplus \set{0} \ , $$ the standard Courant algebroid $TM \oplus T^*M$ reduces to the standard Courant algebroid $T\cQ \oplus T^*\cQ.$ We shall check under which circumstances Proposition~\ref{prop:reducedKsub-bundle} holds for the Dirac structure $L^\omega$, under the additional assumption that $\ker(\omega)=T\cF$.

In this case $$K^\perp = TM \oplus \ann(T\cF) \ ,$$ hence
\begin{align}
    L^\omega\cap K^\perp =\set{X + \iota_X \omega \ \vert \ X \in T\cF^\omega} \ ,
\end{align}
where $T\cF^\omega$ is the orthogonal complement of $T\cF$ with respect to the closed two-form $\omega.$ Thus
\begin{align}
    \rk_\IR(L^\omega\cap K^\perp) = \rk_\IR(T\cF^\omega) \ ,
\end{align}
hence $L^\omega\cap K^\perp$ has constant rank. 

In order to check that the condition \eqref{eqn:reductionW} holds, we compute
\begin{align}
 \llbracket X_\cF , X + \iota_X \omega \rrbracket = [X_\cF, X] + \iota_{[X_\cF, X]}\omega + \iota_X\, \pounds_{X_\cF}\omega   
\end{align}
for all $X_\cF\in\sfGamma(T\cF)$ and $X \in \sfGamma(T\cF^\omega).$ This is an element in $\sfGamma(L^\omega+K)$ if and only if $\pounds_{X_\cF}\omega =0.$ Hence Equation \eqref{eqn:reductionW} holds if and only if $\omega$ is constant along the leaves of $\cF.$ 

The reduced subbundle $\red L$ is given by
\begin{align}
  {\red L} = \natural(L^\omega \cap K^\perp + K) =\set{ {\red X} + \iota_{\red X} {\red \omega} \ \vert \ {\red X} \in T\cQ} \ ,  
\end{align}
since $T\cQ$ and $T\cF^\omega$ are pointwise isomorphic, where $\red \omega$ is induced by the restriction $\omega \rvert_{T\cF^\omega} = \varpi^* {\red \omega}.$ Then $\red L$ is a Dirac structure on $T\cQ \oplus T^*\cQ,$ as shown in Corollary \ref{cor:Diracreduct}.
\end{example}

\medskip

\subsection{Generalised Complex Structures}~\\[5pt]
\label{sub:gencomplex}
We introduce the basic notions of generalised complex geometry, following \cite{Hitchin:2003cxu} and \cite[Section 4]{gualtieri:tesi}. 

\begin{definition}
Let $(E,\llbracket\,\cdot\,,\,\cdot\,\rrbracket,\langle\,\cdot\,,\,\cdot\,\rangle,\rho)$ be an exact Courant algebroid over $M$ with $\rk_\IR(E) = 4 n,$ where $\mathrm{dim}_\IR(M)=2n.$ A \emph{generalised almost complex structure} on $E$ is an automorphism $\cJ \in \mathsf{Aut}(E)$ covering the identity such that $\cJ^2 = - \unit$ and 
\begin{align}
 \ip{\cJ(e), \cJ(e^\prime)} = \ip{e, e^\prime} \ ,   
\end{align}
for all $e, \, e^\prime \in E.$ A generalised almost complex structure $\cJ$ is \emph{integrable} or a \emph{generalised complex structure} if either of its $\pm \, \ii$-eigenbundles $L$ or $\overline{L}$ is involutive with respect to the Dorfman bracket~$\llbracket \, \cdot \, , \, \cdot \, \rrbracket.$ 
\end{definition}

\begin{remark}
This definition is tantamount to a reduction of the structure group $\sfO(2n, 2n)$ of $E$ to ${\sf U}(n,n).$ The integrability condition for one of the eigenbundles implies the integrability of the other one, since they are complex conjugate to each other. Then both eigenbundles $L$ and $\overline{L}$ are Dirac structures, since they are Lagrangian by construction. As discussed in \cite{gualtieri:tesi}, generalised complex structures are in one-to-one correspondence with complex Dirac structures satisfying the condition $$L \cap \overline{L} = \set{0} \ . $$ 
\end{remark}

\begin{example}[\textbf{Complex Manifolds}] \label{ex:complexmanifolds}
Let $M$ be endowed with an almost complex structure $J.$ Then the standard Courant algebroid $\IT M = TM \oplus T^*M$ inherits a generalised almost complex structure 
\begin{equation} \label{eqn:Jcomplex}
 \cJ =
 \begin{pmatrix}
 -J & 0 \\
 0 & J^{\tt t}
 \end{pmatrix}
\end{equation}
where $J^{\tt t} \colon T^*M \to T^*M$ is the transpose of $J.$ The orthogonality of $\cJ$ is equivalent to the condition
$\cJ^{\tt t} = - \cJ .$ 

Notice that $\cJ$ is a generalised complex structure if and only if $J$ is a complex structure on $M.$ This is shown as follows. The $+\,\ii$-eigenbundle $L$ of $\cJ$ defined in Equation \eqref{eqn:Jcomplex} is given by
\begin{align}
 L = T_{0,1} \oplus \mathrm{Ann}(T_{0,1}) = T_{0,1} \oplus T^*_{1,0} \ \subset \ \IT M^{\IC} \ ,   
\end{align}
which is Lagrangian, where $T_{0,1}$ is the $-\,\ii$-eigenbundle of $J$ given by
\begin{align}
    T_{0,1} = \set{X + \ii\, J(X) \ \vert \ X \in TM} \ \subset \ TM^{\IC} \ ,
\end{align}
and $T_{1,0}=\overline{T_{0,1}}$ is the $+\,\ii$-eigenbundle.
If $L$ is involutive, then
\begin{align}\label{eqn:DorfmanJ}
 \llbracket X+ \xi , Y + \eta \rrbracket = [X,Y] + \pounds_X \eta - \iota_Y\, \de \xi \ \in \ \mathsf{\Gamma}(T_{0,1} \oplus T^*_{1,0})   
\end{align}
for any $X + \xi , \, Y + \eta \in \mathsf{\Gamma}(L),$ which implies that $[X,Y] \in \mathsf{\Gamma}(T_{0,1})$ for any $X, \, Y \in \mathsf{\Gamma}(T_{0,1}).$ Conversely, if $J$ is integrable then the Dorfman bracket \eqref{eqn:DorfmanJ} can be written as
\begin{align}
 \llbracket X+ \xi , Y + \eta \rrbracket = [X,Y] + \iota_X \, \overline{\del} \eta -\iota_Y \, \overline{\del} \xi     
\end{align}
where $\de=\partial + \overline{\partial}$ and $\del$ is the Dolbeault differential induced by the integrability of $J.$ It follows that $\iota_X \, \overline{\del} \eta -\iota_Y \, \overline{\del} \xi \in \mathsf{\Gamma}(T^*_{1,0})$ by definition of $\del,$ 
 and hence $L$ is involutive with respect to the Dorfman bracket. 
\end{example}

\begin{example}[\textbf{Symplectic Manifolds}] \label{ex:symplecticmanifolds}
 Let $(M, \omega)$ be a symplectic manifold. Then $\omega$ determines a generalised complex structure on $\IT M$ given by 
 \begin{equation} \label{eqn:omegacomplex}
 \cJ =
 \begin{pmatrix}
 0 & -\omega^{-1} \\
 \omega & 0
 \end{pmatrix}
\end{equation}
whose $+\,\ii$-eigenbundle is 
\begin{align}
L = \set{X - \ii \, \iota_X \omega \ \vert \ X \in TM} \ \subset \ \IT M^{\IC} \ .    
\end{align}
One shows that $L$ is involutive by an identical calculation to that of Example~\ref{eg:symplecticDirac}.
\end{example}

Let $L\subset E^\IC$ be a Lagrangian subbundle. Recall from Remark \ref{rmk:spinorlines} that, at each point $m \in M$, there is a pure spinor line $\Omega_m$ associated to the fibre $L_m$ given by
\begin{align}
    \Omega_m = \set{\psi \in \sfS_m^\IC \ \vert \ \Psi(e) \psi = 0 \ , \ \forall\, e\in L_m} \ .
\end{align} 
This defines a complex line bundle $\Omega$ over $M$, which equivalently defines a generalised almost complex structure $\cJ$ when $L\cap \overline L = \set 0$. We may thus interchangably use $\cJ$, $L$ or $\Omega$ when discussing generalised complex structures. The pure spinor line bundle $\Omega$ defines a filtration of $\midwedge^\bullet T^*M^\IC$, where $\Omega^{n} = \Omega$ and $$\Omega^{n-k} = \midwedge^k \overline{L} \cdot \Omega^{n}$$ for $k=1,\dots,n$; we set $\Omega^k = \set{0}$ for $k<0$ and $k>n$. If $L$ defines a generalised almost complex structure $\cJ$,  this becomes a grading
\begin{align}\label{eqn:GCSgrading}
    \midwedge^\bullet T^*M^\IC = \Omega^{n} \oplus \Omega^{n-1} \oplus \cdots \oplus \Omega^{0}
\end{align}
since then $L \cap \overline{L} = \set 0$.

Let $H\in\sfOmega_{\tt cl}^3(M)$ be a closed three-form. The $H$-twisted integrability of a generalised almost complex structure $\cJ$ on $\IT M$ (or a Lagrangian subbbundle in general) can be formulated in terms of the Dirac generating operator $\de_H$: by \cite[Theorem 3.38]{gualtieri:tesi}, $\cJ$ is integrable if and only if $$\de_H\big(\sfGamma(\Omega^{n})\big) \ \subset \ \sfgamma(\Omega^{n-1}) \ . $$ If $\cJ$ is integrable, then $\de_H(\sfgamma(\Omega^k)) \subset \sfgamma(\Omega^{k-1}) \oplus \sfgamma(\Omega^{k+1})$ for each $k\in\set{0,\dots,n}$, and there is a decomposition $\de_H = \partial_H + \overline{\partial}_H$ where \cite[Theorem 4.23]{gualtieri:tesi}
\begin{align}
    \partial_H \colon \sfgamma(\Omega^k) \longrightarrow \sfgamma(\Omega^{k-1}) \qquad \text{and} \qquad \overline{\partial}_H \colon \sfgamma(\Omega^k) \longrightarrow \sfgamma(\Omega^{k+1}) \ .
\end{align}

There is also the notion of the type of a generalised complex structure $\cJ$ on a real $2n$-dimensional manifold $M$, which returns an integer $k\in \set{0,\dots,n}$ at each point, and defines a lower semicontinuous function on $M$.

\begin{definition}
    The \emph{type} of $\cJ$, denoted ${\rm type}(\cJ)$, is the complex codimension of $\rho(L)$ in $TM^\IC$: $${\rm type}(\cJ) = n - {\rm rk}_\IC\big(\rho(L)\big) \ . $$
\end{definition}

In the case of the standard Courant algebroid, for any $m\in M$ an element $\psi$ of $\Omega_m$ can be written as \cite[Theorem 4.8]{gualtieri:tesi}
\begin{align}
    \psi =  \exp(B+\ii\,\omega) \wedge \Theta
\end{align}
for real two-forms $B$ and $\omega$, and $\Theta = \theta_1 \wedge \cdots \wedge \theta_k$ for some linearly independent complex one-forms $\theta_1 , \ldots , \theta_k$. The condition that $L\cap \overline L = \set 0$ can be formulated as the requirement that $$\omega^{n-k} \wedge \Theta \wedge \overline{\Theta} \neq 0 \ . $$ The integer $k = {\rm type}(\cJ)$ is  the type of the generalised complex structure $\cJ$.

\medskip

\subsubsection{Reduction of Generalised Complex Structures}~\\[5pt]
\label{subsec:gencomplexreduction}
In this subsection we assume that the exact Courant algebroid $E \to M$ satisfies the assumptions of Theorem \ref{thm:reduction}.

\begin{proposition}[\textbf{Generalised Complex Reduction}] \label{prop:reducedcomplexdirac}
Let $E$ be a reducible exact Courant algebroid over $M$ endowed with a generalised almost complex structure $\cJ,$ with $+\,\ii$-eigenbundle $L \subset E^{\IC},$ such that $\cJ(K) \cap K^\perp \subset K$ has constant rank.  Suppose that
\begin{align} \label{eqn:complexdiracred}
\llbracket \mathsf{\Gamma}(K^\IC), \mathsf{\Gamma}(L \cap K^\perp {}^\IC)\rrbracket \ \subset \ \mathsf{\Gamma}(L + K^\IC) \ .
\end{align}
Then the reduced Courant algebroid $\red E$ over $\cQ$ inherits a generalised almost complex structure $\red \cJ$ with $+\,\ii$-eigenbundle $$\red L = \natural(L \cap K^\perp{}^\IC + K^\IC) \ . $$
\end{proposition}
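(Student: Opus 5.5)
The plan is to verify the hypotheses of Proposition~\ref{prop:reducedKsub-bundle} (complexified) for $L$. This will make $\red L$ a well-defined subbundle of $\red E{}^\IC$. I will then check that $\red L$ is Lagrangian and that $\red L\cap\overline{\red L}=\set 0$. Together these say that $\red L$ is the $+\,\ii$-eigenbundle of a generalised almost complex structure $\red\cJ$ on $\red E$.

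\emph{Step 1 (constant rank).} Write $A:=\cJ(K)\cap K^\perp$. By hypothesis $A\subset K$, so $A=\cJ(K)\cap K$, which is $\cJ$-invariant: if $a\in \cJ(K)\cap K$ then $\cJ a\in \cJ(K)$ and $\cJ a=-\cJ^{-1}a\in K$. Also $L\cap K^\IC$ consists of $k$ with $\cJ k=\ii k$, so $k\in \cJ(K^\IC)\cap K^\IC=A^\IC$. Hence $L\cap K^\IC$ is the $+\,\ii$-eigenspace of $\cJ|_{A^\IC}$, of constant rank $\tfrac12\rk A$. Since $L$ is Lagrangian, $L\cap (K^\perp)^\IC$ is the kernel of $L\to (K^\IC)^*$, whose image is $\ann(L\cap K^\IC)$. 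This gives
\begin{align}
\rk\big(L\cap (K^\perp)^\IC\big)=\rk L-\rk K+\tfrac12\,\rk A \ ,
\end{align}
which is constant. Together with \eqref{eqn:complexdiracred}, Proposition~\ref{prop:reducedKsub-bundle} applied over $\IC$ yields the subbundle $\red L=\natural(L\cap (K^\perp)^\IC+K^\IC)$.

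\emph{Step 2 (Lagrangian).} The claim is fibrewise linear algebra: for a Lagrangian $L$ and an isotropic $K$, the space $(L\cap K^\perp+K)/K$ is Lagrangian in $K^\perp/K$. It is isotropic because $L$ and $K$ are isotropic and $L\cap K^\perp$ is orthogonal to $K$. For maximality I would use the dimension count above: modulo $K$ the dimension is $\rk L-\rk K+\rk(L\cap K)-\rk(L\cap K)=\tfrac12\,\rk(K^\perp/K)$. Transporting along the pointwise isomorphism \eqref{eqn:bethisom} gives a Lagrangian subbundle of $\red E{}^\IC$. Since $\natural$ is real, $\overline{\red L}=\natural(\overline L\cap (K^\perp)^\IC+K^\IC)$.

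\emph{Step 3 (transversality — the key step).} Suppose $\natural(l+k)=\natural(l'+k')$ with $l\in L\cap (K^\perp)^\IC$, $l'\in\overline L\cap(K^\perp)^\IC$ and $k,k'\in K^\IC$. Then $l-l'=\kappa\in K^\IC$. Applying $\cJ$ gives $\ii(l+l')=\cJ\kappa$, so $\cJ\kappa\in (K^\perp)^\IC\cap\cJ(K^\IC)=A^\IC\subset K^\IC$. Hence $l=\tfrac12(\kappa-\ii\,\cJ\kappa)\in K^\IC$, and the class is zero. Therefore $\red L\cap\overline{\red L}=\set0$.

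Define $\red\cJ$ as $\ii$ on $\red L$ and $-\ii$ on $\overline{\red L}$. It is real because the two eigenbundles are conjugate, it squares to $-\unit$, and it is orthogonal because both eigenbundles are Lagrangian. I expect the main subtlety to be Step~1: one has to see that the hypothesis $\cJ(K)\cap K^\perp\subset K$ is exactly what forces $L\cap K^\IC$, and hence $L\cap(K^\perp)^\IC$, to have constant rank. The same hypothesis is what drives the transversality argument in Step~3.
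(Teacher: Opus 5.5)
Your proposal is correct and follows the same route as the paper: $\red L$ is obtained from Proposition~\ref{prop:reducedKsub-bundle}, and $\red L\cap\overline{\red L}=\set{0}$ is deduced from the condition $\cJ(K)\cap K^\perp\subset K$. The differences are that you prove the transversality directly, where the paper cites \cite[Lemma~5.1]{Bursztyn2007reduction} (your Step~3 is the needed direction of that lemma), and that you explicitly check the constant-rank hypothesis on $L\cap (K^\perp)^\IC$ and the Lagrangian property, which the paper leaves implicit.
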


\begin{proof}
By Proposition \ref{prop:reducedKsub-bundle}, the condition \eqref{eqn:complexdiracred} guarantees the existence of the subbundle  $\red L  \subset \red E^{\IC}.$ In order to show that $\red L$ is the $+\,\ii$-eigenbundle of a generalised almost complex structure, we need  $\red L \cap \overline{\red L} = \set{0}.$ This follows from the assumptions that $\cJ(K) \cap K^\perp \subset K$  
 has constant rank by using \cite[Lemma 5.1]{Bursztyn2007reduction}. 
\end{proof}

\medskip

\subsection{Generalised K\"{a}hler Structures}~\\[5pt]
We now introduce the notion of generalised K\"ahler structures on Courant algebroids. First, we define generalised metrics.

\begin{definition}[\textbf{Generalised Metrics}]\label{defn:generalisedmetric}
A \emph{generalised metric} on a Courant algebroid $E$ is an automorphism $\tau\in\mathsf{Aut} (E)$ covering the identity with $\tau^2 = \unit$ which, together with the pairing $\ip{\,\cdot\,,\,\cdot\,}$, defines a positive-definite fibrewise metric on $E$ given by
\begin{align}\label{eqn:generalisedmetrictau}
    \cG (e, e') = \ip{e, \tau(e')} \ ,
\end{align}
for $e, e'\in E$. Equivalently, a generalised metric on $E$ is a maximally positive-definite subbundle $V^+ \subset E$  (with respect to $\ip{\,\cdot \, , \, \cdot\,}$); the subbundles $V^+$ and $V^- \coloneqq (V^+)^\perp$ give an orthogonal decomposition $$E = V^+\oplus V^-$$ into the $\pm1$-eigenbundles
\begin{align}
    V^\pm = \ker(\tau\mp\unit) \ .
\end{align}
\end{definition}

\begin{remark} If $E$ is an exact Courant algebroid, then the restriction of the anchor to $V^+$ is an isomorphism $\rho|_{V^+}:V^+\to TM$. In a splitting $E\simeq\IT M$, any generalised metric $V^+$ can be written as
\begin{align}
    V^+ = \gr(g+b) = \{X+\iota_X g+\iota_Xb \ \vert \ X\in TM\} \ ,
\end{align}
for a Riemannian metric $g$ on $M$ and a Kalb-Ramond field $b\in\sfOmega^2(M)$. The complement $V^-$ is then given by $V^-=\gr(-g+b).$
\end{remark}

\begin{definition}[\textbf{Generalised K\"ahler Structures}]\label{def:GKstructure}
    Let $E\to M$ be an exact Courant algebroid with $\rk_\IR(E)=4n$, where $\dim_\IR(M)=2n$. A \emph{generalised almost K\"ahler structure} on $E$ is a pair of generalised almost complex structures $(\cJ^+, \cJ^-)$ such that
    \begin{enumerate}
        \item $\cJ^+\, \cJ^- = \cJ^-\, \cJ^+ \ , $ \ and \\[-3mm]
        \item $\tau = - \cJ^+\, \cJ^-$ defines a generalised metric.
    \end{enumerate} 
    We sometimes denote the pair $(\cJ^+,\cJ^-)$ by $\cK$. A generalised almost K\"ahler structure $\cK = (\cJ^+,\cJ^-)$ is \emph{integrable} or a \emph{generalised K\"ahler structure} when both $\cJ^\pm$ are generalised complex structures.

    A \emph{generalised Calabi-Yau structure} on $E$ is a generalised K\"ahler structure whose associated pure spinor line bundles are holomorphically trivial, i.e. there are nowhere vanishing sections $\psi^\pm$ of the line bundles $\Omega^\pm$  associated to $\cJ^\pm$ such that $$\de_H \psi^\pm = 0 \ . $$
    The Mukai pairings on these sections are required to be related as
    \begin{align}
        (\!(\psi^+,\overline{\psi^+}\,)\!) = \pm\,(\!(\psi^-,\overline{\psi^-}\,)\!) \ .
    \end{align}
\end{definition}

\begin{example}[\textbf{K\"ahler Manifolds}]
    A classical K\"ahler structure on a manifold $M$ is a triple $(g,J,\omega)$ of a Riemannian metric $g$, a complex structure $J$ and a symplectic form $\omega$ such that $\omega=g\,J$. Taking $\cJ^+ = \cJ_J$ (Example~\ref{ex:complexmanifolds}) and $\cJ^- = \cJ_\omega$ (Example~\ref{ex:symplecticmanifolds}), it follows that $(\cJ^+,\cJ^-)$ is a generalised K\"ahler structure on $(\IT M,0)$. 
    
    If the classical K\"ahler structure is a Calabi-Yau structure, then the associated holomorphic volume form $\Theta$ and K\"ahler form $\omega$ satisfy $$\omega^n = \frac{\ii^n\, n!}{2^n} \ \Theta \wedge \overline \Theta \ , $$ and hence define a generalised Calabi-Yau structure with 
    \begin{align}
        \psi^+ = \Theta \qquad \text{and} \qquad \psi^- = \exp(\,\ii\,\omega) \ .
    \end{align}
    Here $\mathrm{type}(\cJ^+)=n$ whereas $\mathrm{type}(\cJ^-)=0$.
\end{example}

\begin{remark}
If the manifold $M$ is compact, then generalised Calabi-Yau structures are shown in~\cite[Theorem~4.13]{Apostolov:2022} to be actually K\"ahler Calabi-Yau structures. In Example~\ref{ex:harmonic-symplectic} below we will produce a local non-K\"ahler example.
\end{remark}

We shall now discuss the link, shown in \cite{gualtieri:tesi}, between generalised K{\"a}hler structures and bi-Hermitian structures.

\begin{definition}[\textbf{Bi-Hermitian Structures}]
    A \emph{bi-Hermitian structure} on a real $2n$-dimensional manifold $M$ is a triple $(g, J^+, J^-)$ of a Riemannian metric $g$ and two complex structures $J^\pm$, such that both pairs $(g, J^+)$ and $(g, J^-)$ are Hermitian structures on $M$. A \emph{bi-Hermitian structure with $b$-field} on $M$ is a quadruple $(g,b,J^+,J^-)$ of a bi-Hermitian structure $(g,J^+,J^-)$ and a Kalb-Ramond field $b \in \sfOmega^2(M)$, such that the fundamental two-forms $\omega_{\pm} = g\, J^{\pm}$ satisfy
    \begin{align}
        - \de^c_+ \omega_+ = \de^c_- \omega_- = \de b \ ,
    \end{align} 
    where $\de^c_\pm \coloneqq J^\pm{}^{-1} \circ \de \circ J^\pm$. A bi-Hermitian structure with $b$-field $(g,b,J^+,J^-)$ is $H$\emph{-twisted} if the fundamental two-forms satisfy
     \begin{align} \label{eqn:deomegaH}
        - \de^c_+ \omega_+ = \de^c_- \omega_- = H + \de b \ ,
    \end{align} 
    for some $H \in \sfOmega^3_{\tt cl}(M)$. 
    
    All of these bi-Hermitian structures are $\sfU(n) {\times} \sfU(n)$-structures on $M$. In the following we will refer to all of them simply as bi-Hermitian structures for the sake of brevity.
\end{definition}

An equivalent definition of generalised K\"ahler structures is now provided by

\begin{theorem}[\!\!\!\textbf{\cite[Theorem 6.37]{gualtieri:tesi}}]
    Let $M$ be a real $2n$-manifold and take a closed three-form $H\in \sfOmega_{\tt cl}^3(M)$. The bi-Hermitian structure $(g,b,J^+,J^-)$ of a Riemannian metric $g$ on $M$, a Kalb-Ramond field $b\in \sfOmega^2(M)$, and two complex structures $J^\pm$ on $M$ preserving $g$ is equivalent to a generalised almost K\"ahler structure by 
    \begin{align}
        \cJ^\pm = \frac 12\, \begin{pmatrix}
            \unit & 0 \\
            b & \unit
        \end{pmatrix}\,
        \begin{pmatrix}
            J^+ \pm J^- & -(\omega_+^{-1} \mp \omega_-^{-1}) \\
            \omega_+ \mp \omega_- & -(J^+{}^{\tt t} \pm J^-{}^{\tt t})
        \end{pmatrix}\,
        \begin{pmatrix}
            \unit & 0 \\
            -b & \unit
        \end{pmatrix} \ .
    \end{align}
    Then $\cK = (\cJ^+, \cJ^-)$ is a generalised K\"ahler structure on $(\IT M, H)$ if and only if Equation \eqref{eqn:deomegaH} is satisfied.
\end{theorem}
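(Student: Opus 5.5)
The plan is to reduce the equivalence to linear algebra at each point, then treat integrability separately.

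For the pointwise statement I would follow Gualtieri's route. Given $(g,b,J^+,J^-)$, first conjugate by the $B$-field transformation $\e^{b}$. This is an orthogonal automorphism of $\IT M$, so it preserves the generalised almost complex and generalised metric conditions. It therefore suffices to treat $b=0$ and then conjugate back by $\e^{b}$ at the end.

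With $b=0$, let $V^\pm=\gr(\pm g)$. Use the isomorphisms $\rmp_1|_{V^\pm}\colon V^\pm\to TM$ to transport $J^\pm$ to endomorphisms of $V^\pm$. Define $\cJ^\pm$ to act as $J^+$ on $V^+$ and as $\pm J^-$ on $V^-$. A direct computation in the frame $X\pm gX$ recovers the displayed block matrix. For example, on $X+gX$ the endomorphism $\cJ^\pm$ gives $J^+X+gJ^+X$; expanding $X$ and $gX$ through $V^\pm$ produces the entries $\tfrac12(J^+\pm J^-)$ and $\tfrac12(\omega_+\mp\omega_-)$.

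Then I would check the axioms. Orthogonality of each $\cJ^\pm$ follows because $J^\pm$ is $g$-orthogonal and $V^+\perp V^-$. The identity $(\cJ^\pm)^2=-\unit$ is immediate on each eigenbundle. Since both $\cJ^+$ and $\cJ^-$ preserve $V^\pm$, they commute. Finally, $-\cJ^+\cJ^-$ equals $+\unit$ on $V^+$ and $-\unit$ on $V^-$, so it is the generalised metric $\tau$.

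For the converse, start from a generalised almost K\"ahler pair. Then $\tau=-\cJ^+\cJ^-$ commutes with both $\cJ^\pm$, so each $\cJ^\pm$ preserves $V^\pm$. Writing $V^+=\gr(g+b)$ by the Remark after Definition~\ref{defn:generalisedmetric}, the restrictions of $\cJ^+$ to $V^\pm$ transport to $g$-orthogonal almost complex structures $J^\pm$ on $TM$. This inverts the construction above, so the correspondence is bijective.

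Integrability is where the real work lies. The $+\ii$-eigenbundle of $\cJ^\pm$ is $L^\pm=\ell^+_{+}\oplus\ell^-_{\pm}$, where $\ell^{+}_{+}\subset V^{+\IC}$ is the $+\ii$-eigenspace of $J^+$ transported to $V^+$, and $\ell^-_{\pm}\subset V^{-\IC}$ is the $+\ii$- (resp. $-\ii$-) eigenspace of $J^-$ transported to $V^-$. Under the $\e^{b}$ conjugation, the twist $H$ becomes $H+\de b$, so it suffices to assume $b=0$ with flux $H'=H+\de b$.

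I would then compute the twisted Dorfman bracket $\cbrak{\cdot,\cdot}_{H'}$ from \eqref{eqn:Hbracket} on sections $X+gX$ and $Y\pm gY$. The $TM$ component of $\cbrak{X+gX,Y+gY}$ is $[X,Y]$, and its $V^-$ projection involves $\nabla^{\pm}=\nabla^{g}\pm\tfrac12 g^{-1}H'$. Involutivity of both $L^+$ and $L^-$ is then equivalent to three conditions. First, $\ell^+_+$ and $\ell^-_\pm$ are each closed, which amounts to integrability of $J^\pm$, i.e. vanishing Nijenhuis tensors. Second, the mixed brackets $\cbrak{\ell^+,\ell^-}$ must land in $L^\pm$, which says that $J^\pm$ is parallel for the Bismut-type connection $\nabla^\pm$ with torsion $\pm g^{-1}H'$. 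Third, these together must match $\mp\de^c_\pm\omega_\pm=H'$, using the standard fact that a Hermitian structure admits a unique Hermitian connection with totally skew torsion, whose torsion is $\de^c\omega$.

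The main obstacle is this last identification: rewriting the $(2,1)+(1,2)$ components of $H'$ in terms of $\de^c_\pm\omega_\pm$ and tracking the signs. I would carry it out in an adapted frame by comparing the $(2,1)$-part of $\de\omega_\pm$ with $\iota$-contractions of $H'$ on $T_{1,0}^\pm$. Integrability of $J^\pm$ ensures that no $(3,0)$ part arises, since it kills the Nijenhuis contribution.
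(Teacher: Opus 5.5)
Your pointwise linear algebra is correct and is Gualtieri's construction. The paper gives no proof of its own and only cites \cite[Theorem 6.37]{gualtieri:tesi}. Letting $\cJ^\pm$ act by $J^+$ on $V^+=\gr(g)$ and by $\pm J^-$ on $V^-=\gr(-g)$ does reproduce the block matrix, and conjugating by $\e^{b}$ correctly replaces $H$ by $H'=H+\de b$.

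The gap is in the integrability part: your first condition is misidentified. Closure of $\ell^+_+$ under $\cbrak{\cdot,\cdot}_{H'}$ is not equivalent to $N_{J^+}=0$. For $X,Y\in\sfGamma(T^+_{1,0})$ one has $gX=-\ii\,\iota_X\omega_+$, and
\[
\cbrak{X-\ii\,\iota_X\omega_+\,,\,Y-\ii\,\iota_Y\omega_+}_{H'} = [X,Y]-\ii\,\iota_{[X,Y]}\omega_+ + \iota_Y\,\iota_X\big(H'-\ii\,\de\omega_+\big) \ .
\]
So closure requires $[X,Y]\in T^+_{1,0}$ and, in addition, $\iota_Y\iota_X(H'-\ii\,\de\omega_+)=0$. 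Given integrability of $J^+$, the second condition says $H'^{3,0}=0$ and $H'^{2,1}=\ii\,\partial\omega_+$. That is, $H'=\ii(\partial-\overline{\partial})\omega_+=-\de^c_+\omega_+$. The same computation on $\ell^-_+$, where $-gX=\ii\,\iota_X\omega_-$, gives $H'=\de^c_-\omega_-$.

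So the flux condition is already contained in the closure of the $\ell$'s. Equivalently, the $V^-$-component of $\cbrak{X+gX,Y+gY}_{H'}$ vanishes iff $\nabla^+J^+=0$, so closure of $\ell^+_+$ means $N_{J^+}=0$ together with $\nabla^+J^+=0$. As you have organised it, the ``if'' direction would try to get closure of $\ell^+_+$ and $\ell^-_\pm$ from vanishing Nijenhuis tensors alone. That fails whenever $H'\neq\mp\de^c_\pm\omega_\pm$.

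Your second condition, by contrast, is true but redundant. Suppose $\ell^+_+$ and $\ell^-_\pm$ are closed, and take $a\in\sfGamma(\ell^+_+)$, $b\in\sfGamma(\ell^-_\pm)$, and $c$ in either summand of $L^\pm$. Axiom~(i) of Definition~\ref{def:CourantAlg}, together with $V^+\perp V^-$ and $\cbrak{a,b}=-\cbrak{b,a}$, gives $\ip{\cbrak{a,b},c}=0$. Hence $\cbrak{a,b}\in(L^\pm)^\perp=L^\pm$. Conversely, involutivity of $L^\pm$ forces involutivity of $\ell^+_+=L^+\cap L^-$ and $\ell^-_+=L^+\cap\overline{L^-}$.

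Thus $\cK$ is integrable if and only if these two intersections are involutive. The single bracket computation above then gives both $N_{J^\pm}=0$ and $H+\de b=\mp\de^c_\pm\omega_\pm$ at once. This is Gualtieri's argument, and the Bismut connection is only a reformulation of the outcome. If you keep your route, you must derive $\nabla^\pm J^\pm=0$ inside the closure of the $\ell$'s, not only from the mixed brackets, before invoking uniqueness of the Bismut connection.
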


This result shows that there is a one-to-one correspondence between generalised K{\"a}hler structures on $M$ and bi-Hermitian structures on $M$. The first instance in which generalised K{\"a}hler structures appeared, in the form of bi-Hermitian structures, was in \cite{Gates:1984nk} where the action functional of a two-dimensional $\cN=(2,2)$ sigma-model is written using the parametrisation $(g,b)$ of the generalised metric $\tau=-\cJ^+\,\cJ^-$ and the closed three-form $H$. The condition that this action functional is invariant under $\cN=(2,2)$ supersymmetry transformations is equivalent to Equation \eqref{eqn:deomegaH}.

\section{Courant Algebroid and Dirac Relations} 
\label{sect:section4}

Building on the notions of linear relations from Section~\ref{sub:linear_relations}, in this section we briefly review Courant algebroid relations and their uses in setting up a notion of T-duality, following~\cite{Vysoky2020hitchiker,DeFraja:2023fhe}. We then use this to give a suitable notion of relation between Dirac structures, and establish conditions under which Dirac structures are related by T-duality.

\medskip

\subsection{Courant Algebroid Relations}\label{ssec:CArelations}~\\[5pt]
We give a brief introduction to Courant algebroid relations.
For two Courant algebroids $E_1$ and $E_2$ over smooth manifolds $M_1$ and $M_2$, respectively, the product $$\big(E_1\times E_2\,,\,\llbracket\,\cdot\,,\,\cdot\,\rrbracket \,,\,\langle\,\cdot\,,\,\cdot\,\rangle\,,\,\rho\big)$$ is a Courant algebroid over $M_1\times M_2$, with the Courant algebroid structures defined by
\begin{align}
    \rho(e_1, e_2) & \coloneqq \big(\rho_{E_1}(e_1), \rho_{E_2}(e_2)\big) \ , \\[4pt] \ip{(e_1, e_2)\,,\, (e_1', e_2')} & \coloneqq \ip{e_1, e_1'}_{E_1} \circ \rmp_1 + \ip{e_2, e_2'}_{E_2} \circ \rmp_2 \ , \label{eqn:pairingprod} \\[4pt]
    \llbracket(e_1,e_2)\,,\,(e_1',e_2')\rrbracket &\coloneqq \big(\llbracket e_1,e_1'\rrbracket_{E_1}\,,\,\llbracket e_2, e_2'\rrbracket_{E_2}\big) \ ,
\end{align}
where $\rmp_i\colon M_1\times M_2 \to M_i$ are the projection maps for $i=1,2$; for the sake of brevity, we will usually not write the projections explicitly in the following.
Denote by $\overline{E}_i$ the Courant algebroid $(E_i,\llbracket\,\cdot\,,\,\cdot\,\rrbracket_{E_i} ,-\langle\,\cdot\,,\,\cdot\,\rangle_{E_i},\rho_{E_i})$.

\begin{definition}
Let $E_1$ and $E_2$ be Courant algebroids over $M_1$ and $M_2$ respectively.

A \emph{Courant algebroid relation from $E_1$ to $E_2$} is a Dirac structure $R \subseteq E_1 \times \overline{E}_2$ supported on a submanifold $C\subseteq M_1 \times M_2$ such that $\rho(R^\perp) \subset TC,$ denoted $R\colon  E_1 \dashrightarrow E_2.$

If $C$ is the graph of a smooth map $\varphi \colon M_1 \to M_2$, then $R$ is a \emph{Courant algebroid morphism from $E_1$ to $E_2$ over $\varphi$}, denoted $R\colon E_1 \rightarrowtail E_2$.

If $R$ is further the graph of some vector bundle map $\Phi \colon E_1 \to E_2$ covering $\varphi\colon M_1 \to M_2$, then $\Phi$ is a \emph{classical Courant algebroid morphism over $\varphi$}.

By viewing a Courant algebroid relation $R$ as a subbundle of $E_2 \times \overline{E}_1$, we obtain the \emph{transpose relation} $R^\top\colon E_2 \dashrightarrow E_1$, whose support is denoted $C^\top\subseteq M_2\times M_1$, and similarly the \emph{transpose morphism} $R^\top\colon E_2 \rightarrowtail E_1$ when $\varphi$ is a diffeomorphism. 
\end{definition}

We will also need a notion of sections related by a Courant algebroid relation. If $E$ is a Courant algebroid over $M$ and $L\subset E$ is a subbundle supported on a submanifold $C\subset M$, denote by $\sfGamma(E; L)$ the $C^\infty(M)$-submodule of sections of $E$ which take values in $L$ when restricted to $C$.

\begin{definition}
 Let $R \colon E_1 \dashrightarrow E_2$ be a Courant algebroid relation supported on $C\subseteq M_1\times M_2$. Two sections $e_1 \in \mathsf{\Gamma}(E_1)$ and $e_2 \in \mathsf{\Gamma}(E_2)$ are $R$-\emph{related}, denoted $e_1 \sim_R e_2,$ if $(e_1, e_2) \in \mathsf{\Gamma}(E_1 \times \overline{E}_2 ; R).$ 
\end{definition}

Following Definition \ref{def:linearlagrel} we introduce

\begin{definition} \label{def:kerandcoker}
Let $R \colon E_1 \rel E_2$ be a Courant algebroid relation supported on $C$. The \emph{kernel} and \emph{cokernel} of $R$ are given by
    \begin{align}
        {\rm K}_R \coloneqq (E_1 \times 0_{E_2})\big\rvert_C \cap R \qquad \text{and} \qquad
        {\rm CK}_R \coloneqq (0_{E_1} \times E_2) \big\rvert_C \cap R \ ,
    \end{align}
    respectively, where $0_{E_i}$ is the zero section of $E_i$, for $i=1,2$. Note that both ${\rm K}_R$ and ${\rm CK}_R$ are singular subbundles of $R$. We also denote $\ker(R) = \rmp_1({\rm K}_R)$ and ${\rm coker}(R) = \rmp_2({\rm CK}_R)$. 
\end{definition}

Definition \ref{def:kerandcoker} may be formulated more generally for any isotropic subbundle $R$ of $E_1 \times \overline{E}_2$, where $E_1$ and $E_2$ are pseudo-Euclidean vector bundles.

\begin{example}[\textbf{Reduction of Courant Algebroids}] \label{eg:reductionCArel}
By~\cite[Proposition 3.7]{DeFraja:2023fhe}, the reduction of a Courant algebroid in Theorem~\ref{thm:reduction}
gives the Courant algebroid morphism 
\begin{equation}
\begin{tikzcd}
Q(K):E \ar[r, rightarrowtail] & \red E
\end{tikzcd}
\end{equation}
supported on $\gr(\varpi) \subset M \times \cQ,$
where $Q(K)$ is defined pointwise by
    \begin{align}  Q(K)_{(m,\varpi(m))}=\set{(e,\natural(e))\, | \,  e\in K^\perp_m} \ \subset \ E_m\times \red{\overline{E}}{}_{\varpi(m)} \ ,
\end{align}
for any $m \in M.$ Its kernel and cokernel are $${\rm K}_{Q(K)} = (K \times 0_{\red E}) \big\rvert_{\gr(\varpi)} \qquad  \text{and} \qquad {\rm CK}_{Q(K)} = (0_{E_1}\times 0_{E_2}) \big\rvert_{\gr(\varpi)}=0_{Q(K)} \ . $$
\end{example}

A particularly relevant class of Courant algebroid relations is given by

\begin{definition}[\textbf{Generalised Isometries}]\label{defn:generalisedisometry}
Suppose $R\colon E_1 \rel E_2$ is a Courant algebroid relation supported on a submanifold $C\subseteq M_1 \times M_2$. Let $\tau_1$ and $\tau_2$ be generalised metrics for $E_1$ and $E_2$ respectively. Then $R$ is a \emph{generalised isometry between $\tau_1$ and $\tau_2$} if $$(\tau_1\times\tau_2) (R) = R \ . $$ Equivalently, if $V_1^+$ and $ V_2^+$ are the maximally positive-definite subbundles associated to $\tau_1$ and $\tau_2$ respectively, then $R$ is a generalised isometry if
\begin{align}
    R_c = R_c^+ \oplus R^-_c
\end{align}
for every $c\in C$, where $$R_c^\pm = R_c \cap\big(V_1^\pm \times V_2^\pm\big)_c \ . $$
\end{definition}

As proved in \cite[Proposition 4.17]{DeFraja:2023fhe}, the kernel and cokernel of a generalised isometry $R$ are trivial, i.e. $ {\rm K}_R = {\rm CK}_R = {0_R}$.

Let us discuss how Courant algebroid relations compose.

\begin{definition}
Let $R\colon E_1 \rel E_2$ and $R' \colon  E_2 \rel E_3$ be Courant algebroid relations supported on $C\subseteq M_1\times M_2$ and $C'\subseteq M_2\times M_3$ respectively. The \emph{composition} $R' \circ R$ is the subset of $E_1 \times \overline{E}_3$ given by
\begin{align}\label{eqn:reldefinition}
\hspace{-5mm} R'\circ R = \set{(e_1,e_3) \in E_1 \times \overline{E}_3 \, | \, (e_1, e_2) \in R \ , \ (e_2,e_3) \in R' \ \text{ for some } e_2\in E_2} \ .
\end{align}
\end{definition}

The composition is a Courant algebroid relation $R' \circ R:E_1\rel E_3$ supported on $C'\circ C\subseteq M_1\times M_3$ if $R$ and $R'$ {compose cleanly}, see \cite{Vysoky2020hitchiker,DeFraja:2023fhe} for more details. 

\medskip

\subsection{Relating Dirac Structures}~\\[5pt]    
We introduce the notion of relation between Dirac structures descending from Courant algebroid relations.

\begin{definition}[\textbf{Dirac Relations}] \label{def:Diracrel}
Let $R \colon E_1 \dashrightarrow E_2$ be a Courant algebroid relation supported on a submanifold $C \subseteq M_1 \times M_2$. Let $L_1$ and $L_2$ be (almost) Dirac structures for $E_1$ and $E_2,$ respectively. The restriction of $R$ to $L_i$ for $i=1,2$ is given by
\begin{align}
R \big\rvert_{L_i} \coloneqq \set{(e_1, e_2) \in R \ \vert \ e_i \in L_i} \ .   
\end{align}
Then $R$ is a \emph{relation between the} (\emph{almost}) \emph{Dirac structures $L_1$ and $L_2$}, or a (\emph{almost}) \emph{Dirac relation}, if
\begin{align}
R \big\rvert_{L_1} = R\cap (L_1 \times L_2) \big\rvert_C + {\rm CK}_R  \qquad \text{and} \qquad  R \big\rvert_{L_2} = R\cap (L_1 \times L_2) \big\rvert_C + {\rm K}_R  
\end{align}
are regular vector subbundles supported on $C,$ and we write $L_1\sim_R L_2$. The inclusion $\supseteq$ always holds.
\end{definition}

\begin{remark}[\textbf{Dirac vs. Lie Algebroid Relations}] \label{rmk:DiracLieAlgebroid}
     Let $(A_1 , [\, \cdot \, ,\, \cdot \,]_1, \rho_1)$ and $(A_2 , [\, \cdot \, ,\, \cdot \,]_2, \rho_2)$ be Lie algebroids over $M_1$ and $M_2 $, respectively. Then the product vector  bundle $A_1 \times A_2$ over $M_1 \times M_2$ is endowed with a Lie algebroid structure given by the Lie bracket $([\, \cdot \, ,\, \cdot \,]_1,[\, \cdot \, ,\, \cdot \,]_2 )$ and anchor map $(\rho_1, \rho_2).$
     A \emph{Lie algebroid relation} $R$ between the Lie algebroids $A_1$ and $A_2$ is a Lie subalgebroid $R \subseteq A_1 \times A_2$ supported on a submanifold $C \subseteq M_1 \times M_2$. This generalises the notion of Lie algebroid morphism over a smooth map $\varphi:M_1\to M_2$ where $C=\gr(\varphi)$, see e.g.~\cite[Section~2.3]{Meinrenken}. 
    
     Suppose ${\rm CK_R} = {\rm K}_R = {0_R}$ in Definition~\ref{def:Diracrel}, so that $R \rvert_{L_1} = R\rvert_{L_2}$. If $L_1$ and $L_2$ are Dirac structures, then the Dirac relation $L_1 \sim_R L_2$ is a Lie algebroid relation between $L_1$ and $L_2$. Recall that the Lie algebroid structures for $L_1$ and $L_2$ are induced by the Courant algebroid structures on $E_1$ and $E_2$ respectively. We may motivate our definition of Dirac relation as a way to generalise morphisms of Dirac structures as descending from the ``arrows'' between the Courant algebroids they live in, rather than just considering their Lie algebroid structure.
\end{remark}

\begin{example}[\textbf{Foliated Manifolds}]
Let $(M_1, \cF_1)$ and $(M_2, \cF_2)$ be foliated manifolds and consider a foliation-preserving diffeomorphism $\varphi \in \mathsf{Diff}(M_1, M_2),$ i.e. $\varphi_* (T\cF_1) = T\cF_2$. Then $\varphi$ induces the  Courant algebroid isomorphism 
\begin{align}
    \boldsymbol{\varphi} \coloneqq \varphi_* \oplus (\varphi^{-1})^* \colon (\IT M_1 ,0) \longrightarrow (\IT M_2 , 0)
\end{align}
covering $\varphi.$ We show that the induced Courant algebroid relation $\gr(\boldsymbol{\varphi})$ supported on $\gr(\varphi)$ is a Dirac relation between the Dirac structures $L_1$ and $L_2$ induced by the foliations, where
\begin{align}
 L_i \coloneqq T\cF_i \oplus \mathrm{Ann}(T\cF_i) \ ,   
\end{align}
for $i=1,2.$ 

In this case
\begin{align}
 \gr(\boldsymbol{\varphi}) \big\rvert_{L_1} = \set{(X + \alpha , \varphi_*(X) + (\varphi^{-1})^*(\alpha)) \ \vert \ X + \alpha \in T\cF_1 \oplus \mathrm{Ann}(T\cF_1)} \ .   
\end{align}
Since $\varphi_* (T\cF_1) = T\cF_2$ it follows that 
$\gr(\boldsymbol{\varphi}) \rvert_{L_1} \subseteq L_1 \times L_2$ is a subbundle of $L_1 \times L_2$ supported on $\gr(\varphi),$ i.e. $\gr(\boldsymbol{\varphi})$ is a Dirac relation. Furthermore, $\gr(\boldsymbol{\varphi})\rvert_{L_1}$ is a Lie subalgebroid of $L_1 \times L_2$ supported on $\gr(\varphi)$ since
\begin{align}
\boldsymbol{\varphi}(\llbracket X+\alpha , Y+\beta\rrbracket_{L_1}) = \llbracket \boldsymbol{\varphi}(X+\alpha), \boldsymbol{\varphi}(Y+\beta)\rrbracket_{L_2}
\end{align}
for all $X+\alpha, Y +\beta \in \mathsf{\Gamma}(L_1),$ as is easily verified by applying $\boldsymbol{\varphi}$ to the bracket \eqref{eqn:Diracfoliation}.
\end{example}

\begin{example}[\textbf{Dirac Reduction}] \label{eg:Diracreductionrel}
Consider the setting of Example \ref{eg:reductionCArel}, i.e. let $Q(K)$ be the reduction relation of a Courant algebroid $E$ by an isotropic subbundle $K$ with reduced Courant algebroid $\red E$. We also assume to be in the setting of Subsection \ref{subsect:DiracReduction}, i.e. $E$ is endowed with a Dirac structure $L$ that reduces to the Dirac structure $$\red L = \natural(L \cap K^\perp + K) \ . $$ 

We shall show first that $Q(K)$ is a Dirac relation between $L$ and $\red L$ in the sense of Definition \ref{def:Diracrel}. For each $m\in M$ consider
\begin{align}
    {Q(K) \big\rvert_L}_{(m, \varpi(m))} = \set{(e , \natural(e)) \ \vert \ e \in K^\perp_m \cap L_m} \ .
\end{align}
This is a vector subspace of $L_m \times {\red L}_{\varpi(m)} $ because the image of $\natural$ restricted to $K^\perp \cap L + K$ is $\red L,$ as shown in Subsection \ref{subsect:DiracReduction}. For $L$ to be reducible, $ K^\perp\cap L$ has to have constant rank, which implies that $$Q(K) \big\rvert_L \ \subset \ (L \times {\red L}) \big\rvert_{\gr(\varpi)}$$ is a regular vector subbundle supported on $\gr(\varpi).$ Note that $ {\rm CK}_{Q(K)} = 0_{Q(K)}$ by Example~\ref{eg:reductionCArel}.

Similarly
\begin{align}
    {Q(K) \big\rvert_{\red L}}_{(m, \varpi(m))}= \set{(e, \natural(e)) \in Q(K)_{(m, \varpi(m))} \ \vert \ \natural(e) \in {\red L}_{\varpi(m)} } \ ,
\end{align}
where $e \in K^\perp_m \cap L_m + K_m$ by definition of $\natural$ and $\red L$. Again by the regularity of $K^\perp \cap L$ it follows that $$Q(K)\big|_{\red L} \ \subset \ (L \times \red L)\big\rvert_{\gr(\varpi)}+ {\rm K}_{Q(K)}$$ is a regular vector subbundle, thus showing that $L \sim_{Q(K)} \red L$.

Lastly, $Q(K) \rvert_L$ is a ``bracket morphism'' between $L$ and $\red L,$ as shown in Equation \eqref{eqn:reducedbracket}. The anchors are related because $Q(K)$ is a Courant algebroid relation. Thus it represents a Lie algebroid relation between $L$ and $\red L.$ Hence $Q(K)$ is a Dirac relation between $L$ and $\red L.$ 
\end{example}

\medskip

\subsubsection{Morphisms of Manin Pairs}~\\[5pt]
Let us make contact with the existing literature concerning Courant algebroid relations between Dirac structures. We focus primarily on morphisms of Manin pairs for Courant algebroids, formulated in \cite{bursztyn2008courant}.
A \emph{Manin pair}  $(E, L)$ is given by a Courant algebroid $E$ over $M$ endowed with a Dirac structure $L$ supported on $M.$

\begin{definition}
Let $(E_1, L_1)$ and $(E_2, L_2)$ be Manin pairs over $M_1$ and $M_2,$ respectively. Let $\varphi \colon M_1 \to M_2$ be a smooth map.  A \emph{morphism of Manin pairs}    between $(E_1, L_1)$ and $(E_2, L_2)$ is a Dirac structure $D \subset E_1 \times E_2$ supported on $\gr(\varphi) \subset M_1 \times M_2$ such that $D \cap \rmp_1^*(L_1) = \set{0}$ and $D \cap \rmp_1^*(L_1 \oplus \varphi^* E_2)$ projects onto $\varphi^* L_2$ under the natural projection $E_1 \oplus \varphi^* E_2 \to \varphi^* E_2.$
\end{definition}

The latter projection becomes an isomorphism onto $\varphi^*L_2$ when restricted to $D \cap \rmp_1^*(L_1 \oplus \varphi^* E_2).$

\begin{proposition}
 A morphism $D$ of Manin pairs $(E_1, L_1)$ and $(E_2, L_2)$ is a Dirac relation in the sense of Definition \ref{def:Diracrel}.    
\end{proposition}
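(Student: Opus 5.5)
Take $R = D$, supported on $C=\gr(\varphi)$, and check the two equalities of Definition~\ref{def:Diracrel}. Only $\subseteq$ needs proof, since $\supseteq$ always holds. Regularity is handled at the end.

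\emph{First equality.} The condition $D\cap \rmp_1^*(L_1)=\set{0}$ means that no nonzero element $(e_1,0)$ with $e_1\in L_1$ lies in $D$. The projection
\begin{align}
D\cap \rmp_1^*(L_1\oplus\varphi^*E_2)\longrightarrow \varphi^*L_2
\end{align}
is onto, and by the remark after the definition it is an isomorphism. Take $(e_1,e_2)\in D\rvert_{L_1}$, so $e_1\in L_1$. Then $(e_1,e_2)$ lies in $D\cap \rmp_1^*(L_1\oplus\varphi^*E_2)$, so its projection $e_2$ lies in $\varphi^*L_2$. Hence $(e_1,e_2)\in D\cap(L_1\times L_2)\rvert_C$, which gives $D\rvert_{L_1}=D\cap(L_1\times L_2)\rvert_C$. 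Consequently ${\rm CK}_D\subseteq D\rvert_{L_1}$ lies in $L_1\times L_2$, and ${\rm CK}_D$ is then contained in $D\cap(L_1\times L_2)$.

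\emph{Second equality.} Take $(e_1,e_2)\in D$ with $e_2\in L_2$. By surjectivity there is $(l_1,e_2)\in D$ with $l_1\in L_1$. The difference $(e_1-l_1,0)$ lies in $D\cap(E_1\times 0)={\rm K}_D$. So $(e_1,e_2)=(l_1,e_2)+(e_1-l_1,0)$ is an element of $D\cap(L_1\times L_2)\rvert_C$ plus an element of ${\rm K}_D$, as required. I expect this to be the only step needing the surjectivity, and the pointwise argument is immediate.

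\emph{Main obstacle: regularity.} It remains to show that $D\rvert_{L_1}$ and $D\rvert_{L_2}$ are regular subbundles over $C$. For the first, $D\rvert_{L_1}$ is isomorphic to $\varphi^*L_2$ via the projection, so it has constant rank equal to $\rk L_2$; a smooth local frame is obtained by lifting a local frame of $\varphi^*L_2$ through the smooth inverse of this bundle isomorphism. For the second, I plan to show ${\rm K}_D\cap D\rvert_{L_1}=0$ directly: an element $(k,0)\in{\rm K}_D$ with $k\in L_1$ lies in $D\cap\rmp_1^*(L_1)=\set{0}$. Hence the sum $D\rvert_{L_1}+{\rm K}_D$ is direct, and $\rk D\rvert_{L_2}=\rk L_2+\rk{\rm K}_D$. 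Since $D$ is Lagrangian, ${\rm K}_D$ has fibres $\ker$ of the map $D\to E_2$. I will argue that this map has constant rank over $C$: its image is a coisotropic subspace whose dimension is pinned down by the surjectivity onto $\varphi^*L_2$ together with the transversality $D\cap\rmp_1^*L_1=0$, and a dimension count then makes $\rk{\rm K}_D$ constant. I expect this dimension count to be the main point to be checked carefully. Once it is done, ${\rm K}_D$ is a smooth subbundle and $D\rvert_{L_2}$ is a regular subbundle as a direct sum of regular subbundles.
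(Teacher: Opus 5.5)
Your verification of the two equalities is correct and is the paper's argument; the paper's proof consists only of this fibrewise check. You are in fact more careful than the paper. It asserts ${\rm CK}_D=0_D$, which does not follow: over a point, take $E_1=0$, $E_2$ a quadratic Lie algebra and $D=0\times L_2$ for a Lagrangian subalgebra $L_2$. You only use ${\rm CK}_D\subseteq D\rvert_{L_1}$, which is all that is needed.

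The genuine problem is the last step of your regularity plan. No dimension count can make $\rk{\rm K}_D$ constant, because the axioms of a morphism of Manin pairs only constrain the difference $\dim{\rm CK}_D-\dim{\rm K}_D$, and that difference is already fixed by $D$ being Lagrangian. Fibrewise, $\rmp_1(D)=\ker(D)^\perp$ gives $\dim{\rm CK}_D-\dim{\rm K}_D=\dim D-\rk E_1=\tfrac12(\rk E_2-\rk E_1)$. When $\ker(D)\cap L_1=0$ one then gets $\dim D\rvert_{L_1}=\dim{\rm CK}_D+\dim(\ker(D)^\perp\cap L_1)=\dim{\rm CK}_D+\tfrac12\rk E_1-\dim{\rm K}_D=\tfrac12\rk E_2$ automatically, so surjectivity adds no information about $\dim{\rm K}_D$. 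Likewise, the image $\rmp_2(D)={\rm coker}(D)^\perp$ is only squeezed between $L_2$ and $E_2$.

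Here is a concrete counterexample. Take $M_1=M_2=\IR$, $\varphi(x)=x^2$, $E_i=\IT\IR$ with zero twist, $L_i=T^*\IR$, and $D=R_\varphi=\set{(X+\varphi^*\beta,\varphi_*X+\beta)}$.
\begin{itemize}
\item $D$ is a Courant algebroid relation supported on $\gr(\varphi)$.
\item $D\cap(L_1\times 0)=0$.
\item $D\cap(L_1\times E_2)=\set{(\varphi^*\beta,\beta)}$ projects onto $T^*\IR$, so $D$ is a morphism of Manin pairs.
\item However, ${\rm K}_D=\ker(\de\varphi)\times 0$ vanishes for $x\neq 0$ and is one-dimensional at $x=0$.
\end{itemize}
So $D\rvert_{L_2}=D\rvert_{L_1}\oplus{\rm K}_D$ has rank $1$ away from the origin and rank $2$ at it, and is not a regular subbundle. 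The regularity clause of Definition~\ref{def:Diracrel} therefore cannot be derived from the hypotheses. The paper's proof is silent on regularity and has the same hole.

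What you can prove is the following:
\begin{itemize}
\item The two equalities hold fibrewise.
\item $D\rvert_{L_1}$ is regular. Argue this as the constant-rank intersection $D\cap(L_1\times E_2)\rvert_C$ of two smooth subbundles, rather than via an inverse map whose smoothness presupposes it.
\item By your directness argument, $D\rvert_{L_2}$ is regular if and only if ${\rm K}_D$ has constant rank. This has to be added as a hypothesis; for $R_\varphi$ it means that $\varphi$ has constant rank.
\end{itemize}
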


\begin{proof}
    If $D\cap \rmp_1^*(L_1 \oplus \varphi^* E_2)$ projects onto $\varphi^* L_2$ then
    \begin{align}
        D\cap(L_1 \times E_2) = D\big|_{L_1} = D\cap (L_1 \times L_2) \ .
    \end{align}
    It follows that ${\rm CK}_D = 0_D$. Since the projection $D\cap \rmp_1^*(L_1 \oplus \varphi^* E_2)\to \varphi^*L_2$ is onto, it follows that for every $e_2 \in L_2$ there is some $e_1\in E_1$ such that $(e_1, e_2) \in D$, and hence $$D\big|_{L_2} = D\cap(L_1 \times L_2) + {\rm K}_D \ , $$ as required.
\end{proof}

A particularly relevant example of morphism of Manin pairs arises from 

\begin{definition}
    Let $M$ and $N$ be smooth manifolds with a smooth map $f \colon M \to N$. For any $m \in M,$ elements $X+ \alpha \in \IT_m M$ and $Y+ \beta \in \IT_{f(m)} N$  are $f$-\emph{related} if $Y = (f_*)_m(X)$ and $\alpha = f^*_m(\beta).$ We write $(X+ \alpha) \sim_f (Y+ \beta).$
\end{definition}

\begin{lemma}\label{lemma:stienon}
    Let $(\IT M, f^*H)$ and $(\IT N, H)$ be twisted standard Courant algebroids, where $H\in\sfOmega_{\tt cl}^3(N)$ and $f \colon M \to N$ is a smooth map. If $X_i + \alpha_i \in \sfGamma(\IT M)$ is $f$-related to $Y_i+ \beta_i \in \sfGamma(\IT N),$ for $i=1,2,$ then
    \begin{align}
        \ip{X_1+ \alpha_1, X_2+ \alpha_2}_{\IT M} = \ip{Y_1+ \beta_1, Y_2+ \beta_2}_{\IT N} \ ,
    \end{align}
    and
    \begin{align}
        \llbracket X_1+ \alpha_1, X_2+ \alpha_2\rrbracket_{f^*H} = \llbracket Y_1+ \beta_1, Y_2+ \beta_2\rrbracket_{H} \ .
    \end{align}
\end{lemma}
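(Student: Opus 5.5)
The pairing identity is a direct pointwise computation. By $f$-relatedness we have $Y_i=f_*X_i$ and $\alpha_i=f^*\beta_i$, where $Y_i$ is evaluated at $f(m)$. Hence
\begin{align}
\iota_{X_1}\alpha_2=\iota_{X_1}f^*\beta_2=\iota_{f_*X_1}\beta_2=\iota_{Y_1}\beta_2 \ ,
\end{align}
and symmetrically with $1$ and $2$ interchanged. Substituting into the definition of $\ip{\,\cdot\,,\,\cdot\,}$ on $\IT M$ gives the first equality, understood as an equality of functions on $M$, with the right-hand side pulled back by $f$.

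For the bracket, I interpret the claimed equality as saying that $\llbracket X_1+\alpha_1,X_2+\alpha_2\rrbracket_{f^*H}$ is $f$-related to $\llbracket Y_1+\beta_1,Y_2+\beta_2\rrbracket_H$. I would use the explicit formula \eqref{eqn:Hbracket} and treat its components separately.

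\textbf{Vector part.} Since $X_i$ is $f$-related to $Y_i$ as vector fields, the standard naturality of the Lie bracket gives that $[X_1,X_2]$ is $f$-related to $[Y_1,Y_2]$.

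\textbf{Form part.} I have to show that
\begin{align}
\pounds_{X_1}\alpha_2-\iota_{X_2}\de\alpha_1+\iota_{X_2}\iota_{X_1}f^*H = f^*\big(\pounds_{Y_1}\beta_2-\iota_{Y_2}\de\beta_1+\iota_{Y_2}\iota_{Y_1}H\big) \ .
\end{align}
I would rewrite $\pounds_{X_1}\alpha_2=\iota_{X_1}\de\alpha_2+\de\,\iota_{X_1}\alpha_2$ using Cartan's formula, and then handle each term as follows.
\begin{enumerate}
\item For the exact term: $\iota_{X_1}\alpha_2=f^*(\iota_{Y_1}\beta_2)$ by the pointwise computation above, so $\de\,\iota_{X_1}\alpha_2=f^*\de(\iota_{Y_1}\beta_2)$ because $\de$ commutes with $f^*$.
\item For the terms involving $\de\alpha_i$: $\de\alpha_i=\de f^*\beta_i=f^*\de\beta_i$. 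Contraction with an $f$-related vector commutes with pullback pointwise, i.e.\ $\iota_X f^*\omega=f^*(\iota_Y\omega)$ at $m$ whenever $Y_{f(m)}=f_*X_m$. This gives $\iota_{X_1}\de\alpha_2=f^*(\iota_{Y_1}\de\beta_2)$ and $\iota_{X_2}\de\alpha_1=f^*(\iota_{Y_2}\de\beta_1)$.
\item The same contraction rule, applied twice, gives $\iota_{X_2}\iota_{X_1}f^*H=f^*(\iota_{Y_2}\iota_{Y_1}H)$.
\end{enumerate}
Assembling these and reversing Cartan's formula on $N$ yields the required identity.

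The only delicate point is making the pointwise contraction identity precise, because $Y_i$ lives on $N$ while $X_i$ lives on $M$. The identity is stated at each point $m$ with $Y$ evaluated at $f(m)$, and relies on $\iota_X\iota_{X'}f^*\omega=\omega(f_*X',f_*X,\ldots)$. For the exact term I rely on the fact that the $f$-relatedness of $\alpha_2$ and $\beta_2$, and of $X_1$ and $Y_1$, holds on all of $M$. This makes the function identity $\iota_{X_1}\alpha_2=f^*(\iota_{Y_1}\beta_2)$ valid globally, so it may be differentiated. With these points checked, no real obstacle remains.
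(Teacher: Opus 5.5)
Your proof is correct: you rightly read the pairing identity as an equality after pullback by $f$ and the bracket identity as $f$-relatedness of the two brackets. The paper gives no argument of its own and simply cites \cite{Stienon2008}; your componentwise check is the standard direct verification behind that reference, and it fills in the omitted details. That check uses naturality of the Lie bracket, Cartan's formula, $\de f^* = f^*\de$, and the pointwise identity $\iota_X f^*\omega = f^*(\iota_Y\omega)$ for $f$-related $X$ and $Y$.
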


\begin{proof}
    See \cite{Stienon2008}.
\end{proof}

An immediate consequence of Lemma \ref{lemma:stienon} is thus

\begin{proposition}
    The subbundle 
    \begin{align}
        R_f \coloneqq \set{\left( X+ \alpha, Y+ \beta \right) \in \IT M \times \IT N \ \vert \ (X+ \alpha) \sim_f (Y+ \beta)} 
    \end{align}
    is a a Courant algebroid relation $R_f:(\IT M, f^*H)\rel (\IT N, H)$ supported on $\gr(f)$. 
\end{proposition}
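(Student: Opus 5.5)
To show that $R_f$ is a Courant algebroid relation supported on $\gr(f)$, I will check the three defining conditions in turn: $R_f$ is a Lagrangian subbundle of $(\IT M\times\overline{\IT N})|_{\gr(f)}$, it is involutive for the product Dorfman bracket, and $\rho(R_f^\perp)\subset T\gr(f)$. Lemma~\ref{lemma:stienon} is what makes the involutivity step short.

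\emph{Lagrangian subbundle.} Fix $m\in M$. The fibre of $R_f$ over $(m,f(m))$ is exactly the subspace $U_{f_{*m}}$ of Example~\ref{ex:natural}, taken for the linear map $f_{*m}\colon T_mM\to T_{f(m)}N$. It is parametrised by pairs $(X,\beta)\in T_mM\oplus T^*_{f(m)}N$, so it has rank $\dim M+\dim N$, which is half the rank of $\IT M\times\IT N$. It varies smoothly with $m$ because $f$ is smooth.

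Isotropy follows from the pairing identity in Lemma~\ref{lemma:stienon}: for $f$-related pairs,
\[
\tfrac12\big(\iota_{X_1}f^*\beta_2+\iota_{X_2}f^*\beta_1\big)=\tfrac12\big(\iota_{f_*X_1}\beta_2+\iota_{f_*X_2}\beta_1\big),
\]
so the pairing $\ip{\,\cdot\,,\,\cdot\,}_{\IT M}-\ip{\,\cdot\,,\,\cdot\,}_{\IT N}$ vanishes on $R_f$. Isotropy together with half rank gives that $R_f$ is Lagrangian.

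\emph{Anchor condition.} Since $R_f$ is Lagrangian, $R_f^\perp=R_f$. Its anchor image is $\{(X,f_*X)\}=T\gr(f)$, as required.

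\emph{Involutivity.} I take sections of $\IT M\times\IT N$ that restrict on $\gr(f)$ to sections of $R_f$, and I need their bracket to restrict into $R_f$ as well. The plan is as follows.
\begin{enumerate}
\item Use the standard fact for Dirac structures supported on submanifolds: because $\rho(R_f)\subset T\gr(f)$ and $R_f$ is Lagrangian, the restriction of the bracket to $\gr(f)$ depends only on the restrictions of the sections, up to terms generated by $\rho^*(\mathrm{Ann}\,T\gr(f))$. Those terms lie in $R_f^\perp=R_f$.
\item This reduces the claim to a spanning family. Locally, $R_f$ is spanned along $\gr(f)$ by pairs $(X+f^*\beta,\,Y+\beta)$, where $\beta\in\sfOmega^1(N)$ is arbitrary and $X$ is a vector field on $M$ extended to a vector field $Y$ on $N$ that is $f$-related to $X$ along $f$.
\item For such pairs, Lemma~\ref{lemma:stienon} gives $\llbracket e_1,e_2\rrbracket_{f^*H}\sim_f\llbracket e_1',e_2'\rrbracket_H$. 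This is exactly the statement that the product bracket lies in $R_f$.
\end{enumerate}

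The main obstacle is step 2. When $f$ is not an immersion, a vector field $X$ need not be $f$-related to any global $Y$. I would handle this by working on $M\times N$ directly: pull back the sections via the projections, write a general section of $R_f$ as a $C^\infty$-combination of the generators above, and use the Leibniz rule together with the anchor condition, so that the function coefficients only produce extra terms tangent to $R_f$. Alternatively, this is precisely the content of~\cite{Stienon2008}, which can be cited for the generator-level statement.
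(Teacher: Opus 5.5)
Your route is the paper's, which records the proposition as an immediate consequence of Lemma~\ref{lemma:stienon}. Your Lagrangian count, the anchor condition $\rho(R_f^\perp)=\rho(R_f)=T\gr(f)$, and Step~1 are correct; Step~1 says that along $\gr(f)$ the bracket of sections of $\sfGamma(\IT M\times\overline{\IT N};R_f)$ depends only on their restrictions, modulo $\rho^*\ann(T\gr(f))\subset R_f^\perp=R_f$.

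The gap is Step~2, which is false as stated, and your remedy does not repair it. This is also the point the paper's one-line deduction leaves implicit. The obstruction is not non-immersiveness but jumps in the rank of $f$. Take $f(x)=x^2$ on $\IR$. The fibre of $R_f$ over $(0,0)$ is spanned by $(\partial_x,0)$ and $(0,\de y)$. Writing $X=X(x)\,\partial_x$ and $Y=Y(y)\,\partial_y$, relatedness $Y(x^2)=2x\,X(x)$ near $0$ forces $X(-x)=-X(x)$, hence $X(0)=0$. So no $C^\infty(M\times N)$-combination of $f$-related pairs reaches $(\partial_x,0)$.

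Writing a general section of $R_f$ as a combination of those generators therefore presupposes exactly the spanning that fails. Citing \cite{Stienon2008} only reproduces the generator-level statement. If instead you expand in arbitrary pulled-back sections of the two factors, Lemma~\ref{lemma:stienon} no longer applies to them. Moreover, the Leibniz terms are not individually in $R_f$, since no $(0,Y)$ with $Y\neq0$ is; they cancel in pairs, and that cancellation has to be computed.

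Either of the following closes the gap. Directly: near a point of $\gr(f)$, with coordinates $x^i$ on $M$ and $y^a$ on $N$, $R_f$ is framed along $\gr(f)$ by
\[
s_i=(\partial_{x^i},0)+\sum_a\rmp_1^*(\partial_if^a)\,(0,\partial_{y^a}) \qquad\text{and}\qquad t^a=(\de f^a,\de y^a) \ .
\]
A Leibniz computation gives $\llbracket t^a,t^b\rrbracket=\llbracket s_i,t^a\rrbracket=\llbracket t^a,s_i\rrbracket=0$. In $\llbracket s_i,s_j\rrbracket$ the terms $\pm\big(0,\sum_a\partial_i\partial_jf^a\,\partial_{y^a}\big)$ cancel by symmetry of second derivatives, leaving
\[
\llbracket s_i,s_j\rrbracket=\Big(\iota_{\partial_{x^j}}\iota_{\partial_{x^i}}f^*H\,,\,\sum_{a,b}\rmp_1^*(\partial_if^a\,\partial_jf^b)\,\iota_{\partial_{y^b}}\iota_{\partial_{y^a}}H\Big) \ .
\]
At $(m,f(m))$ this equals $(f^*\beta,\beta)$ with $\beta=H(f_*\partial_{x^i},f_*\partial_{x^j},\,\cdot\,)$. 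Together with your Step~1 this proves involutivity.

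Conceptually: the map $Y+\eta\mapsto Y-\eta$ identifies $\overline{(\IT N,H)}$ with $(\IT N,-H)$, hence $\IT M\times\overline{\IT N}$ with $\big(\IT(M\times N),\rmp_1^*f^*H-\rmp_2^*H\big)$. Under this identification $R_f$ becomes $T\gr(f)\oplus\ann(T\gr(f))$. This is a Dirac structure supported on $\gr(f)$ because the twisting form pulls back to $f^*H-f^*H=0$ there.

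Alternatively, your Step~2 does hold on the open dense set where $f$ has locally constant rank. So a continuity argument, using that $R_f$ is closed, would also work.
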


As in \cite[Example 2.11]{bursztyn2008courant}, by taking any Dirac structures $L \subset \IT M$ and $L'\subset\IT N$, it follows that that $R_f$ induces a Dirac relation if and only if 
\begin{align}
    \set{f_*(X)+ \alpha  \ | \ X\in TM \ ,\ \alpha \in T^*N \ , \ X+ f^*(\alpha) \in L} &= L' \ , \\[4pt]
     \rmp_1^*\ker(\de f) \cap \rmp_1^*(TM) \cap L &= \set{0}  \ ,
\end{align}
that is, $f$ is a \emph{strong Dirac map} (see \cite{bursztyn2008courant} and references therein for details). 

\medskip

\subsection{The T-Duality Relation}~\\[5pt] \label{subsect:Tdualityrel}
 Let $E$ be an exact Courant algebroid over $M$ endowed with isotropic subbundles $K_1$ and $K_2$ such that
$\rk_\IR (K_1) = \rk_\IR(K_2)$.    
Assume that $K_1^\perp$ and $K_2^\perp$ are pointwise spanned by basic sections, and denote by $\cF_i$ the foliation of $M$ induced by the integrable distribution $\rho(K_i).$ Suppose that the leaf spaces $\cQ_i=M / \cF_i$ have a smooth structure, hence there are unique surjective submersions $\varpi_i \colon M \to \cQ_i$ for $i=1,2,$ which we summarise in the bisubmersion diagram
\begin{equation}
    \begin{tikzcd} [row sep = 1cm]
  &\arrow[swap]{dl}{\varpi_1} M \arrow{dr}{\varpi_2}&   \\
\cQ_1 & & \cQ_2 
 \end{tikzcd}
\end{equation}

We consider the setting in which the diagram
\begin{equation}
\begin{tikzcd}[row sep = 1cm] \label{cd:T-dualitycd}
  &\arrow[tail,swap]{dl}{Q(K_1)} E \arrow[tail]{dr}{Q(K_2)}&   \\
\red E{}_1 \arrow[dashed]{rr}{R}& & \red E{}_2 
 \end{tikzcd}
 \end{equation}
defines a  Courant algebroid relation 
\begin{equation}
\begin{tikzcd}
 R =  Q(K_2) \circ Q(K_1)^\top \ \colon \ \red E{}_1 \ar[r,dashed] & \red E{}_2 
 \end{tikzcd}
\end{equation} 
supported on the submanifold 
\begin{align}
    {\red C}=\set{(\varpi_1(m),\varpi_2(m)) \, | \, m \in M} \ \subset \ \cQ_1 \times \cQ_2 \ .
\end{align}

It was shown in \cite[Theorem 5.8]{DeFraja:2023fhe} that if $\red C$ is a smooth manifold and $T\cF_1 \cap T\cF_2$ has constant rank, then $R$ is a Courant algebroid relation if and only if $K_1 \cap K_2$ has constant rank. Then $R$ is a Dirac structure  in $\red E{}_1 \times \overline{\red E}{}_2$ supported on $\red C\,$, which is given explicitly by
\begin{align}\label{eqn:T-dualRelation}
    R = \set{(\natural_{1}(e), \natural_{2}(e))\,|\, e \in K_1^\perp \cap K_2^\perp} \ ,
\end{align}
where $\natural_i \colon  K_i^\perp \to \bigr( K_i^\perp /K_i \bigl) / \cF_i = \red E{}_i$ denotes the quotient map  for $i=1,2$.
It was shown in \cite[Lemma~5.7]{DeFraja:2023fhe}  that $\red C$ is a smooth manifold if $\cQ_1$ and $\cQ_2$ are fibred over the same manifold $\cB,$ in which case $\red C$ is the fibred product $\red C = \cQ_1 \times_\cB \cQ_2.$

In this instance $R$ is said to be a \emph{T-duality relation} between the reduced Courant algebroids $\red E{}_1$ and $\red E{}_2$~\cite{DeFraja:2023fhe}. It describes a topological T-duality between the manifolds $\cQ_1$ and $\cQ_2$ (generically endowed with NS--NS fluxes). Geometric T-duality is defined in \cite{DeFraja:2023fhe} by introducing generalised metrics $ V_1^+$ and $ V_2^+$, or equivalently $\tau_1$ and $\tau_2$, into this setting.

\begin{definition} \label{def:geomTdual}
Two Courant algebroids endowed with generalised metrics $(\red E{}_1,  V_1^+)$ and $(\red E{}_2,  V_2^+)$ fitting into Diagram \eqref{cd:T-dualitycd} are \emph{geometrically T-dual} if $R$ is a generalised isometry: $(\tau_1 \times \tau_2)(R) = R$.
\end{definition}

The problem of transporting a generalised metric $ V_1^+ \subset \red E{}_1$ to $\red E{}_2$ such that $R$ becomes a generalised isometry is addressed in \cite{DeFraja:2023fhe}. 
For this, introduce the subbundle $$D_1 \coloneqq T {\red\cF}{}_1 = \varpi_{1*}(T \cF_2)$$ of $T\cQ_1$, which is induced by the subbundle $K_2 \subset E$ and called the \emph{distribution of T-duality directions}.
It plays the role of the bundle of isometries over $\cQ_1$.

\begin{definition} \label{def:D1invariance}
    Let $\red \sigma_1 \colon T\cQ_1 \to \red E{}_1 $ be the splitting of the exact Courant algebroid $\red E{}_1$ over $\cQ_1$ induced by an isotropic  splitting $\sigma_1 \colon TM \to E$ of $E$ which is adapted to $K_1\subset E$.   
    A generalised metric $V_1^+$ on $\red E{}_1$ is \emph{invariant with respect to $D_1$}, or simply \emph{$D_1$-invariant}, if there are generators $\set{\red X{}_1,\dots, \red X{}_{r_1}}\subset \sfgamma(T\cQ_1)$ of the $C^\infty(\cQ_1)$-submodule $\sfgamma(D_1)$ such that
    \begin{align}\label{eqn:genmetricinvariance}
        \ad^{\red \sigma_1}_{\red X{}_k} (\red w{}_1 ) \ \in \ \mathsf{\Gamma}(V_1^+) \ ,
    \end{align}
    for $k=1,\dots,r_1$ and $\red w{}_1\in \mathsf{\Gamma}(V_1^+)$. The Lie subalgebra $$\frk{k}_{\tau_1} \coloneqq \text{Span}_{\IR}\set{\red X{}_1,\dots, \red X{}_{r_1}}$$ of $\sfgamma(T\cQ_1)$ is the \emph{isometry algebra of $D_1$} or the \emph{Lie algebra of Killing vector fields for $V_1^+$}.
\end{definition}

\begin{remark}
    For a twisted standard Courant algebroid $(\IT \cQ_1, \red H{}_1)$ with any $ \red X\in \sfGamma(T\cQ_1)$, $\red e = \red Y+\red \eta \in \sfgamma(\IT \cQ_1)$ and $ \red \omega \in \sfOmega^\bullet(\cQ_1)$, one computes
    \begin{align}
        \pounds_{\red X} \big( (\red Y+\red \eta) \cdot \red \omega\big) &= \pounds_{\red X} (\iota_{\red Y} \omega) + \pounds_{\red X}(\red \eta \wedge \red \omega\big)\\[4pt]
        &= \iota_{\red Y}\, \pounds_{\red X}\red  \omega + \iota_{[\red X,\red Y]} \red \omega + (\pounds_{\red X}\red  \eta) \wedge \red \omega + \red \eta \wedge \pounds_{\red X}\red \omega\\[4pt]
        &=(\red Y+\red \eta)\cdot \pounds_{\red X}\red  \omega + ([\red X,\red Y] + \pounds_{\red X}\red  \eta) \cdot \red \omega \ ,
    \end{align}
which implies
\begin{align}\label{eqn:spinorLeibnizrule}
        \pounds_{\red X} \big( (\red Y+\red \eta) \cdot \red \omega\big) = \ad_{\red X}^{\red \sigma_1}(\red Y+\red \eta) \cdot \red \omega + (\red Y+\red \eta)\cdot \pounds_{\red X}\red  \omega \ .
    \end{align}
\end{remark}

\begin{definition} \label{def:compatiblesplit}
    Let $V_1^+$ be a $D_1$-invariant generalised metric on $\red E{}_1$ with isometry algebra $\frk{k}_{\tau_1}$. Adapted splittings $\sigma_i:TM\to E$ with respect to $K_i$ for $i=1,2$ are \emph{compatible} if they satisfy
    \begin{align} \label{eqn:compatiblesplits}
        \llbracket \sigma_2  (X) , e \rrbracket = \ad^{\sigma_1}_X\, e \ ,
    \end{align}
    for every $X \in  \frk{k}_2\subset\sfGamma(TM)$ and $e\in \mathsf{\Gamma}(E)$, where
    \begin{align}
       \frk{k}_2 \coloneqq \varpi_{1*}^{-1}(\frk{k}_{\tau_1}) \cap \mathsf{\Gamma}\big(\rho(K_2)\big) \ .
    \end{align}
\end{definition}

We can now state the main result proven in \cite{DeFraja:2023fhe}. It provides global conditions under which a generalised metric $V_1^+$ on $\red E{}_1$ can be transported to a unique generalised metric $V_2^+$ on $\red E{}_2$ such that the T-duality relation $R$ is a generalised isometry.

\begin{theorem}\label{thm:maingeneral}
    Let $E$ be an exact Courant algebroid over $M$ with isotropic subbundles $K_1$ and $K_2$ of the same rank such that $K_1^\perp$ and $K_2^\perp$ are pointwise spanned by basic sections, giving T-duality related exact Courant algebroids $\red E{}_1$ and $\red E{}_2$ with T-duality relation $R$. Assume further that the involutivity condition
    \begin{align} \label{eqn:bisubmersioncd}
\big[\mathsf{\Gamma}\bigl(\ker(\varpi_{1*})\bigr) , \mathsf{\Gamma}\big(\ker(\varpi_{2*})\big)\big] \ \subseteq  \ \mathsf{\Gamma}\bigl(\ker(\varpi_{1*})\bigr) + \mathsf{\Gamma}\bigl(\ker(\varpi_{2*})\bigr)
\end{align}
    for bisubmersions holds. Let  $\sigma_1$ and $\sigma_2$ be compatible adapted splittings.  Let $\red E{}_1$ be endowed with a $D_1$-invariant generalised metric $V_1^+$. 
    Then the following statements are equivalent:
    \begin{enumerate}[label = (\roman{enumi})]
        \item\label{item:main1} $K_2^\perp \cap K_1 \subseteq K_2 \ . $ \\[-3mm]
        \item\label{item:main2} $K_2 \cap K_1^\perp \subseteq K_1 \ . $ \\[-3mm]
        \item\label{item:main3} There exists a unique generalised metric $V_2^+$ on $\red E{}_2$ such that $R$ is a generalised isometry between $V_1^+$ and $V_2^+$, i.e. $(\red E{}_1, V_1^+)$ and $(\red E{}_2, V_2^+)$ are geometrically T-dual.
    \end{enumerate}
\end{theorem}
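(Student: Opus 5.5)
The proof splits into two parts. The equivalence \ref{item:main1}$\iff$\ref{item:main2} is pointwise linear algebra. The equivalence with \ref{item:main3} goes through the kernel and cokernel of the T-duality relation $R$ from \eqref{eqn:T-dualRelation}.

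\textbf{Step 1: \ref{item:main1}$\iff$\ref{item:main2}.} Fix $m\in M$ and let $\beta\colon K_1\times K_2\to\IR$ be the restriction of the pairing. Then $K_1\cap K_2^\perp$ is the left radical of $\beta$ and $K_2\cap K_1^\perp$ is its right radical. Hence
\begin{align}
\dim(K_1\cap K_2^\perp)=\rk K_1-\rk\beta \qquad \text{and} \qquad \dim(K_2\cap K_1^\perp)=\rk K_2-\rk\beta \ .
\end{align}
Since $\rk_\IR K_1=\rk_\IR K_2$, the two radicals have the same dimension. Both contain $K_1\cap K_2$, because $K_1$ and $K_2$ are isotropic. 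Condition \ref{item:main1} says exactly that $K_1\cap K_2^\perp=K_1\cap K_2$, i.e.\ $\dim(K_1\cap K_2^\perp)=\dim(K_1\cap K_2)$. By the equality of dimensions this is equivalent to $\dim(K_2\cap K_1^\perp)=\dim(K_1\cap K_2)$, which is \ref{item:main2}.

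\textbf{Step 2: kernel and cokernel of $R$.} Using \eqref{eqn:T-dualRelation}, an element of ${\rm K}_R$ is $(\natural_1(e),0)$ with $e\in K_1^\perp\cap K_2^\perp$ and $\natural_2(e)=0$, i.e.\ $e\in K_2$. So ${\rm K}_R\cong (K_1^\perp\cap K_2)/(K_1\cap K_2)$, and this vanishes if and only if \ref{item:main2} holds. Symmetrically, ${\rm CK}_R\cong (K_2^\perp\cap K_1)/(K_1\cap K_2)$ vanishes if and only if \ref{item:main1} holds.

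\textbf{Step 3: \ref{item:main3}$\implies$\ref{item:main2}.} A generalised isometry has trivial kernel and cokernel by \cite[Proposition~4.17]{DeFraja:2023fhe}. By Step 2 this gives \ref{item:main2}.

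\textbf{Step 4: \ref{item:main1},\ref{item:main2}$\implies$\ref{item:main3}.} With trivial kernel and cokernel, $R$ is Lagrangian of the right dimension, so at each $(q_1,q_2)=(\varpi_1(m),\varpi_2(m))\in\red C$ it is the graph of an isometry
\begin{align}
\Phi_m\colon(\red E{}_1)_{q_1}\longrightarrow(\red E{}_2)_{q_2} \ .
\end{align}
I would set $(V_2^+)_{q_2}:=\Phi_m\big((V_1^+)_{q_1}\big)$. This is maximal positive-definite because $\Phi_m$ preserves the pairing, and it makes $R$ a generalised isometry at that point. Uniqueness is immediate, since any $V_2^+$ making $R$ a generalised isometry must equal $\rmp_2\big(R\cap(V_1^+\times E_2)\big)$.

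\textbf{Main obstacle: well-definedness.} The definition must not depend on the choice of $m\in\varpi_2^{-1}(q_2)$; different choices of $m$ land at different points $q_1$ along the leaves of $D_1=\varpi_{1*}(T\cF_2)$. To handle this I would lift $V_1^+$ to a subbundle $W\subset K_1^\perp$ with $K_1\subset W$, generated by basic sections. The key claim is that $W\cap K_2^\perp$, modulo $K_2$, is preserved by $\cbrak{\sfgamma(K_2),\,\cdot\,}$. To prove it, I would write sections of $K_2$ as $\sigma_2(X)$ with $X\in\frk{k}_2$ plus terms in $K_1\cap K_2$, using the involutivity condition \eqref{eqn:bisubmersioncd} and the adapted-splitting identity $\sigma_2(T\cF_2)\subseteq K_2$. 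The compatibility condition \eqref{eqn:compatiblesplits} converts $\cbrak{\sigma_2(X),\,\cdot\,}$ into $\ad^{\sigma_1}_X$. Proposition~\ref{prop:adaptedsplittingadaction} together with $D_1$-invariance \eqref{eqn:genmetricinvariance} then shows that $\ad^{\sigma_1}_X$ preserves the lift of $V_1^+$.

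Granting the claim, Remark~\ref{rmk:reductedKsub-bundle} and the reduction of Proposition~\ref{prop:reducedKsub-bundle}, applied with $K_2$, show that this bundle descends along $\varpi_2$. This yields a smooth subbundle $V_2^+\subset\red E{}_2$ that agrees with the pointwise definition above. I expect verifying that the compatibility and invariance hypotheses exactly close up this bracket computation to be the hardest part.
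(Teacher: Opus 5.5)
Your proposal is correct and follows essentially the argument of \cite[Theorem~5.27]{DeFraja:2023fhe}, to which the paper defers. The ingredients match: a pointwise rank count gives (i)$\Leftrightarrow$(ii); the identifications ${\rm K}_R\cong(K_1^\perp\cap K_2)/(K_1\cap K_2)$ and ${\rm CK}_R\cong(K_1\cap K_2^\perp)/(K_1\cap K_2)$, together with triviality of the kernel and cokernel of a generalised isometry, give (iii)$\Rightarrow$(i),(ii); and for the converse $V_1^+$ is transported along $R$, which is pointwise the graph of an isometry, with independence from the choice of $m\in\varpi_2^{-1}(q_2)$ secured by the compatible splittings (turning $\cbrak{\sigma_2(X),\,\cdot\,}$ into $\ad^{\sigma_1}_X$ for $X\in\frk{k}_2$) and by $D_1$-invariance. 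The only difference is one of phrasing: judging from how the paper reuses the construction in Proposition~\ref{prop:TdualDirac} and Theorem~\ref{thm:relatedspinorline}, the cited proof carries a spanning set of $D_1$-invariant sections across $R$, whereas you reduce the bundle $W\cap K_2^\perp$ by $K_2$; the mechanism is the same.
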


The condition \eqref{eqn:bisubmersioncd} implies that there are unique foliations $\red \cF{}_1$ of $\cQ_1$, whose tangent distribution is $D_1$, and $\red \cF{}_2$ of $\cQ_2$ which are Hausdorff-Morita equivalent, see \cite{Garmendia2019}.

\begin{remark} \label{rmk:tdualsplitiso}
 Consider the setting of Theorem~\ref{thm:maingeneral} with the isomorphism 
 \begin{align}
 \Psi_1 = \sigma_1 \oplus \tfrac{1}{2}\,\rho^* \colon TM \oplus T^*M \longrightarrow E \ .
 \end{align}
 Recall that the difference between two isotropic splittings $\sigma_1$ and $\sigma_2$ is given by $$(\sigma_1 - \sigma_2)(X)= \rho^*(\iota_X B) \ , $$ for some $B \in \sfOmega^2(M)$ and for all $X \in \sfgamma(TM)$. Since $\sigma_i$ is adapted to $K_i$, it then follows that
 \begin{align}
 \Psi_1^{-1}(K_1) = T\cF_1 \qquad \text{and} \qquad \Psi_1^{-1}(K_2)= \e^{-B}\,(T\cF_2) \ . 
 \end{align}
Thus if $V_1^+$ is a $D_1$-invariant generalised metric on $\red E{}_1$, then a T-dual generalised metric $V_2^+$ on $\red E{}_2$ exists if $\pounds_X B = 0$, for all $X \in \frk{k}_2$, and $T\cF_1 \cap T\cF_2 \subseteq \ker(B)$~\cite[Theorem 5.44]{DeFraja:2023fhe}.
\end{remark}

The classic example of the construction of Theorem \ref{thm:maingeneral} is given by the Cavalcanti-Gualtieri formulation of T-duality~\cite{cavalcanti2011generalized}, see \cite[Section 6.1]{DeFraja:2023fhe}. 

Since we have Dirac relations at our disposal, we can now discuss T-duality for Dirac structures. For this, as in Definition \ref{def:D1invariance}, we need 

\begin{definition} \label{def:invDirac}
    Let $\red \sigma_1 \colon T\cQ_1 \to \red E{}_1 $ be the splitting of the exact Courant algebroid $\red E{}_1$ over $\cQ_1$ induced by an isotropic  splitting $\sigma_1 \colon TM \to E$ of $E$ which is adapted to $K_1\subset E$.   
    A Dirac structure $\red L{}_1$ on $\red E{}_1$ is \emph{invariant with respect to $D_1$}, or simply \emph{$D_1$-invariant}, if there are generators $\set{\red X{}_1,\dots, \red X{}_{r_1}}\subset \sfgamma(T\cQ_1)$ of the $C^\infty(\cQ_1)$-submodule $\sfgamma(D_1)$ such that
    \begin{align}
        \ad^{\red \sigma_1}_{\red X{}_k} (\red e{}_1 ) \ \in \ \mathsf{\Gamma}(\red L{}_1) \ ,
    \end{align}
    for $k=1,\dots,r_1$ and $\red e{}_1\in \mathsf{\Gamma}(\red L{}_1)$.  
\end{definition}

Then we arrive at a result for Dirac structures analogous to Theorem \ref{thm:maingeneral} given by

\begin{proposition} \label{prop:TdualDirac}
    Under the assumptions of Theorem \ref{thm:maingeneral}, let $\red L{}_1$ be a $D_1$-invariant Dirac structure on $\red E{}_1$. Then there exists a unique Dirac structure $\red L{}_2$ on $\red E{}_2$ such that the T-duality relation $R$ is a Dirac relation.
\end{proposition}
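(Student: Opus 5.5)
The plan is to mirror the proof of Theorem~\ref{thm:maingeneral}, with the Dirac structure $\red L{}_1$ in the role of the generalised metric $V_1^+$. The construction goes through $M$: lift $\red L{}_1$ to $E$, reduce by $K_2$, and check the relation conditions fibrewise using the explicit form \eqref{eqn:T-dualRelation} of $R$.

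\emph{Step 1: lift to $E$.} Define
\begin{align}
L \coloneqq \set{e\in K_1^\perp \ \vert \ \natural_1(e)\in \red L{}_1} \ ,
\end{align}
that is, $L=\natural_1^{-1}(\red L{}_1)$. Since $\red L{}_1$ is Lagrangian in $\red E{}_1$ and $K_1^\perp/K_1$ is pointwise isomorphic to $\red E{}_1$ via $\mathscr J$, the bundle $L$ is a Lagrangian subbundle of $E$ with $K_1\subset L\subset K_1^\perp$. It is involutive, because its basic sections are lifts of sections of $\red L{}_1$ and brackets with $\sfgamma(K_1)$ stay in $K_1$; this is the reverse of Corollary~\ref{cor:Diracreduct}. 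So $L$ is a Dirac structure on $E$.

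\emph{Step 2: reduce by $K_2$.} Apply Proposition~\ref{prop:reducedKsub-bundle} and Corollary~\ref{cor:Diracreduct} with $K=K_2$, and set
\begin{align}
\red L{}_2 \coloneqq \natural_2(L\cap K_2^\perp + K_2) \ .
\end{align}
Two hypotheses must be checked.
\begin{itemize}
\item \emph{Constant rank of $L\cap K_2^\perp$.} Condition \ref{item:main1}/\ref{item:main2} of Theorem~\ref{thm:maingeneral}, $K_2\cap K_1^\perp\subseteq K_1$, gives $L\cap K_2^\perp\supseteq K_1\cap K_2^\perp$. Together with the constant rank of $K_1\cap K_2$ and of the $K_i$, I expect this to give constant rank; I would compute it by dimension counting in the splitting $\Psi_1$ of Remark~\ref{rmk:tdualsplitiso}.
\item \emph{The bracket condition} $\llbracket\sfgamma(K_2),\sfgamma(L\cap K_2^\perp)\rrbracket\subset\sfgamma(L+K_2)$.
\end{itemize}

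\emph{Main obstacle: the bracket condition.} This is where $D_1$-invariance enters. Write sections of $K_2$ as $C^\infty(M)$-combinations of $\sigma_2(X)$ with $X\in\frk k_2$, plus elements of $K_1\cap K_2$. By compatibility \eqref{eqn:compatiblesplits}, $\llbracket\sigma_2(X),e\rrbracket=\ad^{\sigma_1}_X e$. For basic $e\in L$, this descends under $\natural_1$ to $\ad^{\red\sigma_1}_{\red X}\natural_1(e)$, which lies in $\red L{}_1$ by Definition~\ref{def:invDirac}. Hence the bracket lies in $L$.

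The function-coefficient terms produce $\rho^*\de f\,\ip{\cdot,\cdot}$-type corrections, which lie in $K_2$ because $e\in K_2^\perp$. Proposition~\ref{prop:adaptedsplittingadaction} handles the pieces coming from $K_1$. Making these reductions precise, in particular dealing with generators of $D_1$ rather than a frame, is the delicate part. I would follow the argument of \cite[Theorem 5.27]{DeFraja:2023fhe} verbatim, with $V_1^+$ replaced by $\red L{}_1$.

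\emph{Step 3: the Dirac relation and uniqueness.} For $(\natural_1(e),\natural_2(e))\in R$ with $e\in K_1^\perp\cap K_2^\perp$: if $\natural_1(e)\in\red L{}_1$ then $e\in L\cap K_2^\perp$, so $\natural_2(e)\in\red L{}_2$. Conversely, if $\natural_2(e)\in\red L{}_2$ then $e\in L+K_2$; using $K_2\cap K_1^\perp\subseteq K_1\subset L$, we get $e\in L$, so $\natural_1(e)\in\red L{}_1$.

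The kernel and cokernel of $R$ are $\natural_1(K_1^\perp\cap K_2)$ and $\natural_2(K_1\cap K_2^\perp)$. By the same inclusion both vanish, as for generalised isometries. Therefore $R|_{\red L{}_1}=R|_{\red L{}_2}=R\cap(\red L{}_1\times\red L{}_2)$, and this is regular by Step 2.

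For uniqueness: $\natural_2$ restricted to $K_1^\perp\cap K_2^\perp$ surjects fibrewise onto $\red E{}_2$, since $R$ has trivial cokernel and is Lagrangian. So any $\red L{}_2'$ satisfying the relation must contain $\natural_2(L\cap K_2^\perp)=\red L{}_2$. Since both are Lagrangian, they are equal.
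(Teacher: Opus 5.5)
Your proposal is correct and takes the same route as the paper, whose proof simply says that $\red L{}_2$ is obtained by repeating the construction of the T-dual generalised metric in the proof of \cite[Theorem~5.27]{DeFraja:2023fhe} with $\red L{}_1$ in place of $V_1^+$, i.e.\ lifting to $E$ along $Q(K_1)^\top$ and reducing by $K_2$, with $D_1$-invariance and the compatible splittings supplying the reduction condition; your Step~3 (the relation conditions, trivial kernel and cokernel via $K_2\cap K_1^\perp\subseteq K_1$ and $K_1\cap K_2^\perp\subseteq K_2$, and uniqueness by the Lagrangian argument) is more explicit than anything written in the paper. Two small repairs: for constant rank, the relevant consequence of $K_2\cap K_1^\perp\subseteq K_1$ is $L\cap K_2=K_1\cap K_2$, so that $L\cap K_2^\perp=(L+K_2)^\perp$ has rank $\tfrac12\,\rk_\IR(E)-\rk_\IR(K_2)+\rk_\IR(K_1\cap K_2)$ (the inclusion $L\cap K_2^\perp\supseteq K_1\cap K_2^\perp$ you cite already follows from $K_1\subseteq L$ and does not give this), and your ``delicate part'' of the bracket condition disappears once you note that $\ad^{\sigma_1}_X$ satisfies the Leibniz rule in its second argument and maps $\sfgamma(K_1)$ into itself, so for $X\in\frk{k}_2$ it preserves all of $\sfgamma(L)$ and not only its $K_1$-basic sections.
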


\begin{proof}
    The construction of the T-dual Dirac structure $\red L{}_2$ on $\red E{}_2$ follows the same steps as in the construction of the T-dual generalised metric $V_2^+$ of Theorem \ref{thm:maingeneral}, see the proof of \cite[Theorem~5.27]{DeFraja:2023fhe} for details. 
\end{proof}

In this setting the T-duality relation $R$ restricts to a Lie algebroid relation as in Remark \ref{rmk:DiracLieAlgebroid}.

\medskip

\section{Spinors and Courant Algebroid Relations} \label{sect:section5}

We shall now show how to extend the notion of Courant algebroid relations to the spinor modules between related Courant algebroids, using the ideas of Section~\ref{ssec:linearspinorrel}. When applied to the instance of geometric T-duality relations, we demonstrate how our approach offers a novel alternative perspective to the Fourier-Mukai transform as well as on the T-duality invariance of Type~II supergravity.

\medskip

\subsection{Clifford Algebra Relations and Dirac Generating Operators}~\\[5pt]
\label{sub:CliffordDGO}
Let $E_1 \to M_1$ and $E_2 \to M_2$ be split signature pseudo-Euclidean vector bundles. For $i=1,2$, let $\sfS_{E_i}$ be the spinor module over the Clifford algebra bundle $\Cl(E_i)$.

Let $R$ be an isotropic subbundle of $E_1 \times \overline{E}_2$ supported on $C \subseteq M_1 \times M_2$ such that $\rho(R^\perp) \subseteq TC$. As in Definition \ref{def:RCliffrel2}, we can define the space $R^\times$ as
\begin{align}\label{eq:Rtimes}
    R^\times \coloneqq (R/ {\rm K}_R)/ \underline{\rm CK}_R \ , 
\end{align}
where $\underline{{\rm CK}}_R \coloneqq (0_{E_1} \times E_2) \cap (R / {\rm K}_R)$
while the kernel and cokernel of $R$ are defined as in Definition~\ref{def:kerandcoker}.

\begin{definition} \label{def:Rcliffrelation}
    A vector subbundle ${\sf S}_R \subset \bigl( \sfS_{E_1}\times\sfS_{E_2}\bigr) \big \rvert_C$ is an \emph{$R$-Clifford relation of parity $i \in \IZ_2$} if it is a spinor module of the typical fibre of $\Cl^i (R)$, where the Clifford action is defined as in \eqref{eqn:productcliffordaction} with the fibres of the bundle $\big(\Cl(E_1)\times \Cl(E_2)\big) \big\rvert_C$ acting on the fibres of $ \big(\sfS_{E_1}\times\sfS_{E_2} \big) \big\rvert_C $.

    If there exists an action of $\Cl(R^\times)$ on $\bigl( \sfS_{E_1}\times\sfS_{E_2}\bigr) \big \rvert_C$ compatible with $\Cl^i(R)$ and a subbundle $\sfS_R \subset \bigl( \sfS_{E_1}\times\sfS_{E_2}\bigr) \big \rvert_C$ such that $\sfS_R$ is a spinor module for $\Cl^i(R^\times)$, then $\sfS_R$ is a \emph{weak $R$-Clifford relation of parity $i \in \IZ_2$}.
\end{definition}

Proposition \ref{prop:Uisomorphism} and Corollary \ref{cor:URinjective} have their smoothly varying counterparts in this case, given respectively by

\begin{proposition}\label{prop:Uisomorphismsmooth}
    Let $\rmp_i$ be the projection from $E_1 \times E_2$ to the $i$-th factor for $i=1,2$, and denote by the same symbols the projections from $\sfS_{E_1}\times\sfS_{E_2}$. Let $\sfS_R$ be an $R$-Clifford relation of parity $j$, with $\rmp_i(R) \neq \set{0}$ and $\rmp_i(\sfS_R) \neq \set{0}$ for $i=1,2$. Then for each $c \in C$
    \begin{align}
        (\sfS_R)_c \cap \big(\set{0} \times \sfS_{E_2} \big)_c = (\sfS_R)_c \cap \big(\sfS_{E_1} \times \set{0} \big)_c = \set{(0,0)_c} \ .
\end{align}
\end{proposition}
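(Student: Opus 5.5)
The plan is to reduce the statement fibrewise to the linear result, Proposition~\ref{prop:Uisomorphism}. Fix $c=(m_1,m_2)\in C$. The fibre $R_c$ is an isotropic subspace of $(E_1)_{m_1}\times\overline{(E_2)}_{m_2}$. By Definition~\ref{def:Rcliffrelation}, the fibre $(\sfS_R)_c$ is a spinor module, i.e.\ an irreducible representation, for $\Cl^j(R_c)$. The action is the product action \eqref{eqn:productcliffordaction} of $\Cl((E_1)_{m_1})\times\Cl((E_2)_{m_2})$ on $(\sfS_{E_1})_{m_1}\times(\sfS_{E_2})_{m_2}$. So at a single point we are exactly in the linear situation of Section~\ref{ssec:linearspinorrel}. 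The only differences are that $\midwedge^\bullet V^*$ and $\midwedge^\bullet W^*$ are replaced by arbitrary spinor modules, and $U$ is replaced by $R_c$. The proof of Proposition~\ref{prop:Uisomorphism} never used the specific form-model of the spinor modules, only irreducibility and the product action.

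The key steps are as follows. Suppose $(0,\mu)\in(\sfS_R)_c$. Consider $\sfS'_c:=\Cl^j(R_c)\cdot(0,\mu)\subseteq(\sfS_R)_c$. This subspace is $\Cl^j(R_c)$-invariant, so by irreducibility it equals either $\{(0,0)\}$ or $(\sfS_R)_c$. Every element of $\sfS'_c$ has vanishing first component, because each generator $(e_1,e_2)$ acts as $(e_1\cdot 0,\pm e_2\cdot\mu)$. So if $\sfS'_c=(\sfS_R)_c$, then $\rmp_1((\sfS_R)_c)=0$. This contradicts $\rmp_1(\sfS_R)\neq\{0\}$, provided that hypothesis holds fibrewise. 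Hence $\sfS'_c=0$, and since $\mu=1\cdot\mu$ this forces $\mu=0$. The other equality is symmetric: swap the roles of the factors and use $\rmp_2(\sfS_R)\neq\{0\}$.

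The main obstacle is upgrading the global hypotheses $\rmp_i(R)\neq\{0\}$ and $\rmp_i(\sfS_R)\neq\{0\}$ to the pointwise statement $\rmp_i((\sfS_R)_c)\neq\{0\}$ at every $c$. For this I will use that $\sfS_R$ is a vector subbundle over $C$, with typical fibre a spinor module of the typical fibre of $\Cl^j(R)$. The rank of $\rmp_1|_{\sfS_R}$ is lower semicontinuous, but irreducibility gives a dichotomy at each point: the kernel of $\rmp_1$ on $(\sfS_R)_c$ is $(\sfS_R)_c\cap(\{0\}\times\sfS_{E_2})_c$, which is $\Cl^j(R_c)$-invariant, so it is either $0$ or everything. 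Hence $\rmp_1|_{(\sfS_R)_c}$ is either injective or identically zero. On each connected component of $C$ the rank of $\sfS_R$ is constant. The set where $\rmp_1$ is injective is open, since rank is lower semicontinuous. Its complement, where $\rmp_1$ vanishes, is closed. I also expect the injective set to be closed: a limit of injective maps of full rank $r$ has rank $\le r$, and by the dichotomy that rank is either $r$ or $0$. A direct way to exclude $0$ is to argue via the typical-fibre description, reading Definition~\ref{def:Rcliffrelation} as saying the module structure is locally trivial, so the dichotomy is locally constant. Connectedness, or working componentwise, then gives that the nonvanishing hypothesis propagates to every fibre, and the argument of the previous paragraph applies at each $c\in C$.
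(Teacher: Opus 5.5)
Your first two paragraphs are exactly the paper's argument: the paper gives no separate proof and presents this as the smoothly varying counterpart of Proposition~\ref{prop:Uisomorphism}, whose irreducibility argument is applied at each $c\in C$. Your version only uses $\rmp_1(\sfS_R)\neq\set{0}$ for the first equality. That is fine: since $1\in\Cl^j(R_c)$, you get $(0,\mu)\in\sfS'_c$, so the hypothesis on $\rmp_2(R)$ is not needed.

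Your third paragraph, however, is false. No argument can upgrade the global non-vanishing hypotheses to fibrewise ones, because under the global reading the statement itself fails. Take $E_1=E_2=\IT\IR$ and $R=\gr(\unit)$ supported on the diagonal $C\simeq\IR$. Take $\sfS_{E_i}=\midwedge^\bullet T^*\IR$, parity $+$, and
\[
\sfS_R=\set{(x\,\nu,\nu) \ \vert \ \nu\in\midwedge^\bullet T^*_x\IR \ , \ x\in\IR} \ .
\]
This is a smooth rank-two subbundle and it is $\Cl^+(R)$-invariant. The map $\nu\mapsto(x\,\nu,\nu)$ is an isomorphism of module bundles from $\midwedge^\bullet T^*\IR$, so every fibre is irreducible and the module structure is even locally trivial. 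We also have $\rmp_i(R)\neq\set{0}$ and $\rmp_i(\sfS_R)\neq\set{0}$. Yet at $x=0$ the whole fibre equals $\set{0}\times\midwedge^\bullet T^*_0\IR$, so the conclusion fails there.

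The locus where $\rmp_1$ is injective is $\set{x\neq0}$, which is not closed in the connected space $C$. A limit of injective maps can be the zero map, and your dichotomy does not exclude this. So neither the closedness claim nor the appeal to local triviality works. The hypotheses must be read fibrewise, i.e. $\rmp_i\big((\sfS_R)_c\big)\neq\set{0}$ for every $c\in C$, which is what the paper's appeal to the linear result amounts to. With that reading your first two paragraphs are a complete proof, and the third paragraph should be deleted.
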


\begin{corollary}\label{cor:URinjectivesmooth}
    Let $\sfS_R$ be an $R$-Clifford relation with $\rmp_i(R) \neq \set{0}$ and $\rmp_i(\sfS_R) \neq \set{0}$ for $i=1,2$. If $(\nu, \mu)_c, (\nu, \mu')_c\in (\sfS_R)_c$ for every $c\in C$, then $\mu = \mu'$.
\end{corollary}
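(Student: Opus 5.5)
The statement to prove is Corollary~\ref{cor:URinjectivesmooth}. The plan is to reduce it pointwise to the linear statement, exactly as Corollary~\ref{cor:URinjective} follows from Proposition~\ref{prop:Uisomorphism}, now using its smooth counterpart Proposition~\ref{prop:Uisomorphismsmooth}.

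Fix $c\in C$ and suppose $(\nu,\mu)_c$ and $(\nu,\mu')_c$ both lie in $(\sfS_R)_c$. Since $\sfS_R$ is a vector subbundle of $(\sfS_{E_1}\times\sfS_{E_2})\rvert_C$, its fibre $(\sfS_R)_c$ is a linear subspace. Hence the difference
\begin{align}
(\nu,\mu)_c-(\nu,\mu')_c=(0,\mu-\mu')_c
\end{align}
also lies in $(\sfS_R)_c$. It therefore lies in $(\sfS_R)_c\cap(\set{0}\times\sfS_{E_2})_c$.

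By Proposition~\ref{prop:Uisomorphismsmooth}, which applies because of the standing hypotheses $\rmp_i(R)\neq\set{0}$ and $\rmp_i(\sfS_R)\neq\set{0}$, this intersection is $\set{(0,0)_c}$. Thus $\mu_c=\mu'_c$. Since $c$ was arbitrary, $\mu=\mu'$ as sections over $C$.

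The only point needing care is that the hypotheses of Proposition~\ref{prop:Uisomorphismsmooth} hold fibrewise at every $c$. That proposition is itself the fibrewise application of the linear argument of Proposition~\ref{prop:Uisomorphism}. The typical fibre of $\sfS_R$ is an irreducible $\Cl^j(R_c)$-module, and the nonvanishing assumptions are read fibrewise, as in the statement. So no further obstacle arises: the argument is purely linear in each fibre, and the smooth structure plays no role beyond $\sfS_R$ being a subbundle.
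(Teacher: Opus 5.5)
Your proof is correct and matches the paper's intended argument. The paper gives no separate proof and presents the corollary as the fibrewise consequence of Proposition~\ref{prop:Uisomorphismsmooth}; your step $(\nu,\mu)_c-(\nu,\mu')_c=(0,\mu-\mu')_c\in(\sfS_R)_c$, followed by that proposition, is exactly this deduction.
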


\begin{example}\label{ex:pullbackspinorrel}
    Let $(M,\cF)$ be a foliated manifold with smooth leaf space $\cQ$. Let 
\begin{equation}
\begin{tikzcd}
Q(\cF): \IT M \ar[r, rightarrowtail] & \IT \cQ
\end{tikzcd}
\end{equation}
be the reduction relation of Example~\ref{eg:reductionCArel} with $K=T\cF\oplus\set{0}$ (cf. Example~\ref{eg:symplecticred}), covering the surjective submersion $\varpi \colon M \to \cQ$. Consider the subbundle
    \begin{align}
        \sfS_{Q(\cF)} = \set{(\varpi^* \red\alpha, \red\alpha) \ \vert \ \red\alpha \in \midwedge^\bullet T^*\cQ}
    \end{align}
    supported on $\gr(\varpi)$.
    Recalling that $\mathrm{CK}_{Q(\cF)}=0_{Q(\cF)}$, for each $ \red X + \red\eta\in \sfgamma(\IT \cQ)$ there is a section $ X + \eta \in \sfgamma(\IT M)$ such that $( X+\eta,\red X + \red\eta) \in \sfGamma(\IT M \times \overline {\IT\cQ}\,; Q(\cF))$, where $\eta = \varpi^*\red\eta$ and $ X$ is any lift of $\red X$, i.e. $\varpi_*  X = \red X$. Then on $\gr(\varpi)$
    \begin{align}
        ( X+\eta, \red X + \red\eta) \cdot(\varpi^* \red\alpha , \red\alpha) &= \big(\iota_{ X} (\varpi^*  \red\alpha) + \eta \wedge \varpi^* \red\alpha,\iota_{\red X}\red\alpha +\red\eta \wedge \red\alpha\big) \\[4pt]
        &=\big(\varpi^*(\iota_{\red X} \red\alpha +\red\eta\wedge \red\alpha),\iota_{\red X} \red\alpha +\red\eta\wedge \red\alpha\big) \ \in \ \sfS_{Q(\cF)} \ .
    \end{align}
    
    Extending to the whole Clifford bundle $\Cl(Q(\cF))$, it follows that $\sfS_{Q(\cF)}$ is invariant under the action of $\Cl(Q(\cF))$, providing it a spin representation. To show that it is an irreducible representation, suppose that $\sfS' \subseteq \sfS_{Q(\cF)}$ is invariant. Then $\sfS'$ has the form
    \begin{align}
        \sfS' = \set{(\varpi^* \red\alpha, \red\alpha) \ \vert \ \red\alpha \in S}
    \end{align}
where $S \subseteq \midwedge^\bullet T^*\cQ$ is a vector subbundle. By invariance of $\sfS'$ and projection to the second factor, it follows that $S$ is invariant under the action of $\Cl(\IT \cQ)$. But $\sfOmega^\bullet(\cQ)$ is a spinor module for $\Cl(\IT \cQ)$, and hence $S = \set{0}$ or $S=\midwedge^\bullet T^*\cQ$. Thus $\sfS_{Q(\cF)}$ is a $Q(\cF)$-Clifford relation of parity $+$.
\end{example}

\begin{example}\label{ex:Qfibreintegration}
    Let $(M,\cF)$ be a foliated manifold with compact oriented fibres and smooth leaf space $\cQ$. Since the leaves of $\cF$ are compact and oriented, there is a pushforward map
    \begin{align}
        \varpi_! \colon \sfOmega^\bullet(M) \longrightarrow \sfOmega^\bullet(\cQ) \ , \qquad
        \xi\longmapsto \int_{\cF}\, \xi
    \end{align}
    given by the fibre integration of forms over the leaves of  $\cF$, inducing a vector bundle morphism $\midwedge^\bullet T^*M \to \midwedge^\bullet T^*\cQ$ denoted by the same symbol. Consider the graph $\gr(\varpi_!)$ as a subspace of $(\midwedge^\bullet T^*M \times \midwedge^\bullet T^* \cQ)|_{\gr(\varpi)}$. This is not a spinor module for $\Cl(Q(\cF))$. Indeed take any $\red\alpha \in \midwedge^\bullet T^* \cQ$ and note that $\varpi_! (\varpi^*\red\alpha) = 0$, thus $(\varpi^*\red\alpha,0) \in \gr(\varpi_!)$. By Proposition \ref{prop:Uisomorphismsmooth}, it follows that $\gr(\varpi_!)$ is not a spinor module for $\Cl(Q(\cF))$.

    Consider instead a fibre volume form $\mu \in \det(T^*\cF)$ and set $$\sfS_\mu \coloneqq \gr\big(\varpi_!\big|_{A_\mu}\big) \ , $$ where $A_\mu $ is the subspace of forms $ \varpi^* \red\alpha \wedge\mu $ for $ \red\alpha\in \midwedge^\bullet T^*\cQ$, supported on $\gr(\varpi)$. This is not an invariant subspace for the action of $\Cl(Q(\cF))$, since if $X \in T\cF = {\rm ker}(Q(\cF))$ then $$(X,0) \cdot \big(\varpi^*\red\alpha \wedge \mu, \varpi_!(\varpi^* \red\alpha \wedge \mu)\big) = (\varpi^* \red\alpha \wedge \iota_X \mu, 0) \ \notin \ \sfS_\mu \ . $$ 
    
    However, we may define an action of $\Cl(Q(\cF)^\times)$ on \smash{$\bigl( \midwedge^\bullet T^*M \times \midwedge^\bullet T^*\cQ\bigr) \big \rvert_{\gr(\varpi)}$}. Note that $Q(\cF) ^\times $ is the set of pairs $ (X+\xi, \red X + \red \xi)$ where $X \in TM / T\cF$ and $\xi \in {\rm Ann}(T\cF)$, while $\red X$ and $\red \xi$ are canonically associated to $X$ and $\xi $ respectively. By considering $D = {\rm Ann}(\mu) \simeq TM / T\cF$, we then obtain $X_D \in TM$ associated to $X$, and hence we define
    \begin{align}\label{eqn:Qweakaction}
        (X + \xi, \red X+ \red \xi) \cdot\big(\varpi^* \red\alpha \wedge \mu, \varpi_!(\varpi^* \red\alpha \wedge \mu)\big) \coloneqq (X_D + \xi, \red X+ \red \xi)\cdot\big( \varpi^* \red\alpha \wedge \mu,  \varpi_!(\varpi^* \red\alpha \wedge \mu)\big)
    \end{align}
    for the usual action of $\Cl^j(Q(\cF))$ on \smash{$\bigl( \midwedge^\bullet T^*M \times \midwedge^\bullet T^*\cQ\bigr) \big \rvert_{\gr(\varpi)}$}, where $j = {\rm dim}_\IR(\cF) \mod 2$. Since $\varpi_* X_D = \red X$ and $\xi = \varpi^*\red\xi$, we then obtain
    \begin{align}
       & (X + \xi, \red X+ \red \xi) \cdot\big(\varpi^* \red\alpha \wedge \mu, \varpi_!(\varpi^* \red\alpha \wedge \mu)\big) \\[4pt]
       & \hspace{2cm} = \big(\varpi^*(\iota_{\red X}\red\alpha+\red\xi\wedge \red\alpha) \wedge \mu, (-1)^j\,(\red X + \red \xi)\cdot \varpi_!( \varpi^* \red\alpha \wedge \mu)\big) \\[4pt]
        & \hspace{4cm} =\big(\varpi^*(\iota_{\red X}\red\alpha+\red\xi\wedge \red\alpha) \wedge \mu, \varpi_!(\varpi^*(\iota_{\red X}\red\alpha + \red\xi \wedge \red\alpha) \wedge \mu)\big) \ \in \  \sfS_\mu \ ,
    \end{align}
    where we use Appendix \ref{appendix} to pass the action through the fibre integration in the second factor. Thus $\sfS_\mu$ is a weak $Q(\cF)$-Clifford relation of parity ${\rm dim}_\IR(\cF) \mod 2$. 
    
    Note that the subbundle $\sfS_{Q(\cF)}$ from Example~\ref{ex:pullbackspinorrel} is also a spinor module for $\Cl(Q(\cF)^\times)$, and so is isomorphic to $\sfS_\mu$, with the isomorphism given explicitly by
    \begin{align}
        (\varpi^*\red\alpha , \red\alpha) \big\rvert_{\gr(\varpi)} \longmapsto \Big( \varpi^* \red\alpha \wedge \mu\,,\, \big(\text{\footnotesize{$\int_\cF$}}\, \mu\big)\ \red\alpha \Big) \bigg\rvert_{\gr(\varpi)}\ .
    \end{align}
\end{example}

The smooth version of the canonical Clifford relation, allowing us to relate spinor lines as in Definition \ref{def:canonicalRclifford}, is given by

\begin{definition}
    If $\sfS_R$ is a weak $R$-Clifford relation of parity $i$ for $R \subseteq E_1 \times \overline{E}_2$, the \emph{canonical $R$-Clifford relation of parity $i$} is the vector bundle
    \begin{align}
    \cS_R = \sfS_R \otimes |\det \sfS_R^* |^{\frac{1}{2^n}}
\end{align}
over $C \subseteq M_1 \times M_2$, where $n = \frac12\,\rk_\IR(E_1)$. As a spinor module, the typical fibre of $\Cl^i(R)$ acts trivially on the second component of $\cS_R$.

If $\Omega_1 \subset \sfS_{E_1}$ and $\Omega_2 \subset \sfS_{E_2}$ are pure spinor line bundles, then $\Omega_1$ and $\Omega_2$ are \emph{$R$-related}, denoted as $\Omega_1 \sim_R \Omega_2$, if
\begin{align}
    (\Omega_1 \times \Omega_2)\big\rvert_C\otimes |\det \sfS_R^* |^{\frac{1}{2^n}} \ \subset \ \cS_R
\end{align}
is a vector subbundle.
\end{definition}

Again the smoothly varying version of Proposition \ref{prop:LreliffUrel} follows as

\begin{proposition}\label{prop:LreliffUrelsmooth}
    Let $R \colon E_1 \rel E_2$ be a Courant algebroid relation supported on $C$, with spinor modules $\sfS_i$ for $\Cl(E_i)$ for $i=1,2$ and an $R$-Clifford relation $\sfS_R \subset (\sfS_1 \times \sfS_2) \rvert_C$. Let $L_1 \subset E_1$ and $L_2 \subset E_2$ be almost Dirac structures, and let $\Omega_1 \subset \sfS_1$ and $\Omega_2 \subset \sfS_2$ be the corresponding pure spinor line bundles. If $\Omega_1 \sim_R \Omega_2$, then $L_1 \sim_R L_2$. 
\end{proposition}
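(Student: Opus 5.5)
The plan is to run the argument of Proposition~\ref{prop:LreliffUrel} fibrewise over $C$, and then add a separate argument for the regularity requirements in Definition~\ref{def:Diracrel}.

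First I would unpack the hypothesis $\Omega_1 \sim_R \Omega_2$. The tensor factor $|\det \sfS_R^*|^{\frac{1}{2^n}}$ is a line bundle on which $\Cl^i(R)$ acts trivially. Hence the condition says that for each $c\in C$ there are nonzero $\psi_1\in(\Omega_1)_{c_1}$ and $\psi_2\in(\Omega_2)_{c_2}$ with $(\psi_1,\psi_2)\in(\sfS_R)_c$, where $c=(c_1,c_2)$. Moreover, these can be chosen as local smooth sections, because the tensored line subbundle is a vector subbundle of $\cS_R$. I would dispose of the degenerate case $\rmp_i(R_c)=\{0\}$ or $\rmp_i(\sfS_R)_c=\{0\}$ separately; it is harmless, since then the relevant restriction conditions are trivially satisfied. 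Otherwise Proposition~\ref{prop:Uisomorphismsmooth} applies at each point.

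Next comes the pointwise inclusion. Take $(\texttt{l},\texttt{w})\in R|_{L_1}$ at $c$. Then
\[
(\texttt{l},\texttt{w})\cdot(\psi_1,\psi_2)=(0,\pm\,\texttt{w}\cdot\psi_2)\in(\sfS_R)_c .
\]
Proposition~\ref{prop:Uisomorphismsmooth} forces $\texttt{w}\cdot\psi_2=0$. Since $\psi_2$ is pure with null space $(L_2)_{c_2}$, this gives $\texttt{w}\in L_2$. Hence $R|_{L_1}=R\cap(L_1\times L_2)|_C$ at every point. The same argument with the roles exchanged gives $R|_{L_2}=R\cap(L_1\times L_2)|_C$. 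Adding ${\rm CK}_R$ and ${\rm K}_R$ respectively recovers the required equalities, since the reverse inclusions always hold.

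The main obstacle is the regularity clause: we need $R\cap(L_1\times L_2)|_C$ to have constant rank along $C$. My approach is to identify $R|_{L_1}=R\cap(\rmp_1^*L_1\times E_2)$ with the annihilator of a smooth nowhere-vanishing spinor section. Concretely, $R\cap(L_1\times L_2)$ is exactly the null space in $R$ of the smooth section $(\psi_1,\psi_2)$ of $\sfS_R$. The map $R\to\sfS_R$, $r\mapsto r\cdot(\psi_1,\psi_2)$, is a smooth bundle map. Its kernel is isotropic, so its rank is bounded by $\tfrac12\,\rk R$.

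To get constant rank, I would use irreducibility of $\sfS_R$ as a $\Cl(R)$-module. This makes the null space of $(\psi_1,\psi_2)$ a Lagrangian of $R/({\rm K}_R+\underline{\rm CK}_R)=R^\times$, together with the parts of the kernel and cokernel that act trivially. The result is constant rank $\tfrac12\rk R^\times$ plus the contributions from ${\rm K}_R$ and ${\rm CK}_R$. Since ${\rm K}_R$ and ${\rm CK}_R$ act trivially on $\sfS_R$, the same argument as in Proposition~\ref{prop:strong=weak} shows they lie in the kernel.

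If the ranks of ${\rm K}_R$ and ${\rm CK}_R$ are not constant, I would instead show directly that $R|_{L_i}$ is a smooth family. It is the kernel of a smooth bundle map whose rank is pinned by the purity argument. From this I would conclude that these are regular subbundles supported on $C$.
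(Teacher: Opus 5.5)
Your pointwise argument is exactly the paper's proof. The paper obtains this proposition simply by applying Proposition~\ref{prop:LreliffUrel} fibrewise: act with an element of $R|_{L_1}$ on $(\psi_1,\psi_2)$, invoke Proposition~\ref{prop:Uisomorphismsmooth}, then do the same for $R|_{L_2}$. It never addresses the regularity clause of Definition~\ref{def:Diracrel}, so your extra regularity paragraph goes beyond the paper. As written, however, that paragraph does not work.

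\textbf{Irreducibility does not give purity.} Irreducibility of $\sfS_R$ does not make the null space of $(\psi_1,\psi_2)$ Lagrangian in $R^\times$, because irreducible spinor modules contain non-pure spinors. Already for $\IR\oplus\IR^*$, the spinor $1+\theta\in\midwedge^\bullet\IR^*$ has zero null space. All irreducibility gives you is that the radical ${\rm K}_R+{\rm CK}_R$ of the form on $R$ acts trivially, so it lies in the null space.

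\textbf{The isotropy bound.} For the same reason, your bound ``isotropic, so rank at most $\tfrac12\rk R$'' is false whenever that radical is nonzero.

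\textbf{Where the correct count comes from.} It comes from your pointwise step itself. That step gives $\ker(R)\subseteq L_1$ and $R\circ L_1\subseteq L_2$. A composition of linear Lagrangian relations is Lagrangian, so $R\circ L_1=L_2$, and therefore
\[
\rk\big(R\cap(L_1\times L_2)\big)_c=\tfrac12\,\rk E_2+\rk({\rm K}_R)_c \ .
\]

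\textbf{The non-constant-rank fallback fails.} By this formula, $R|_{L_1}=R\cap(L_1\times L_2)$ jumps in rank exactly when ${\rm K}_R$ does. No kernel-of-a-bundle-map argument can then make it a regular subbundle.

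\textbf{The missing observation.} This case is excluded by the hypotheses. Since ${\rm K}_R$ and ${\rm CK}_R$ act trivially, $(\sfS_R)_c$ is an irreducible module for $\Cl(R^\times_c)$. Here $R^\times_c\cong\ker(R)_c^\perp/\ker(R)_c$ is split of dimension $\rk E_1-2\rk({\rm K}_R)_c$. Hence $\rk\sfS_R=2^{\frac12\rk E_1-\rk({\rm K}_R)_c}$, and constancy of $\rk\sfS_R$ forces $\rk{\rm K}_R$, and therefore $\rk{\rm CK}_R$, to be constant.

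\textbf{Completing the argument.} With constant rank established, take a local smooth section $(\psi_1,\psi_2)$. Then $R|_{L_1}$ is the kernel of the smooth constant-rank bundle map $R\to\sfS_R$, $r\mapsto r\cdot(\psi_1,\psi_2)$, hence a regular subbundle. The same argument applies to $R|_{L_2}$.
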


\begin{remark}\label{rem:smoothULequiv}
    The smoothly varying cases of Remarks \ref{rem:fullrankcase} and \ref{rem:weakcase} also follow. The condition that $\rmp_1(R) \supseteq L_1$ or $\rmp_2(R) \supseteq L_2$ is satisfied for instance when $R$ is a generalised isometry, whose kernel and cokernel are necessarily trivial. It is also satisfied when $R$ is the reduction relation $Q(K)$, since $\rmp_2(Q(K)) = \red E$.
\end{remark}

\begin{remark}
    Let us discuss Example \ref{ex:Qfibreintegration} in relation to the work of Drummond~\cite{Drummond:2011eq}, where the reduction of pure spinors is discussed. Drummond considers a Courant algebroid reduction by an isotropic subbundle $K\subset\IT M$. A pure spinor $\alpha$ gives a representative of an almost Dirac structure $L$. The pushforward of $\alpha$ is non-zero if and only if $L\cap K=\set{0}$. However, we know from Proposition~\ref{prop:reducedKsub-bundle} that  to any  almost Dirac structure $L$, such that $L\cap K^\perp$ has constant rank, there is an associated reduced almost Dirac structure $\red L$. 
    
    In the case $L \cap K \neq \set{0}$, Drummond resolves this issue by considering a perturbation $L_D$ of $L$ by a subbundle $D\subset \IT M$. The pure spinor associated to $L_D$ is then given by $\mu_D \cdot \alpha$, for some nowhere vanishing $\mu_D \in \det(D)$. The pure spinor $\varpi_!(\mu_D \cdot \alpha)$ then generates the almost Dirac structure $\red L$. 
    
    For instance, in the case that $\alpha = \varpi^* \red \alpha$ for some $\red \alpha \in \sfOmega^\text{top}(\cQ)$, the corresponding  Dirac structure is $$L = T\cF \oplus {\rm Ann}(T\cF) \ . $$ Then $D = T^*\cF$ and so $\alpha_D = \mu \wedge \alpha$ for some $\mu \in \det(T^*\cF)$. Thus the pushforward of $\alpha_D$ by $\varpi_!$ is non-zero. 

    Applying this construction to all $\red\alpha \in \sfOmega^\bullet(\cQ)$, the almost Dirac structure $L$ changes but we may still use the same $D=T^*\cF$. One then obtains the subspace $\sfS_D \subset \sfOmega^\bullet(M) \times \sfOmega^\bullet(\cQ)$ of pairs
    \begin{align}
        \sfS_D = \set{\Big(\alpha_D\,,\, \big(\text{\footnotesize{$\int_\cF$}}\, \mu\big)\ \red\alpha\Big) \ \Big\vert \ \red\alpha \in \sfOmega^\bullet(\cQ)}
    \end{align}
    and hence $\sfS_D = \sfS_\mu$, where $\sfS_\mu$ is as in Example \ref{ex:Qfibreintegration} and the Clifford action given by Equation \eqref{eqn:Qweakaction}. For an almost Dirac structure $L\subset \IT M$, $\sfS_D$ is an $L$-weak $Q(\cF)$-Clifford relation (see Remark \ref{rem:weakcase}) if and only if $\Omega\subset \rmp_1(\sfS_D)$ for the pure spinor line bundle $\Omega$ associated to $L$.

    We highlight that, while Drummond perturbs the almost Dirac structure $L$ to $L_D$ to obtain a reduced pure spinor with line bundle $\red\Omega$ such that \smash{$\Omega \sim_{Q(\cF)} \red\Omega$}, this can also be achieved  by considering instead an isomorphic spinor module for $\Cl(Q(\cF)^\times)$. For instance, the case $L\cap K=\set{0}$ requires the spinor module $\sfS_D$, while the case $K \subset L$ can be considered using the spinor module $\sfS_{Q(\cF)}$ as in Example~\ref{ex:pullbackspinorrel}.
    \end{remark}

\begin{definition}\label{def:invariantforms}
    Let $\cF$ be a foliation of $M$. 
    \begin{itemize}
        \item[(a)] 
    A form $\omega\in \sfOmega^\bullet(M)$ is  \emph{invariant along $\cF$} or \emph{$T\cF$-invariant} if there is a Lie subalgebra $\frf \subset \sfgamma(T\cF)$ spanning $T\cF$ pointwise such that $\pounds_X \omega = 0$ for each $X \in \frf$. \\[-3mm]
    \item [(b)] A pure spinor line bundle $\Omega \subset \midwedge^\bullet T^*M$ is \emph{invariant along $\cF$} or \emph{$T\cF$-invariant} if there is a Lie subalgebra $\frf \subset \sfgamma(T\cF)$ spanning $T\cF$ pointwise such that $\pounds_X \sfgamma(\Omega) \subset \sfgamma(\Omega)$ for each $X \in \frf$. 
    \end{itemize}
\end{definition}

If $\omega$ is invariant along $\cF$ then the line bundle  $\Omega^\omega$ generated by $\omega$ is invariant along $\cF$.

\begin{example}
    Let $\red \alpha \in \sfOmega^\bullet(\cQ)$ be a nowhere vanishing form on $\cQ = M/\cF$, and consider the pure spinor line bundle  $\Omega \subset \midwedge^\bullet T^*M$ generated by $\alpha = \varpi^*\red \alpha$. Then $\Omega$ is invariant along $\cF$, since for $X\in T\cF$ and $f\in C^\infty(M)$,
    \begin{equation}
    \begin{aligned}
        \pounds_X (f\,\varpi^* \red \alpha) &= \iota_X\, \de (f\,\varpi^* \red \alpha) + \de (f\, \iota_X \varpi^*\red\alpha)\\[4pt]
        &= \iota_X(\de f \wedge \varpi^* \red \alpha) + f\,\iota_X \varpi^*(\de \red \alpha) +0 \\[4pt]
        &= (\pounds_Xf)\, \varpi^* \red \alpha \ \in \ \Omega \ .
    \end{aligned}
    \end{equation}
    
    For the $Q(\cF)$-Clifford relation $\sfS_{Q(\cF)}$ of Example \ref{ex:pullbackspinorrel}, denote by $\red \Omega$ the pure spinor line bundle generated by $\red \alpha$. For each $u \in (\Omega\times \red \Omega) \rvert_{\gr(\varpi)}$ and $c=(m,\varpi(m))$ with $m\in M$,
    we find
    \begin{align}
        u_c = (f\, \alpha, \red g\, \red \alpha)_c = \big(f(m)\, \alpha_m, \red g(\varpi(m))\,\red \alpha_{\varpi(m)}\big) = \big(f(m)\, \alpha_m, \tfrac{\red g(\varpi(m))}{f(m)}\,f(m)\,\red \alpha_{\varpi(m)}\big) \ \in \ \sfS_{Q(\cF)c}^{\kappa(c)}
    \end{align}
    where $f\in C^\infty(M)$ and $\red g \in C^\infty(\cQ)$ are nowhere vanishing functions, while $\kappa\in C^\infty(\gr(\varpi))$ is given by $$\kappa(c) = \frac{\varpi^*\red g(m)}{f(m)} \ . $$ Thus
    \begin{equation}
    \begin{aligned}
        (\Omega \times \red \Omega)\big\rvert_{\gr(\varpi)} \otimes |\det\sfS_{Q(\cF)}^*|^{\frac1{2^n}} &= (\Omega \times \red \Omega)\big\rvert_{\gr(\varpi)} \otimes |\det(\sfS^\kappa_{Q(\cF)})^*|^{\frac1{2^n}} \\
        & \hspace{1cm} \subset \ \sfS_{Q(\cF)}^\kappa \otimes |\det(\sfS^\kappa_{Q(\cF)})^*|^{\frac1{2^n}} = \cS_{Q(\cF)} \ ,
    \end{aligned}
    \end{equation}
where $n=\dim_\IR M$. Hence $\Omega \sim_{Q(\cF)} \red \Omega$.
\end{example}

\medskip

\subsubsection{Relating Dirac Generating Operators}~\\[5pt]
Let $R \subset E_1 \times \overline{E}_2$ be an isotropic subbundle supported on $C\subset M_1\times M_2$ such that $\rho(R^\perp)\subseteq TC.$ Let $\sfD_i$ be a Dirac generating operator (DGO) on the spinor module $\sfS_{E_i}$  over $\Cl(E_i)$, for $i=1,2$. Suppose that $\sfS_R \subset (\sfS_{E_1} \times \sfS_{E_2}) \rvert_C$ is an $R$-Clifford relation of parity $j$.

\begin{definition}\label{def:DGOrelation}
    $\sfS_R$ is a \emph{$(\sfD_1,\sfD_2)$-DGO relation} if 
    $\sfS_R$ is invariant under $\sfD_1 \times  (-1)^j \, \sfD_2$. 
\end{definition}

This notion yields a Courant algebroid relation for structures defined  as in Theorem \ref{thm:CourantDirac}  through

\begin{proposition}\label{prop:DGOimpliesCArel}
  Let $R \subset E_1 \times \overline{E}_2$ be a Lagrangian subbundle supported on $C\subset M_1 \times M_2$ with trivial kernel and cokernel. Suppose that $\Cl(R)$ has a faithful irreducible spinor representation $\sfS$, and that $\sfS$ is a $(\sfD_1, \sfD_2)$-DGO relation. Then $R$ is a Courant algebroid relation.
\end{proposition}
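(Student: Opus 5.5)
To show that $R$ is a Courant algebroid relation, it suffices to check three things: $R$ is Lagrangian (given), $R$ is involutive for the product Dorfman bracket on $E_1\times\overline{E}_2$, and $\rho(R^\perp)\subseteq TC$. The last also gives $\rho(R)\subseteq TC$, since $R=R^\perp$. Here the Courant algebroid structures on $E_1$ and $E_2$ are the ones induced by $\sfD_1$ and $\sfD_2$ through Theorem~\ref{thm:CourantDirac}. I will use the operator $\sfD:=\sfD_1\times(-1)^j\sfD_2$ on $(\sfS_{E_1}\times\sfS_{E_2})|_C$, which by hypothesis preserves $\sfS$.

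\textbf{Step 1: $\sfD$ is a Dirac generating operator for the product.} Because the product Clifford action carries the sign $(-1)^j$ on the second factor, $\sfD$ is a Dirac generating operator for $E_1\times\overline{E}_2$ acting on $\sfS_{E_1}\times\sfS_{E_2}$. For sections $(e_1,e_2)$ and $(e_1',e_2')$, the graded commutators factor componentwise:
\begin{align}
[[\sfD,(e_1,e_2)],(e_1',e_2')] = \big([[\sfD_1,e_1],e_1'],\,[[\sfD_2,e_2],e_2']\big) = \big(\cbrak{e_1,e_1'}_1,\cbrak{e_2,e_2'}_2\big) \ ,
\end{align}
where the two factors of $(-1)^j$ cancel. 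This is the product Dorfman bracket. In the same way, $[[\sfD,f],(e_1,e_2)]$ recovers the anchor $\rho(e_1,e_2)$ applied to $f$. To make sense of this on $C$, I extend sections of $R$ to sections of $\sfGamma(E_1\times\overline{E}_2;R)$.

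\textbf{Step 2: involutivity.} Take $r,r'\in\sfGamma(E_1\times\overline E_2;R)$. The Clifford actions of $r$ and $r'$ preserve $\sfS$ along $C$, and $\sfD$ preserves $\sfS$, so the operator $[[\sfD,r],r']$ restricted to $C$ preserves $\sfS$. By Step 1 this operator is Clifford multiplication by $\cbrak{r,r'}$. Now $\Cl(R)$ acts faithfully and irreducibly on $\sfS$, so its image is all of ${\sf End}(\sfS)$. The trivial kernel and cokernel make the restriction map ${\sf End}$-injective on $R$, and I want to conclude from this that an element $e$ of the fibre of $E_1\times E_2$ acts on $\sfS$ preserving it only if $e\in R$. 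The argument: decompose $e=r_0+n$ using a complement of $R$. Clifford multiplication by $n$ then preserves $\sfS$, so it agrees with the action of some element of $\Cl(R)$. Comparing Clifford degrees in the big Clifford algebra, which restricts injectively because of faithfulness and the trivial kernel and cokernel, forces $n\in R$. Hence $\cbrak{r,r'}|_C\in R$.

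\textbf{Step 3: the anchor condition.} For $f\in C^\infty(M_1\times M_2)$ vanishing on $C$, the operator $[\sfD,f]$ restricted to $C$ equals $\sfD f-f\sfD$. On $C$ this is multiplication by the section $\rho^*\de f$ (up to a constant), because $f\sfD\psi$ vanishes on $C$. Since $\sfD$ preserves sections of $\sfS$, the section $f\psi$ of $\sfS$ along $C$ is zero and $\sfD(f\psi)$ stays in $\sfS$. The same degree argument as in Step 2 then gives $\rho^*\de f|_C\in R$. Therefore, for every $r\in R=R^\perp$,
\begin{align}
\langle\rho^*\de f, r\rangle = \pounds_{\rho(r)}f = 0 \ ,
\end{align}
which means $\rho(R^\perp)\subseteq TC$.

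The main obstacle is the rigidity claim used in Steps 2 and 3: an element of $E_1\times E_2$ whose Clifford action preserves $\sfS$ must lie in $R$. Faithfulness together with the trivial kernel and cokernel is exactly the hypothesis needed for this, through a comparison of Clifford filtration degrees. I would write this out carefully, fibre by fibre, as a linear lemma before carrying out Steps 2 and 3.
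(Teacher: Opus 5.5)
Your Step~2 is the paper's argument: by the derived-bracket formula the Clifford action of $\cbrak{r,r'}$ preserves $\sfS$, and a rigidity statement then gives $\cbrak{r,r'}|_C\in R$. The gap is in how you justify that rigidity. Faithfulness of $\Cl(R)$ together with trivial kernel and cokernel is \emph{not} enough. Such an $R$ is fibrewise the graph of an isometry $(E_1)_{c_1}\to(E_2)_{c_2}$. Hence $\sfS=(\sfS_{E_1}\times\{0\})|_C$ is a faithful irreducible $\Cl(R)$-module, it is invariant under $\sfD_1\times(-1)^j\sfD_2$, and it is preserved by every $(0,x)$. For $R=\gr(\mathrm{e}^{B})\subset(\IT M,0)\times\overline{(\IT M,0)}$ over the diagonal with $\de B\neq0$, the conclusion even fails.

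What makes the lemma true is the nondegeneracy $\rmp_i(\sfS)\neq\{0\}$, which enters through Proposition~\ref{prop:Uisomorphismsmooth}. The paper's proof uses that proposition, tacitly assuming its hypotheses, and then no Clifford-degree comparison is needed:
\begin{itemize}
\item Trivial kernel gives $\rmp_1(R)=(\ker R)^\perp=E_1$, so you can choose $(\cbrak{e_1,e_1'}_1,{}^R\cbrak{e_1,e_1'}_1)\in R_c$.
\item Subtracting its action from that of $\cbrak{r,r'}$ shows that $(0,x)$, with $x={}^R\cbrak{e_1,e_1'}_1-\cbrak{e_2,e_2'}_2$, maps $\sfS_c$ into $\sfS_c\cap(\{0\}\times\sfS_{E_2})_c=\{0\}$.
\item For $(y,x)\in R_c$ (trivial cokernel), the same proposition shows that $(y,x)$ acts by zero on $\sfS_c$, so $x=0$ by faithfulness.
\end{itemize}
Your complement-and-degree route can also be completed. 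Nondegeneracy forces $\rmp_i(\sfS_c)=(\sfS_{E_i})_{c_i}$, so block-diagonal elements of $\Cl(E_1)\oplus\Cl(E_2)$ are determined by their action on $\sfS_c$. But that is exactly the point your sketch credits to the wrong hypotheses.

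Step~3 should be dropped: $\rho(R^\perp)\subseteq TC$ is a standing assumption in Section~\ref{sub:CliffordDGO}, and the paper only proves involutivity. As written, Step~3 is also incorrect.
\begin{itemize}
\item $[\sfD,f]$ is the parity-$j$ action of $(v^1_f,v^2_f)$, where $2\ip{v^i_f,e}_i=\rho_i(e)(f)$ with the derivative taken along $M_i$.
\item The opposite sign of the pairing on $\overline{E}_2$ gives $\rho^*\de f=2(v^1_f,-v^2_f)$ in $E_1\times\overline{E}_2$.
\item So $(v^1_f,v^2_f)\in R=R^\perp$ would only give $\rho_1(e_1)(f)=\rho_2(e_2)(f)$, not $\rho_1(e_1)(f)+\rho_2(e_2)(f)=0$.
\end{itemize}
Worse, take the diagonal relation of $(\IT M,0)$ over $\Delta\subset M\times M$ with $\sfS=\{(\omega,\omega)\}$. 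Any $f$ with $f|_\Delta=0$ gives $(v^1_f,v^2_f)=(w,-w)$, where $w$ is the first-factor component of $\de f$ on $\Delta$. This is nonzero in general, so the element is not in $R$ and $\sfD(f\psi)$ leaves $\sfS$. Hence the invariance in Definition~\ref{def:DGOrelation} cannot be meant for arbitrary sections over $M_1\times M_2$ with values in $\sfS$ along $C$, and you cannot feed it $f\psi$.

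For the same reason, your Step~1 assertion that $\sfD$ is a Dirac generating operator for $E_1\times\overline{E}_2$ is false. The space $\sfS_{E_1}\times\sfS_{E_2}$ is not a Clifford module for $E_1\times\overline{E}_2$ (cf. Section~\ref{ssec:linearspinorrel}), and $[[\sfD,f],(e_1,e_2)]$ is the pair $(\rho_1(e_1)(f),\rho_2(e_2)(f))$ rather than one function. All you need is the componentwise identity for $[[\sfD,r],r']$. In it one factor $(-1)^j$ survives, and this produces exactly the parity-$j$ action of the product bracket.
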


\begin{proof}
 According to Theorem \ref{thm:CourantDirac}, 
\begin{align}
    \cbrak{r,r'} \cdot \sfs \ \in \ \sfS \ ,
\end{align}
for every $s \in \sfS$ and $r = (e_1, e_2), r' = (e_1',e_2') \in \sfgamma(R)$. Let $c \in C$.
Since $\ker(R)=\set{0}$, there is some $^R\cbrak{e_1,e_1'}_1$ such that $(\cbrak{e_1,e_1'}_1,^R\cbrak{e_1,e_1'}_1) \in R_c$. Then $(\cbrak{e_1,e_1'}_1,^R\cbrak{e_1,e_1'}_1) \cdot \sfs \in \sfS_c$, and it follows that
\begin{align}
   \big(0\cdot \sfs_1, (^R\cbrak{e_1, e_1'}_1 - \cbrak{e_2,e_2'}_2 )\cdot \sfs_2\big) \ \in \ \sfS_c \ ,
\end{align}
 for every $\sfs = (\sfs_1, \sfs_2) \in \sfS_c$.
By Proposition \ref{prop:Uisomorphismsmooth}, the second factor must vanish.
Since $\sfS$ is a faithful representation, $^R\cbrak{e_1, e_1'}_1 = \cbrak{e_2,e_2'}_2$. Since this happens for all $c\in C$, it follows that $R$ is involutive and thus defines a Courant algebroid relation.
\end{proof}

\medskip

\subsection{Integrability of Almost Dirac Structures}~\\[5pt]
Let $L$ be an almost Dirac structure of a twisted standard Courant algebroid $(\IT M, H)$, and let $\Omega$ be the corresponding pure spinor line bundle. By \cite[Proposition 3.44]{gualtieri:tesi}, $L$ is a Dirac structure (i.e.~it is integrable) if and only if the Dirac generating operator $\de_H$ satisfies $\de_H(\Omega) \subset L^* \cdot \Omega\simeq L^*\otimes\Omega$, where $L^*\simeq\IT M/L$. That is, for any nowhere vanishing section $\omega \in \sfGamma(\Omega)$, there is a section $X+\xi$ of $\IT M$ such that 
\begin{align}
    \de_H \omega = (X+\xi)\cdot\omega \ .
\end{align}

Suppose $R \colon (\IT M_1, H_1) \rel (\IT M_2, H_2)$ is a Courant algebroid relation with $R$-Clifford relation $\sfS_R$ of parity $j$, and assume that $\sfS_R$ is a $(\de_{H_1}, \de_{H_2})$-DGO relation. Suppose that $L_i$ is an almost Dirac structure on $\IT M_i$ with pure spinor line bundle $\Omega_i$, for $i=1,2$, and that $\Omega_1 \sim_R \Omega_2$. Then $L_1 \sim_R L_2$ by Proposition \ref{prop:LreliffUrelsmooth}. 

If $L_1$ is integrable, then for every $\omega_1 \in \Omega_1$ there is a section $X_1+\xi_1\in \sfgamma(\IT M_1)$ such that 
\begin{align}
    \de_{H_1} \omega_1 = (X_1+\xi_1) \cdot \omega_1 \ .
\end{align}
If $\rmp_1(R) \supseteq L_1$, then there is a section $X_2 + \xi_2 \in \sfGamma(\IT M_2)$  such that $X_1+ \xi_1 \sim_R X_2 + \xi_2$. Let $\omega_2 \in \Omega_2$ with $(\omega_1, \omega_2) \in \sfS_R$. Since $\de_{H_1}\omega_1 \sim_R(-1)^j\,\de_{H_2} \omega_2$ and $(X_1+\xi_1)\cdot \omega_1 \sim_R (-1)^j\,(X_2+\xi_2)\cdot \omega_2$, by Corollary~\ref{cor:URinjectivesmooth} it follows that
\begin{align}
    \de_{H_2} \omega_2 = (X_2 + \xi_2) \cdot \omega_2
\end{align}
and hence $L_2$ is integrable. Thus we have shown

\begin{proposition}\label{prop:integrabilityrelation}
    Let $R \colon (\IT M_1, H_1) \rel (\IT M_2, H_2)$ be a Courant algebroid relation with $R$-Clifford relation $\sfS_R$ that is a $(\de_{H_1}, \de_{H_2})$-DGO relation. Take an almost Dirac structure $L_i$ on $\IT M_i$ with associated pure spinor line bundle $\Omega_i$ for $i=1,2$. Suppose that $\Omega_1 \sim_R \Omega_2$ and that $\rmp_i(R) \supseteq L_i$. Then $L_1$ is integrable if and only if $L_2$ is integrable.
\end{proposition}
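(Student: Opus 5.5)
The argument preceding the statement already gives the direction ``$L_1$ integrable $\Rightarrow$ $L_2$ integrable''. The plan is to write this direction out cleanly and then obtain the converse by symmetry, using the transpose relation $R^\top$.

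For the forward direction, take a nowhere vanishing local section $\omega_1\in\sfGamma(\Omega_1)$. Since $\Omega_1\sim_R\Omega_2$, and by Proposition~\ref{prop:Uisomorphismsmooth} together with Corollary~\ref{cor:URinjectivesmooth}, there is a unique $\omega_2\in\sfGamma(\Omega_2)$ over $C$ with $(\omega_1,\omega_2)\in\sfGamma(\sfS_R)$. Here one may need to rescale by a nowhere vanishing function, which is harmless since pure spinor lines are invariant under rescaling. By \cite[Proposition~3.44]{gualtieri:tesi}, integrability of $L_1$ gives $\de_{H_1}\omega_1=(X_1+\xi_1)\cdot\omega_1$ for some section $X_1+\xi_1$ of $\IT M_1$.

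The key step is to use $\rmp_1(R)\supseteq L_1$ to find an $R$-related partner $X_2+\xi_2$. I expect the main obstacle here: $X_1+\xi_1$ is a section of $\IT M_1$ representing a class in $\IT M_1/L_1\simeq L_1^*$, not necessarily an element of $L_1$. So I would first use the Lagrangian property of $R$ together with the hypothesis to choose the representative modulo $L_1$ so that it lies in $\rmp_1(R)$. Changing the representative by an element of $L_1$ does not change $(X_1+\xi_1)\cdot\omega_1$. If this is not possible in general, I would read the hypothesis as guaranteeing that $\rmp_1(R)+L_1=\IT M_1$ along $C$.

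Given $(X_1+\xi_1,X_2+\xi_2)\in\sfGamma(\IT M_1\times\overline{\IT M}_2;R)$, both $(\de_{H_1}\omega_1,(-1)^j\de_{H_2}\omega_2)$ and $(X_1+\xi_1,X_2+\xi_2)\cdot(\omega_1,\omega_2)$ lie in $\sfS_R$. The first membership holds because $\sfS_R$ is a DGO relation; the second holds by Clifford invariance. These two elements have the same first component, so Corollary~\ref{cor:URinjectivesmooth} gives $\de_{H_2}\omega_2=(X_2+\xi_2)\cdot\omega_2\in L_2^*\cdot\Omega_2$. Hence $L_2$ is integrable by \cite[Proposition~3.44]{gualtieri:tesi}. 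Local sections $\omega_2$ obtained this way span $\Omega_2$ along $C$, and this suffices wherever $C$ projects onto $M_2$, as in the intended setting.

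For the converse, pass to the transpose relation $R^\top\colon(\IT M_2,H_2)\rel(\IT M_1,H_1)$. The flipped subbundle $\sfS_R^\top$ is an $R^\top$-Clifford relation of the same parity. It is invariant under $\de_{H_2}\times(-1)^j\de_{H_1}$, since multiplying the invariance condition for $\sfS_R$ by $(-1)^j$ gives exactly this. The relation $\Omega_2\sim_{R^\top}\Omega_1$ is inherited, and the hypothesis $\rmp_2(R)\supseteq L_2$ plays the role of $\rmp_1(R)\supseteq L_1$. Running the forward argument for $R^\top$ then shows that integrability of $L_2$ implies integrability of $L_1$.
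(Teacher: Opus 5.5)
Your route is the paper's: the forward direction is exactly the computation written just before the statement. The paper leaves the converse to symmetry, and your passage to $R^\top$ (flipping $\sfS_R$ and multiplying the invariance condition by $(-1)^j$) makes that explicit and correct.

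The step you flag is a genuine gap, and the paper makes the same move without justification (``if $\rmp_1(R)\supseteq L_1$, then there is a section $X_2+\xi_2$ \dots''). Neither of your remedies closes it.
\begin{itemize}
\item Because $L_1\subseteq\rmp_1(R)$, adding elements of $L_1$ to $X_1+\xi_1$ never changes whether it lies in $\rmp_1(R)$. So the choice of representative buys nothing.
\item Since $R_c$ is Lagrangian, $\rmp_1(R_c)^\perp=\ker(R_c)$. Your fallback $\rmp_1(R)+L_1=\IT M_1$ is therefore equivalent to $\ker(R)=0$, a strictly stronger hypothesis.
\end{itemize}
That stronger hypothesis excludes, for instance, the reduction relation $Q(\cF)$ with the Clifford relation $\sfS_{Q(\cF)}$ of Example~\ref{ex:pullbackspinorrel} and $\Omega_1$ generated by a pulled-back pure spinor $\varpi^*\red\alpha$. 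That case satisfies all the stated hypotheses, yet $\ker(R)=T\cF$.

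The missing idea is that the kernel acts trivially on $\rmp_1(\sfS_R)$.
\begin{itemize}
\item If $(k,0)\in R_c$, then $(k,0)\cdot(\nu,\mu)=(k\cdot\nu,0)\in(\sfS_R)_c$. Proposition~\ref{prop:Uisomorphismsmooth} then forces $k\cdot\nu=0$ for every $\nu\in\rmp_1(\sfS_R)_c$, exactly as in the proof of Proposition~\ref{prop:strong=weak}.
\item Both $\omega_1$ and $\de_{H_1}\omega_1=(X_1+\xi_1)\cdot\omega_1$ lie in $\rmp_1(\sfS_R)_c$, the latter by the DGO property. Hence
\[
0=k\cdot(X_1+\xi_1)\cdot\omega_1=2\,\ip{k,X_1+\xi_1}\,\omega_1-(X_1+\xi_1)\cdot k\cdot\omega_1=2\,\ip{k,X_1+\xi_1}\,\omega_1 \ ,
\]
so $X_1+\xi_1\in\ker(R_c)^\perp=\rmp_1(R_c)$.
\item This produces the partner $X_2+\xi_2$ at every $c\in C$ with no extra assumption. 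Pointwise existence suffices, since $\de_{H_2}\omega_2\in\IT M_2\cdot\Omega_2$ is a pointwise condition. Your use of Corollary~\ref{cor:URinjectivesmooth} then goes through.
\item The converse works the same way, with ${\rm coker}(R)=\ker(R^\top)$ acting trivially on $\rmp_2(\sfS_R)$.
\end{itemize}
Applied to $\omega_1$ alone, the same observation gives $\ker(R)\subseteq N_{\omega_1}=L_1$. So the hypothesis $\rmp_1(R)\supseteq L_1$ in fact follows from the others; it is not what supplies the partner.
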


\medskip

\subsection{Spinor Relations for Geometric T-Duality}~\\[5pt] \label{ssec:SpinorrelationTduality}
In the setting of geometric T-duality with $K_1\cap K_2 = \set{0}$, we may transport pure spinor line bundles to the T-dual manifold through

\begin{theorem}\label{thm:relatedspinorline}
    Let $R \colon \mathbb{T} \mathcal{Q}_1 \dashrightarrow \mathbb{T} \mathcal{Q}_2$ be a geometric T-duality relation supported on $\red C \subset \cQ_1 \times \cQ_2$. If $\Omega_1 \subset \midwedge^\bullet T^*\cQ_1$ is a $D_1$-invariant pure spinor line bundle, then there is a pure spinor line bundle $\Omega_2 \subset \midwedge^\bullet T^* \cQ_2$ such that $\Omega_1\sim_{R} \Omega_2$.
\end{theorem}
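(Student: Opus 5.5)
The strategy is to reduce the statement to Proposition~\ref{prop:TdualDirac} together with the converse of Proposition~\ref{prop:LreliffUrelsmooth}, given by the smooth version of Remark~\ref{rem:fullrankcase} recorded in Remark~\ref{rem:smoothULequiv}. The argument passes through the almost Dirac structure $L_1$ annihilating $\Omega_1$: we transport $L_1$ across $R$ to some $L_2$, and then read off $\Omega_2$ as the pure spinor line bundle of $L_2$.

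First, $D_1$-invariance of the line bundle $\Omega_1$ should give $D_1$-invariance of $L_1=N_{\Omega_1}$ in the sense of Definition~\ref{def:invDirac}. Take $\red X$ in the Lie subalgebra of Definition~\ref{def:invariantforms}, a section $\red e\in\sfGamma(L_1)$, and a nowhere vanishing local section $\omega\in\sfGamma(\Omega_1)$. Apply the Leibniz rule \eqref{eqn:spinorLeibnizrule} to $0=\red e\cdot\omega$. This gives
\begin{align}
0=\ad^{\red\sigma_1}_{\red X}(\red e)\cdot\omega+\red e\cdot\pounds_{\red X}\omega \ .
\end{align}
Since $\pounds_{\red X}\omega=f\,\omega$ for some function $f$, the last term vanishes. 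Hence $\ad^{\red\sigma_1}_{\red X}\red e\in\sfGamma(L_1)$.

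Next we produce $L_2$. Integrability is not assumed, so we cannot quote Proposition~\ref{prop:TdualDirac} verbatim. We therefore rerun its proof, which is the transport of the generalised metric in \cite[Theorem~5.27]{DeFraja:2023fhe}, in the almost Dirac setting. Since $K_1\cap K_2=\set{0}$ and the relation is geometric, $R$ is a generalised isometry. By \cite[Proposition~4.17]{DeFraja:2023fhe} it then has trivial kernel and cokernel, so it is the graph of a fibrewise isometry $\red E{}_1|_{\varpi_1(m)}\to\red E{}_2|_{\varpi_2(m)}$ for each $m$. We set $L_2$ at $\varpi_2(m)$ to be the image of $(L_1)_{\varpi_1(m)}$, which is Lagrangian.

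The main obstacle is showing that this image is well defined, i.e. independent of the point $c\in\red C$ lying over a given $q_2\in\cQ_2$. This is exactly where $D_1$-invariance enters. Moving along a fibre of $\varpi_2$ corresponds, on $\cQ_1$, to flowing along $D_1=\varpi_{1*}(T\cF_2)$. The compatibility of the splittings \eqref{eqn:compatiblesplits} identifies the induced change of the fibrewise isometry with $\ad^{\red\sigma_1}$ along Killing fields. The first step shows that $L_1$ is preserved by this action, so the image is constant along the fibre. I expect to follow the corresponding argument for $V_1^+$ closely, with $L_1$ in place of $V_1^+$; smoothness and regularity of $L_2$ follow in the same way.

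Finally, we fix the $R$-Clifford relation $\sfS_R$ given in the setting, for instance the Fourier--Mukai-type relation. Let $\Omega_2$ be the pure spinor line bundle of $L_2$. By construction $L_1\sim_R L_2$, and $\rmp_2(R)\supseteq L_2$ because $R$ is a graph. Remark~\ref{rem:fullrankcase}, applied pointwise through Remark~\ref{rem:smoothULequiv}, then shows that for $\psi_1\in\Omega_1$ the unique partner $\psi_2$ with $(\psi_1,\psi_2)\in\sfS_R$, which is unique by Corollary~\ref{cor:URinjectivesmooth}, lies in $\Omega_2$. Consequently $(\Omega_1\times\Omega_2)|_{\red C}\otimes|\det\sfS_R^*|^{1/2^n}\subset\cS_R$ is a line subbundle, which is the relation $\Omega_1\sim_R\Omega_2$.
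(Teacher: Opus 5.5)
Your proposal is correct and follows the paper's proof. The Leibniz rule \eqref{eqn:spinorLeibnizrule} transfers $D_1$-invariance from $\Omega_1$ to its null Lagrangian subbundle, this invariant Lagrangian is carried across $R$ to a Lagrangian $L_2$ with $L_1\sim_R L_2$, and Remark~\ref{rem:smoothULequiv} (the smooth version of Remark~\ref{rem:fullrankcase}) then gives $\Omega_1\sim_R\Omega_2$; the only difference is that the paper does the transport step by citing \cite[Lemma~3.32]{deFraja2025Ricci} on carrying spanning sets of $D_1$-invariant sections to $\IT\cQ_2$, which is exactly the well-definedness along the fibres of $\varpi_2$ that you propose to reprove by rerunning the argument for $V_1^+$.
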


\begin{proof}
    Let $\psi_1 \in \sfgamma(\Omega_1)$ be a section of $\Omega_1$. Then by Equation \eqref{eqn:spinorLeibnizrule},
    \begin{align}
        0=\pounds_{\red X} (\red e \cdot \psi_1) = \mathsf{ad}^{\red\sigma_1}_{\red X}(\red e) \cdot \psi_1 + \red e \cdot \pounds_{\red X} \psi_1 \ ,
    \end{align}
for $\red e \in \sfgamma(\red N{}_{\psi_1})$ and $\red X \in \sfgamma(T\cQ_1)$.  It follows that $\Omega_1$ is $D_1$-invariant if and only if the null Lagrangian subbundle $\red N{}_{\psi_1}$ is $D_1$-invariant, i.e. $\mathsf{ad}^{\red\sigma_1}_{\red X}\, \sfgamma(\red N{}_{\psi_1}) \subset \sfgamma(\red N{}_{\psi_1})$ for each $\red X \in \sfgamma(D_1)$. According to \cite[Lemma 3.32]{deFraja2025Ricci}, we may thus take a spanning set of $D_1$-invariant sections for $\red N{}_{\psi_1}$ which can be carried to $\IT\cQ_2$, and define $\red N{}_{\psi_2}$ as the span of these sections. It follows that $\red N{}_{\psi_2}$ is Lagrangian and $\red N{}_{\psi_1}\sim_R \red N{}_{\psi_2}$. There is a pure spinor line bundle $\Omega_2$ associated with $\red N{}_{\psi_2}$. 
    By Remark~\ref{rem:smoothULequiv}, at each point $c = (c_1, c_2)\in \red C$ it follows that $\Omega_1{}_{c_1} \sim_R \Omega_2{}_{c_2}$. Since this is true for all $c\in \red C$, it follows that $\Omega_1\sim_R \Omega_2$.
\end{proof}

\medskip

\subsubsection{Twisted Differentials and the Fourier-Mukai Transform}\label{ssec:twisteddifferential}~\\[5pt]
We shall now show a converse to Proposition \ref{prop:DGOimpliesCArel}. By this we mean to show how, in this setting, the geometric T-duality relation gives rise to a DGO relation. To do this, we will restrict to the case that $\cQ_1$ and $\cQ_2$ are both fibre bundles over a manifold $\cB$, with bundle projections $\pi_i:\cQ_i\to\cB$, and 
\begin{align}
    M = \cQ_1\times_\cB\cQ_2 \ \subset \ \cQ_1\times\cQ_2
\end{align}
is the correspondence space with the canonical projections $\varpi_i=\rmp_i:M\to\cQ_i$ for $i=1,2$. Then the support of the T-duality relation 
\begin{equation}
\begin{tikzcd} 
R:(\IT\cQ_1,\red H{}_1) \ar[r,dashed] & (\IT\cQ_2,\red H{}_2)
\end{tikzcd}
\end{equation}
is $\red C=M$.
We will construct a spinor module $\sfS_R$ for $\Cl(R)$, and subsequently construct the canonical Dirac generating operator for this spinor module (cf. Theorem~\ref{thm:DGOconverse}), following the constructions of~\cite{Cortes:2019roa}. We will then show that this Dirac generating operator is in fact given by the pair of twisted differentials $(\de_{\red H{}_1}, \de_{\red H{}_2})$, extended from $\sfOmega^\bullet(\cQ_1)\times\sfOmega^\bullet(\cQ_2)$ to $\sfS_R$ as graded derivations.

The construction of $\sfS_R$ is by means of  a Courant algebroid $R_\cB$ over $\cB$ whose typical fibre is isomorphic to a fibre of $R$. In the following we denote $\red E{}_i=\IT\cQ_i$ and abbreviate subscripts ${}_{\red E{}_i}$ simply by ${}_i$, for $i=1,2.$

\begin{proposition}\label{prop:regCAonB}
    There is a transitive Courant algebroid $R_\cB$ over $\cB$ which is fibrewise isomorphic (as a vector bundle) to $R$. 
\end{proposition}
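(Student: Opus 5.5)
The strategy is to obtain $R_\cB$ by reducing the Courant algebroid $R$ itself, seen as a Lie algebroid over $M=\cQ_1\times_\cB\cQ_2$, along the fibres of $M\to\cB$. Concretely, by \eqref{eqn:T-dualRelation}, $R=\set{(\natural_1(e),\natural_2(e))\,|\,e\in K_1^\perp\cap K_2^\perp}$. Since $K_1\cap K_2$ has constant rank (here we may take $K_1\cap K_2=\set 0$ as in the geometric setting), we get $R\simeq (K_1^\perp\cap K_2^\perp)/(K_1\cap K_2)$ as vector bundles over $M$. I would therefore introduce the isotropic subbundle $K:=K_1+K_2\subset E$. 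Its orthogonal is $K^\perp=K_1^\perp\cap K_2^\perp$, and the fibres $K^\perp/K$ are exactly the fibres of $R$ modulo the parts coming from $K_i\cap K_j^\perp$. Because $\varpi_i=\rmp_i$ and $M$ is the fibred product, $\rho(K)=T\cF_1+T\cF_2=\ker(\pi\circ\varpi_1)_*$, so the leaf space of $\rho(K)$ is $\cB$.

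The plan is then to apply Theorem~\ref{thm:reduction} to $E$ with this $K$. This produces a Courant algebroid $R_\cB:=\red E_K$ over $\cB$ with fibres $(K^\perp/K)_m$. The steps are as follows.

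(i) Show that $K$ is isotropic and of constant rank. Isotropy needs $\ip{K_1,K_2}=0$. I would take this from the geometric T-duality assumptions: condition \ref{item:main1}--\ref{item:main2} of Theorem~\ref{thm:maingeneral}, combined with the bisubmersion condition \eqref{eqn:bisubmersioncd} and adapted splittings with $\Psi_1^{-1}(K_2)=\e^{-B}(T\cF_2)$, should force $\ip{K_1,K_2}=0$ when $B$ is nondegenerate only on $\ker\varpi_{1*}\otimes\ker\varpi_{2*}$. If this fails, I would instead use $K':=K_1+K_2\cap K_1^\perp$.

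(ii) Show that $\rho(K^\perp)=TM$. This follows from exactness and $K^\perp\supseteq\rho^*(\ann(T\cF_1+T\cF_2))\oplus\sigma(\cdot)$ once the splittings are compatible.

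(iii) Show that $K^\perp$ is spanned pointwise by basic sections. Here I would use that basic sections of $K_1^\perp$ which are also $\ad^{\sigma_1}$-invariant along $\frk{k}_2$ are basic for $K_2$ via the compatibility \eqref{eqn:compatiblesplits}. Lemma~3.32 of \cite{deFraja2025Ricci} provides enough such invariant sections, and Proposition~\ref{prop:adaptedsplittingadaction} converts bracket conditions into $\ad^\sigma$ conditions.

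It then remains to identify the fibres. The map $e\mapsto(\natural_1 e,\natural_2 e)$ descends to an injection $K^\perp/K\to R$, and a rank count gives an isomorphism, using $\rk R=\tfrac12(\rk\red E_1+\rk\red E_2)$ and $\rk K^\perp/K=\rk E-2\rk K$. Combined with the pullback diagram of Theorem~\ref{thm:reduction}, this shows that $R_\cB$ is fibrewise isomorphic to $R$. Transitivity follows because the reduced anchor is induced from $\rho:K^\perp\to TM$, which is onto, composed with $TM\to T\cB$, which is onto.

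The main obstacle is step (iii), together with the isotropy in (i): guaranteeing enough basic sections simultaneously for $K_1$ and $K_2$. This is where the fibre-bundle hypothesis over $\cB$ and the invariance conditions are genuinely needed.
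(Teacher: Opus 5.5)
There is a genuine gap at step (i), and it is fatal rather than technical: in the geometric T-duality setting, $K_1+K_2$ is not isotropic. In fact the pairing between $K_1$ and $K_2$ is non-degenerate.

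Condition~(i) of Theorem~\ref{thm:maingeneral}, together with $K_1\cap K_2=\{0\}$, says $K_1\cap K_2^\perp=\{0\}$. So $k_1\mapsto\ip{k_1,\,\cdot\,}|_{K_2}$ is an injective map $K_1\to K_2^*$ between bundles of equal rank. This is exactly the non-degeneracy of $B$ on $\ker\varpi_{1*}\otimes\ker\varpi_{2*}$ that you invoke. In the splitting $\Psi_1$ of Remark~\ref{rmk:tdualsplitiso}, where $K_1\simeq T\cF_1$ and $K_2\simeq\{Y-\iota_YB\mid Y\in T\cF_2\}$, the pairing $K_1\times K_2\to\IR$ is $\pm\tfrac12\,B$ restricted to $T\cF_1\times T\cF_2$. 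So your heuristic runs in the wrong direction. More generally, isotropy of $K_1+K_2$ gives $K_1\subseteq K_2^\perp$, and condition~(i) then forces $K_1\subseteq K_2$, hence $K_1=K_2$.

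Consequently $K=K_1\oplus K_2$ is a non-degenerate subbundle with $K\cap K^\perp=\{0\}$. Theorem~\ref{thm:reduction} does not apply, and $K^\perp/K$ is not a reduction. Your fallback $K'=K_1+(K_2\cap K_1^\perp)$ does not help: by condition~(ii) it equals $K_1$, whose reduction is $\red E{}_1$ over $\cQ_1$, not a bundle over $\cB$.

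Your own rank count exposes the problem. With $k=\rk K_i$, an isotropic $K$ of rank $2k$ would give $\rk E-4k$, whereas $\rk R=\rk E-2k$; indeed $R\simeq K_1^\perp\cap K_2^\perp=K^\perp$ fibrewise. Likewise, any reduction of the exact $E$ by an isotropic subbundle with anchor image $T\cF_1\oplus T\cF_2$, satisfying the hypotheses of Theorem~\ref{thm:reduction}, is exact over $\cB$. But $R_\cB$ is only transitive. It must contain the quadratic Lie algebra bundle $\ccG$ of rank $2\dim\red\cF{}_1$, on which the splitting \eqref{eqn:splittingforRB} and the construction of $\sfS_R$ rely.

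The missing idea is that no isotropic quotient is taken at all. The fibres of $R_\cB$ are the fibres of $R$ itself, and one only descends along the fibres of $\red C=\cQ_1\times_\cB\cQ_2\to\cB$. This is what the paper does:
\begin{enumerate}
\item it regards sections of $R$ as a locally free, finitely generated $C^\infty(\cB)$-module via $\pi_1^*\times\pi_2^*$;
\item it obtains $R_\cB$ by Serre--Swan;
\item it transports from $R$ the anchor $\pi_{1*}\circ\red\rho{}_1$, the pairing and the componentwise bracket, using that these do not depend on the point of $\red C$ over $b\in\cB$.
\end{enumerate}
If you prefer to work inside $E$, the right object is the non-degenerate subbundle $K^\perp=(K_1\oplus K_2)^\perp$, descended along $\cF_1+\cF_2$ by invariance. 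This is in the spirit of reducing an invariant exact Courant algebroid on a principal bundle to a transitive one on the base, not a reduction by an isotropic subbundle.
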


\begin{proof}
    $\cQ_1$ and $\cQ_2$ are both fibre bundles over the base manifold $\cB$, fitting the commutative diagram
    \begin{equation} \label{cd:correspondencespace}
        \begin{tikzcd}[row sep = 1cm]
             & M \arrow[swap]{dl}{\varpi_1} \arrow{dr}{\varpi_2} & \\
            \cQ_1 \arrow[swap]{dr}{\pi_1} &  & \cQ_2 \arrow{dl}{\pi_2} \\
             & \cB & 
        \end{tikzcd}
    \end{equation} 
    Then $\sfGamma(\red E{}_1 \times \overline{\red E}_2; R)$ is a locally-free finitely-generated $C^\infty(\cB)$-module, where the $C^\infty(\cB)$-module structure is given by pullback $\pi_1^*\times\pi_2^*:C^\infty(\cB)\to C^\infty(\cQ_1\times\cQ_2)$. Hence by the smooth version of the Serre-Swan Theorem (as formulated in \cite[Chapter 11]{Nestruev2020}) there exists a vector bundle $R_\cB$ over $\cB$ whose space of sections coincides with $\sfGamma(\red E{}_1 \times \overline{\red E}_2; R)$. 
    
    The bundle $R_\cB\to\cB$ inherits the structure of a Courant algebroid in the following way. Let $c = (c_1,c_2)\in \red C$, which has a corresponding point $b=\pi_1(\varpi_1(c_1))=\pi_2(\varpi_2(c_2))\in \cB$. We denote a point of the fibre $R_\cB{}_b$ by the same symbol as the corresponding point $(\red e_1,\red e_2) \in R_c$, where the context will be clear. By the construction of $R$, this is independent of the point $c \in \red C$ corresponding to $b\in\cB$. 
    
    Then the anchor $\rho_{R_\cB}:R_\cB\to T\cB$ is given by
    \begin{align}
        \rho_{R_\cB}(\red e_1,\red e_2)_b \coloneqq \pi_1{}_* \,\red \rho{}_1(\red e_1)_{c_1}=\pi_2{}_*\,\red \rho{}_2(\red e_1)_{c_2} \ ,
    \end{align}
    which is well-defined since $$\rho(\red e_1,\red e_2) = \big(\red\rho{}_1(\red e_1),\red\rho{}_2(\red e_2)\big) \ \in \ T\red C $$  projects to the same vector in $T\cB$ under pushforward by $\pi_1{}_* \circ \varpi_{1*}$ and $\pi_2{}_* \circ \varpi_{2*}$. Similarly, the pairing is given by
    \begin{align}
        \big\langle (\red e_1,\red e_2) \,,\, (\red e'_1,\red e'_2)\big\rangle_{R_\cB}(b) := \langle\red e_1,\red e_1'\rangle_1(c_1) =  \langle\red e_2 , \red e_2'\rangle_2(c_2) \ ,
    \end{align}
    while the Dorfman bracket is
    \begin{align}
    \cbrak{(\red e_1,\red e_2)\,,\,(\red e'_1,\red e'_2)}_{R_\cB{}_b} := \big( \cbrak{\red e_1,\red e_1'}_{1 c_1} \,,\, \cbrak{\red e_2,\red e_2'}_{2 c_2}\big) \ ,
    \end{align}
    where we used the usual Courant algebroid structures on $R\subset \red E{}_1\times\overline{\red E}_2$.
\end{proof}

By a result of Chen-Sti\'enon-Xu~\cite[Lemma 1.2]{Chen2013}, a regular Courant algebroid $E$ admits  isomorphisms
\begin{align}
    E \simeq \Im(\rho_E) \oplus \ker(\rho_E) \simeq \Im(\rho_E)\oplus \ccG \oplus \ker (\rho_E)^\perp \ ,
\end{align}
such that $\Im(\rho_E)$ is isotropic in $E$ and $\ccG\simeq\ker(\rho_E)/\ker(\rho_E)^\perp$ is orthogonal to $\Im(\rho_E)$ in $E$. In this splitting, the restrictions of the Dorfman bracket and pairing of $E$ make $\ccG$ into a bundle of quadratic Lie algebras.

For the transitive Courant algebroid $R_\cB$ constructed in Proposition~\ref{prop:regCAonB}, this provides the splitting
\begin{align}\label{eqn:splittingforRB}
    R_\cB \simeq T\cB \oplus \ccG \oplus T^*\cB \ .
\end{align}
The choice $$\ccG\simeq\ker(\rho_{R_\cB})\,\big/\,\ker(\rho_{R_\cB})^\perp \ \subset \ \Im(\rho_{R_\cB})^\perp=T\cB^\perp$$  is canonically associated to $R_\cB$, and the projection $\rmp_\ccG$ to $\ccG$ is calculated from the relation $R\subset\IT\cQ_1\times\overline{\IT\cQ_2}$: for $g\in \ccG$ we choose a lift $\widehat g=(\dot g,\ddot g)\in\ker(\rho_{R_\cB})\subset R$. Note that $\ker(\rho_{R_\cB})\subset T\red\cF{}_1\times T\red\cF{}_2$ in this lift to $R$. The choice of splitting \eqref{eqn:splittingforRB} is then equivalent to the unique lifts of $X\in T\cB$ to $\dot X \in T\cQ_1$ and $\ddot X \in T\cQ_2$ such that $\widehat X\coloneqq(\dot X, \ddot X) \in R$, while for $\omega\in T^*\cB$ we set $\widehat\omega \coloneqq (\pi_1^*\omega, \pi_2^* \omega)\in R$. 

The Courant algebroid structures in the splitting \eqref{eqn:splittingforRB} are then given by
\begin{align}
    \ip{\widehat X + \widehat g + \widehat\omega, \widehat Y + \widehat g' + \widehat\chi}_{R_\cB} &= \tfrac 12\,(\iota_X \omega + \iota_Y \chi) + \ip{\widehat g,\widehat g'}_{R_\cB} \  , \\[4pt]
    \rho_{R_\cB}(\widehat X + \widehat g + \widehat\omega) &= X \ , \\[4pt]
    \cbrak{\widehat X + \widehat g + \widehat\omega, \widehat Y + \widehat g' + \widehat \chi}_{R_\cB} &=\big(\cbrak{\dot X + \dot g + \pi_1^* \omega, \dot Y + \dot g' + \pi_1^* \chi}_1\,,\,\cbrak{\ddot X + \ddot g + \pi_2^* \omega, \ddot Y + \ddot g' + \pi_2^* \chi}_2\big) \ .
\end{align}
One easily checks that these operations are well-defined.

By results of Alekseev-Xu \cite[Theorem 4.1]{Alekseev2001}, if $E$ is a regular Courant algebroid then every spinor bundle $\sfS$ over $\Cl(E)$  locally admits a Dirac generating operator which can be expressed by means of a Courant algebroid connection, as we discuss below. On the weighted spinor bundle there is a canonical Dirac generating operator \cite[Theorem 52]{Cortes:2019roa}. For illustration, let us first discuss how this works in the case of principal circle bundles.

\begin{example}[\textbf{Circle Bundles}]\label{ex:DGOcircle}
Let us consider the T-duality relation $R$ when $\cQ_1$ and $\cQ_2$ are circle bundles over $\cB$, see~\cite[Section~6]{DeFraja:2023fhe} and~\cite[Example~2.42]{deFraja2025Ricci}. The restriction of the Courant algebroid $R_\cB$ to any open subset $U \subset \cB$ is given by
\begin{align}
    \sfGamma(U,R_\cB) = \set{(X,X)  |  X\in \sfGamma(TU)} \, \oplus \, \set{
    (\xi,\xi)  |  \xi\in \sfGamma(T^*U)} \, \oplus \, \text{Span}_{C^\infty(U)}\set{(\partial_{\theta_1} , \theta_2) \,,\, (\theta_1, \partial_{\theta_2})}
\end{align}
for a choice of connections $\theta_1$ and $\theta_2$ on $\cQ_1\to \cB$ and $\cQ_2 \to \cB$ respectively, pulled back to $\cB$ and restricted to $U$. The vector fields $\partial_{\theta_i}$ are dual to $\theta_i$ for $i=1,2.$

Thus 
\begin{align}
    \sfGamma\big(U,\ker (\rho_{R_\cB})\big) & = \set{(\xi,\xi)  |  \xi\in \sfGamma(T^*U)} \, \oplus \, \text{Span}_{C^\infty(U)}\set{(\partial_{\theta_1}, \theta_2) \,,\,(\theta_1, \partial_{\theta_2})} \ ,\\[4pt]
    \sfGamma\big(U,\ker (\rho_{R_\cB} )^\perp  \big) & = \set{(\xi, \xi)  |  \xi\in \sfGamma(T^*U)} \ .
\end{align}
It follows that the restriction of the bundle $\ccG\simeq\ker(\rho_{R_\cB})/\ker(\rho_{R_\cB})^\perp$ to $U$ is given by $$\sfGamma(U,\ccG) = \text{Span}_{C^\infty(U)}\set{(\partial_{\theta_1}, \theta_2) \,,\,(\theta_1, \partial_{\theta_2})}$$ and 
\begin{align}
    R_\cB \big\rvert_U \simeq TU \oplus T^*U \oplus \ccG \big\rvert_U \ .
\end{align}
In the following we drop the explicit restriction to $U\subset\cB$ from the notation for brevity.

In order to define the bracket of two sections of $\Im(\rho_{R_\cB}) = T\cB$, we note that for circle bundles (and more generally principal bundles) a choice of lift of $X \in T\cB$ to $(\dot X, \ddot X) \in R$ is equivalent to the choice of connections $(\theta_1, \theta_2$), and the lifts are thus  the horizontal lifts $(X^{\mathrm{h}_1}, X^{\mathrm{h}_2})\in T\cQ_1\times T\cQ_2$ with respect to these connections. The requirement that $(X^{\mathrm{h}_1}, X^{\mathrm{h}_2}) \in R$ is the condition $$\red H{}_1- \red H{}_2 = \de (\theta_1 \wedge \theta_2) \ , $$ as discussed in \cite{cavalcanti2011generalized}. 

The twisting three-forms are given by~\cite{cavalcanti2011generalized} $$\red H{}_i = \pi_i^*h +c_{3-i} \wedge \theta_i \ , $$ for $i=1,2$, where $c_j=\de\theta_j$ represents the Chern class of $\cQ_j\to\cB$ for $j=1,2$ and $h \in \sfOmega^3(\cB)$.
The bracket of two elements of $\Im(\rho_{R_\cB})=T\cB$ is then
\begin{align}
    \begin{aligned}\label{eqn:bracketonR}
    \llbracket (X,X)\,,\,(Y,Y) \rrbracket_{R_\cB} &= \big([X,Y]\,,\,[X,Y]\big) + \rmp_{T^*\cB}\big( \red H{}_1(X^{\mathrm{h}_1}, Y^{\mathrm{h}_1},\,\cdot\,)\,,\,\red H{}_2(X^{{\mathrm{h}_2}},Y^{\mathrm{h}_2},\,\cdot\,)\big) \\
    & \qquad \, + \rmp_\ccG\big([X^{\mathrm{h}_1},Y^{\mathrm{h}_1}] + \red H{}_1(X^{\mathrm{h}_1}, Y^{\mathrm{h}_1},\,\cdot\,)\,,\, [X^{\mathrm{h}_2}, Y^{\mathrm{h}_2}]+ \red H{}_2(X^{\mathrm{h}_2},Y^{\mathrm{h}_2},\,\cdot\,)\big)\\[4pt]
    &=\big([X,Y]\,,\,[X,Y]\big) + \big(h(X,Y,\,\cdot\,)\,,\, h(X,Y,\,\cdot\,)) \\
    & \qquad \, + c_1(X, Y)\,( \partial_{\theta_1}, \theta_2) + c_2(X, Y)\,(\theta_1, \partial_{\theta_2}) \ .
\end{aligned}
\end{align}

By \cite[Section 1.2]{Alekseev:2007pure}, $(1,\theta_2)$ is a pure spinor for $\Cl(\ccG)$ with Lagrangian subbundle $L\subset \ccG$ given by \smash{$\sfGamma(L) = \text{Span}_{C^\infty(\cB)}\set{(\partial_{\theta_1}, \theta_2)}$}, and thus defines a spinor bundle $ \sfS_\ccG=\Cl(\ccG)/(\Cl(\ccG)\cdot L)$. The subbundle $L^*\subset \ccG$ transverse to $L$ given by $$\sfGamma(L^*) = \text{Span}_{C^\infty(\cB)}\set{(\theta_1, \partial_{\theta_2})}$$ provides an isomorphism $$\sfGamma(\sfS_\ccG) \simeq \sfGamma(\midwedge^\bullet L^*) \cdot (1,\theta_2) = \textrm{Span}_{C^\infty(\cB)}\set{(1,\theta_2)\,,\,(\theta_1, 1)} \ . $$ 
The spinor bundle for $\Cl(R_{\cB}) = \Cl(\IT\cB)\otimes\Cl(\ccG)$ is then
\begin{align}
    \sfS_{R} = \midwedge^\bullet T^*\cB \otimes \sfS_\ccG \ . 
\end{align}
It has $-$ parity.

We define the map $\overline{W}:\sfGamma(\sfS_{R})\to\sfGamma(\sfS_{R})$ by
\begin{align}\label{eqn:wbarincirclecase}
    \overline{W}(\omega \otimes \sfs) = \omega \wedge(c_2, c_2) \otimes (\theta_1,\partial_{\theta_2}) \cdot \sfs - \omega \wedge(c_1,c_1)\otimes (\partial_{\theta_1},\theta_2) \cdot \sfs \ ,
\end{align}
for $\omega \in \Omega^\bullet(\cB)$ and $\sfs \in \sfGamma(\sfS_\ccG)$.
As in \cite[Theorem 61]{Cortes:2019roa}, one can then define the canonical Dirac generating operator on $\sfS_{R}$ as
\begin{align}
    \sfD(\omega \otimes \sfs) = (\de \omega) \otimes \sfs - \big((h,h) \wedge \omega\big) \otimes \sfs +(-1)^{|\omega|+1}\,\overline{W}(\omega \otimes \sfs) \ .
\end{align}
After pullback by $\pi_1^* \times \pi_2^*$, this reduces to the Dirac generating operator $\de_{\red H{}_1}\times(-\de_{\red H{}_2})$ on $\sfS_{R}$ and  provides a $(\de_{\red H{}_1},\de_{\red H{}_2})$-DGO relation.
\end{example}

We now extend Example~\ref{ex:DGOcircle} to the general case where $\cQ_1$ and $\cQ_2$ are fibred over the same manifold $\cB$ with surjective submersions $\pi_i (\cQ_i)= \cB$.
By Proposition~\ref{prop:regCAonB}, there is a transitive Courant algebroid $R_\cB$ over $\cB$ that is fibrewise isomorphic to $R\subset \IT\cQ_1\times\overline{\IT\cQ_2}$. 
Given a splitting $R_\cB \simeq \IT \cB \oplus \ccG$, define $$\ccG_1^\star = \rmp_{T^*\cQ_1}\,\rmp_1(\ccG) \ \subset \ T^*\cQ_1 \ . $$  Consider the spinor $$u = (\lambda, 1)$$ of the Clifford algebra bundle $\Cl(\ccG)$, where $\lambda\in \det \ccG_1^\star$ (this can always be done locally, provided that the typical fibre of $D_1 = T\red\cF{}_1$ is orientable). This is a pure spinor, with Lagrangian subbundle $L\subset \ccG$. Choose its dual $L^*$ such that $$\red\rho{}_2(\rmp_2(L^*)) = \set{0} \ . $$ 

This then yields a spinor bundle $$\sfS_\ccG = \midwedge^\bullet L^* \cdot u \ \subset \ \left(\midwedge^\bullet T^* \cQ_1 \times \midwedge^\bullet T^* \cQ_2\right)\big\rvert_{\red C} \ . $$ The bundle
\begin{align}\label{eqn:Tdualityrelspinrep}
    \sfS_R = \midwedge^\bullet T^*\cB \otimes \sfS_\ccG 
\end{align}
over $\cB$ is a spinor representation for the Clifford bundle $\Cl(R_\cB)$, with action
\begin{align}
    (\widehat X + \widehat g + \widehat\chi) \cdot (\omega \otimes \sfs) = (X+\chi) \cdot \omega \otimes  \sfs + (-1)^{|\omega|}\,\omega \otimes g\cdot \sfs
\end{align}
for $\widehat X + \widehat g + \widehat\chi \in R_\cB$, $\omega \in \midwedge^\bullet T^*\cB$, and $\sfs \in \sfS_\ccG$. These actions are performed using the Clifford action \eqref{eqn:productcliffordaction} of $\Cl(R_\cB)$ on $(\midwedge^\bullet T^*\cQ_1 \times \midwedge^\bullet T^* \cQ_2) \rvert_{\red C}$ by pulling $\sfS_R$ back by $\pi_1^*\times \pi_2^*$, where we recall that $\red C \simeq \cQ_1 \times_\cB \cQ_2$. By construction, $\sfS_R$ is also a spinor representation for $\Cl(R)$, with the same action. We leave the pullback by $\pi_1^*\times \pi_2^*$ implicit in the following.

\begin{proposition}
    $\sfS_R$ is an $R$-Clifford relation of parity $j= \dim_\IR (\red\cF{}_1) \mod 2$.
\end{proposition}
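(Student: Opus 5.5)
To prove the statement we must check two things: that $\sfS_R$ is invariant under $\Cl^j(R)$ when the action \eqref{eqn:productcliffordaction} carries sign $(-1)^j$ on the second factor, and that it is irreducible. The strategy is to reduce both to the splitting \eqref{eqn:splittingforRB} $R_\cB\simeq T\cB\oplus\ccG\oplus T^*\cB$. We check the generators $\widehat X$, $\widehat\chi$ and $\widehat g$ separately on elements $\omega\otimes\sfs=\pi_1^*\omega\wedge\sfs_1\,,\,\pi_2^*\omega\wedge\sfs_2$ with $\sfs=(\sfs_1,\sfs_2)\in\sfS_\ccG$.

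First consider $\widehat\chi=(\pi_1^*\chi,\pi_2^*\chi)$. Acting on $(\pi_1^*\omega\wedge\sfs_1,\pi_2^*\omega\wedge\sfs_2)$ with sign $(-1)^j$ on the second factor gives $(\pi_1^*(\chi\wedge\omega)\wedge\sfs_1,(-1)^j\pi_2^*(\chi\wedge\omega)\wedge\sfs_2)$. To match this with $(\chi\wedge\omega)\otimes\sfs$, the two components of every $\sfs\in\sfS_\ccG$ must have degrees of parity differing by exactly $j$. This holds for the generator $u=(\lambda,1)$, since $\deg\lambda=\rk\ccG_1^\star=\dim_\IR\red\cF{}_1$; here we use that $\rmp_1(\ccG)$ meets $T^*\cQ_1$ in a space of rank $\dim D_1$. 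Each $g\in L^*\subset\ccG$ changes the degree of both components by an odd amount, so the parity relation is preserved on all of $\midwedge^\bullet L^*\cdot u$.

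Next consider $\widehat X=(\dot X,\ddot X)$. The contractions $\iota_{\dot X}$ and $\iota_{\ddot X}$ act by a graded Leibniz rule, which produces $\iota_X\omega\otimes\sfs$ plus the terms $(-1)^{|\omega|}\pi_i^*\omega\wedge\iota\,\sfs_i$. These extra terms are the contraction of $\sfs$ by the vertical part of $\widehat X$, which is an element of $\ccG$ (or vanishes for the canonical lift). The same sign bookkeeping as above makes the second-factor sign consistent. For $\widehat g$, the action on $\sfS_\ccG$ is closed by construction, and the $\omega$ factor passes through with the sign $(-1)^{|\omega|}$ from the stated action. Together these steps show that $\Cl^j(R)$ acts on $\sfS_R$ through the stated action of $\Cl(R_\cB)$, so $\sfS_R$ is invariant.

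For irreducibility, $\sfS_R=\midwedge^\bullet T^*\cB\otimes\sfS_\ccG$ is the tensor product of the irreducible $\Cl(\IT\cB)$-module with the irreducible $\Cl(\ccG)$-module $\midwedge^\bullet L^*\cdot u$, and $u$ is pure with Lagrangian $L$. Hence $\sfS_R$ is irreducible for $\Cl(\IT\cB)\hat\otimes\Cl(\ccG)\simeq\Cl(R_\cB)$. Rank counting gives $\rk\sfS_R=2^{\frac12\rk R}$, which confirms it is a spinor module. Since $R_\cB$ and $R$ have isomorphic fibres by Proposition~\ref{prop:regCAonB}, it is a spinor module for the typical fibre of $\Cl^j(R)$.

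The main obstacle is the parity bookkeeping in the first step. We must verify uniformly that the degree difference between the two components is $\dim\red\cF{}_1$ mod 2, and that the choice $\red\rho{}_2(\rmp_2(L^*))=0$ makes elements of $L^*$ act on the second factor purely by wedge products, so the degree shifts are controlled. If degree is not well defined on all elements, we would first try restricting to the homogeneous pieces generated from $u$.
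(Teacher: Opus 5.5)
Your irreducibility half is fine and agrees with the paper. The paper cites Remark~57 of \cite{Cortes:2019roa} for $\midwedge^\bullet T^*\cB\otimes\sfS_\ccG$ over $\Cl(R_\cB)$ and then computes $\rk\ccG=2\rk L=2\dim_\IR\red\cF{}_1$; your rank count $2^{\frac12\rk R}$ over a split fibre also suffices.

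The gap is in the invariance step, which is exactly where the parity $j$ is decided. With your identification $\omega\otimes\sfs=(\pi_1^*\omega\wedge\sfs_1,\pi_2^*\omega\wedge\sfs_2)$, your own computation gives $\widehat\chi\cdot(\omega\otimes\sfs)=(\chi\wedge\omega)\otimes(\sfs_1,(-1)^j\sfs_2)$. Here $\chi$ never moves past $\sfs_i$, so the degrees of $\sfs_1,\sfs_2$ play no role. For odd $j$ you would therefore need $(\sfs_1,-\sfs_2)\in\sfS_\ccG$.

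That already fails for circle bundles (Example~\ref{ex:DGOcircle}). Take $u=(\theta_1,1)$ and $L^*$ spanned by $(\partial_{\theta_1},\theta_2)$. The parity-$j$ module is then $\sfS_\ccG=\mathrm{Span}\{(\theta_1,1),(1,-\theta_2)\}$, which does not contain $(\theta_1,-1)$. The same stray $(-1)^j$ multiplies the $\iota_X\omega\otimes\sfs$ term in your $\widehat X$ step. In fact, under the left-wedge identification the image is invariant only under $\Cl^+(R)$ (with $\sfS_\ccG$ generated by the $+$ action). So that computation can never give parity $j$ when $j$ is odd.

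Your parity observation is the right ingredient, but it only does its job if $\pi_i^*\omega$ sits to the right. Identify $\omega\otimes\sfs$ with $(\sfs_1\wedge\pi_1^*\omega,\sfs_2\wedge\pi_2^*\omega)$. On homogeneous elements this is, up to an overall sign, $\widehat\omega\cdot\sfs$ computed with the parity-$j$ action, i.e.\ the left-wedge pair with an extra $(-1)^{j|\omega|}$ on the second component.

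With this identification the steps go through:
\begin{itemize}
\item[(i)] \emph{$\widehat\chi$:} $\chi\wedge$ passes $\sfs_1$ and $\sfs_2$ with signs $(-1)^{|\sfs_1|}$ and $(-1)^{j+|\sfs_2|}$. These agree on $\midwedge^pL^*\cdot u$, because the components there have degree parities $p+\dim_\IR\red\cF{}_1$ and $p$.
\item[(ii)] \emph{$\widehat X$:} the cross terms vanish outright; they are not contractions by some element of $\ccG$. Since $\widehat X\perp\ccG$ and $\dot X,\ddot X$ are pure vectors, $\ccG_1^\star\subset\ann(\dot X)$, so $\iota_{\dot X}\lambda=0$. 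Moreover $\iota_{\dot X},\iota_{\ddot X}$ anticommute with the components of elements of $L^*$, hence $\widehat X\cdot\sfS_\ccG=0$. The surviving $\iota_X\omega$ term is matched by the same parity argument as in (i).
\item[(iii)] \emph{$\widehat g$:} the vector parts of $\dot g,\ddot g$ are vertical and kill $\pi_i^*\omega$, so $\widehat g$ acts on $\sfs$ alone. This requires $\sfS_\ccG=\midwedge^\bullet L^*\cdot u$ to be generated with the parity-$j$ action.
\end{itemize}
With these corrections your direct verification is a valid, more explicit substitute for the paper's one-line citation.
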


\begin{proof}
    By \cite[Remark 57]{Cortes:2019roa} it follows that $\sfS_R$ is an irreducible spinor representation for $\Cl^j(R_\cB)$, where $j=\frac12\,\rk_\IR(\ccG)\mod 2$, and hence for $\Cl^j(R)$. In the present situation $$\rk_\IR(\ccG)=2\,\rk_\IR(L)=2\dim_\IR(\red\cF{}_1) \ , $$
    and the result follows.
\end{proof}

In accordance with \cite{Cortes:2021aab}, the canonical $R$-Clifford relation we consider is $$\cS_R = \midwedge^\bullet T^*\cB \otimes \sfS_\ccG \otimes |\det\sfS_\ccG^*\,|^{\frac1{\tt g}} \ , $$ where $\tt g$ is the rank of $\sfS_\ccG$, since the bundle of forms $\midwedge^\bullet T^*\cB$ is canonically defined over $\red C$.

\begin{example}[\textbf{The Fourier-Mukai Transform}]\label{ex:FourierMukai}
Consider the setting of T-dual torus bundles described in \cite{ bouwknegt2004tduality,cavalcanti2011generalized}. Let $\cQ_1$ and $\cQ_2$ be principal $\sfT^k$-bundles over $\cB$, where $\sfT^k$ is a $k$-dimensional torus, endowed with closed $\sfT^k$-invariant three-forms $\red H{}_1$ and $\red H{}_2$ respectively. Then $\cQ_1\times_\cB\cQ_2$ is a principal $\sfT^{2k}$-bundle over $\cB$. Choose a $\sfT^{2k}$-invariant two-form \smash{$B\in\sfOmega^2_{\sfT^{2k}}(\cQ_1\times_\cB\cQ_2)$} whose restriction to $\ker(\varpi_1{}_*)\otimes\ker(\varpi_2{}_*)$ is non-degenerate and which satisfies 
\begin{align}
   \de B = \varpi_1^*\,\red H{}_1-\varpi_2^*\, \red H{}_2 \ .
\end{align}
Let
\begin{align}\label{eq:ccRiso}
    \ccR:\sfGamma_{\sfT^k}(\IT\cQ_1)\longrightarrow\sfGamma_{\sfT^k}(\IT\cQ_2)
\end{align}
be an isomorphism of $C^\infty(\cB)$-modules, where the $C^\infty(\cB)$-module structure on $\sfGamma_{\sfT^k}(\IT\cQ_i)$ is given by pullback $\pi_i^*:C^\infty(\cB)\to C^\infty(\cQ_i)$ for $i=1,2$.

There is an isomorphism of twisted cohomology $$\varrho \colon \sfH^\bullet(\cQ_1, \red H{}_1) \longrightarrow \sfH^\bullet(\cQ_2, \red H{}_2) \ , $$ which at the level of differential forms is given by 
\begin{align}\label{eqn:FMT}
    \varrho(\omega) \coloneqq \int_{\sfT^k}\, \e^{B}\wedge \varpi_1^*\,\omega  
\end{align}
for $\omega\in\sfOmega_{\sfT^k}^\bullet(\cQ_1)$, where the integration is along the fibres of the projection $\varpi_2:\cQ_1 \times_{\cB} \cQ_2 \to \cQ_2$. The map $\varrho$ satisfies the crucial properties\footnote{See Appendix~\ref{appendix} for the origin of the grading factors $(-1)^k$ in \eqref{eqn:FMTproperties}.} 
\begin{align}\label{eqn:FMTproperties}
    \de_{\red H{}_2} \varrho( \omega) = (-1)^k\, \varrho(\de_{\red H{}_1} \omega) \qquad \text{and} \qquad \mathscr{R}(e) \cdot\varrho( \omega) = (-1)^k\,\varrho( e\cdot \omega) \ ,
\end{align}
for $\omega\in\sfOmega_{\sfT^k}^\bullet(\cQ_1)$, $e\in\sfGamma_{\sfT^k}(\IT\cQ_1)$ and a unique choice of \eqref{eq:ccRiso}~\cite{cavalcanti2011generalized}.
This implies that $\varrho$ is an isomorphism between the twisted differential complexes of torus-invariant forms $(\sfOmega_{\sfT^k}^\bullet(\cQ_1), \de_{\red H{}_1})$ and $(\sfOmega_{\sfT^k}^\bullet(\cQ_2), \de_{\red H{}_2})$, as well as an isomorphism of irreducible Clifford modules. This is a smooth version of the Fourier-Mukai transform.

By \cite[Lemma 6.5]{DeFraja:2023fhe}, there is a T-duality relation $R \colon (\IT \cQ_1,\red H{}_1) \rel (\IT \cQ_2,\red H{}_2)$ supported on $\red C=\cQ_1 \times_\cB \cQ_2$. When restricted to $\sfT^k$-invariant sections it gives rise to the $C^\infty(\cB)$-module isomorphism \eqref{eq:ccRiso} by~\cite[Proposition~6.13]{DeFraja:2023fhe}; with a slight abuse of notation, we continue to use the same symbol $R$ for this restriction. Because of the second property in \eqref{eqn:FMTproperties}, it follows that the graph of the Fourier-Mukai transform $$\gr(\varrho) \ \subset \ \big(\sfOmega_{\sfT^k}^\bullet(\cQ_1) \times \sfOmega_{\sfT^k}^\bullet(\cQ_2)\big)\big|_{\red C} $$ is invariant under the action of $\Cl(R)$ defined as follows: if $c=(c_1,c_2) \in \red C$,  $(e_1, e_2) \in R_c$ and $\omega\in (\midwedge^\bullet T^*\cQ_1/\sfT^k)_{c_1}$, then
\begin{align}
    (e_1, e_2) \cdot \big(\omega, \varrho(\omega)\big) := \big(e_1 \cdot \omega , (-1)^k\, e_2 \cdot \varrho(\omega)\big) = \big(e_1 \cdot \omega, \varrho(e_2 \cdot \omega)\big) \ \in \ \gr(\varrho)_c \ ,
\end{align}
where $k$ is the dimension of the fibres $\sfT^k$ and hence defines the parity of a representation of $\Cl(R)$.
Thus $\gr(\varrho)$ is a spinor representation of $\Cl^{(-1)^k}(R)$.

We show that $\gr(\varrho)$ is an $R$-Clifford relation of parity $k\mod 2$. Suppose that there is a $\Cl(R)$-invariant subbundle $\sfS' \subset \gr(\varrho)$. It follows that $\sfS' = \gr(\varrho|_{\sfOmega_1})$ for some subspace $\sfOmega_1 \subset \sfOmega_{\sfT^k}^\bullet(\cQ_1)$. Then $\sfOmega_1$ is invariant under $\Cl(\IT \cQ_1/\sfT^k)$ since $\sfS'$ is invariant under $\Cl(R)$. By the irreducibility of $\sfOmega_{\sfT^k}^\bullet(\cQ_1)$ as a $\Cl(\IT \cQ_1/\sfT^k)$-module, it follows that either $\sfOmega_1 = \set{0}$ or $\sfOmega_1 = \sfOmega_{\sfT^k}^\bullet(\cQ_1)$, and hence $\gr(\varrho)$ is an irreducible representation of \smash{$\Cl^{(-1)^k}(R)$}. 

Since an irreducible Clifford module is unique up to isomorphism, it follows that the irreducible representation $\sfS_R$ constructed in \eqref{eqn:Tdualityrelspinrep} is isomorphic to $\gr(\varrho)$. Thus the $R$-Clifford relation $\sfS_R$ recovers the Fourier-Mukai transform.
\end{example}

In order to write a Dirac generating operator on $\sfS_R$, following \cite[Section 1.3]{Chen2013} we introduce the maps 
\begin{align}
    W &\colon \sfgamma(T\cB) \times \sfgamma(T\cB) \longrightarrow \sfgamma(\ccG) \ , \qquad \qquad \hspace{0.9cm} \ W(X,Y)= \rmp_\ccG \cbrak{\widehat X,\widehat Y}_{R_\cB} \ , \\[4pt]
    \cH &\colon \sfgamma(T\cB) \times \sfgamma(T\cB) \times \sfgamma(T\cB) \longrightarrow C^\infty(\cB) \ , \quad \cH(X,Y,Z) = \ip{\rmp_{T^*\cB} \cbrak{\widehat X,\widehat Y}_{R_\cB},\widehat Z}_{R_\cB} \ , \\[4pt]
    \nabla & \colon \sfgamma(T\cB) \times \sfgamma(\ccG) \longrightarrow \sfgamma(\ccG) \ , \qquad \qquad \quad \hspace{1.65cm} \nabla_X g = \rmp_\ccG\cbrak{\widehat X,\widehat g}_{R_\cB} \ ,
\end{align}
for all $X,Y,Z \in \sfGamma(T\cB)$ and $g \in \sfgamma(\ccG)$. We write $\widehat\cH = (\dot h, \ddot h)\in\sfGamma(\midwedge^3 R)$ with $h \in \sfOmega^3(\cB)$.

Let $\nabla^{\sfS_\ccG}:\sfGamma(T\cB)\times\sfGamma(\sfS_\ccG)\to\sfGamma(\sfS_\ccG)$ be a $T\cB$-connection on $\sfS_\ccG$ compatible with $\nabla$, that is
\begin{align}
    \nabla^{\sfS_\ccG}_X(g\cdot \sfs) = \nabla_X g \cdot \sfs+ g \cdot \nabla^{\sfS_\ccG}_X \sfs \ ,
\end{align}
for $X\in\sfGamma(TM)$, $g\in\sfGamma(\ccG)$ and $\sfs\in\sfGamma(\sfS_\ccG).$
Let $C\in\sfGamma(\midwedge^3 \ccG^*)$ be the three-form on $\ccG$ given by $$C(g,g',g'')=\ip{\cbrak{\widehat g,\widehat g'}_{R_\cB},\widehat g''}_{R_\cB} \ , $$ for $g,g',g''\in\sfGamma(\ccG)$, viewed as a section of $\Cl(\ccG)$; then $C$ acts by Clifford multiplication on $\sfS_\ccG$. Finally, define the map $\overline{W}:\sfGamma(\sfS_R)\to\sfGamma(\sfS_R)$ as
\begin{align}
    \overline{W}(\omega \otimes \sfs) = \frac 12\, \sum_{i,j,\mu} \, \ip{\reallywidehat{W(X_i,X_j)}\,,\, \widehat g_\mu}_{R_\cB} \  \alpha_i\wedge  \alpha_j \wedge \omega \otimes  \tilde g_\mu \cdot \sfs \ ,
\end{align}
for $\omega\in\sfOmega^\bullet(\cB)$ and $\sfs\in\sfGamma(\sfS_\ccG)$, where $(X_i)$ is a (local) basis for $T\cB$ with $ (\alpha_i)$ its canonical dual basis for $T^*\cB$, i.e. $\alpha_i(X_j) = \iota_{X_j}\alpha_i = \delta_{ij}$, while $(g_\mu)$ is a basis for $\ccG$ with $ (\tilde g_\mu)$ its dual basis for $\ccG^*\simeq\ccG$ with respect to the pairing on $\ccG$ induced by $\ip{\,\cdot\,,\,\cdot\,}_{R_\cB}$, i.e.~$ \tilde g_\mu( g_\nu) = \ip{\widehat g_\mu, \widehat g_\nu}_{R_\cB} = \delta_{\mu \nu}$.

With these ingredients, the canonical Dirac generating operator $\sfD$ on $\sfS_R$ is then given by \cite[Theorem 61]{Cortes:2019roa}
\begin{align}
\begin{aligned}\label{eqn:canonicalDGO}
    \sfD(\omega\otimes \sfs) &=
    (\de \omega) \otimes \sfs + \nabla^{\sfS_\ccG}(\sfs) \wedge \omega - (\cH \wedge \omega) \otimes \sfs \\
    & \qquad \, + (-1)^{|\omega|+1}\, \overline{W}(\omega \otimes \sfs) + \tfrac 14\,(-1)^{|\omega|+1}\, \omega \otimes C\cdot\sfs \ .
\end{aligned}
\end{align}
Our main result concerning a converse to Proposition \ref{prop:DGOimpliesCArel}, which can be formulated in the realm of geometric T-duality, is now formulated as

\begin{theorem}\label{prop:canonicalDGOforR}
Let $R\colon (\IT \cQ_1, \red H{}_1) \rel (\IT \cQ_2, \red H{}_2)$ be a geometric T-duality relation, and let $R_\cB$ be the transitive Courant algebroid constructed in Proposition \ref{prop:regCAonB}. Let $\sfS_R$ be the $R$-Clifford relation of  parity $j=\dim_\IR(\red\cF{}_1)\mod 2$ defined by \eqref{eqn:Tdualityrelspinrep} with respect to the splitting of $R_\cB$ given in \eqref{eqn:splittingforRB}.
Let $\sfD$ be the canonical Dirac generating operator on $\sfS_R$ given by \eqref{eqn:canonicalDGO}. Choose $\lambda\in \det(\ccG^\star_1)$ so that the pure spinor line bundle $\Omega^\lambda$ generated by $\lambda$ is $D_1$-invariant, i.e. $\pounds_{\red X} \lambda \in \Omega^\lambda$ for all $\red X\in \frk{k}_{\tau_1}$. 

Then $$\sfD = \de_{\red H{}_1}\times(-1)^j\, \de_{\red H{}_2} \ . $$  In particular, $\sfS_R$ is a $(\de_{\red H{}_1}, \de_{\red H{}_2})$-DGO relation.
\end{theorem}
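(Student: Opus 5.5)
We will compare the two odd operators $\sfD$ and $\sfD' := \de_{\red H{}_1}\times(-1)^j\,\de_{\red H{}_2}$ on $\sfGamma(\sfS_R)$. The plan is to show first that $\sfD'$ is itself a Dirac generating operator for $R_\cB$ acting on $\sfS_R$, and then to argue that it agrees with the canonical one $\sfD$. For the first point we use that $(\de_{\red H{}_1},\de_{\red H{}_2})$ generate the Dorfman brackets of $\IT\cQ_1$ and $\IT\cQ_2$ through \eqref{eqn:Dorfderived}. We also use that the bracket of $R_\cB$ was defined componentwise from these brackets in Proposition~\ref{prop:regCAonB}. Then $[[\sfD',r],r'] = \cbrak{r,r'}_{R_\cB}$ and $[\sfD',f] = \rho_{R_\cB}^*\de f$ for $r,r'\in\sfGamma(R_\cB)$ and $f\in C^\infty(\cB)$, and moreover $(\sfD')^2 = 0$. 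The sign $(-1)^j$ is exactly the one needed for the derived brackets to be computed with the parity-$j$ action \eqref{eqn:productcliffordaction}. The nontrivial input is that $\sfD'$ actually preserves the subbundle $\sfS_R=\midwedge^\bullet T^*\cB\otimes\sfS_\ccG$, pulled back by $\pi_1^*\times\pi_2^*$. For this we write a section as $\omega\otimes g_{1}\cdots g_{p}\cdot u$ with $g_a\in\sfGamma(L^*)$ and $u=(\lambda,1)$. Commuting $\sfD'$ past the Clifford generators produces only Dorfman brackets, which lie in $R$ and hence act within $\sfS_R$. It therefore suffices to check that $\sfD'u\in\sfGamma(\sfS_R)$.

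This reduction to $\sfD'u$ is the main obstacle, and it is where the $D_1$-invariance of $\Omega^\lambda$ enters. The second component of $\sfD' u$ is $(-1)^j\de_{\red H{}_2}1=(-1)^j\red H{}_2$. The first is $\de_{\red H{}_1}\lambda=\de\lambda+\red H{}_1\wedge\lambda$. We decompose $\de\lambda$ using the horizontal lifts $\dot X$ and the frame of $\ccG_1^\star$. Its vertical part is controlled by $\pounds_{\red X}\lambda\in\Omega^\lambda$ for $\red X\in\frk{k}_{\tau_1}$: together with Cartan's formula, this shows that $\de\lambda$ has the form $\alpha\wedge\lambda$ plus terms built from $\de$ of the frame one-forms. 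Those terms are in turn expressed through $W$, i.e.\ through the curvature-type components $\rmp_\ccG\cbrak{\widehat X,\widehat Y}$. Using $\red H{}_1-\red H{}_2$ and the fact that $(\dot X,\ddot X)\in R$, we expect $\sfD'u$ to equal
\begin{align}
-\,\cH\otimes u+\overline W(1\otimes u)+\tfrac14\,C\cdot u+(\text{connection term})\cdot u \ ,
\end{align}
which lies in $\sfS_R$. The circle case of Example~\ref{ex:DGOcircle} serves as a check: there $\de\theta_i=c_i$ and $\red H{}_i=\pi_i^*h+c_{3-i}\wedge\theta_i$ reproduce \eqref{eqn:wbarincirclecase}.

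Next we identify $\sfD'$ with \eqref{eqn:canonicalDGO} term by term. Since $\sfD'$ is a Dirac generating operator compatible with the same bracket, $\sfD-\sfD'$ commutes with Clifford multiplication by sections of $R_\cB$ and with multiplication by functions, up to first order. By the uniqueness statement for Dirac generating operators (Alekseev--Xu), two generating operators for the same Courant algebroid on the same spinor bundle differ by Clifford multiplication by a section of $R_\cB$, i.e.\ by a choice of connection on the weight line. The canonical operator of \cite[Theorem~61]{Cortes:2019roa} fixes this ambiguity by requiring the operator to act on the weighted bundle $\cS_R$ and to be formally skew with respect to the spinor pairing.

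It then remains to check that $\sfD'$ satisfies the same normalisation. Here we use the computation of $\sfD'u$ above with the connection $\nabla^{\sfS_\ccG}$ induced by $\pounds_{\dot X}$ on $\lambda$. Invariance of $\Omega^\lambda$ guarantees that this connection is compatible with $\nabla$ and has no extra trace term. The trace term is precisely the residual $\ccG$-divergence that would otherwise distinguish $\sfD'$ from $\sfD$. Once this vanishes, $\sfD=\sfD'$. The DGO-relation statement then follows immediately, since $\sfD$ preserves $\sfS_R$.
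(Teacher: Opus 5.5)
There is a genuine gap. Neither of the two facts that carry the content of the theorem is actually established, and the uniqueness argument relocates the computation rather than replacing it.

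First, the assertion that $\sfD'=\de_{\red H{}_1}\times(-1)^j\,\de_{\red H{}_2}$ preserves $\sfS_R$ is already the ``in particular'' conclusion. Your reduction of it to the single check $\sfD'u\in\sfGamma(\sfS_R)$ is not valid. A single commutator $[\sfD',g]$ with a section $g=(X_1+\xi_1,X_2+\xi_2)$ of $R_\cB$ is not Clifford multiplication by a Dorfman bracket. Componentwise, up to the parity sign, it is the first-order operator $\pounds_{X_i}+(\de\xi_i+\iota_{X_i}\red H{}_i)\wedge$; only the double commutators $[[\sfD',g],g']$ produce brackets. So you would also need $[\sfD',g]\,u\in\sfGamma(\sfS_R)$ for all $g$, which involves two kinds of Lie derivative of $\lambda$:
\begin{itemize}
\item[$\Diamond$] $\pounds_{\dot X}\lambda$ along the lifts $\widehat X=(\dot X,\ddot X)$ of vector fields on $\cB$. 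This is not controlled by $D_1$-invariance at all; it is what the $\nabla^{\sfS_\ccG}$ and $\overline W$ terms of \eqref{eqn:canonicalDGO} absorb.
\item[$\Diamond$] $\pounds_{\dot\ell}\lambda$ along the leaves of $\red\cF{}_1$, which is where $D_1$-invariance is genuinely used.
\end{itemize}
Even for $\sfD'u$ itself you only state what you ``expect'' it to equal. That identity is exactly the theorem evaluated on $u$, and it is the part that has to be computed.

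Second, even granting that $\sfD'$ is a Dirac generating operator on $\sfS_R$ for the same Courant algebroid, the argument does not force $\sfD=\sfD'$. The difference is Clifford multiplication by some $v\in\sfGamma(R_\cB)$ with $\cbrak{v,\,\cdot\,}_{R_\cB}=0$. Such $v$ exist in abundance (every closed one-form $a$ on $\cB$ gives one via $(\pi_1^*a,\pi_2^*a)$), so $v=0$ must be proved.

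Moreover, the $\sfD$ of the statement is the explicit formula \eqref{eqn:canonicalDGO} on the unweighted bundle \eqref{eqn:Tdualityrelspinrep}, built from a connection $\nabla^{\sfS_\ccG}$ compatible with $\nabla$. Compatibility fixes this connection only up to adding a one-form $a$ on $\cB$, which shifts $\sfD$ by wedging with $a$; this is precisely the residual ambiguity. Your appeal to a weighted-bundle/skewness normalisation does not address it. You never say how $\sfD'$ acts on $\cS_R$ or verify skewness, and the sentence ``once this trace term vanishes, $\sfD=\sfD'$'' asserts the conclusion.

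To pin down $v$ you must fix the connection (the paper takes $\nabla^{\sfS_\ccG}u=0$). You must then evaluate $\sfD-\sfD'$ both on $u$ and on $\ell\cdot u$ for $\ell$ spanning $L^*$, since $v\cdot u=0$ only says that $v$ lies in the null space of $u$. That is exactly the paper's proof: a direct term-by-term evaluation of \eqref{eqn:canonicalDGO} on $\widehat\omega\otimes\ell\cdot u$ with $\widehat\ell\in\frk{k}_{\tau_1}$.
\begin{itemize}
\item[$\Diamond$] $D_1$-invariance gives $\iota_{\dot X_i}(\pounds_{\dot\ell}\lambda)=0$, so the connection term drops out.
\item[$\Diamond$] The $\cH$, $\overline W$ and $C$ terms recombine into $-\red H{}_i\wedge$ together with $(-1)^{|\omega|}\,\omega\otimes\de(\iota_{\dot\ell}\lambda)$ and $(-1)^{|\omega|}\,\omega\otimes\de\ddot\ell$.
\item[$\Diamond$] The result is extended to all of $\sfS_R$ by the Leibniz rule.
\end{itemize}
Once that computation is done, the abstract uniqueness step is superfluous.
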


\begin{proof}
    Let $\widehat{\omega}=(\pi_1^*\omega, \pi_2^*\omega)$ and $\widehat{\mathsf{s}} = (\dot\ell,\ddot\ell\,) \cdot u=(\iota_{\dot\ell}\lambda,\ddot\ell\,)$, where $u = (\lambda,1)$  is $D_1$-invariant. In the following calculations, we omit the parity $j$ of the $R$-Clifford relation for brevity. 
    We consider each term in Equation~\eqref{eqn:canonicalDGO} individually: The first term simply becomes
    \begin{align}
        (\de\omega)\otimes\sfs = (\de\,\pi_1^*\omega\otimes\iota_{\dot\ell}\lambda,\de\,\pi_2^*\omega\otimes\ddot\ell\,) \ .
    \end{align}
    For the second term, take a $T\cB$-connection $\nabla^{\sfS_\ccG}$ on $\sfS_\ccG$ for which $u=(\lambda,1)$ is parallel:
    \begin{align}
        \nabla^{\sfS_\ccG}u = \nabla^{\sfS_\ccG}(\lambda,1) = (0,0) \ ,
    \end{align}
    and extend to the whole of $\sfS_\ccG$ by compatibility with $\nabla$.

    Let $\widehat X_i = (\dot X_i, \ddot X_i)$ and $\widehat \alpha_i=(\pi_1^*\alpha_i,\pi_2^*\alpha_i)$, with $i=1,\ldots,\dim_\IR\cB$. Let $n = \dim_\IR \red\cF{}_1 = {\rm rk}_\IR(L)$. Choose the basis $(g_\mu)$ for $\ccG=L\oplus L^*$ such that $(g_\mu)_{1\leqslant \mu\leqslant n}\subset L$ and $(g_\mu)_{n< \mu\leqslant 2n}\subset L^*$; then the dual basis $(\tilde g_\mu)$ is seen as a subset of $\ccG=L^* \oplus L$. Let $\widehat g_\mu=(\dot g_\mu,\ddot g_\mu)$ and \smash{$\widehat {\tilde g}_\mu=(\dot {\tilde g}_\mu,\ddot{\tilde  g}_\mu)$}. Let $\ell\in L^*$ such that \smash{$\widehat\ell = (\dot\ell,\ddot\ell) \in \frk{k}_{\tau_1}$}, and set $\sfs = \ell\cdot u$.
    Then
    \begin{equation}
    \begin{aligned}
        \nabla\ell &= \sum_i\, \alpha_i \wedge \nabla_{X_i}\ell \\[4pt]
        &= \sum_{i}\, \widehat \alpha_i  \wedge \rmp_\ccG\big(\cbrak{\dot X_i, \dot\ell\,}_1, \cbrak{\ddot X_i, \ddot\ell\, }_2\big)\\[4pt]
        &= \sum_i\,\widehat \alpha_i \wedge\Big( \sum_{\mu \leqslant n}\,  \big(\dot {\tilde g}_\mu\, (\iota_{[\dot X_i, \dot\ell\,]} \dot g_\mu),\, \ddot {\tilde g}_\mu \, (\iota_{\ddot g_\mu} \, \pounds_{\ddot X_i} \ddot\ell\,)\big) + \sum_{\mu >n}\,  \big( (\iota_{\dot g_\mu}\,\iota_{\dot X_i}\, \iota_{\dot\ell}\, \red H{}_1)\,\dot {\tilde g}_\mu,\, 0\big)\Big) \ .
    \end{aligned}
    \end{equation}

    Hence
    \begin{equation}
    \begin{aligned}
        \nabla^{\sfS_\ccG} \sfs &= \sum_i\, \alpha_i \wedge\nabla^{\sfS_\ccG}_{X_i}\sfs \\[4pt]
        &= \sum_i\,\alpha_i \wedge \nabla_{X_i}\ell \cdot u \\[4pt]
        &= \sum_i\,\widehat \alpha_i \wedge\Big(\sum_{\mu \leqslant n}\,  \big(\iota_{[\dot X_i, \dot\ell\,]} \dot g_\mu\, \iota_{\dot g_\mu}\lambda,\, \ddot {\tilde g}_\mu \, (\iota_{\ddot g_\mu}\, \pounds_{\ddot X_i} \ddot\ell\,)\big) + \sum_{\mu >n}\, \big((\iota_{\dot g_\mu}\,\iota_{\dot X_i}\, \iota_{\dot\ell}\, \red H{}_1)\, \dot {\tilde g}_\mu\wedge \lambda\, ,\, 0\big)\Big) \\[4pt]
        &= \sum_{\mu \leqslant n,i}\, \widehat \alpha_i \wedge \big(\iota_{[\dot X_i, \dot\ell\,]} \lambda, \, \ddot {\tilde g}_\mu \, (\iota_{\ddot g_\mu}\,\pounds_{\ddot X_i} \ddot\ell\,)\big) \\[4pt]
        &= \sum_{\mu \leqslant n,i}\, \widehat \alpha_i \wedge \big(\iota_{\dot X_i}(\pounds_{\dot\ell} \lambda), \, \ddot {\tilde g}_\mu \, (\iota_{\ddot g_\mu}\pounds_{\ddot X_i} \ddot\ell\,)\big) \ .
    \end{aligned}
    \end{equation}
The term involving $\red H{}_1$ here vanishes since $\lambda \in \det \ccG_1^\star$ and hence $\dot {\tilde g}_\mu \wedge \lambda = 0$ for all $\mu>n$. Since $\Omega^\lambda$ is $D_1$-invariant, it follows that $\iota_{\dot X_i} (\pounds_{\dot\ell} \lambda) = 0$ and so $\nabla^{\sfS_\ccG} \sfs = 0$ for $\sfs = \ell\cdot u$, because $R$ is a geometric T-duality relation. Thus the second term in \eqref{eqn:canonicalDGO} vanishes.

The third term is
\begin{equation}
\begin{aligned}
    & -(\cH \wedge \omega) \otimes \sfs \\[4pt]
    & \hspace{1cm} = -\frac 1{3!}\,\sum_{i,j,k}\,\big((\red H{}_1)_{ijk}\, \pi_1^*\alpha_i \wedge \pi_1^*\alpha_j \wedge \pi_1^*\alpha_k \wedge \pi_1^*\omega,\, (\red H{}_2)_{ijk}\, \pi_2^*\alpha_i \wedge \pi_2^*\alpha_j \wedge \pi_2^*\alpha_k \wedge \pi_2^*\omega \big) \otimes \widehat\sfs\\[4pt]
    & \hspace{4cm}=-(\dot h \wedge \pi_1^*\omega\otimes\iota_{\dot\ell}\lambda, \, \ddot h \wedge \pi_2^*\omega\otimes\ddot\ell\,) \ .
\end{aligned}
\end{equation}

The fourth term is calculated as follows. Projecting with $\rmp_2$ we get
\begin{align}
\begin{split}
    \rmp_2\overline{W}(\omega \otimes \sfs) &= \frac12\,\rmp_2\,\sum_{i,j,\mu} \, \ip{\reallywidehat{W(X_i,X_j)}\,,\, (\dot g_\mu, \ddot g_\mu)}_{R_\cB} \ (\pi_1^*\alpha_i,\pi_2^*\alpha_i)\wedge (\pi_1^*\alpha_j, \pi_2^*\alpha_j) \wedge( \pi_1^*\omega,\pi_2^*\omega) \\[-10pt]
    &\hspace{8cm}\otimes (\dot {\tilde g}_\mu, \ddot{\tilde g}_\mu) \cdot (\dot\ell, \ddot\ell\,) \cdot (\lambda,1)\\[4pt]
    &=\frac 12 \, \sum_{i,j,\mu}\, \ip{\llbracket \ddot X_i,\ddot X_j\rrbracket_2\,,\,\ddot g_\mu}_2 \ \pi_2^*\alpha_i\wedge \pi_2^*\alpha_j \wedge \pi_2^*\omega \otimes \ddot{\tilde g}_\mu \cdot \ddot\ell\\[4pt]
    &= \frac 12 \, \sum_{\mu \leqslant n,i,j}\, (\red H{}_2)_{ij\mu}\, \pi_2^*\alpha_i \wedge \pi_2^*\alpha_j \wedge \pi_2^*\omega  \otimes \ddot{\tilde g}_\mu \wedge \ddot\ell \\
    & \qquad \, + \frac 12 \, \sum_{\mu >n,i,j}\, \ddot g_\mu([\ddot X_i,\ddot X_j]) \, \pi_2^*\alpha_i \wedge \pi_2^*\alpha_j \wedge \pi_2^*\omega \otimes \iota_{\ddot{\tilde g}_\mu} \ddot\ell \\[4pt]
    &= \frac 12 \, \sum_{\mu \leqslant n,i,j}\, (\red H{}_2)_{ij\mu}\, \pi_2^*\alpha_i \wedge \pi_2^*\alpha_j \wedge \pi_2^*\omega  \otimes \ddot{\tilde g}_\mu \wedge \ddot\ell \\
    & \qquad \, + \frac 12 \, \sum_{i,j}\, \pi_2^*\alpha_i \wedge \pi_2^*\alpha_j \wedge \pi_2^*\omega \otimes \ddot\ell([\ddot X_i, \ddot X_j])\\[4pt]
    &= \frac 12 \, \sum_{\mu \leqslant n,i,j}\, (\red H{}_2)_{ij\mu} \, \pi_2^*\alpha_i \wedge \pi_2^*\alpha_j \wedge \pi_2^*\omega  \otimes \ddot{\tilde g}_\mu \wedge \ddot\ell \\
    & \qquad \, - \frac 12 \, \sum_{i,j}\, \pi_2^*\alpha_i \wedge \pi_2^*\alpha_j \wedge \pi_2^*\omega \otimes \de\ddot\ell(\ddot X_i,\ddot X_j) \ .
\end{split}
\end{align}

Similarly, projecting with $\rmp_1$ yields
\begin{align}
    \rmp_1 \overline{W}(\omega \otimes \sfs) 
    &= \frac 12 \, \sum_{i,j,\mu}\, \ip{\llbracket \dot X_i, \dot X_j \rrbracket_1\,,\, \dot g_\mu}_1 \ \pi_1^*\alpha_i \wedge \pi_1^*\alpha_j \wedge \pi_1^*\omega \otimes \dot{\tilde g}_\mu \cdot \iota_{\dot\ell} \lambda \\[4pt]
    &= \frac 12 \, \sum_{\mu>n,i,j}\, (\red H{}_1)_{ij\mu}\, \pi_1^*\alpha_i \wedge \pi_1^*\alpha_j \wedge \pi_1^*\omega \otimes \dot{\tilde g}_\mu \wedge \iota_{\dot\ell} \lambda \\
    & \qquad \, + \frac 12 \, \sum_{\mu \leqslant n,i,j}\, \dot g_\mu([\dot X_i,\dot X_j])\, \pi_1^*\alpha_i \wedge \pi_1^*\alpha_j \wedge \pi_1^*\omega \otimes \dot{\tilde g}_\mu \cdot \iota_{\dot\ell}\lambda \\[4pt]
    &= \frac 12 \, \sum_{\mu>n,i,j}\, (\red H{}_1)_{ij\mu}\, \pi_1^*\alpha_i \wedge \pi_1^*\alpha_j \wedge \pi_1^*\omega \otimes \dot{\tilde g}_\mu \wedge \iota_{\dot\ell} \lambda \\
    & \qquad \, - \frac 12 \, \sum_{i,j}\, \pi_1^*\alpha_i \wedge \pi_1^*\alpha_j \wedge \pi_1^*\omega \otimes \de(\iota_{\dot\ell}\lambda)(\dot X_i,\dot X_j) \ . 
\end{align}

It follows that the third, fourth and fifth terms combine to give
\begin{equation}
\begin{aligned}
    &-(\cH \wedge \omega) \otimes \sfs + (-1)^{|\omega|+1}\, \overline{W}(\omega \otimes \sfs) +\tfrac 14\, (-1)^{|\omega|+1} \, \omega \otimes C\cdot\sfs  \\[4pt]
    & \hspace{2cm} = \big(-\red H{}_1\wedge \pi_1^*\omega \otimes \iota_{\dot\ell}\lambda +(-1)^{|\omega|}\, \pi_1^*\omega \otimes \de\, \iota_{\dot\ell}\lambda\,,\, -\red H{}_2 \wedge \pi_2^*\omega \otimes \ddot\ell + (-1)^{|\omega|}\,\pi_2^*\omega \otimes \de \ddot\ell\,\big) \ .
\end{aligned}
\end{equation}
Hence
\begin{equation}
\begin{aligned}
    \sfD(\omega \otimes \sfs) &= \big(\de\, \pi_1^*\omega \otimes \iota_{\dot\ell}\lambda + (-1)^{|\omega|}\, \pi_1^*\omega \otimes \de\, \iota_{\dot\ell}\lambda - \red H{}_1 \wedge \pi_1^*\omega \otimes \iota_{\dot\ell}\lambda\ ,\\ 
    &\hspace{6cm}\de\, \pi_2^*\omega \otimes \ddot\ell + (-1)^{|\omega|}\, \pi_2^*\omega\otimes \de \ddot\ell -\red H{}_2 \wedge \pi_2^*\omega \otimes \ddot\ell\,\big)\\[4pt]
    &= (\de_{\red H{}_1} \times \de_{\red H{}_2})( \widehat\omega \otimes \widehat\sfs\,) \ .
\end{aligned}
\end{equation}
Since $\frk{k}_{\tau_1}$ spans $\rmp_1(L^*)$ pointwise, this can be extended to the whole of $L^*$, which by the Leibniz rule can then be extended to the whole of $\sfS_\ccG$ and hence $\sfS_R$.
\end{proof}

\begin{example}[\textbf{Torus Bundles}]
    Continuing from Example \ref{ex:FourierMukai}, the fact that $\sfS_R$ is a $(\de_{\red H{}_1}, \de_{\red H{}_2})$-DGO relation is equivalent to the fact that \smash{$\de_{\red H{}_2} \varrho(\omega) = (-1)^k\,\varrho(\de_{\red H{}_1} \omega)$} for all \smash{$\omega\in\sfOmega_{\sfT^k}^\bullet(\cQ_1)$}. The existence of $\lambda$ satisfying the assumptions of Theorem \ref{prop:canonicalDGOforR} is guaranteed here: if $\theta_1$ is a connection on the principal $\sfT^k$-bundle $\cQ_1 \to \cB$ and $(t_a)$ is a basis of the Lie algebra $\frk{t}^k$ of $\sfT^k$, then $\theta_1 = \sum_a\, \theta_1^a \otimes t_a$ where $\theta_1^a\in\sfOmega^1_{\sfT^k}(\cQ_1)$ for $a=1,\dots,k$. Then $\lambda = \theta_1^1 \wedge \cdots \wedge \theta_1^k$ is $D_1$-invariant.
\end{example}

\begin{example}[\textbf{Poisson-Lie T-Duality}]
    Let $\frg$ be a Lie algebra with a non-degenerate pairing, $\sfG$ a connected Lie group integrating $\frg$, and let $\sfH_1, \sfH_2\subset \sfG$ be subgroups whose Lie subalgebras $\frh_1, \frh_2\subset \frg$ are Lagrangian. There is then the T-duality relation $R \colon E_1 \rel E_2$ between exact Courant algebroids $E_1$ over $\sfG/\sfH_1$ and $E_2$ over $\sfG/ \sfH_2$~\cite{Vysoky2020hitchiker,deFraja2025Ricci}. If $R\colon E_1 \rel E_2$ is a generalised isometry, then the existence of an invariant volume form $\lambda$ on $\sfG/ \sfH_1$ is equivalent to $\frh_1$ being a unimodular Lie algebra. We may thus construct the $R$-Clifford relation $\sfS_{R}$. The unimodularity condition coincides with the existence of an invariant divergence on $\sfG/\sfH_1$ \cite[Example 2.5]{Severa:2018pag}.
\end{example}

\medskip

\subsection{T-Duality Relations and Type~II Supergravity}~\\[5pt]
\label{sub:TDualitySUGRA}
The bosonic field content of Type II supergravity on a manifold $M$ is given by a quadruple $(g,H, \phi, F)$ of a Riemannian metric $g$ on $M$, a closed three-form $H$ (the Kalb-Ramond flux), a smooth function $\phi\in C^\infty(M)$ (the dilaton), and an inhomogenous differential form $F\in \sfOmega^\bullet(M)$  of even or odd degree which is closed under the twisted differential $\de_H$ (the collection of Ramond-Ramond fluxes). The form $F$ is also required to satisfy an additional self-duality condition, which we discuss below. 

The equations of motion for Type II supergravity can then be formulated on the twisted standard Courant algebroid $(\IT M, H)$ equipped with the generalised metric $V_g^+$ corresponding to $g$, the divergence $\div_\phi$ corresponding to $\phi$, and the spinor bundle $\sfOmega^\bullet(M)$. They are given by
\begin{align}
    \gric_{g,\div_\phi}(u_+, v_-) &= \frac{\ii}{8\, \nu}\, (\!( u_+ \cdot F, v_- \cdot F)\!) \ , \quad \forall\, u_+ \in V_g^+ \ , \ v_-\in V_g^- \ , \label{eqn:typeII1}\\[4pt]
    \mathrm{GR}_{g,\div_\phi}&=0\ . \label{eqn:typeII2}
\end{align}
Here $(\!(\, \cdot \, , \, \cdot \,)\!)$ is the pairing on spinors discussed in Subsection~\ref{subsub:spinor_pairing} below, $\gric_{g,\div_\phi}$ is the generalised Ricci tensor associated to the pair $(V^+_g, \div_\phi)$ as in \cite[Definition 3.1]{Severa:2018pag}, and $\mathrm{GR}_{g,\div_\phi}$ is the generalised scalar curvature as in \cite[Definition 4.9]{Streets:2024rfo}.
Equations \eqref{eqn:typeII1} and \eqref{eqn:typeII2} are the equations of motion of the Type IIA theory if $\nu=-1$ and $F$ is of even degree, whereas when $\nu = \ii$ and $F$ is of odd degree they are the equations of motion of the Type IIB theory.

\medskip

\subsubsection{Pairing on Spinors}~\\[5pt]
\label{subsub:spinor_pairing}
Recalling Remark~\ref{rem:Mukai}, we define a pairing on the space $$\widetilde\cS= \sfS \otimes |\det \sfS^*|^{\frac12}$$ by
\begin{equation}
\begin{aligned}
    (\!(\,\cdot \, , \, \cdot \,)\!) &\colon \widetilde \cS \otimes \widetilde \cS \longrightarrow \IC\ , \qquad (\!(\sfs \otimes \mu , \sfs' \otimes \mu)\!) = \ip{(\vartheta\, \sfs \wedge \sfs')^{\rm top}, \mu^2} \ ,
\end{aligned}
\end{equation}
where here $\ip{\,\cdot\, , \,\cdot\,}$ denotes the contraction between $\det \sfS$ and $\det \sfS^*$, and $\vartheta$ is the operator defined by $$\vartheta\, \sfs = \ii^{\,|\sfs|}\,\sfs \ . $$

In the setting of Subsection~\ref{ssec:twisteddifferential} with the splitting \eqref{eqn:splittingforRB} of $R$, recall  that $\sfS_R = \midwedge^\bullet T^*\cB \otimes \sfS_\ccG$, where $\sfS_\ccG = \midwedge^\bullet L^* \cdot( \lambda, 1)$. Let
\begin{align}\label{eqn:anothercanonicalSR}
    \widetilde{\cS}_R  = \midwedge^\bullet T^*\cB \otimes |\det T\cB\,|^{\frac12} \otimes \midwedge^\bullet L^* \cdot (\lambda, 1) \otimes \left\vert\big(\det L^* \cdot (\lambda, 1)\big)^* \cdot (\lambda',1)\right\vert^{\frac12}\ ,
\end{align}
regarded as a subbundle $$\widetilde \cS_R \ \subset \ \big[\big((\midwedge^\bullet T^* \cQ_1) \otimes |\det T \cQ_1|^{\frac12}\big) \times \big((\midwedge^\bullet T^* \cQ_2) \otimes |\det T \cQ_2|^{\frac12}\big)\big]\big|_{\red C} \ , $$ 
where $\lambda'\in \det T\red \cF{}_1$ satisfies $\iota_{\lambda'} \, \lambda = 1$.
Since $L^*$ is chosen so that $\red\rho{}_2(\rmp_2(L^*)) = \set{0}$, it follows that $(\det L^* \cdot (\lambda ,1 ))^*$ is generated by $(1, {}^R \lambda)$ for ${}^R \lambda \in \det T\red \cF{}_2$ satisfying $(\lambda, {}^R\lambda)\in \Cl(R)$. We then write $$\widetilde \sfs_1 \approx_R \widetilde \sfs_2 \qquad \text{if} \quad (\widetilde \sfs_1 , \widetilde \sfs_2)  \in  \widetilde \cS_R \ . $$

\begin{proposition} \label{prop:pairingspinor}
    Let $R\colon\IT \cQ_1 \rel \IT \cQ_2$ be a T-duality relation with $R$-Clifford relation $\sfS_R$ given by Equation \eqref{eqn:Tdualityrelspinrep} and $\widetilde \cS_R$ as in Equation \eqref{eqn:anothercanonicalSR}. 
    If $\widetilde \sfs_1 \approx_R \widetilde \sfs_2$ and $\widetilde\sfs_1^{\,\prime} \approx_{R} \widetilde\sfs_2^{\,\prime}$, we write  $$\widetilde\sfs_i=\sfs_i\otimes\mu_i \qquad \text{and} \qquad \widetilde\sfs_i^{\,\prime}=\sfs_i'\otimes\mu_i$$ for $i=1,2$, where $(\sfs_1,\sfs_2),(\sfs_1',\sfs_2')\in\sfS_R$ and $\mu_i\in|\det T\cQ_i|^{\frac12}$. Then
    \begin{align}
        (\!(\widetilde\sfs_1, \widetilde\sfs_1^{\,\prime})\!)_1 = \ii^{\,n}\,(-1)^{n^2 + n\, |\sfs_1'|}\, (\!(\widetilde\sfs_2, \widetilde\sfs_2^{\,\prime})\!)_2
    \end{align}
    in $\IC$, where $n=\dim_\IR\red\cF{}_1$ and $(\!(\, \cdot \, , \, \cdot \,)\!)_i$ for $i=1,2$ are the pairings on spinors on each related Courant algebroid.
\end{proposition}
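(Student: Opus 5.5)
We reduce the statement to a computation with Mukai pairings of decomposable spinors, using the product structure $\sfS_R = \midwedge^\bullet T^*\cB \otimes \sfS_\ccG$. The base part contributes identically to both factors, so the entire discrepancy comes from $\sfS_\ccG$.

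\textbf{Reduction to generators.} By bilinearity it suffices to take
\[
\sfs = \omega\otimes \ell_1\cdots\ell_p\cdot(\lambda,1) \qquad\text{and}\qquad \sfs' = \omega'\otimes \ell'_1\cdots\ell'_q\cdot(\lambda,1),
\]
with $\omega,\omega'\in\midwedge^\bullet T^*\cB$ and $\ell_a,\ell'_b\in L^*$. Write $\widehat\ell_a=(\dot\ell_a,\ddot\ell_a)$. Recall that $L^*$ was chosen with $\red\rho{}_2(\rmp_2(L^*))=\set0$, so each $\ddot\ell_a$ is a one-form on $\cQ_2$. Then
\[
\sfs_1 = \pi_1^*\omega\wedge \iota_{\dot\ell_1}\cdots\iota_{\dot\ell_p}\lambda \qquad\text{and}\qquad \sfs_2 = \pm\,\pi_2^*\omega\wedge\ddot\ell_1\wedge\cdots\wedge\ddot\ell_p,
\]
where the sign comes from the parity $j=n\bmod 2$ in \eqref{eqn:productcliffordaction} and from the graded action on $\midwedge^\bullet T^*\cB\otimes\sfS_\ccG$. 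The same holds for the primed spinors.

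\textbf{Computing both pairings.} On $\cQ_1$ the top-degree part of $\vartheta\sfs_1\wedge\sfs_1'$ splits as a base top form on $\cB$ wedged with a fibre top form. The fibre factor is $(\iota_{\dot\ell}\lambda)\wedge(\iota_{\dot\ell'}\lambda)$, which is nonzero only when the contractions are complementary, in which case it is $\pm\lambda$. On $\cQ_2$ the corresponding fibre factor is $\ddot\ell\wedge\ddot\ell'$, a multiple of ${}^R\lambda$. We then contract with the squared half-densities. The normalisation of $\widetilde\cS_R$ in \eqref{eqn:anothercanonicalSR} uses $\lambda'$ with $\iota_{\lambda'}\lambda=1$ on the first side and the dual generator $(1,{}^R\lambda)$ on the second. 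It therefore makes $\ip{\lambda,\mu_1^2}$ and $\ip{{}^R\lambda,\mu_2^2}$ agree, up to the base density, which is common to both sides. The scalar coefficient in each case is the same determinant of the pairing $\ip{\dot\ell_a,\dot g_\mu}$ restricted to $L^*\times L$, because $(\dot\ell,\ddot\ell)\in R$ and $R$ is isotropic, so the pairings agree up to sign and powers of $\ii$.

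\textbf{Bookkeeping of signs, the main obstacle.} The hard part is tracking the exact factor $\ii^{\,n}(-1)^{n^2+n|\sfs_1'|}$, which enters through three sources.

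\begin{itemize}
\item Degrees shift under $\vartheta$. Since $|\sfs_1| = |\omega| + n - p$ while $|\sfs_2| = |\omega| + p$, the ratio $\ii^{|\sfs_1|}/\ii^{|\sfs_2|}$, combined with the constraint $p+q=n$, produces the factor $\ii^{\,n}$ together with a power of $-1$ depending on $p$.
\item The parity signs $(-1)^{j}$ from the Clifford action on the second factor contribute for each $\ell_a$, and also when moving $\pi_2^*\omega'$ past the fibre forms.
\item Reordering $\iota_{\dot\ell_1}\cdots\iota_{\dot\ell_p}\lambda\wedge\iota_{\dot\ell'_1}\cdots\lambda$ into $\lambda$, compared with $\ddot\ell\wedge\ddot\ell'$, gives a sign $(-1)^{pq + \binom{n}{2}\text{-type}}$.
\end{itemize}

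We expect these to collapse, using $p+q=n$ and $|\sfs_1'|\equiv|\omega'|+p\pmod 2$, to $(-1)^{n^2+n|\sfs_1'|}$. The plan is to first verify the sign in the case $n=1$ (the circle bundles of Example~\ref{ex:DGOcircle}) and then induct on $n$ by factoring $L^*$ one direction at a time. A pairing that vanishes on one side vanishes on the other by the complementarity argument, so the identity holds trivially in that case.
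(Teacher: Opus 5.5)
Your strategy is the paper's: reduce to generators $\omega\otimes\ell\cdot(\lambda,1)$, pull out the base pairing $\ip{\omega\wedge\omega',\mu_\cB^2}$ as a common factor, and compare the fibre scalars. Your way of comparing them is a fine substitute for the paper's step. Isotropy of $R$ gives $\iota_{\dot\ell_a}\dot g_\mu=\iota_{\ddot g_\mu}\ddot\ell_a$ for $g_\mu\in L$, so both fibre scalars are the same determinant up to ordering. The paper instead uses $\ell\cdot\ell'\cdot\lambda\sim_R(-1)^{n(n+1)/2}\,\iota_{{}^R\lambda}({}^R\ell\wedge{}^R\ell')$, together with the fact that $R$-related real scalars in $\Cl(\ccG_c)$ coincide.

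The gap is that the proposition \emph{is} the constant $\ii^{\,n}(-1)^{n^2+n|\sfs_1'|}$, and you never compute it. You list where signs arise, say you expect them to collapse, and defer to an $n=1$ check plus an induction on $n$ that is not set up. Splitting off a fibre direction changes $j=n\bmod 2$, and hence the realisation of $\sfS_R$. Moreover, the Koszul signs of any graded factorisation of $\sfS_\ccG$ and of the two pairings are exactly the signs you left open. The paper instead writes both pairings in closed form and simplifies their ratio. For example, $(\!(\widetilde\sfs_2,\widetilde\sfs_2^{\,\prime})\!)_2=\ii^{\,|\omega|+|\ell|}(-1)^{|\ell|\,|\omega'|}\ip{\omega\wedge\omega',\mu_\cB^2}\,\ip{{}^R\ell\wedge{}^R\ell',{}^R\lambda}$, with an analogous expression on $\cQ_1$.

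This bookkeeping is not a formality. With parity $j$, invariance under $\widehat\chi=(\pi_1^*\chi,\pi_2^*\chi)\in R$ forces the realisation $\omega\otimes(s_1,s_2)\mapsto\big(\pi_1^*\omega\wedge s_1,(-1)^{j|\omega|}\pi_2^*\omega\wedge s_2\big)$; this is the twist you allude to. In a non-zero pairing the two twists multiply to $(-1)^{j(|\omega|+|\omega'|)}=(-1)^{n\dim\cB}$, which has no counterpart in the target exponent.

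Concretely, take $n=1$ with fibre coordinate $\theta$, dual coordinate $\tilde\theta$, $\lambda=\de\theta$, $L^*$ spanned by $(\partial_\theta,\de\tilde\theta)$, and set $c=\ip{\de\tilde\theta,{}^R\lambda}$. If $\cB$ is a point, the elements $(\de\theta,1)$ and $(1,-\de\tilde\theta)$ of $\sfS_R$ give the ratio $(\!(\widetilde\sfs_1,\widetilde\sfs_1^{\,\prime})\!)_1/(\!(\widetilde\sfs_2,\widetilde\sfs_2^{\,\prime})\!)_2=-\ii/c$. If $\dim\cB=1$ with coordinate $x$, the elements $(\de x\wedge\de\theta,-\de x)$ and $(1,-\de\tilde\theta)$, with the same $|\sfs_1'|=0$, give $+\ii/c$.

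In general your three sign sources combine to $\ii^{\,n}(-1)^{n|\sfs_1'|+n\dim\cB}$, times a sign depending only on $n$ and on how ${}^R\lambda$ is normalised against $\lambda'$. That normalisation is also where your $\binom{n}{2}$-type reordering sign either cancels or survives. So, with the base densities paired identically on both sides, for odd $n$ a fixed normalisation reproduces $\ii^{\,n}(-1)^{n^2+n|\sfs_1'|}$ for only one parity of $\dim\cB$. This is harmless for the Type~II application, where $n=1$ and $\dim\cB=9$ are fixed.

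To close the gap you must fix explicitly the realisation of $\sfS_R$, the normalisation of ${}^R\lambda$, and the pairing of densities. Then carry out the computation and state the constant with whatever restriction it needs. The paper's displayed computation writes $\sfs_2=\pi_2^*\omega\otimes{}^R\ell$ without the twist, so you cannot simply import its exponent.
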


\begin{proof}
    Let $c\in \red C$. We can write $$\sfs_1 = \pi_1^*\omega \otimes(\ell \cdot \lambda) \quad ,  \quad \sfs_2 = \pi_2^*\omega \otimes {}^R \ell \quad , \quad \sfs_1' = \pi_1^*\omega' \otimes (\ell' \cdot \lambda) \quad , \quad \sfs_2' = \pi_2^*\omega' \otimes {}^R\ell' \ , $$ where $\omega,\omega'\in\midwedge^\bullet T^*\cB$ and $\ell,\ell',{}^R\ell,{}^R\ell'\in\midwedge^\bullet L^*$. We can also write $\mu_2 \in |\det T \cQ_2|^{\frac12}$ as $$\mu_2 = \ddot \mu_{\cB} \otimes \sqrt{{}^R \lambda}  \ , $$ for $\mu_\cB \in |\det T \cB\,|^{\frac12}$ and $\sqrt{{}^R \lambda} \in \rmp_2\big(|\det L^*\cdot (\lambda, 1)|^{\frac12}\big)$ such that $\big(\sqrt{{}^R \lambda}\,\big)^2 = {}^R\lambda$. Similarly, we write \smash{$\mu_1\in|\det T\cQ_1|^{\frac12}$} as $$\mu_1=\dot\mu_\cB\otimes\sqrt{\lambda'} \ . $$ Assume that $|\omega| + |\omega'| = \dim_\IR \cB$ and $|\ell|+|\ell'| = |{}^R \ell|+|{}^R\ell'| = \dim_\IR \red\cF{}_1 =: n$, so that the pairings are non-zero. 
    
    Then
    \begin{equation}
    \begin{aligned}
        (\!( \widetilde\sfs_2,\widetilde\sfs_2^{\,\prime} )\!)_2 &= (\!( \sfs_2 \otimes \ddot\mu_\cB\otimes\sqrt{{}^R\lambda}\,,\, \sfs_2' \otimes \ddot\mu_\cB\otimes\sqrt{{}^R\lambda} \,)\!)_2 \\[4pt]
        &= \ip{\ii^{\,|\omega| + |\ell|} \, (\pi_2^*\omega \otimes {}^R \ell) \wedge (\pi_2^* \omega' \otimes {}^R \ell') \,,\, \ddot\mu_\cB^2\otimes{}^R\lambda}\\[4pt]
        &=\ii^{\,|\omega| + |\ell|} \, (-1)^{|\ell|\,|\omega'|}\,\ip{\omega \wedge \omega',  \mu_\cB^2} \, \ip{{}^R \ell \wedge {}^R \ell', {}^R \lambda} \ .
    \end{aligned}
    \end{equation}
    Similarly
    \begin{equation}
        \begin{aligned}
      (\!( \widetilde\sfs_1,\widetilde\sfs_1^{\,\prime} )\!)_1 &=  (\!( \sfs_1 \otimes \dot \mu_\cB \otimes\sqrt{\lambda'}\,,\, \sfs_1' \otimes \dot \mu_\cB \otimes \sqrt{\lambda'}\, )\!)_1 \\[4pt] & = \ii^{\,|\omega| + n - |\ell|}\,\ip{(\pi_1^*\omega \otimes (\ell \cdot \lambda)) \wedge (\pi_1^*\omega' \otimes (\ell' \cdot \lambda)) \,,\, \dot \mu_\cB^2 \otimes \lambda'\,} \\[4pt]
            &= \ii^{\,|\omega| + n - |\ell|} \, (-1)^{(n-|\ell|)\,|\omega'|} \, \ip{(\pi_1^*\omega \wedge \pi_1^*\omega'\,) \otimes ((\ell \cdot \lambda) \wedge (\ell' \cdot \lambda))\,,\, \dot \mu_\cB^2 \otimes \lambda '\,}\\[4pt]
            &= \ii^{\,|\omega| + n - |\ell|} \, (-1)^{(n-|\ell|)\,|\omega'|} \, (-1)^{|\ell|\,(n-|\ell|)}\,\ip{\omega \wedge \omega' , \mu_\cB^2}\, \ip{(\ell \cdot \ell' \cdot \lambda)\,\lambda , \lambda '\,}\ .
        \end{aligned}
    \end{equation}
    
    By construction $$\ell \cdot \ell' \cdot \lambda \ \sim_R \ (-1)^{\frac{n\,(n+1)}2} \, \iota_{{}^R \lambda}\big({}^R \ell \wedge {}^R \ell'\,\big)$$ in $\Cl(\ccG_c)$. Since both sides of this $R$-relation are real numbers, and $(r_1,r_2) \in \Cl(\ccG_c)$ for real numbers $r_1,r_2\in \IR$  if and only if $r_1=r_2$, it follows that
    \begin{align}
        (\!(\widetilde \sfs_1, \widetilde\sfs_1^{\,\prime} )\!)_1 = a \, (\!(\widetilde \sfs_2, \widetilde\sfs_2^{\,\prime} )\!)_2
    \end{align}
    where
    \begin{equation}
    \begin{aligned}
        a &= \ii^{\,|\omega|+n-|\ell| - |\omega| - |\ell|} \, (-1)^{ n\,|\omega'| - |\ell|\,|\omega'| + n\,|\ell| - |\ell|^2 - |\ell|\,|\omega'|}\\[4pt]
        &=\ii^{\,n} \, (-1)^{|\ell| + n\,(|\sfs_1'| + |\ell| -n) + n\, |\ell| - |\ell|^2} =\ii^{\,n}\,(-1)^{n^2 + n\,|\sfs_1'|} \ ,
    \end{aligned}
    \end{equation}
    and the result follows.
\end{proof}

\medskip

\subsubsection{Self-Dual Spinors and T-Duality}~\\[5pt]
Let $R:(\IT\cQ_1,V_1^+)\rel(\IT\cQ_2,V_2^+)$ be a geometric T-duality relation with $R$-Clifford relation $\sfS_R$ of parity $j = n\mod 2$, where $n=\dim_\IR\red\cF{}_1=\dim_\IR\red\cF{}_2$. Let $d=\dim_\IR\cQ_1=\dim_\IR\cQ_2$.
Choose a basis $\set{r_1,\ldots,r_d}$ for $R^+$ such that $\set{e_1,\dots,e_d}:=\set{\rmp_1(r_1),\dots,\rmp_1(r_d)}$ is an orthonormal basis for $V_1^+$. Then $\set{f_1,\dots,f_d} := \set{\rmp_2(r_1),\dots, \rmp_2(r_d)}$ is an orthonormal basis for $V_2^+$. Define $$\cR_{V_1^+} = 2^{\frac d2} \ e_1 \cdot \,\cdots\,\cdot e_d \qquad  \text{and} \qquad \cR_{V_2^+} = 2^{\frac d2} \ f_1\cdot \,\cdots\, \cdot f_d \ .$$ Then
\begin{align}
    \big(\cR_{V_1^+}\cdot u_1\,,\, (-1)^{n\,d}\ \cR_{V_2^+} \cdot u_2\big)  \ \in \ \sfS_R \ ,
\end{align}
for all $(u_1,u_2) \in \sfS_R$.

Thus if $u_1$ satisfies the self-duality equation $$\cR_{V_1^+} \cdot u_1 = u_1 \ , $$ then
\begin{align}
    \big(\cR_{V_1^+}\cdot u_1\,,\, (-1)^{n\,d}\ \cR_{V_2^+} \cdot u_2\big) = \big(u_1\,,\, (-1)^{n\,d}\ \cR_{V_2^+}\cdot u_2\big) \ \in \sfS_R
    \end{align}
    implies the (anti-)self-duality equation
    \begin{align}
    (-1)^{n\,d}\ \cR_{V_2^+}\cdot u_2 = u_2 \ ,
\end{align}
by Corollary~\ref{cor:URinjectivesmooth}.
In Type II supergravity, $d=10$ and hence $u_2$ also satisfies the self-duality equation.

In the context of generalised geometry, we replace the space of spinors $\sfOmega^\bullet(\cQ_i)$ with the space $$\widetilde{\sfOmega}^\bullet(\cQ_i) = \sfOmega^\bullet(\cQ_i) \otimes \sfGamma\big(|\det T\cQ_i|^{\frac12}\big)$$ in Equations~\eqref{eqn:typeII1} and~\eqref{eqn:typeII2} with labels $i=A,B$, giving the \emph{modified Type II supergravity equations}, see~\cite[Section 7.1]{Severa:2018pag}.

\begin{proposition}\label{prop:TdualitytypeII}
    Let $(g_A,\red H{}_A,\phi_A, F_A)$ and $(g_B, \red H{}_B, \phi_B, F_B)$ be Type~II bosonic fields on T-dual principal circle bundles $\cQ_A\to\cB$ and $\cQ_B \to \cB$ over the same base $\cB$, respectively.  Let
    \begin{equation}
\begin{tikzcd}
 R  \colon (\IT\cQ_A, \red H{}_A, V^+_{g_A}, \div_{\phi_A}) \ar[r,dashed] &  (\IT \cQ_B, \red H{}_B, V^+_{g_B}, \div_{\phi_B})
 \end{tikzcd}
\end{equation} 
    be a geometric T-duality relation with $R$-Clifford relation $\sfS_R$ such that $$F_A \approx_R F_B \ . $$
    Then $(g_A,\red H{}_A,\phi_A, F_A)$ satisfies the equations of motion~\eqref{eqn:typeII1}--\eqref{eqn:typeII2} for $\nu = -1$ if and only if $(g_B,\red H{}_B,\phi_B, F_B)$ satisfies the equations of motion~\eqref{eqn:typeII1}--\eqref{eqn:typeII2} for $\nu = \ii$.
\end{proposition}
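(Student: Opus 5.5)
The proof has two halves, one for each equation of motion. The NS--NS part, namely the generalised Ricci tensor and the generalised scalar curvature, is transported by the geometric T-duality results of \cite{deFraja2025Ricci}. The Ramond--Ramond part is transported using the spinor machinery of this section.

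For circle bundles we have $n=\dim_\IR\red\cF{}_A=1$, so $\sfS_R$ has parity $j=1$. The relation $F_A\approx_R F_B$ then forces $F_B$ to have degree parity opposite to $F_A$, so IIA (even) is exchanged with IIB (odd). I check this first from the explicit description $\sfS_\ccG=\mathrm{Span}\set{(1,\theta_B),(\theta_A,1)}$ of Example~\ref{ex:DGOcircle}.

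The Bianchi identity is transported by Theorem~\ref{prop:canonicalDGOforR}. Since $\sfS_R$ is a $(\de_{\red H{}_A},\de_{\red H{}_B})$-DGO relation, $\de_{\red H{}_A}F_A\approx_R -\de_{\red H{}_B}F_B$. By Corollary~\ref{cor:URinjectivesmooth}, $\de_{\red H{}_A}F_A=0$ if and only if $\de_{\red H{}_B}F_B=0$. The half-density factors are carried along trivially, since $\lambda$ is $D_A$-invariant. Self-duality transfers by the discussion just above the proposition: with $d=10$, $\cR_{V_A^+}\cdot F_A=F_A$ if and only if $\cR_{V_B^+}\cdot F_B=F_B$.

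For the NS--NS side I use that a geometric T-duality relation compatible with the divergences, i.e.\ with $\div_{\phi_A}$ and $\div_{\phi_B}$ $R$-related, relates the generalised Ricci tensors. By \cite{deFraja2025Ricci}, if $(u_+,u'_+)\in R^+$ and $(v_-,v'_-)\in R^-$, then $\gric_{g_A,\div_{\phi_A}}(u_+,v_-)=\gric_{g_B,\div_{\phi_B}}(u'_+,v'_-)$ and $\mathrm{GR}_A=\mathrm{GR}_B$. This gives \eqref{eqn:typeII2} on one side if and only if it holds on the other. Since $R=R^+\oplus R^-$ has trivial kernel and cokernel, every pair $u_+,v_-$ on $\cQ_B$ arises this way, so it suffices to compare the right-hand sides of \eqref{eqn:typeII1}.

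For this comparison, $(u_+,u'_+)\in R$ acting on $(F_A,F_B)\in\sfS_R$ gives $u_+\cdot F_A\approx_R -u'_+\cdot F_B$, and likewise for $v_-$. Proposition~\ref{prop:pairingspinor} with $n=1$ then yields
\[
(\!(u_+\cdot F_A,v_-\cdot F_A)\!)_A=\ii\,(-1)^{1+|v_-\cdot F_A|}\,(\!(u'_+\cdot F_B,v'_-\cdot F_B)\!)_B ,
\]
where the two minus signs from parity $j=1$ cancel. With $F_A$ even, $|v_-\cdot F_A|$ is odd, so the factor is $\ii$. The IIA right-hand side $\tfrac{\ii}{8(-1)}(\!(\cdot,\cdot)\!)_A$ then equals $\tfrac{\ii}{8}\cdot\ii\cdot(-1)\,(\!(\cdot,\cdot)\!)_B=\tfrac{\ii}{8\,\ii}(\!(\cdot,\cdot)\!)_B$, which is exactly the IIB coefficient. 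The main obstacle is this sign bookkeeping: the degree $|v_-\cdot F_A|$ is only a parity for inhomogeneous forms, so I would work componentwise and check that the factor $\ii(-1)^{1+|\cdot|}$ is uniform across the components that pair nontrivially. A second difficulty is making sure the generalised Ricci statement of \cite{deFraja2025Ricci} applies with the dilaton divergences, which I would justify by the $D_A$-invariance of $\phi_A$.
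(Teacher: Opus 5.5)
Your proposal is correct and follows essentially the same route as the paper. The generalised Ricci tensor and scalar curvature are transported by \cite[Theorem~4.14]{deFraja2025Ricci}, the right-hand sides of \eqref{eqn:typeII1} are compared via Proposition~\ref{prop:pairingspinor} with $n=1$ (which gives the factor $\ii$ because $v_-\cdot F_A$ is odd), and self-duality and the Bianchi identity are carried across by the preceding discussion and the $(\de_{\red H{}_A},\de_{\red H{}_B})$-DGO relation. Your worry about inhomogeneous forms is harmless: the sign in Proposition~\ref{prop:pairingspinor} depends only on the parity of $|\sfs_1'|$, and that parity is the same for every component of $v_-\cdot F_A$.
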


\begin{proof}
    In the Type IIA theory, $F_A$ is of even degree and hence $|u_A{}_\pm \cdot F_A| = 1 \mod 2$ for all $u_A{}_\pm\in V_{g_A}^\pm$. Thus by Proposition~\ref{prop:pairingspinor} with $n=1$ we get
    \begin{align}
        (\!(u_A{}_+ \cdot F_A, v_A{}_- \cdot F_A)\!) = \ii \ (\!(u_B{}_+ \cdot F_B, v_B{}_- \cdot F_B)\!) \ .
    \end{align}
    Since $\sfS_R$ is degree shifting by $\pm 1$, $F_B$ is of odd degree.
    
    We know from \cite[Theorem 4.14]{deFraja2025Ricci} that the generalised Ricci tensors and generalised scalar curvatures are preserved by the T-duality relation $R$. 
    Thus if $(g_A,\red H{}_A,\phi_A, F_A)$ satisfies the Type IIA supergravity equations, then
    \begin{equation}
    \begin{aligned}
        \gric_{g_B,\div_{\phi_B}}(u_B{}_+,v_B{}_-) &= \gric_{g_A,\div_{\phi_A}}(u_A{}_+,v_A{}_-) \\[4pt]
        &= -\frac{\ii}{8}\,(\!(u_A{}_+ \cdot F_A, v_A{}_- \cdot F_A)\!) \\[4pt]
        &= \frac{1}{8}\, (\!(u_B{}_+ \cdot F_B, v_B{}_- \cdot F_B)\!) \ ,
    \end{aligned}
    \end{equation}
    and $\mathrm{GR}_{g_B,\div_{\phi_B}} = \mathrm{GR}_{g_A,\div_{\phi_A}}=0$, hence $(g_B, \red H{}_B, \phi_B, F_B)$ is a solution of the Type IIB supergravity equations.
    
    Conversely, if the Type IIB equations of motion are satisfied by $(g_B, \red H{}_B, \phi_B, F_B)$, then since $|u_B{}_\pm \cdot F_B| = 0 \mod 2$, we find
    \begin{equation}
    \begin{aligned}
        \gric_{g_A,\div_{\phi_A}}(u_A{}_+,v_A{}_-) &= \gric_{g_B,\div_{\phi_B}}(u_B{}_+,v_B{}_-) \\[4pt]
        &= \frac{1}{8}\,(\!(u_B{}_+ \cdot F_B, v_B{}_- \cdot F_B)\!) \\[4pt]
        &= \frac{1}{8\,\ii}\, (\!(u_A{}_+ \cdot F_A, v_A{}_- \cdot F_A)\!) \ ,
    \end{aligned}
    \end{equation}
    as required. 
    
    We have already seen that the self-duality equation is preserved. 
    Since $\sfS_R$ is a $(\de_{\red H{}_A}, \de_{\red H{}_B})$-DGO relation, it follows that $\de_{\red H{}_A} F_A = 0$ if and only if $\de_{\red H{}_B} F_B = 0$.
\end{proof}

\medskip

\section{Generalised Complex Relations}\label{sect:section6}

In this section we consider how generalised complex structures behave under Courant algebroid relations. We will see that a generalised complex relation has an induced grading, and we will write an explicit formula for the type change of generalised complex structures under geometric T-duality. 

\medskip

\subsection{Relating Generalised Complex Structures}~\\[5pt]
Suppose that $E_1$ and $E_2$ are Courant algebroids respectively endowed with generalised almost complex structures $\cJ_1$ and $\cJ_2.$ 
The product Courant algebroid $E_1 \times E_2$ is endowed with the generalised almost complex structure $\cJ_1 \times \cJ_2.$
As before, when working with generalised complex structures one introduces the complexified Courant algebroids in order to discuss the $\pm\, \ii$-eigenbundles, but we shall often suppress the complexification in the notation. All previous results concerning Dirac relations and spinor relations hold in the complex case too.

\begin{definition}\label{def:generalisedcomplexrel}
Let $R \colon E_1 \dashrightarrow E_2$ be a Courant algebroid relation supported on $C \subseteq M_1 \times M_2$. Let $\cJ_1$ and $\cJ_2$ be generalised almost complex structures on $E_1$ and $E_2$ respectively. 
Then $R$ is a \emph{generalised almost complex relation} if $$(\cJ_1\times\cJ_2)(R) = R \ . $$ If $\cJ_1$ and $\cJ_2$ are generalised complex structures, then $R$ is a \emph{generalised complex relation}.
\end{definition}

A first instance of this definition appeared in~\cite{Bailey:2023} with $C$ being the graph of a smooth map. 

\begin{example} \label{ex:gencomplexrelgraph}
Let $R= \gr(\Phi) \colon E_1 \rightarrowtail E_2$ be a classical Courant algebroid morphism supported on $\gr(\varphi) \subseteq M_1 \times M_2,$ for some vector bundle morphism $\Phi \colon E_1 \to E_2$ covering a smooth map $\varphi \colon M_1 \to M_2.$ Suppose that $E_1$ and $E_2$ are endowed with the generalised complex structures $\cJ_1$ and $\cJ_2,$ respectively. An element $(e_1, e_2) \in E_1 \times E_2$ is also in $R$ if and only if 
$(e_1, e_2 ) = (e_1, \Phi(e_1)) \in R$. Thus by Definition \ref{def:generalisedcomplexrel}
\begin{align}
    (\cJ_1\times\cJ_2)\big(e_1 , \Phi(e_1)\big) = \big(\cJ_1 (e_1), \cJ_2(\Phi(e_1))\big)
\end{align}
is again in $R$ if and only if
\begin{align}
    \cJ_2 \circ \Phi = \Phi \circ \cJ_1 \ ,
\end{align}
as expected for a well-defined notion of morphism of generalised complex structures. This condition is equivalent to 
\begin{align}
\Phi(L_1) \ \subseteq \ L_2 \ ,
\end{align}
where $L_1$ and $L_2$ are the $+\ii$-eigenbundles of $\cJ_1$ and $\cJ_2,$ respectively. If $\Phi$ is an isomorphism, this last condition becomes $\Phi(L_1)=L_2.$
\end{example}

\begin{example}\label{eg:symplectictypeiso}
    In the setting of Example~\ref{ex:gencomplexrelgraph}, let $\cJ_1$ be given by a symplectic structure $\omega$ on $M_1$ (cf. Example~\ref{ex:symplecticmanifolds}). Then $$L_1 = \set{X - \ii\, \iota_X \omega\ \vert\ X \in TM_1} \ . $$ With $\Phi = \e^B \circ \big( \varphi_*\oplus(\varphi^{-1})^*\big)$, it follows that
    \begin{equation}
    \begin{aligned}
        \Phi(L_1) &= \set{\varphi_* X + B( \varphi_* X) - (\varphi^{-1})^*(\ii\, \iota_X \omega)\ \vert\ X \in TM_1}\\[4pt]
        &= \set{Y + B(Y) - \ii\,\big(\iota_Y\, (\varphi^{-1})^* \omega) \ \vert\ Y\in TM_2} \ .
    \end{aligned}
    \end{equation}
    Thus $\cJ_2$ is a $B$-twist of the symplectic structure $(\varphi^{-1})^*\omega$ on $M_2$.
\end{example}

\begin{remark}\label{rem:kernelspreserved}
    If $R \colon (E_1, \cJ_1) \rel (E_2, \cJ_2)$ is a generalised complex relation and $(k,0) \in {\rm K}_R$, then $(\cJ_1(k),0) \in {\rm K}_R$ and hence $\cJ_1$ preserves $\ker(R)$. Similarly $\cJ_2$ preserves ${\rm coker}(R)$.
\end{remark}

\begin{definition} \label{def:splitGCR}
    A Courant algebroid relation $R \colon E_1 \rel E_2$ is a \emph{generalised almost complex relation} if at each point $c\in C$ there is a splitting
    \begin{align}\label{eqn:splitGCR}
        R^\IC_c = R^{+\ii}_c \oplus R^{-\ii}_c \ , 
    \end{align}
    where $$R^{+\ii}_c = R^\IC_c\cap (L_1 \times L_2)_c \qquad \text{and} \qquad R^{-\ii}_c = R^\IC_c \cap (\overline{L}_1 \times \overline{L}_2)_c \ . $$ Again we drop the adjective \enquote{almost} when $\cJ_1$ and $\cJ_2$ are generalised complex structures.
\end{definition}

\begin{remark}\label{rmk:ranksofRi}
    Since $R$ is a real vector bundle (so that $\overline{R^\IC} = R^\IC$), it follows that $\overline{R^{\pm\ii}} \subset R^{\mp\ii}$ and hence $R^{\mp\ii} \subset \overline{R^{\pm\ii}}$. Thus $\overline{R^{\pm\ii}} = R^{\mp\ii}$ and  hence $$\rk_\IC(R^{\pm\ii}) = \tfrac 12\,\rk_\IC(R^\IC) \ . $$
    In particular, the rank of a generalised almost complex relation $R$ is necessarily even.
\end{remark}

Similarly to \cite[Proposition 4.15]{DeFraja:2023fhe}, which establishes equivalence of two definitions of generalised isometries (cf. Definition~\ref{defn:generalisedisometry}), we can show

\begin{proposition} \label{prop:equivGCRdefs}
     Definitions \ref{def:generalisedcomplexrel} and \ref{def:splitGCR} are equivalent.
\end{proposition}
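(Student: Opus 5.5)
We prove the two implications pointwise at a fixed $c\in C$, using only linear algebra on the fibre $R_c\subset (E_1\times\overline{E}_2)_c$. Write $\cJ := \cJ_1\times\cJ_2$, a complex structure on $(E_1\times E_2)_c$, extended $\IC$-linearly to the complexification. Its $+\ii$- and $-\ii$-eigenspaces are $(L_1\times L_2)_c$ and $(\overline{L}_1\times\overline{L}_2)_c$ respectively.

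\emph{Definition~\ref{def:generalisedcomplexrel} implies Definition~\ref{def:splitGCR}.} Assume $\cJ(R)=R$. Then $\cJ$ restricts to an endomorphism of $R_c$ with $\cJ^2=-\unit$, and hence also to $R^\IC_c$. Decompose any $r\in R^\IC_c$ as
\begin{align}
r = \tfrac12\,(r-\ii\,\cJ r) + \tfrac12\,(r+\ii\,\cJ r) \ .
\end{align}
Both summands lie in $R^\IC_c$, because $\cJ r\in R^\IC_c$. The first summand is a $+\ii$-eigenvector of $\cJ$, so it lies in $R^\IC_c\cap(L_1\times L_2)_c = R^{+\ii}_c$; likewise the second lies in $R^{-\ii}_c$. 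These eigenspaces meet trivially, so $R^\IC_c=R^{+\ii}_c\oplus R^{-\ii}_c$.

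\emph{Definition~\ref{def:splitGCR} implies Definition~\ref{def:generalisedcomplexrel}.} Assume the splitting \eqref{eqn:splitGCR} holds. Then $\cJ$ acts on $R^{\pm\ii}_c$ by $\pm\ii$, so $\cJ(R^\IC_c)\subseteq R^\IC_c$. Since $\cJ$ is a real operator, it maps the real points $R_c$ of $R^\IC_c$ (those fixed by complex conjugation) into the real points of $\cJ(R^\IC_c)\subseteq R^\IC_c$, that is, into $R_c$. Finally $\cJ$ is invertible, so $\cJ(R_c)=R_c$ by a dimension count.

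Doing this at every $c\in C$ gives $\cJ(R)=R$ and the splitting respectively, so the two definitions agree. I do not expect a genuine obstacle. The only care needed is the passage between $R$ and $R^\IC$, namely the conjugation-invariance in the reverse direction; this is the same mechanism as in Remark~\ref{rmk:ranksofRi}. Smoothness or regularity of $R^{\pm\ii}$ as subbundles is not part of either definition, so no constant-rank argument is required; if desired, constant rank follows from Remark~\ref{rmk:ranksofRi}.
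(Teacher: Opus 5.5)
Your proof is correct and follows the paper's argument. The forward direction is exactly the paper's decomposition $r=r^{+\ii}+r^{-\ii}$ with $r^{\pm\ii}=\frac12\,(r\mp\ii\,\cJ(r))$ into eigenvectors of $\cJ=\cJ_1\times\cJ_2$, and for the converse, which the paper only says "easily follows", your argument via the reality of $\cJ$ (the conjugation-fixed points of $R^\IC_c$ being $R_c$) together with invertibility fills in the details correctly.
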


\begin{proof}
    Let $\cJ=\cJ_1\times\cJ_2$. If $\cJ(R^\IC_c) = R^\IC_c$ and $r\in R^\IC_c$, then $r^{\pm\ii}:=\frac12\,(r\mp\ii\,\cJ(r))\in R_c^{\pm\ii}$ with $r=r^{+\ii}+r^{-\ii}$, and hence $R_c^\IC \subset R^{+\ii}_c \oplus R^{-\ii}_c$. The converse easily follows. 
\end{proof}

From Proposition \ref{prop:integrabilityrelation} we immediately get

\begin{proposition}
    Let $R\colon(E_1, \cJ_1) \rel (E_2, \cJ_2)$ be a generalised almost complex relation, with $\rmp_2(R)\supset L_2$ (resp. $\rmp_1(R)\supset L_1$). If $\cJ_1$ (resp. $\cJ_2$) is a generalised complex structure (i.e. $L_1$ (resp. $L_2$) is integrable), then $\cJ_2$ (resp. $\cJ_1$) is a generalised complex structure, and hence $R$ is a generalised complex relation.
\end{proposition}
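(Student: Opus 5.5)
The plan is to deduce the statement directly from Proposition~\ref{prop:integrabilityrelation}, read in the complexified setting. The paper explicitly allows this, since all results on Dirac relations and spinor relations hold over $\IC$. Take $L_i \subset E_i^\IC$ to be the $+\ii$-eigenbundle of $\cJ_i$, and $\Omega_i$ its pure spinor line bundle. We are in the twisted standard case $E_i = (\IT M_i, H_i)$, which is where Proposition~\ref{prop:integrabilityrelation} is formulated. We also assume an $R$-Clifford relation $\sfS_R$ that is a $(\de_{H_1},\de_{H_2})$-DGO relation, as in that proposition.

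\textbf{Step 1: show $L_1\sim_R L_2$.} By Proposition~\ref{prop:equivGCRdefs} we have the splitting $R^\IC = R^{+\ii}\oplus R^{-\ii}$ with $R^{+\ii} = R^\IC\cap(L_1\times L_2)$. Let $(e_1,e_2)\in R|_{L_1}$ and write it as $r^{+\ii}+r^{-\ii}$. The first component of $r^{-\ii}$ lies in $\overline{L}_1\cap L_1 = \set{0}$, because $e_1\in L_1$ and the first component of $r^{+\ii}$ is in $L_1$. Hence $r^{-\ii}\in {\rm CK}_R$, and so
\begin{align}
R|_{L_1} = R\cap (L_1\times L_2) + {\rm CK}_R \ .
\end{align}
The same argument gives $R|_{L_2} = R\cap(L_1\times L_2)+{\rm K}_R$. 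Regularity follows because $R^{+\ii}$ has constant rank (Remark~\ref{rmk:ranksofRi}); that the kernel and cokernel parts are regular needs a brief check, using Remark~\ref{rem:kernelspreserved}.

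\textbf{Step 2: upgrade to $\Omega_1\sim_R\Omega_2$.} Since $\rmp_2(R)\supseteq L_2$ by hypothesis, the smooth version of Remark~\ref{rem:fullrankcase} (Remark~\ref{rem:smoothULequiv}) applies. For each $\psi_1\in\Omega_1$, the partner $\psi_2'$ with $(\psi_1,\psi_2')\in\sfS_R$ is annihilated by $L_2$, so $\psi_2'\in\Omega_2$. This gives $\Omega_1\sim_R\Omega_2$.

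\textbf{Step 3: transfer integrability.} Proposition~\ref{prop:integrabilityrelation} now yields that $L_1$ is integrable if and only if $L_2$ is. Hence integrability of $\cJ_1$ implies integrability of $\cJ_2$, and $R$ is a generalised complex relation. The case $\rmp_1(R)\supset L_1$ is symmetric: apply the same argument to $R^\top$.

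\textbf{Main obstacle.} The delicate point is Step 1's regularity, together with the hidden requirement that a DGO relation $\sfS_R$ exists. If the latter is not available, I would instead argue integrability directly. Take $\ell_2,\ell_2'\in\sfGamma(L_2)$ and lift them via $\rmp_2(R)\supseteq L_2$ to related sections $(\ell_1,\ell_2),(\ell_1',\ell_2')$ of $R^{+\ii}$, modulo cokernel; note the cokernel is $\cJ_2$-stable. Involutivity of $R$ gives $(\cbrak{\ell_1,\ell_1'},\cbrak{\ell_2,\ell_2'})\in R$ with first entry in $L_1$. The decomposition of Step 1 then places $\cbrak{\ell_2,\ell_2'}$ in $L_2$ plus the cokernel part. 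Controlling that cokernel contribution, which lies in $\overline L_2$, is the step I expect to be hardest. I would handle it via the $\cJ$-invariance of ${\rm CK}_R$ and projection onto the $-\ii$-part.
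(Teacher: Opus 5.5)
Your main route is the paper's. The paper states this result as an immediate consequence of Proposition~\ref{prop:integrabilityrelation}, under that proposition's implicit twisted-standard and $(\de_{H_1},\de_{H_2})$-DGO-relation hypotheses, which you rightly make explicit; your Steps~1--2 just supply the inputs $L_1\sim_R L_2$ (the content of Remark~\ref{rem:3rddefinition}) and $\Omega_1\sim_R\Omega_2$ (via Remark~\ref{rem:fullrankcase}) needed to invoke it. For your fallback, the cokernel term you worry about is absent: since $R$ is real, $\rmp_2(R)\supseteq L_2$ gives $\rmp_2(R)^\IC\supseteq L_2\oplus\overline{L}_2=E_2^\IC$, and for Lagrangian $R$ one has ${\rm coker}(R)=\rmp_2(R)^\perp=0$, so $R|_{L_1}=R\cap(L_1\times L_2)$ and $\cbrak{\ell_2,\ell_2'}$ lands directly in $L_2$ along $\rmp_2(C)$.
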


\begin{remark}\label{rem:3rddefinition}
    If $R \colon (E_1, L_1) \rel (E_2, L_2)$ is a generalised almost complex relation, then one automatically has $L_1 \sim_R L_2$ and $\overline{L}_1 \sim_R \overline L_2$. The converse is also true, because $\overline{R^\IC\rvert_{L_1}} = R^\IC\rvert_{\overline L_1}$ and hence
    \begin{align}
        R^\IC\big\rvert_{\overline L_1} = R^\IC\cap (\overline L_1 \times \overline{L}_2) + {\rm CK}_R
    \end{align}
    since $R$ is real, and noting that ${\rm CK}_R$ is also real. Thus there is a third equivalent definition: $R$ is a generalised almost complex relation if and only if $L_1 \sim_R L_2$ for the complex almost Dirac structures generated by $\cJ_1$ and $\cJ_2$. Therefore, by Proposition \ref{prop:LreliffUrelsmooth} we obtain
\end{remark}

\begin{proposition}\label{prop:UreliffJrel}
    Let $R\colon E_1 \rel E_2$ be a Courant algebroid relation, and let $\cJ_i$ be  generalised almost complex structures on $E_i$ with associated pure spinor line bundles $\Omega_i$ for $i=1,2$. If  $\Omega_1\sim_{R} \Omega_2$, then $R$ is a generalised almost complex relation.
\end{proposition}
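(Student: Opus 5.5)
The plan is to reduce the statement to Remark~\ref{rem:3rddefinition} together with Proposition~\ref{prop:LreliffUrelsmooth}, applied to the complexified data.

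First, complexify everything. Let $L_i\subset E_i^\IC$ be the $+\ii$-eigenbundle of $\cJ_i$, so that $\Omega_i$ is its pure spinor line bundle in $\sfS_{E_i}^\IC$. The relation $R^\IC$ is a complex Courant algebroid relation, and $\sfS_R^\IC$ is an $R^\IC$-Clifford relation. As noted at the start of the section, all earlier results on Dirac and spinor relations hold verbatim over $\IC$.

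Second, apply Proposition~\ref{prop:LreliffUrelsmooth} to the hypothesis $\Omega_1\sim_R\Omega_2$ to get $L_1\sim_R L_2$. Concretely, at each $c\in C$, pick $(\psi_1,\psi_2)\in\sfS_R$ spanning $\Omega_1\times\Omega_2$, and take $(\texttt{l},\texttt{w})\in R^\IC|_{L_1}$. Clifford multiplication then gives $(0,\pm\,\texttt{w}\cdot\psi_2)\in\sfS_R$. Proposition~\ref{prop:Uisomorphismsmooth} forces $\texttt{w}\cdot\psi_2=0$, so $\texttt{w}\in L_2$. The symmetric argument handles $R^\IC|_{L_2}$.

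Third, use complex conjugation, exactly as in Remark~\ref{rem:3rddefinition}. Since $R$, ${\rm K}_R$ and ${\rm CK}_R$ are real, conjugating the two Dirac relation equations for $L_1\sim_R L_2$ gives $\overline L_1\sim_R\overline L_2$. Because $L_i\cap\overline L_i=0$, every $r\in R_c^\IC$ with $r=(e_1,e_2)$ decomposes on each factor into its $L_i$ and $\overline L_i$ parts. One then shows that these parts are separately in $R$, which yields the splitting \eqref{eqn:splitGCR}. Proposition~\ref{prop:equivGCRdefs} then gives $(\cJ_1\times\cJ_2)(R)=R$.

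The main obstacle is this last step: passing from the two Dirac relations to the direct sum splitting in the presence of kernel and cokernel. The first thing to try is the argument of Remark~\ref{rem:3rddefinition}, namely the observation that ${\rm K}_R$ and ${\rm CK}_R$ are $\cJ$-stable (Remark~\ref{rem:kernelspreserved}) and real, so they split into $\pm\ii$ parts. After quotienting by ${\rm K}_R$ and ${\rm CK}_R$, $R$ becomes a graph of an isometry, which is then forced to intertwine $L_1$ with $L_2$ and $\overline L_1$ with $\overline L_2$, hence to commute with $\cJ$.

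A rank count supports this. $R^\IC\cap(L_1\times L_2)$ and its conjugate are isotropic and intersect trivially. Lagrangianity of $R$ should then give $\rk(R^{+\ii})=\tfrac12\rk R$, consistent with Remark~\ref{rmk:ranksofRi}, so their sum is all of $R^\IC$.
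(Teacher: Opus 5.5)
Your route is the same as the paper's: Proposition~\ref{prop:LreliffUrelsmooth} followed by Remark~\ref{rem:3rddefinition}. However, the step you flag as the main obstacle is a genuine gap, and neither of your proposed fixes closes it.

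Invoking Remark~\ref{rem:kernelspreserved} to get $\cJ$-stability of ${\rm K}_R$ and ${\rm CK}_R$ is circular, because that remark assumes $R$ is already a generalised complex relation. The rank count also has no basis. $R^{\pm\ii}$ are subspaces of the Lagrangian $R^\IC$, so their isotropy gives no lower bound on their rank, and Remark~\ref{rmk:ranksofRi} presupposes the splitting.

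In fact $L_1\sim_R L_2$ and $\overline L_1\sim_R\overline L_2$ do not imply the splitting once the kernel is nonzero. Take $E_1=E_2=(\IT\IR^2,0)$ with $\cJ_1=\cJ_2$ induced by $J\partial_1=\partial_2$, so that $L_i=\mathrm{Span}_\IC\{\partial_1+\ii\,\partial_2\,,\,\de x^1+\ii\,\de x^2\}$. Let $R$ be spanned along $C=\{x^2=y^2\}$ by the constant sections $(\partial_1,0)$, $(0,\partial_1)$, $(\partial_2,\partial_2)$ and $(\de x^2,\de x^2)$. It is Lagrangian, $\rho(R)=TC$, and these sections have vanishing brackets, so $R$ is a Courant algebroid relation; it is $Q(K)^\top\circ Q(K)$ for $K=\langle\partial_1\rangle$.

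One computes $R^\IC|_{L_1}=\{(\alpha\,(\partial_1+\ii\,\partial_2)\,,\,b\,\partial_1+\ii\,\alpha\,\partial_2)\}$. Here $R^\IC\cap(L_1\times L_2)$ is spanned by $(\partial_1+\ii\,\partial_2,\partial_1+\ii\,\partial_2)$ and ${\rm CK}_R$ by $(0,\partial_1)$. So $L_1\sim_R L_2$ holds, and by conjugation $\overline L_1\sim_R\overline L_2$. Yet $R^{+\ii}\oplus R^{-\ii}$ has rank $2$ inside the rank $4$ bundle $R^\IC$, and $(\cJ_1\times\cJ_2)(\partial_1,0)=(-\partial_2,0)\notin R$. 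So the ``converse'' asserted in Remark~\ref{rem:3rddefinition} is false as stated, and the proposition needs more from the spinor hypothesis than $L_1\sim_R L_2$.

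That extra input is already in your step 2. The Clifford argument gives $R^\IC|_{L_1}\subseteq L_1\times L_2$ outright, with no cokernel correction. Apply it to $(0,c)\in{\rm CK}^\IC_R\subseteq R^\IC|_{L_1}$: it gives $c\cdot\psi_2=0$, i.e. ${\rm coker}(R)^\IC\subseteq L_2$. Since ${\rm coker}(R)$ is real, it also lies in $\overline L_2$, so ${\rm coker}(R)=0$ because $L_2\cap\overline L_2=0$. Symmetrically $\ker(R)=0$. Equivalently, ${\rm K}_R+{\rm CK}_R$ is the radical of the quadratic form on $R$, and it acts trivially on any irreducible $\Cl(R)$-module.

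A Lagrangian $R_c$ with trivial kernel and cokernel is the graph of an orthogonal isomorphism $\phi_c\colon E_{1,c_1}\to E_{2,c_2}$. Then $\phi_c(L_1)\subseteq L_2$, and since $\phi_c$ is real, $\phi_c(\overline L_1)\subseteq\overline L_2$. Hence $\phi_c\,\cJ_1=\cJ_2\,\phi_c$, i.e. $(\cJ_1\times\cJ_2)(R)=R$.

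In the example above this is exactly what fails: $\ker(R)=\langle\partial_1\rangle\not\subseteq L_1$, so no $R$-Clifford relation can relate $\Omega_1$ and $\Omega_2$ there.
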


Let $(E_1, L_1)$ and $ (E_2, L_2)$ be Courant algebroids equipped with complex Lagrangian structures, and let $R \colon E_1 \rel E_2$ be  a Courant algebroid relation. Recalling the kernel and cokernel subbundles ${\rm K}_R$ and ${\rm CK}_R$ of $R$ from Definition~\ref{def:kerandcoker}, we define the subbundle
\begin{align}
    R^{+\ii}_c{}^\times = \big[\big(R^\IC_c\cap (L_1 \times L_2)_c + {\rm K}^\IC_{R}{}_c + {\rm CK}^\IC_{R}{}_c\big)\,\big/\,{\rm K}^\IC_{R}{}_c\big]\,\big/\,{\rm CK}^\IC_{R}{}_c
\end{align}
and similarly for $R^{-\ii}_c{}^\times$.

\begin{definition}
    A Courant algebroid relation $R \colon E_1 \rel E_2$ is a \emph{weak generalised almost complex relation} if at each point $c\in C$ there is a splitting
    \begin{align}
        R^\IC_c{}^\times = R^{+\ii}_c{}^\times \oplus R^{-\ii}_c{}^\times \ .
    \end{align}
\end{definition}

Similarly to Proposition~\ref{prop:strong=weak}, it follows that if $R$ is a generalised almost complex relation then it is necessarily a weak generalised almost complex relation.

The grading induced by a generalised complex structure from~\eqref{eqn:GCSgrading} similarly induces a grading of $R$-Clifford relations according to

\begin{proposition}
    Let $R \colon (E_1, \cJ_1) \rel (E_2, \cJ_2)$ be a generalised almost complex relation with $R$-Clifford relation $\sfS_R$. There is a grading
    \begin{align}\label{eqn:SRgrading}
        \cS_R^\IC = \cS_R^n \oplus \cS_R^{n-1} \oplus \cdots \oplus \cS^0_R
    \end{align}
    where $$\cS^k_R = \big((\Omega^k_1 \times \Omega_2^k) \otimes|\det \sfS_R^* |^{\frac{1}{2^n}}\big) \cap \cS^\IC_R \ , $$ with $\Omega_i^k$ as in~\eqref{eqn:GCSgrading} for $k\in\set{0,\ldots,n}$ and $i=1,2$.
\end{proposition}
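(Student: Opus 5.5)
The plan is to build the grading of $\cS_R^\IC$ from the two gradings \eqref{eqn:GCSgrading} of $\sfS_{E_1}^\IC$ and $\sfS_{E_2}^\IC$. The key tool is the splitting $R^\IC = R^{+\ii}\oplus R^{-\ii}$ of Definition~\ref{def:splitGCR}, which is equivalent to Definition~\ref{def:generalisedcomplexrel} by Proposition~\ref{prop:equivGCRdefs}. I work fibrewise at $c\in C$, tensor everything with the density line $|\det\sfS_R^*|^{\frac1{2^n}}$ (on which $\Cl(R)$ acts trivially), and suppress it.

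\textbf{Step 1: the top piece.} I first show that $\cS^n_R$ is a line. Its elements are the pairs in $\sfS^\IC_R$ lying in $\Omega_1\times\Omega_2$, i.e.\ pairs annihilated by $R^{+\ii}$, since $R^{+\ii}\subset L_1\times L_2$. Because $R^{+\ii}$ is a maximal isotropic subspace of $R^\IC$, the pure spinor line of $R^{+\ii}$ inside the irreducible $\Cl(R^\IC)$-module $\sfS_R^\IC$ is one-dimensional (Remark~\ref{rmk:spinorlines}). This pure spinor line is contained in $\Omega_1\times\Omega_2$ precisely when $R^{+\ii}$ surjects onto $L_1$ and $L_2$. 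If it does not, I intend to use the kernel and cokernel (which are $\cJ_i$-stable by Remark~\ref{rem:kernelspreserved}) together with the argument of Remark~\ref{rem:fullrankcase}. The conclusion I aim for is $\cS_R^n = \Omega_R:=$ the pure spinor line of $R^{+\ii}$.

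\textbf{Step 2: the lower pieces.} I define $\cS_R^{n-k}:=\midwedge^k R^{-\ii}\cdot\Omega_R$. Since $R^{-\ii}\subset\overline L_1\times\overline L_2$, the Clifford action \eqref{eqn:productcliffordaction} gives $\midwedge^k R^{-\ii}\cdot\Omega_R\subset \midwedge^k\overline L_1\cdot\Omega_1\times\midwedge^k\overline L_2\cdot\Omega_2=\Omega_1^{n-k}\times\Omega_2^{n-k}$, up to the parity sign, which does not affect membership. It is also contained in $\cS_R^\IC$ by invariance, so $\midwedge^k R^{-\ii}\cdot\Omega_R\subseteq\cS^{n-k}_R$.

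\textbf{Step 3: the direct sum decomposition.} Applying the standard fact \cite{gualtieri:tesi} to the transversal Lagrangian pair $R^{\pm\ii}$ in $R^\IC$ yields $\sfS_R^\IC=\bigoplus_k\midwedge^kR^{-\ii}\cdot\Omega_R$. The pieces $\cS^k_R$ are pairwise independent because $\Omega^k_i\cap\Omega^l_i=0$ for $k\neq l$. Combining this with the inclusions of Step 2 forces equality in each degree, which gives \eqref{eqn:SRgrading}.

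\textbf{Main obstacle.} The hard part is Step 1 when $R$ has nontrivial kernel or cokernel. There I expect to need $\rmp_i(R^{+\ii})\supseteq L_i$, or else to pass to the weak relation on $R^\times$. Along the way I must also reconcile the index $n=\tfrac12\rk E_1$ with half the rank of $R$, since the two gradings may a priori have different lengths.
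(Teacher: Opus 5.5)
Your proof has the same structure as the paper's. The top piece is the intersection with $\Omega_1\times\Omega_2$, the lower pieces are $\midwedge^k R^{-\ii}\cdot\cS^n_R$, and directness comes from the gradings of $\sfS_{E_1}$ and $\sfS_{E_2}$. When $\rmp_i(R^{+\ii})=L_i$ for $i=1,2$, your argument is complete. This holds for instance when ${\rm K}_R$ and ${\rm CK}_R$ are trivial: then $R$ is fibrewise the graph of an isometry intertwining $\cJ_1$ and $\cJ_2$, and the two gradings have the same length. In that case your Step~3, which decomposes the irreducible $\Cl(R^\IC)$-module along the transverse Lagrangians $R^{\pm\ii}$, justifies the final equality more explicitly than the paper's appeal to $\overline L_1\sim_R\overline L_2$.

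The gap is in your plan for the ``main obstacle''. The target $\cS^n_R=\Omega_R$ is false once $R$ has a kernel, so Remarks~\ref{rem:kernelspreserved} and~\ref{rem:fullrankcase} cannot produce it. Here is a counterexample.
\begin{itemize}
\item Take $M=\IC^2$ with $\cJ$ coming from its standard complex structure (Example~\ref{ex:complexmanifolds}), $\cF$ the fibres of $\varpi(z,w)=z$, and $K=T\cF\oplus\set{0}$.
\item Then $\cJ(K)=K$, so $Q(\cF)$ is a generalised almost complex relation. By Example~\ref{ex:pullbackspinorrel}, $\sfS_{Q(\cF)}=\set{(\varpi^*\red\alpha,\red\alpha)}$ is a genuine $Q(\cF)$-Clifford relation.
\item $\Omega_1$ is spanned by $\de z\wedge\de w$ and $\Omega_2$ by $\de z$, so $(\Omega_1\times\Omega_2)\cap\sfS_{Q(\cF)}=\set{0}$.
\item The pure spinor line of $R^{+\ii}$ (modulo the kernel) is nevertheless spanned by $(\de z,\de z)\neq 0$. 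Correspondingly, $\rmp_1(R^{+\ii})=L\cap(K^\perp)^\IC\subsetneq L$.
\end{itemize}
The paper's proof silently assumes the same full-rank situation. With its definition, $\cS^n_R=\set{0}$ in this example, and so is every $\midwedge^kR^{-\ii}\cdot\cS^n_R$, even though $\sfS_{Q(\cF)}\neq\set{0}$.

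There are two ways to repair this.
\begin{itemize}
\item Add the hypothesis $\rmp_i(R^{+\ii})=L_i$. This is automatic when ${\rm K}_R$ and ${\rm CK}_R$ are trivial, for example for the geometric T-duality relations used later in the paper.
\item Alternatively, reinterpret $\Omega_i^k$ intrinsically as the $\ii k$-eigenbundle of the spin action of $\cJ_i$. Then run your Steps~2--3 on $R^\times$, starting from the pure spinor line of $(R^{+\ii})^\times$. You only need that $\overline L_i$ lowers the $\cJ_i$-degree by one, rather than $\midwedge^k\overline L_i\cdot\Omega_i=\Omega_i^{n-k}$.
\end{itemize}
Under the second repair, the line $(\de z,\de z)$ in the example sits in degree $1$ on both sides, and your argument goes through verbatim. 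This also removes your worry that the two gradings might have different lengths.
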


\begin{proof}
    Since $L_1 \sim_R L_2$, it follows that $\overline{L}_1 \sim_R \overline{L}_2$. Define $\cS_R^n = \big((\Omega_1 \times \Omega_2)\otimes |\det \sfS_R^* |^{\frac{1}{2^n}}\big)\cap\cS_R^\IC$ and 
    \begin{align}
        \cS_R^{n-k} = \midwedge^k R^{-\ii} \cdot \cS_R^n \ ,
    \end{align}
    for $k=1,\dots,n$.
    Since $\cS_R$ is invariant under the Clifford action of $R$, it follows that $\cS_R^{n-k} \subset \cS_R^\IC$. The projections $\rmp_i(\cS_R^{n-k}) = \Omega_i^{n-k}$ ensure that the sum \eqref{eqn:SRgrading} is direct, and since $R$ is a Dirac relation between $\overline L_1$ and $\overline L_2$ this ensures equality.
\end{proof}

Let $$\de_{H_i} = \partial_{H_i} + \overline{\partial}_{H_i}$$ be the decompositions of the Dirac generating operators associated with  generalised complex structures $\cJ_i$. 

\begin{proposition}\label{prop:gradingforGCR}
    Let $R \colon E_1 \rel E_2$ be a generalised complex relation, with $R$-Clifford relation $\sfS_R$ of parity $j$ such that $\sfS_R$ is a $(\de_{H_1}, \de_{H_2})$-DGO relation. Then
    \begin{align}
        \big(\partial_{H_1}\times (-1)^j\, \partial_{H_2}\big)(\cS^\IC_R) \ \subset \ \cS^\IC_R  \qquad \text{and} \qquad \big(\overline{\partial}_{H_1}\times (-1)^j\, \overline \partial_{H_2}\big)(\cS^\IC_R) \ \subset \ \cS^\IC_R \ .
    \end{align}
\end{proposition}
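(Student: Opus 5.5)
We combine the grading of $\cS^\IC_R$ from the preceding proposition with the DGO invariance of $\sfS_R$, and then separate $\de_{H_i}$ into its components of definite degree. The canonical relation $\cS_R = \sfS_R\otimes|\det\sfS_R^*|^{\frac1{2^n}}$ is invariant under $\sfD:=\de_{H_1}\times(-1)^j\,\de_{H_2}$, since $\sfS_R$ is invariant by hypothesis and $\sfD$ acts only on the first tensor factor. The same holds after complexification. By the grading proposition, $\cS_R^\IC=\bigoplus_k \cS^k_R$ with $\cS^k_R\subset(\Omega_1^k\times\Omega_2^k)\otimes|\det\sfS_R^*|^{\frac1{2^n}}$. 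So it suffices to fix $k$ and an element $\sfs=(\sfs_1,\sfs_2)\in\cS^k_R$, and to show that $(\partial_{H_1}\sfs_1,(-1)^j\partial_{H_2}\sfs_2)\in\cS_R^\IC$. The statement for $\overline\partial$ then follows, either by the same argument or by subtracting the $\partial$ part from $\sfD$.

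Since $\cJ_i$ is integrable, $\de_{H_i}$ maps $\sfgamma(\Omega_i^k)$ into $\sfgamma(\Omega_i^{k-1})\oplus\sfgamma(\Omega_i^{k+1})$, and $\partial_{H_i}$, $\overline\partial_{H_i}$ are the two components. We therefore have
\begin{align}
\sfD\sfs=\big(\partial_{H_1}\sfs_1,(-1)^j\partial_{H_2}\sfs_2\big)+\big(\overline\partial_{H_1}\sfs_1,(-1)^j\overline\partial_{H_2}\sfs_2\big)\ ,
\end{align}
where the first summand lies in $(\Omega_1^{k-1}\times\Omega_2^{k-1})\otimes|\cdot|$ and the second in $(\Omega_1^{k+1}\times\Omega_2^{k+1})\otimes|\cdot|$. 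The left-hand side lies in $\cS^\IC_R=\bigoplus_l\cS^l_R$, so we decompose it as $\sum_l t_l$ with $t_l\in\cS^l_R\subset\Omega_1^l\times\Omega_2^l$. The spaces $\Omega_1^l$ are independent for distinct $l$, because \eqref{eqn:GCSgrading} is a direct sum. Comparing first components therefore forces $\rmp_1(t_l)=0$ for $l\neq k\pm1$, and $\rmp_1(t_{k-1})=\partial_{H_1}\sfs_1$.

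The main obstacle is the second components: we need $\rmp_2(t_{k-1})=(-1)^j\partial_{H_2}\sfs_2$. For this we use Proposition~\ref{prop:Uisomorphismsmooth} (equivalently Corollary~\ref{cor:URinjectivesmooth}), in the non-degenerate situation where $\rmp_i(R)\neq 0$ and $\rmp_i(\sfS_R)\neq0$. For $l\neq k\pm1$, $t_l\in\cS_R$ has vanishing first component, so $t_l=0$. Consequently
\begin{align}
\sfD\sfs=t_{k-1}+t_{k+1}\ .
\end{align}
Comparing second components inside $\bigoplus_l\Omega_2^l$, which is again direct, gives $\rmp_2(t_{k-1})=(-1)^j\partial_{H_2}\sfs_2$ and $\rmp_2(t_{k+1})=(-1)^j\overline\partial_{H_2}\sfs_2$. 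Hence both pieces lie in $\cS^\IC_R$. Since the density factor $|\det\sfS_R^*|^{\frac1{2^n}}$ is inert under all the operators involved, this completes the argument.
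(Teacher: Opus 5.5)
Your proof is correct and follows the paper's argument. You decompose $\sfD\sfs$ along $\cS^\IC_R=\bigoplus_l\cS^l_R$ and use that $\de_{H_i}$ shifts degree by $\pm1$, so only $t_{k-1}$ and $t_{k+1}$ survive and are identified with the $\partial$ and $\overline\partial$ parts; you simply make explicit the component comparison the paper leaves implicit. The detour through Proposition~\ref{prop:Uisomorphismsmooth} is unnecessary, and so is the non-degeneracy assumption $\rmp_i(R)\neq 0$, $\rmp_i(\sfS_R)\neq 0$ that it brings in, which is not in the statement: comparing second components in the direct sum $\bigoplus_l\Omega_2^l$ forces $\rmp_2(t_l)=0$ for $l\neq k\pm1$, exactly as comparing first components does for $\rmp_1(t_l)$.
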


\begin{proof}
    Let $(\alpha_1, \alpha_2) \in \cS_R^k$ for $k\in\set{0,\dots,n}$, where we suppress the determinant part from the notation. Then $r = (\de_{H_1} \alpha_1, (-1)^j\,\de_{H_2} \alpha_2) \in \cS^\IC_R$, and so there is a decomposition $r = r_0 + \cdots + r_n$ from \eqref{eqn:SRgrading}. Since  $\de_{H_i}(\sfGamma(\Omega_i^k)) \subset \sfGamma(\Omega_i^{k-1}) \oplus \sfGamma(\Omega_i^{k+1})$, it follows that $r = r_{k-1} + r_{k+1}$ and hence
    \begin{align}
        \partial_{H_1}\times (-1)^j\, \partial_{H_2} \colon \cS^k_R \longrightarrow \cS^{k-1}_R \qquad \text{and} \qquad \overline \partial_{H_1} \times (-1)^j\, \overline \partial_{H_2} \colon \cS^k_R \longrightarrow \cS^{k+1}_R \ ,
    \end{align}
    thus the result follows.
\end{proof}

\medskip

\subsection{Generalised Complex Reduction}~\\[5pt]
    Consider the reduction of a generalised almost complex structure $\cJ$ to a generalised almost complex structure $\red \cJ$ from Subsection~\ref{subsec:gencomplexreduction}.
    Then by Remark \ref{rem:kernelspreserved}, the reduction morphism $Q(K)$ of Example~\ref{eg:reductionCArel} is a generalised almost complex relation from $\cJ$ to $\red \cJ$ if and only if $\cJ(K) = K$. This can be formulated in terms of the Lagrangian subbundle $L\subset E^\IC$ corresponding to $\cJ$, where 
    \begin{align}
        Q(K)^\IC = Q(K)^{+\ii} \oplus Q(K)^{-\ii} \ \iff \ (K^\perp)^\IC = \big(L\cap (K^\perp)^\IC\big) \oplus \big(\overline{L}\cap (K^\perp)^\IC\big) \ .
    \end{align}
    The final condition is equivalent to $\cJ(K^\perp) = K^\perp$, and hence $\cJ(K)=K$.

    We shall explore relations coming from generalised complex reduction in their full generality, i.e.~when the generalised complex reduction may satisfy $\cJ(K) \neq K$.
    
    \begin{proposition}
        Consider the setting of Proposition \ref{prop:reducedcomplexdirac} for the generalised almost complex reduction of $(E, \cJ)$. Then the reduction morphism $Q(K)$ is a weak generalised almost complex relation. 
    \end{proposition}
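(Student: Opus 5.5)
We have to show that, at each point $c=(m,\varpi(m))$ of $\gr(\varpi)$, the complexified quotient $Q(K)^{\IC\times}_c$ splits as $Q(K)^{+\ii\times}_c\oplus Q(K)^{-\ii\times}_c$. Since ${\rm CK}_{Q(K)}=0$ and ${\rm K}_{Q(K)}=K\times 0$ (Example~\ref{eg:reductionCArel}), the first step is to identify
\begin{align}
Q(K)^{\times}_c \simeq (K^\perp/K)_m \simeq \red E{}_{\varpi(m)}
\end{align}
through $(e,\natural(e))\mapsto [e]$; this identification is the map $\mathscr J_{q,m}$ of \eqref{eqn:bethisom}. Under it we get
\begin{align}
Q(K)^{+\ii\times}_c \simeq \big(Q(K)^\IC_c\cap(L\times\red L) + K^\IC\big)/K^\IC \ ,
\end{align}
and the same with bars for $-\ii$.

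The second step computes $Q(K)^\IC\cap (L\times \red L)$. For $e\in (K^\perp)^\IC$ with $e\in L$, we have $\natural(e)\in \red L=\natural(L\cap (K^\perp)^\IC+K^\IC)$ automatically, so this intersection contains $\{(e,\natural(e)) \mid e\in L\cap (K^\perp)^\IC\}$. Hence $Q(K)^{+\ii\times}_c \supseteq (L\cap K^{\perp\IC}+K^\IC)/K^\IC$, which maps onto $\red L$. The reverse inclusion holds trivially, since the first entry must lie in $L\cap K^{\perp\IC}$. So $Q(K)^{+\ii\times}_c\simeq \red L_{\varpi(m)}$ and $Q(K)^{-\ii\times}_c\simeq \overline{\red L}_{\varpi(m)}$. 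Here I use that $\natural$ commutes with conjugation because $K$ is real, so $\overline{\red L}=\natural(\overline L\cap K^{\perp\IC}+K^\IC)$.

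The final step is to show $\red E{}^\IC=\red L\oplus\overline{\red L}$. This is exactly the conclusion of Proposition~\ref{prop:reducedcomplexdirac}: $\red L$ is the $+\ii$-eigenbundle of the reduced generalised almost complex structure $\red\cJ$, so $\red L\cap\overline{\red L}=\{0\}$, and both are Lagrangian in $\red E{}^\IC$. Transporting this splitting back through the isomorphism of the first step gives the required decomposition of $Q(K)^{\IC\times}_c$ at every $c$.

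The main obstacle is bookkeeping in the definition of $R^{\pm\ii\times}$, where the sum with ${\rm K}^\IC_R$ happens before quotienting. I need to check that the sum of the two pieces is direct in the quotient and not merely in $\red E$. This follows because the quotient map to $K^\perp/K$ is injective on $Q(K)^\times$, so directness and spanning in $\red E{}^\IC$ transfer verbatim. Constant rank of $L\cap K^{\perp\IC}$, which is needed for these to be subbundles, is part of the hypotheses of Proposition~\ref{prop:reducedcomplexdirac} via Proposition~\ref{prop:reducedKsub-bundle}.
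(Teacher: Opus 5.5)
Your proposal is correct and is essentially the paper's argument: both identify $Q(K)^{\pm\ii\,\times}$ through $\mathscr{J}$ with $L_K=\big(L\cap (K^\perp)^\IC+K^\IC\big)/K^\IC$ and its conjugate, which reduces the claim to the splitting $(K^\perp)^\IC/K^\IC=L_K\oplus\overline{L}_K$. The only difference is that you use the conclusion $\red L\cap\overline{\red L}=\{0\}$ of Proposition~\ref{prop:reducedcomplexdirac} directly, whereas the paper re-derives this intersection condition from the hypothesis $\cJ(K)\cap K^\perp\subset K$ via \cite[Lemma 5.1]{Bursztyn2007reduction}, which is exactly how that proposition was proved.
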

    
    \begin{proof}
    Set $$L_K = \big(L\cap (K^\perp)^\IC + K^\IC\big)\,\big/\,K^\IC \qquad \text{and} \qquad \overline L_K = \big(\overline L \cap (K^\perp)^\IC +K^\IC\big)\,\big/\,K^\IC \ . $$ Then    
    \begin{align}
        Q(K)_{(m,\varpi(m))}^{+\ii\,\times} = \set{\big(\ell, \mathscr{J}^{-1}_{\varpi(m),m}(\ell)\big) \  \vert \ \ell \in L_K{}_m}
    \end{align}
    and similarly for the complex conjugate. Thus $Q(K)$ is a weak generalised almost complex relation if and only if $$(K^\perp)^\IC\,\big/\,K^\IC = L_K \oplus \overline L_K \ . $$ 
    
    Since $L$ is Lagrangian in $E^\IC$ it follows that $L_K$ is Lagrangian in $(K^\perp)^\IC / K^\IC$. Similarly $\overline L_K$ is Lagrangian in $(K^\perp)^\IC / K^\IC$. The condition that their intersection is zero is given by
    \begin{align}\label{eqn:BCGconditiononLs}
        \big(L\cap (K^\perp)^\IC + K^\IC\big) \cap \big(\overline L \cap (K^\perp)^\IC +K^\IC\big) \ \subset \ K^\IC \ .
    \end{align}
    In \cite[Lemma 5.1]{Bursztyn2007reduction} it is shown that Equation \eqref{eqn:BCGconditiononLs} is equivalent to the condition $\cJ(K) \cap K^\perp \subset K$, which is one of the requirements of Proposition \ref{prop:reducedcomplexdirac}. This condition appears in \cite[Proposition~6.1]{Zambon2008reduction}, generalising \cite[Theorem~5.2]{Bursztyn2007reduction}.
    \end{proof}

    Consider a generalised almost complex structure $\cJ$ on $E$. Recall from Section~\ref{sub:gencomplex} that the associated pure spinor is given by $$\psi = \exp(\alpha) \wedge \Theta \ , $$
    for a complex two-form $\alpha$ and a decomposable complex $k$-form $\Theta$, where $k={\rm type}(\cJ)$. Let us examine how the reduction of spinors proceeds in the examples of symplectic (Example~\ref{ex:symplecticmanifolds}) and complex (Example~\ref{ex:complexmanifolds}) generalised complex structures.
    
\begin{example}\label{ex:symplectic_red_spinors}
    Suppose $$\psi = \exp(\alpha)$$ and $\alpha$ satisfies $\iota_Y\, \iota_Z \alpha = 0$ for all $Y,Z\in K^\IC = T\cF^\IC$, but there is some $X \in T\cF^\IC$ such that $\iota_X\alpha \neq 0$ and $\iota_X \alpha$ is real. Then
    \begin{align}
        (X - \iota_X \alpha) \cdot \exp(\alpha) = 0
    \end{align}
    so that $X- \iota_X \alpha  \in L$. Since $\iota_Y\, \iota_X \alpha = 0$ for each $Y \in T\cF^\IC$, it follows that $X- \iota_X \alpha  \in L\cap (K^\perp)^\IC$. Thus $\iota_X \alpha \in L\cap (K^\perp)^\IC +K^\IC$ and $\overline{\iota_X\alpha}\in \overline{L} \cap (K^\perp)^\IC + K^\IC$. Since $\iota_X \alpha$ is real, this implies $$\iota_X \alpha \ \in \ \big(L\cap (K^\perp)^\IC +K^\IC\big)\cap\big(\overline L \cap (K^\perp)^\IC +K^\IC\big) \ , $$ but $\iota_X\alpha$ does not live in $K^\IC$. Hence such a two-form $\alpha$ cannot reduce to a generalised almost complex structure on $\red E$.
    
    Assume now that the foliation $\cF$ has compact orientable fibres, and suppose that $$\psi = \exp(\ii\,\omega)$$ for a real two-form $\omega.$
    By the above arguments, there is no $X\in T\cF^\IC$ such that $\iota_X \omega \in \ann(T\cF^\IC) $ and $\iota_X\omega$ is non-zero. 
    It follows that $\omega$ decomposes as $$\omega = \omega_\cQ + \omega_\cF \ , $$ where $\omega_\cQ \in \midwedge^2 \ann(T\cF)$ and $\iota_X\, \iota_Y \omega_\cF\neq 0$ for every pointwise linearly independent $X, Y \in T\cF^\IC$. 
    Take $$\mu = \frac{\omega_{\cF}^k}{k!} \ \in \ \det(T^*\cF) \ , $$ where $k = \frac 12\dim_\IR \cF$, and $\sfS_\mu$ as in Example~\ref{ex:Qfibreintegration}. 

     If $\psi$ is $T\cF^\IC$-invariant in the sense of Definition~\ref{def:invariantforms}, then $\omega_\cQ$ is a basic two-form and $\int_\cF\, \exp(\ii\,\omega_\cF)$ is a basic function. Thus $\Omega^\psi \subset A^\IC_\mu$, and 
     $$\Big(\exp(\ii\,\omega)\,,\, \big (\text{\footnotesize{$\int_\cF$}}\, \exp(\ii\,\omega_{\cF})\big) \ \exp(\ii\,\omega_{\cQ})\Big) \ \in \ \sfS_\mu^\IC \ . $$ Hence $\Omega^\psi \sim_{Q(\cF)} \Omega^{\red \psi}$, where $$\red \psi = \exp(\ii\,\red \omega_\cQ)$$ and $\varpi^* \red \omega_\cQ = \omega_\cQ$.
\end{example}

\begin{example}
    Let $$\psi = \Theta = \theta_1 \wedge \cdots \wedge \theta_n \ , $$ for some linearly independent complex one-forms $\theta_i$ such that $\theta_1 , \ldots , \theta_n,\overline \theta_1,\ldots, \overline \theta_n$ are also linearly independent. 
    Suppose that there is some $X \in TM^\IC$ with $X\notin T\cF^\IC$ and some $i\in \set{1,\ldots,n}$ such that $\iota_X \theta_i =0$, as well as some $a\in \IC$ such that $X+a \, \overline{X} =: Y\in K^\IC$.\footnote{Locally, such a one-form $\theta_i$ might look like $\de q^i +\ii\,\de s^i$, where $q^i$ is a coordinate on the base $\cQ$ and $s^i$ is a coordinate on the fibre $\cF$. Then $X = \frac\partial{\partial q^i} +\ii\,\frac\partial{\partial s^i}$ and $a=-1$.} 
    It then follows that $$X-Y \ \in \ \big(L\cap (K^\perp)^\IC + K^\IC\big)\cap\big( \overline L \cap (K^\perp)^\IC +K^\IC\big) \ . $$ 
    Since $X \notin T\cF^\IC$ it follows that $X-Y \notin T\cF^\IC = K^\IC$, and so in order for such an $n$-form $\Theta$ to reduce to a generalised almost complex structure on $\red E$, such an $X$ cannot exist. 

    By the arguments of Example~\ref{ex:symplectic_red_spinors} it follows that we can write $$\Theta = \theta_1 \wedge \cdots \wedge \theta_k \wedge \theta_{k+1} \wedge\cdots \wedge \theta_n$$ where $\theta_i\in \ann(T\cF^\IC)$ for $i=k+1,\ldots,n$ and $k = \frac 12 \dim_\IR \cF$. 
    Suppose that $\psi$ is invariant along $T\cF^\IC$, and take $$\mu = \theta_{1}\wedge\cdots\wedge\theta_k \ \in \ \det(T^*\cF^\IC) \ . $$ Recall that $\dim_\IR \cQ = 2n-2k$ and consider 
    \begin{align}
        \sfS_\mu' = \set{\Big(\varpi^*\red\beta\wedge\mu \,,\, \big({\text{\footnotesize{$\int_{\cF}$}}} \, \mu \wedge \overline \mu\big) \ \red\beta\Big) \ \Big\vert \ \red \beta \in \midwedge^\bullet T^*\cQ^\IC} \ .
    \end{align}
    It follows that $\Omega^\psi \sim_{Q(\cF)} \Omega^{\red \psi}$ for $$\red \psi = \big({\text{\footnotesize{$\int_\cF$}}} \, \mu \wedge \overline \mu\big)\ \red \Theta \ , $$ where $\varpi^* \red \Theta = \theta_{k+1}\wedge \cdots\wedge \theta_n$. Thus $\psi$ reduces to a generalised almost complex structure $\red\psi$ of type $n - k$, i.e. an ordinary almost complex structure on $\cQ$.
\end{example}

\medskip

\subsection{T-Duality and Type Change}~\\[5pt]
As in Section~\ref{ssec:SpinorrelationTduality}, let us consider again the case where $R\colon (\IT \cQ_1, \red H{}_1) \rel (\IT \cQ_2, \red H{}_2)$ is a geometric T-duality relation, hence a generalised isometry, with $\cQ_1$ and $\cQ_2$ both fibred over the same manifold $\cB$ with bundle projections $\pi_i:\cQ_i\to\cB$. While the following arguments can be made for an arbitrary generalised almost complex structure,  for simplicity we assume integrability of the generalised almost complex structures. By a \emph{$D_1$-invariant generalised complex structure} $\cJ_1$ on $\cQ_1$ we mean a generalised complex structure whose $+\ii$-eigenbundle $L_1$ (and hence $-\ii$-eigenbundle $\overline L_1$ and associated spinor $\psi_{1}$) is $D_1$-invariant as in Definition \ref{def:invDirac}. An immediate consequence of Theorem~\ref{thm:relatedspinorline} and Remark~\ref{rem:3rddefinition} is 

\begin{proposition}\label{prop:TdualGCS}
    If $\cJ_1$ is a $D_1$-invariant generalised complex structure on $\cQ_1$, then there is a generalised complex structure $\cJ_2$ on $\cQ_2$ such that $R\colon(\IT \cQ_1, \red H{}_1, \cJ_1) \rel (\IT \cQ_2, \red H{}_2, \cJ_2)$ is a generalised complex relation.
\end{proposition}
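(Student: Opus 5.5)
The plan is to build $\cJ_2$ from its pure spinor line, transported across $R$ by Theorem~\ref{thm:relatedspinorline}, and then read off the generalised complex relation from Proposition~\ref{prop:UreliffJrel} together with Remark~\ref{rem:3rddefinition}. Let $L_1\subset\IT\cQ_1^\IC$ be the $+\ii$-eigenbundle of $\cJ_1$, with pure spinor line bundle $\Omega_1$. By hypothesis $L_1$ is $D_1$-invariant, so $\mathsf{ad}^{\red\sigma_1}_{\red X}$ preserves $\sfGamma(L_1)$ for $\red X\in\frk{k}_{\tau_1}$. The computation in the proof of Theorem~\ref{thm:relatedspinorline} uses the Leibniz rule \eqref{eqn:spinorLeibnizrule}; running it in the reverse direction shows that $\Omega_1$ is $D_1$-invariant.

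\textbf{Step 1 (transport the spinor line).} Apply Theorem~\ref{thm:relatedspinorline}, in its complexified form, to obtain a pure spinor line bundle $\Omega_2\subset\midwedge^\bullet T^*\cQ_2^\IC$ with $\Omega_1\sim_R\Omega_2$. Write $L_2$ for its null Lagrangian subbundle. Proposition~\ref{prop:LreliffUrelsmooth} then gives $L_1\sim_R L_2$. Since $R$ is a generalised isometry, its kernel and cokernel are trivial, and every element of $L_2$ is $R$-related to an element of $L_1$. The same construction applied to $\overline{\Omega}_1$ gives $\overline{\Omega}_2$ and $\overline{L}_1\sim_R\overline{L}_2$, because $R$ is real.

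\textbf{Step 2 (nondegeneracy).} We must show $L_2\cap\overline{L}_2=\{0\}$, so that $L_2$ defines a generalised almost complex structure $\cJ_2$. Take $e_2\in L_2\cap\overline L_2$. Triviality of the kernel and cokernel gives a unique $e_1$ with $(e_1,e_2)\in R$. The Dirac relation conditions $L_1\sim_R L_2$ and $\overline L_1\sim_R\overline L_2$ then force $e_1\in L_1\cap\overline L_1=\{0\}$. Since the cokernel is trivial, $e_2=0$. This makes $\cJ_2$ a generalised almost complex structure. Proposition~\ref{prop:UreliffJrel}, or equivalently Remark~\ref{rem:3rddefinition}, shows that $R$ is a generalised almost complex relation.

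\textbf{Step 3 (integrability).} Here $\rmp_i(R)\supseteq L_i$ holds because $R$ has trivial kernel and cokernel, as in Remark~\ref{rem:smoothULequiv}. Integrability of $L_2$ then follows from that of $L_1$ by Proposition~\ref{prop:integrabilityrelation}. To invoke it, we need an $R$-Clifford relation that is a $(\de_{\red H{}_1},\de_{\red H{}_2})$-DGO relation; Theorem~\ref{prop:canonicalDGOforR} supplies this in the fibred setting. Alternatively, $L_2$ is built from $D_1$-invariant sections carried across $R$, as in the proof of Proposition~\ref{prop:TdualDirac}. Since $R$ is involutive, the Dorfman brackets of related sections are related, so involutivity of $L_1$ transfers directly to $L_2$.

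The main obstacle is Step 3: making sure a DGO relation is available, or else that the section-level transport argument really covers all of $L_2$. I would try the bracket-transport route first, because it only uses the involutivity of $R$ and the spanning of $L_2$ by invariant sections.
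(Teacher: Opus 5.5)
Your proposal is correct, and its main line is the paper's proof. Theorem~\ref{thm:relatedspinorline} transports the pure spinor line. Remark~\ref{rem:3rddefinition} (equivalently Proposition~\ref{prop:UreliffJrel}) makes $R$ a generalised almost complex relation. Proposition~\ref{prop:integrabilityrelation}, fed with the DGO relation of Theorem~\ref{prop:canonicalDGOforR}, then gives integrability.

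Your Step~2 supplies something the paper passes over when it says Theorem~\ref{thm:relatedspinorline} ``assures'' a generalised almost complex structure. That theorem only produces a complex Lagrangian $L_2$, and $L_2\cap\overline L_2=\{0\}$ needs exactly your argument. It can be compressed as follows: ${\rm K}_R={\rm CK}_R=0$ makes each $R_c$ the graph of a real isometry $\phi_c$. Hence $L_2=\phi_c(L_1)$, $\overline L_2=\phi_c(\overline L_1)$, and $L_2\cap\overline L_2=\phi_c(L_1\cap\overline L_1)=\{0\}$.

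Your bracket-transport alternative is a genuinely different and sound route to integrability; it is in effect Proposition~\ref{prop:TdualDirac} applied to the complex Dirac structure $L_1$. Take $D_1$-invariant $e,e'\in\sfGamma(L_1)$ and carry them to $\tilde e,\tilde e'$ with $e\sim_R\tilde e$ and $e'\sim_R\tilde e'$. Involutivity of $R$ gives $\cbrak{e,e'}\sim_R\cbrak{\tilde e,\tilde e'}$, so $\cbrak{\tilde e,\tilde e'}$ takes values in $\phi_c(L_1)=L_2$. Since $L_2$ is Lagrangian, the Leibniz rules extend this to all sections, provided such $\tilde e$ span $L_2$ pointwise. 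That spanning, the obstacle you flag, is built into the proof of Theorem~\ref{thm:relatedspinorline}: there $L_2$ is defined as the span of the carried-over invariant sections.

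What each route buys:
\begin{itemize}
\item Bracket transport avoids Theorem~\ref{prop:canonicalDGOforR} and its extra hypothesis, a $\lambda$ with $D_1$-invariant $\Omega^\lambda$. The proposition does not assume this, and the paper itself treats it as a real restriction in the Poisson--Lie example.
\item The DGO route yields the spinor-level relation between $\de_{\red H{}_1}$ and $\de_{\red H{}_2}$. The paper reuses this in Propositions~\ref{prop:gradingforGCR} and~\ref{prop:TdualGCY}.
\end{itemize}
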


\begin{proof}
    The existence of a generalised almost complex structure $\cJ_2$ is assured by Theorem \ref{thm:relatedspinorline}. From Section \ref{ssec:SpinorrelationTduality}, there is an $R$-Clifford relation $\sfS_R$ which is a $(\de_{\red H{}_1}, \de_{\red H{}_2})$-DGO relation. The integrability of $\cJ_2$ then follows by Proposition \ref{prop:integrabilityrelation}.
\end{proof}

The type of a generalised complex structure can be determined by the pure spinor line bundle it generates. For instance, if $\cJ_1$ is a generalised complex structure on $\IT \cQ_1$ of type $k$, then 
\begin{align}
    \psi_{1} = \exp(B_1+\ii\,\omega_1) \wedge \Theta_1 \ ,
\end{align}
where $B_1$ and  $\omega_1$ are real two-forms, while $\Theta_1 = \theta_1 \wedge \cdots \wedge \theta_k$ for some linearly independent complex one-forms $\theta_i$. 

Let $\alpha = B_1 + \ii\,\omega_1$. At a point $c=(c_1,c_2) \in \red C=\cQ_1\times_\cB\cQ_2$, with $b=\pi_1(c_1)=\pi_2(c_2)\in\cB$, there are decompositions
\begin{align}
    \alpha &= \dot \alpha_\cB + \alpha_1 + \alpha_{\rm mix} \ ,\\[4pt]
    \Theta_1 &= \dot\mu_\cB \wedge \mu_1 \wedge \mu_{\rm mix} \ ,
\end{align}
where 
\begin{align}
    \begin{split}
\alpha_\cB \ \in \ \midwedge^2 T_b^*\cB^\IC & \quad , \quad \alpha_1 \ \in \ \midwedge^2 T_{c_1}^*\red\cF{}_1^\IC \quad , \quad \alpha_{\rm mix} \ \in \ ( \pi_1^* T^*\cB^\IC\wedge T^*\red\cF{}_1^\IC)_{c_1} \ , \\[4pt] \mu_{\cB} \ \in \ \midwedge^{k_1} T_b^*\cB^\IC & \quad , \quad \mu_1 \ \in \ \midwedge^{k_2} T_{c_1}^*\red\cF{}_1^\IC \quad , \quad \mu_{\rm mix} \ \in \ \midwedge^{k_3} (\pi_1^*T^*\cB^\IC + T^*\red\cF{}_1^\IC)_{c_1}
\end{split}
\end{align}
are complex forms with $k_1,k_2,k_3\in\IN_0$ non-negative integers. As previously, an overdot denotes the pullback by $\pi_1^*$ while a double overdot will denote the pullback by $\pi_2^*$. Note that $$k = k_1 + k_2 + k_3$$ is the type of $\cJ_1$.

\begin{proposition}\label{prop:Tdualspinor}
Let $\cJ_1$ be a $D_1$-invariant generalised complex structure on $\cQ_1$. The type of the T-dual generalised complex structure $\cJ_2$ on $\cQ_2$ is
\begin{align}\label{eqn:J2type}
    {\rm type}(\cJ_2) = {\rm type}(\cJ_1) - k_3 + m_3 
\end{align} 
where $m_3$ is the smallest positive integer such that $\alpha_{\rm mix}^{m_3} \neq 0$.
The pure spinor associated to $\cJ_2$ is given by
\begin{align}\label{eqn:J2decomp}
    \psi_{2} = \exp(B_2 + \ii\, \omega_2) \wedge \Theta_2 \ ,
\end{align}
where the real two-forms $B_2$ and $\omega_2$ and the complex form $\Theta_2$ are given by Equation \eqref{eqn:J2spinorterms} below.
\end{proposition}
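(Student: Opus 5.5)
The plan is to compute $\psi_2$ directly through the $R$-Clifford relation $\sfS_R=\midwedge^\bullet T^*\cB\otimes\sfS_\ccG$ of \eqref{eqn:Tdualityrelspinrep}. By Theorem~\ref{thm:relatedspinorline} and Proposition~\ref{prop:TdualGCS}, $\Omega_1\sim_R\Omega_2$. Corollary~\ref{cor:URinjectivesmooth} then says that $\psi_2$ is the unique partner of $\psi_1$ in $\sfS_R$, up to the density factor. So it suffices to write $\psi_1$ as an element of $\sfS_R$ and read off its second component.

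First, using $D_1$-invariance, I would pick local adapted frames. Take $\theta^a$ spanning $\ccG_1^\star$ with $\lambda=\theta^1\wedge\cdots\wedge\theta^n$, and dual fibre vectors $\partial_a$ in $T\red\cF{}_1$; their $R$-partners come from $L^*\subset\ccG$. Then write
\begin{align}
\psi_1=\exp(\dot\alpha_\cB)\wedge\dot\mu_\cB\wedge\exp(\alpha_1+\alpha_{\rm mix})\wedge\mu_1\wedge\mu_{\rm mix}
\end{align}
and expand it in the form $\sum_I \omega_I\otimes(\ell_I\cdot\lambda)$, with $\omega_I\in\midwedge^\bullet T^*\cB$ and $\ell_I\in\midwedge^\bullet L^*$. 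This uses that every fibre form can be written as a contraction $\iota_{\partial_{a_1}}\cdots\iota_{\partial_{a_r}}\lambda$. Under $R$, each contraction $\iota_{\partial_a}$ on $\cQ_1$ corresponds to wedging with the dual fibre one-form $\tilde\theta_a$ on $\cQ_2$, and the $\cB$-parts are carried across by $\pi_2^*$. Concretely, the second component is a Fourier–Mukai-type "fibrewise Hodge/Berezin transform" of the fibre part. This is consistent with Example~\ref{ex:FourierMukai}, where $\sfS_R\cong\gr(\varrho)$, so I may equivalently compute $\psi_2=\varrho(\psi_1)=\int_{\red\cF{}_1}\e^B\wedge\varpi_1^*\psi_1$ locally.

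Next I would evaluate this fibre integral (equivalently the Clifford transform) by treating the pieces in turn.
\begin{itemize}
\item $\exp(\dot\alpha_\cB)$ and $\dot\mu_\cB$ pass through unchanged as $\exp(\ddot\alpha_\cB)\wedge\ddot\mu_\cB$.
\item $\exp(\alpha_1)\wedge\mu_1$ is a form purely along the fibre, so it gets transformed by Berezin integration. This gives a Gaussian/inverse-type two-form on $T\red\cF{}_2$: the inverse of $\alpha_1$ restricted to the complement of $\mu_1$. That contributes to $B_2+\ii\,\omega_2$ and consumes the $k_2$ fibre degrees.
\item $\alpha_{\rm mix}$, a pairing of $\cB$-covectors with fibre covectors, becomes after the transform a mixed term built from $\cB$-covectors and $\cQ_2$ fibre covectors. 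On completing the square, part of it turns into genuinely new decomposable one-form factors.
\item $\mu_{\rm mix}$ has $k_3$ factors, each with both base and fibre components. Each fibre leg becomes a contraction and is absorbed into the exponential, so these $k_3$ factors cease to be part of $\Theta_2$.
\end{itemize}

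The resulting expression is $\exp(B_2+\ii\,\omega_2)\wedge\Theta_2$, with $\Theta_2=\ddot\mu_\cB\wedge\mu_2\wedge\mu'_{\rm mix}$; these are the formulas to be recorded as \eqref{eqn:J2spinorterms}. The degree of $\Theta_2$ is $k_1+k_2+m_3$. Here $m_3$ counts the independent mixed directions produced by $\alpha_{\rm mix}$, read off as the first nonvanishing power as in the statement, and the $k_3$ directions of $\mu_{\rm mix}$ are lost. This gives \eqref{eqn:J2type}.

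The main obstacle is the completing-the-square step on the mixed part. I have to show that the fibrewise Gaussian integral of $\exp(\alpha_1+\alpha_{\rm mix})$, in the presence of $\mu_1$ and $\mu_{\rm mix}$, produces a pure spinor in the normal form $\exp(\cdot)\wedge\Theta_2$ with exactly $m_3$ additional decomposable factors. I would do this pointwise in a Darboux-type basis adapted to the pairing $\alpha_{\rm mix}\colon T\red\cF{}_1\to T^*\cB$, splitting $T\red\cF{}_1$ into the kernel and a complement of this map. Integrability of $\cJ_2$ is not needed for this, since it is already provided by Proposition~\ref{prop:TdualGCS}.
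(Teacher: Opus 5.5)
Your plan is essentially the paper's proof. There too $\psi_1$ is written as a Clifford element acting on a $D_1$-invariant fibre volume form, $\psi_1=\widehat\psi\cdot\lambda$, and $\psi_2$ is read off by replacing each generator of $\widehat\psi$ by its $R$-partner: $\exp(\dot\alpha_\cB)\wedge\dot\mu_\cB$ passes through, $\alpha_{\rm mix}$ becomes the $m_3$ factors $\ddot b_\alpha^i+\widetilde f_\alpha^{\,i}$, and $\mu_{\rm mix}$ is absorbed into the exponential.

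The main difference is the choice of $\lambda$. Instead of a generic frame, the paper takes $\lambda=\frac1{m_2!}\,\alpha_1^{m_2}\wedge f_\alpha^1\wedge\cdots\wedge f_\alpha^{m_3}\wedge\mu_1\wedge\overline\mu_1\wedge f_\mu^1\wedge\cdots\wedge f_\mu^{k_3}$, adapted to $\psi_1$. This builds your Darboux-type normal form into $\lambda$, so the completing-the-square step reduces to the elementary identities of Lemma~\ref{lem:formsdefined}. It also shows, through the factor $\mu_1\wedge\overline\mu_1$, where the degree-$k_2$ factor $\overline\mu_2$ of $\Theta_2$ comes from; your bullet on $\exp(\alpha_1)\wedge\mu_1$ mentions only an exponential contribution and leaves this implicit.
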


\begin{proof}
We can write
\begin{align}
    \alpha_{\rm mix} &= \dot b_\alpha^1 \wedge f_\alpha^1 + \cdots + \dot b_\alpha^{m_3} \wedge f_\alpha^{m_3}\ ,\\[4pt]
    \mu_{\rm mix} &= (\dot{{b}}_\mu^1  + {f}_\mu^1) \wedge \cdots \wedge (\dot{{b}}_\mu^{k_{3}} + {f}_\mu^{k_{3}})\ ,
\end{align}
for $b_\alpha^i, b_\mu^j \in T_b^*\cB^\IC$ and $f_\alpha^i, f_\mu^j \in \ccG^\star_1{}_{c_1}^\IC$ for $i=1,\ldots,m_3$ and $j=1,\ldots,k_3$, where $\ccG_1^\star = \rmp_{T^*\cQ_1}\,\rmp_1(\ccG)$ as in the construction of the $R$-Clifford relation $\sfS_R$ from Section~\ref{ssec:SpinorrelationTduality}. 

Choose $\widehat\mu_1, \widehat f_\alpha^{\,i}, \widehat\alpha_1, \widehat{{f}}_\mu^{\,i} \in \rmp_1(\ccG^\IC)_{c_1}\cap \midwedge^\bullet T_{c_2}\cQ_2^\IC$ such that 
\begin{align}
    \widehat \mu_1 \cdot \mu_1 = \widehat f_\alpha^{\,i} \cdot f_\alpha^i= \widehat \alpha_1 \cdot \alpha_1 = \widehat{f}_\mu^{\,i} \cdot f_\mu^i = 1 \ ,
\end{align}
while all other Clifford actions among these elements are zero, and moreover $\widehat\mu_1 \cdot \overline{\mu}_1 = 0$. Let also $\widehat \alpha_{\rm mix}$  be given by exchanging $f_\alpha^i$ with \smash{$\widehat f_\alpha^{\,i}$} and $\dot b^i_\alpha$ with $\ddot b^i_\alpha$ in $\alpha_{\rm mix}$, for $i=1,\ldots, m_3$, and similarly for~$\widehat\mu_{\rm mix}$.

Let $m_2 \in \IN$ be the smallest natural number such that $\alpha_1^{m_2} \neq 0$, and set
\begin{align}
    \lambda = \frac{1}{m_2!}\,\alpha_1^{m_2} \wedge f_\alpha^1 \wedge \cdots \wedge f_\alpha^{m_3} \wedge \mu_1 \wedge \overline{\mu}_1 \wedge f_\mu^1 \wedge \cdots \wedge f_\mu^{k_3} \ .
\end{align}
Since $\psi_{1}$ defines a $D_1$-invariant generalised complex structure, it follows that $\lambda$ defines a $D_1$-invariant fibre volume form on the tangent distribution $D_1 = T\red\cF{}_1$.
The considerations above then enable a factorisation $\psi_{1} = \widehat \psi \cdot \lambda$ thanks to

\begin{lemma} \label{lem:formsdefined}
The forms defined above satisfy
    \begin{enumerate}
        \item \ $\displaystyle \exp(\widehat \alpha_1) \cdot \frac{1}{m_2!}\,\alpha_1^{m_2} = \exp(\alpha_1) \ , $ \\
        \item \ $\exp(\widehat \alpha_{\rm mix}) \cdot ( f_\alpha^1 \wedge \cdots \wedge f_\alpha^{m_3}) = (\ddot b_\alpha^1 + f_\alpha^1) \wedge \cdots \wedge (\ddot b_\alpha^{m_3} + f_\alpha^{m_3}) \ , $ \\[-3mm]
        \item \ $\widehat \mu_1 \cdot(\mu_1 \wedge \overline{\mu}_1) = \overline{\mu}_1 \ , $ \ and \\[-3mm]
        \item \ $\widehat\mu_{\rm mix} \cdot (f_\mu^1 \wedge \cdots \wedge f_\mu^{k_3}) = \exp(\ddot{b}_\mu^1 \wedge f_\mu^1 + \cdots +\ddot{b}_\mu^{k_3} \wedge f_\mu^{k_3}) \ .$
    \end{enumerate}
\end{lemma}

\begin{proof}
    (1) Firstly, for each natural number $j\leqslant m_2$
    \begin{align}
        \widehat \alpha_1^j \cdot \alpha_1^{m_2} = m_2\, j\ \widehat\alpha_1^{j-1} \cdot \alpha_1^{m_2-1} = \cdots = \frac{m_2!\, j!}{(m_2-j)!}\,\alpha_1^{m_2-j} \ .
    \end{align}
    Thus
    \begin{equation}
    \begin{aligned}
        \exp(\widehat \alpha_1) \cdot \frac{1}{m_2!}\, \alpha_1^{m_2} = \sum_{j=0}^{m_2}\,\frac1{j!}\,\widehat\alpha_1^j  \cdot \frac{1}{m_2!}\, \alpha_1^{m_2}=\sum_{j=0}^{m_2}\, \frac{1}{m_2!\,j!}\, \frac{m_2!\, j!}{(m_2-j)!}\, \alpha_1^{m_2-j} = \exp( \alpha_1)\ .
    \end{aligned}
    \end{equation}
    
    \noindent (2) We compute
    \begin{equation}
    \begin{aligned}
        \exp(\widehat \alpha_{\rm mix}) \cdot ( f_\alpha^1 \wedge \cdots \wedge f_\alpha^{m_3}) &= \exp(\ddot b_\alpha^1 \wedge \widehat f_\alpha^{\,1} ) \wedge \cdots \wedge \exp(\ddot b_\alpha^{m_3} \wedge \widehat f_\alpha^{\,m_3} ) \cdot ( f_\alpha^1 \wedge \cdots \wedge f_\alpha^{m_3})\\[4pt]
        &=\big(\exp(\ddot b_\alpha^1 \wedge \widehat f_\alpha^{\,1} ) \cdot f_\alpha^1\,\big) \wedge \cdots \wedge \big(\exp(\ddot b_\alpha^{m_3} \wedge \widehat f_\alpha^{\,m_3} )\cdot f_\alpha^{m_3}\,\big) \\[4pt]
        &=\big((1+\ddot b_\alpha^1 \wedge \widehat f_\alpha^{\,1} ) \cdot f_\alpha^1 \big)\wedge \cdots \wedge \big((1+\ddot b_\alpha^{m_3} \wedge \widehat f_\alpha^{\,m_3} )\cdot f_\alpha^{m_3}\big)\\[4pt]
        &=(\ddot b_\alpha^1 + f_\alpha^1) \wedge \cdots \wedge (\ddot b_\alpha^{m_3} + f_\alpha^{m_3}) \ ,
    \end{aligned}
    \end{equation}
    where we use the fact that $\widehat f_\alpha^{\,i} \cdot \ddot b_\alpha^j = 0$ for all $i,j$ and $\widehat f_\alpha^{\,i} \cdot f_\alpha^j = 0$ when $i\neq j$ to commute the exponentials.

    \noindent (3) Follows immediately.

    \noindent (4) Similarly to (2) we compute
    \begin{equation}
    \begin{aligned}
        \widehat\mu_{\rm mix} \cdot (f_\mu^1 \wedge \cdots \wedge f_\mu^{k_3}) &= (\ddot{b}_\mu  + \widehat{f}_\mu^{\,1}) \wedge \cdots \wedge (\ddot{b}_\mu^{k_{3}} + \widehat{f}_\mu^{\,k_{3}}) \cdot (f_\mu^1 \wedge \cdots \wedge f_\mu^{k_3})\\[4pt]
        &= \big((\ddot{b}_\mu  + \widehat{f}_\mu^{\,1}) \cdot f_\mu^1\big)\wedge \cdots \wedge \big((\ddot{b}_\mu^{k_{3}} + \widehat{f}_\mu^{\,k_{3}}) \cdot f_\mu^{k_3}\big)\\[4pt]
        &= (1 + \ddot{b}_\mu\wedge  f_\mu^1)\wedge \cdots \wedge (1+ \ddot{b}_\mu^{k_{3}}\wedge f_\mu^{k_3})\\[4pt]
        &= \exp(\ddot{b}_\mu^1 \wedge f_\mu^1 + \cdots + \ddot{b}_\mu^{k_3} \wedge f_\mu^{k_3}) \ ,
    \end{aligned}
    \end{equation}
    as required.
\end{proof}

We can now complete the proof of Proposition \ref{prop:Tdualspinor}.
By Proposition~\ref{prop:Uisomorphismsmooth}, there is a pure spinor $\psi_{2}$ such that $(\psi_{1}, \psi_{2}) \in \sfS^\IC_{R\,c}$. 
From Lemma~\ref{lem:formsdefined} we obtain $$(\psi_{1},\psi_{2}) = (\widehat \psi , \psi_{2}) \cdot(\lambda,1) \ , $$ where 
\begin{align}
    \widehat \psi = \exp(\dot \alpha_\cB + \widehat\alpha_1 + \widehat \alpha_{\rm mix})\wedge \dot \mu_\cB \wedge \widehat\mu_1 \wedge \widehat \mu_{\rm mix} \ .
\end{align}
Note that $(\dot \beta,\ddot \beta) \in \Cl(R)_c$ for  $\beta \in T_b^*\cB$, by the splitting \eqref{eqn:splittingforRB}.
It follows that
\begin{equation}\label{eqn:tdualcomplexspinor}
\begin{aligned}
    \psi_{2} &= \exp(\ddot \alpha_\cB) \wedge\exp(\alpha_2) \wedge (\ddot b_\alpha^1 + \widetilde f_\alpha^{\,1}) \wedge \cdots \wedge (\ddot b_\alpha^{m_3} + \widetilde f_\alpha^{\,m_3}) \\
    &\qquad \qquad \qquad\qquad\qquad \wedge \ddot\mu_\cB \wedge \overline{\mu}_2 \wedge \exp(\ddot{b}_\mu^1 \wedge \widetilde{f}_\mu^{\,1} + \cdots +\ddot{b}_\mu^{k_3} \wedge \widetilde{f}_\mu^{\,k_3})
\end{aligned}
\end{equation}
for some $\alpha_2\in\midwedge^2 T_{c_2}^*\red\cF{}_2^\IC$ and $\mu_2\in\midwedge^{k_2}T^*_{c_2}\red\cF{}_2^\IC$ such that $\widehat\alpha_1\sim_R\alpha_2$ and $\widehat\mu_1\sim_R\mu_2$ respectively, with $\widetilde f_\alpha^{\,i}$ (resp. $\widetilde f_\mu^{\,j}$) such that $\widehat f_\alpha^{\,i} \sim_R \widetilde f_\alpha^{\,i}$ (resp. \smash{$\widehat f_\mu^{\,j} \sim_R \widetilde f_\mu^{\,j}$}) in $\Cl(R)^\IC_c$, for $i = 1,\ldots, m_3$ and $j = 1, \ldots , k_3$.

Finally, we can write \eqref{eqn:tdualcomplexspinor} in the form
\begin{equation}
\begin{aligned}
    \psi_{2} = \exp(B_2+\ii\,\omega_2) \wedge \Theta_2 \ ,
\end{aligned}
\end{equation}
where we have made the definitions
\begin{equation}\label{eqn:J2spinorterms}
    \begin{aligned}
    B_2 &\coloneqq {\rm Re}(\ddot \alpha_\cB + \alpha_2 + \alpha_{\mu_{\rm mix}})\ ,\\[4pt]
    \omega_2 &\coloneqq {\rm Im}(\ddot \alpha_\cB + \alpha_2 +\alpha_{\mu_{\rm mix}})\ ,\\[4pt]
    \Theta_2 &\coloneqq \mu_{\alpha_{\rm mix}} \wedge \ddot\mu_\cB \wedge \overline{\mu}_2\ ,
\end{aligned}
\end{equation}
with
\begin{align}
    \alpha_{\mu_{\rm mix}}&=\ddot{b}_\mu^1 \wedge \widetilde{f}_\mu^{\,1} + \cdots +\ddot{b}_\mu^{k_3} \wedge \widetilde{f}_\mu^{\,k_3}\ ,\\[4pt]
    \mu_{\alpha_{\rm mix}}&=(\ddot b_\alpha^1 + \widetilde f_\alpha^{\,1}) \wedge \cdots \wedge (\ddot b_\alpha^{m_3} + \widetilde f_\alpha^{\,m_3})\ .
\end{align} 
Hence the type of $\cJ_2$ is $k_1 + k_2 + m_3$, and the result follows.
\end{proof}

\begin{remark}
Since $\lambda$ defines a fibre volume form on the leaves of $\red\cF{}_1$, it follows that $$d := \rk_\IR(D_1) = 2k_2 + k_3 + 2m_2 + m_3 \ . $$ As ${\rm type}(\cJ_2) = {\rm type}(\cJ_1) - k_3 + m_3 $, we then find
\begin{align}
    {\rm type}(\cJ_2) = {\rm type}(\cJ_1) + 2(k_2 + m_2 + m_3)-d \ .
\end{align}
This is in agreement with the discussion of change of type of generalised complex structures under T-duality for principle torus bundles in \cite[Example 4.3]{cavalcanti2011generalized}.
\end{remark}

\begin{example}
Continuing in analogy with \cite{cavalcanti2011generalized}, the expression \eqref{eqn:tdualcomplexspinor} allows us to explicitly determine what becomes of the type in specific instances of generalised complex structures on $\cQ_1$. Suppose that $\dim_\IR (\cQ_1) = 2n$ and that the foliation $\red\cF{}_1$ has real codimension $n$, i.e. $\rk_\IR(D_1) =n$. 

\begin{itemize}
\item[$\Diamond$] If $\cJ_1$ is purely complex and the leaves of  $\red\cF{}_1$ are also complex, then the associated pure spinor has the form
\begin{align}
    \psi_{1} = \dot \mu_\cB \wedge \mu_1 \ ,
\end{align}
that is, there is no $\mu_{\rm mix}$ component. Thus the T-dual spinor corresponding to the complex structure $\cJ_2$ is given by
\begin{align}
    \psi_{2} = \ddot\mu_{\cB} \wedge \overline{\mu}_2 \ .
\end{align}
Hence $\cJ_2$ is also a purely complex structure. \\[-3mm]

\item[$\Diamond$] If $\cJ_1$ is a complex structure given by $J_1$ and the leaves of  $\red\cF{}_1$ are real (i.e. $J_1(T\red\cF{}_1) \cap T\red\cF{}_1 = \set{0}$), then 
\begin{align}
    \psi_{1} = \mu_{\rm mix}
    \end{align}
    and hence
    \begin{align}
    \psi_{2} = \exp\big({\rm Re}(\alpha_{\mu_{\rm mix}}) + \ii\, {\rm Im}(\alpha_{\mu_{\rm mix}})\big) \ .
\end{align}
Thus $\cJ_2$ is symplectic with a $B$-twist. Since $\alpha_{\mu_{\rm mix}}$ has always a component in $\pi_2^*T^*\cB^\IC$, the leaves of  $\red\cF{}_2$ define Lagrangian submanifolds with respect to this symplectic structure. \\[-3mm]

\item[$\Diamond$] If $\cJ_1$ is purely symplectic (with a $B$-twist) and the leaves of $\red\cF{}_1$ are symplectic, then $$\psi_{1} = \exp(\dot\alpha_\cB+\alpha_1) \ , $$ and hence $$\psi_{2} = \exp(\ddot \alpha_\cB+\alpha_2)$$ is also symplectic. \\[-3mm]

\item[$\Diamond$] If $\cJ_2$ is symplectic and the leaves of $\red\cF{}_1$ are Lagrangian, then $$\psi_{1} = \exp(\alpha_{\rm mix}) \ , $$ and hence $$\psi_{2} = \mu_{\alpha_{\rm mix}} \ , $$ which therefore has complex type.
\end{itemize}

We thus recover and generalise the results of \cite[Example 4.3, Table 1]{cavalcanti2011generalized} in the case of torus bundles, which we summarize here in the table
\begin{table}[h]
    \centering
    \begin{tabular}{c|c||c|c}
        Structure on $\cQ_1$ & Leaves of $\red\cF{}_1$ & Structure on $\cQ_2$ & Leaves of $\red\cF{}_2$ \\ \hline \hline
        Complex & Complex & Complex & Complex \\ \hline
        Complex & Real & Symplectic & Lagrangian \\ \hline
        Symplectic & Symplectic & Symplectic & Symplectic \\ \hline
        Symplectic & Lagrangian & Complex & Real
    \end{tabular}
    \label{tab:structures_and_fibres}
\end{table}
\end{example}

\medskip

\section{Generalised K\"ahler Relations}\label{sect:section7}
In this final section we study how generalised K\"ahler structures change under Courant algebroid relations, including many examples of their behaviours under geometric T-duality. Among other things, this enables us to exhibit the T-duality of two-dimensional sigma-models with $\cN=(2,2)$ supersymmetry within our framework, as well as mirror symmetry of Calabi-Yau manifolds, and to produce a new local example of a non-K\"ahler generalised Calabi-Yau structure.

\medskip

\subsection{Relating Generalised K\"ahler Structures}~\\[5pt]
Starting from the notion of a Courant algebroid relation between generalised complex structures from Section~\ref{sect:section6}, it is  natural to induce a similar construction for generalised K{\"a}hler structures.

\begin{definition} \label{def:genkahlerrel}
    Let $R\colon E_1 \rel E_2$ be a Courant algebroid relation supported on $C\subseteq M_1\times M_2$. Let $\cK_1 = (\cJ_1^+, \cJ_1^-)$ and $\cK_2 = (\cJ_2^+,\cJ_2^-)$ be generalised almost K\"ahler structures on $E_1$ and $E_2$ respectively.
    Then $R$ is a \emph{generalised almost K\"ahler relation} if it is a generalised almost complex relation for both $R:(E_1, \cJ^\pm_1) \rel (E_2, \cJ^\pm_2)$.
    If $\cK_1$ and $\cK_2$ are generalised K\"ahler structures, then $R$ is a \emph{generalised K\"ahler relation}.
    If $\cK_1$ and $\cK_2$ are generalised Calabi-Yau structures, then $R$ is a \emph{generalised Calabi-Yau relation}.
\end{definition}

It follows from Proposition~\ref{prop:equivGCRdefs} that generalised almost K\"ahler relations can be equivalently characterised through

\begin{proposition} \label{prop:GKRdecomp}
    A generalised almost K\"ahler relation $R$ decomposes pointwise into four isotropic subbundles as
    \begin{align}
        R_c^\IC = R_c^{+\ii,+\ii} \oplus R_c^{+\ii,-\ii} \oplus R_c^{-\ii,+\ii} \oplus R_c^{-\ii,-\ii}
    \end{align}
    at each $c\in C$, where 
    \begin{align} 
    R_c^{+\ii,+\ii} = R^\IC_c \cap \big((L_1^+\cap L_1^-)\times (L_2^+\cap L_2^-)\big)_c \quad & , \quad R_c^{+\ii,-\ii} = R_c^\IC \cap \big((L_1^+\cap \overline L{}_1^-)\times (L_2^+\cap \overline L_2{}^-)\big)_c \ , \\[4pt]
    R_c^{-\ii,+\ii} = R^\IC_c \cap \big((\overline L_1{}^+\cap L_1^-)\times (\overline L{}_2^+\cap L_2^-)\big)_c \quad & , \quad R_c^{-\ii,-\ii} = R_c^\IC \cap \big((\overline L{}_1^+\cap \overline L{}_1^-)\times (\overline L{}_2^+\cap \overline L{}_2^-)\big)_c \ ,
    \end{align}
    with $L_i^\pm$ the $+\ii$-eigenbundle of $\cJ_i^\pm$ respectively, for $i=1,2$.
\end{proposition}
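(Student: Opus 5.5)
The plan is to apply Proposition~\ref{prop:equivGCRdefs} twice and then use that $\cJ^+_i$ and $\cJ^-_i$ commute. Set $\cJ^\pm := \cJ_1^\pm \times \cJ_2^\pm$, acting on $(E_1\times E_2)^\IC$. Since $R$ is a generalised almost complex relation for both pairs, Definition~\ref{def:generalisedcomplexrel} gives $\cJ^\pm(R^\IC_c) = R^\IC_c$ for every $c\in C$. By condition (1) of Definition~\ref{def:GKstructure}, $\cJ^+_i\,\cJ^-_i = \cJ^-_i\,\cJ^+_i$ for $i=1,2$, so $\cJ^+$ and $\cJ^-$ commute on $(E_1\times E_2)^\IC_c$. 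Both square to $-\unit$, so they are simultaneously diagonalisable. The four joint eigenspaces are
\[
(L_1^{\epsilon}\cap L_1^{\prime\,\delta})\times(L_2^{\epsilon}\cap L_2^{\prime\,\delta}) ,
\]
where $\epsilon,\delta\in\set{\pm\ii}$ select $L$ or $\overline L$.

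The key step is the explicit projector decomposition. For $r\in R^\IC_c$, define
\[
r^{\epsilon,\delta} := \tfrac14\,(r - \epsilon\,\cJ^+ r)\ \text{followed by}\ (\,\cdot\, - \delta\,\cJ^- \,\cdot\,) .
\]
Concretely, $r^{\pm\ii,\pm\ii'} = \tfrac14\,(\unit \mp \ii\,\cJ^+)(\unit \mp' \ii\,\cJ^-)\,r$. Because $R^\IC_c$ is invariant under both $\cJ^+$ and $\cJ^-$, each $r^{\epsilon,\delta}$ lies in $R^\IC_c$. Because the two factors commute, each $r^{\epsilon,\delta}$ is an $\epsilon$-eigenvector of $\cJ^+$ and a $\delta$-eigenvector of $\cJ^-$. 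Hence it lies in the intersection of $R^\IC_c$ with the corresponding joint eigenspace, which is exactly the subbundle $R_c^{\epsilon,\delta}$ in the statement. Summing the four projectors gives $\unit$, so $r=\sum r^{\epsilon,\delta}$. The sum is direct because joint eigenspaces with distinct eigenvalue pairs intersect trivially.

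Isotropy of each summand is immediate, since $R$ is Lagrangian (indeed isotropic) in $E_1\times\overline E_2$. For the converse, if $R^\IC_c$ splits as stated, then each summand is preserved by $\cJ^\pm$, so $\cJ^\pm(R)=R$.

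The step needing the most care is checking that the joint eigenspaces of $\cJ^+_i,\cJ^-_i$ really are the intersections $L_i^\pm\cap L_i^{\mp}$ and so on. This holds by definition of the $+\ii$-eigenbundles, and commutativity guarantees these intersections have complementary ranks summing to $\rk E_i^\IC$. Since there are no kernel or cokernel subtleties, no weak version is needed here.
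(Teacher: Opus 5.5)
Your proposal is correct and takes essentially the paper's route: the paper derives this decomposition directly from Proposition~\ref{prop:equivGCRdefs}, whose projector $r\mapsto\tfrac12\,(r\mp\ii\,\cJ(r))$ splits $R^\IC_c$ for $\cJ^+=\cJ_1^+\times\cJ_2^+$, and commutativity with $\cJ^-=\cJ_1^-\times\cJ_2^-$ then allows each piece to be split again. Your composite projectors $\tfrac14\,(\unit\mp\ii\,\cJ^+)(\unit\mp\ii\,\cJ^-)$, with the two signs chosen independently, are exactly this two-step splitting, and your remarks on isotropy (each summand sits inside the isotropic $R^\IC_c$) and on the converse are also sound.
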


The generalized metric $V^\pm$ of a generalised almost K\"ahler structure $\cK = (L^+,L^-)$ can be decomposed as $$V^+{}^\IC = (L^+\cap L^- )\oplus (\overline{L}{}^+\cap \overline{L}{}^-) \qquad \text{and} \qquad V^-{}^\IC = (L^+\cap \overline L{}^-) \oplus (\overline{L}{}^+\cap L^-) \ . $$ From this decomposition and Proposition~\ref{prop:GKRdecomp} we get

\begin{lemma}
Let $R\colon (E_1, \cK_1) \rel (E_2, \cK_2)$ be a generalised almost K\"{a}hler relation. Then $R$ is a generalised isometry between the generalised metrics $V_1^+$ and $V_2^+$ defined by $\cK_1$ and $\cK_2$ respectively. 
\end{lemma}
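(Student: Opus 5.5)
The plan is to verify the second characterisation of generalised isometry in Definition~\ref{defn:generalisedisometry}, namely the pointwise splitting $R_c = R_c^+\oplus R_c^-$ with $R_c^\pm = R_c\cap (V_1^\pm\times V_2^\pm)_c$, working first after complexification and then descending to the real bundle. Fix $c\in C$ and apply Proposition~\ref{prop:GKRdecomp} to write $R_c^\IC$ as the direct sum of the four pieces $R_c^{\pm\ii,\pm\ii}$. Using the decompositions $V_i^{+\IC} = (L_i^+\cap L_i^-)\oplus(\overline L{}_i^+\cap \overline L{}_i^-)$ and $V_i^{-\IC} = (L_i^+\cap \overline L{}_i^-)\oplus(\overline L{}_i^+\cap L_i^-)$ for $i=1,2$, I will group the terms as
\begin{align}
R_c^{+\ii,+\ii}\oplus R_c^{-\ii,-\ii} \ \subseteq \ R_c^\IC\cap (V_1^+\times V_2^+)^\IC_c \qquad \text{and} \qquad R_c^{+\ii,-\ii}\oplus R_c^{-\ii,+\ii} \ \subseteq \ R_c^\IC\cap (V_1^-\times V_2^-)^\IC_c \ .
\end{align}
This gives $R_c^\IC \subseteq (R_c^{+})^\IC + (R_c^{-})^\IC$. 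The sum is direct because $(V_1^+\times V_2^+)\cap(V_1^-\times V_2^-)=\set{0}$. The reverse inclusion is trivial.

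To descend to real bundles, note that $R_c^\pm$ are defined as intersections of real subspaces. Also, $R_c^\IC\cap (V_1^\pm\times V_2^\pm)^\IC_c = (R_c\cap(V_1^\pm\times V_2^\pm)_c)^\IC$, since complexification commutes with intersections of real subspaces. Hence $R_c^\IC = (R_c^+\oplus R_c^-)^\IC$, and taking real parts (equivalently, comparing real dimensions) gives $R_c = R_c^+\oplus R_c^-$.

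Alternatively, and as a cross-check, I can argue directly with $\tau_i = -\cJ_i^+\cJ_i^-$. Since $(\cJ_1^\pm\times\cJ_2^\pm)(R)=R$ by Definition~\ref{def:generalisedcomplexrel}, composing the two actions gives
\begin{align}
(\tau_1\times\tau_2)(R) = -(\cJ_1^+\times\cJ_2^+)\big((\cJ_1^-\times\cJ_2^-)(R)\big) = R \ .
\end{align}
This is the first form of Definition~\ref{defn:generalisedisometry}, and it sidesteps the complexification bookkeeping entirely. I would in fact present this one-line argument as the proof and mention the eigenbundle decomposition as the explanation behind it.

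The only point needing care is that $\tau_i$ really is the generalised metric whose $+1$-eigenbundle is $V_i^+$. This holds by Definition~\ref{def:GKstructure}. Beyond that there is no real obstacle; the main thing is simply to make sure both parts of Definition~\ref{def:genkahlerrel} are actually used.
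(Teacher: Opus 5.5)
Your proposal is correct and follows the paper's proof: the paper groups the four summands of Proposition~\ref{prop:GKRdecomp} into $(R_c^+)^\IC\oplus(R_c^-)^\IC$ to obtain $R_c=R_c^+\oplus R_c^-$, and then gives the same equivalent one-line check $(\tau_1\times\tau_2)(R)=(\cJ_1^+\times\cJ_2^+)(\cJ_1^-\times\cJ_2^-)(R)=R$. You are slightly more careful than the paper in two places: you keep track of the overall minus sign (which is harmless because $R$ is a linear subbundle), and you state explicitly that complexification commutes with intersections of real subspaces.
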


\begin{proof}
    We obtain
    \begin{align}
        R^\IC_c = \big(R_c^{+\ii,+\ii} \oplus R_c^{-\ii,-\ii}\big) \, \oplus \, \big(R_c^{+\ii,-\ii} \oplus R_c^{-\ii,+\ii}\big) = R_c^+{}^\IC \oplus R^-_c{}^\IC = (R^+_c\oplus R_c^-)^\IC \ ,
    \end{align}
    and hence $R_c = R_c^+ \oplus R_c^-$ for all $c\in C$.
\end{proof}

Equivalently, let $\tau_1=-\cJ_1^+\, \cJ_1^-\in{\sf Aut}(E_1)$ and $\tau_2=-\cJ_2^+ \, \cJ_2^-\in {\sf Aut}(E_2)$ be the generalised metrics  defined by $\cK_1$ and $\cK_2$ respectively. From the definition of a generalised almost K\"ahler relation it follows that
\begin{align}
    (\tau_1 \times \tau_2)(R) = (\cJ_1^+\, \cJ_1^- \times \cJ_2^+ \, \cJ_2^-)(R) = (\cJ_1^+\times\cJ_2^+)\,(\cJ_1^-\times\cJ_2^-)(R) = R
\end{align}
and hence $R$ is a generalised isometry.

Since a generalised almost K\"ahler relation $R$ is necessarily a generalised isometry, by \cite[Proposition 4.17]{DeFraja:2023fhe} it follows that both its kernel ${\rm K}_R $ and cokernel $ {\rm CK}_R$ are trivial.

\begin{remark}[\textbf{Relating Bi-Hermitian Structures}]\label{rem:Bihermitian}
    Since a generalised almost K\"ahler relation $R\colon (E_1, \cK_1) \rel (E_2, \cK_2)$ is a generalised isometry, there is a splitting
    \begin{align}
        R = R^+\oplus R^-
    \end{align}
    where $R^\pm = R\cap(V_1^\pm \times V_2^\pm)$ for the generalised metrics $V_1^+$ and $ V_2^+$ associated to $\cK_1$ and $\cK_2$ respectively. Since $\cJ^\pm_i$ commute with the metrics $\tau_i$, it follows that $$(\cJ_1^\pm \times \cJ_2^\pm)(R^\pm) = R^\pm \ . $$ Define $$T^\pm = \rho(R^\pm) \ \subset \ TC \ , $$ so that $\rho\colon R^\pm \to T^\pm$ are isomorphisms covering the identity. 
    
    The bi-Hermitian structures $(g_i, b_i, J_i^+, J_i^-)$ associated with $\cK_i$ act on $T^+$ and $T^-$, and since $\cJ_1^\pm \times \cJ_2^\pm$ preserves $R^\pm$ respectively, it follows that
    \begin{align}
        \rho \circ (\cJ_1^\pm \times \cJ_2^\pm) \circ (\rho |_{R^\pm})^{-1} \colon T^\pm \longrightarrow T^\pm
    \end{align}
    is an isomorphism. In other words $$(J_1^\pm \times J_2^\pm)(T^\pm) = T^\pm \ . $$ In this sense we say that the bi-Hermitian structures are $R$-related.
\end{remark}

Because each generalised almost complex structure $\cJ^\pm$ of a generalised almost K\"ahler structure $\cK=(\cJ^+,\cJ^-)$ induces a grading of the form \eqref{eqn:GCSgrading}, there is a bigrading
\begin{align}
    \midwedge^\bullet T^*M^\IC = \bigoplus_{p,q=0}^n\, \Omega^{p,q} \ .
\end{align}
From Proposition \ref{prop:gradingforGCR} we may then infer

\begin{proposition}\label{prop:bigradingGKR}
    There is a bigrading 
    \begin{align}
        \cS_R^\IC = \bigoplus_{p,q=0}^n\, \cS_R^{p,q}
    \end{align}
    where $$\cS_R^{p,q} = \big((\Omega_1^{p,q} \times \Omega_2^{p,q}) \otimes |\det \sfS_R^* |^{\frac{1}{2^n}}\big) \cap \cS_R^\IC \ . $$
\end{proposition}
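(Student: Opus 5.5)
The plan is to run the argument of the grading proposition for a generalised almost complex relation twice, once for $\cJ^+$ and once for $\cJ^-$, and then intersect. Since $R$ is a generalised almost complex relation for both pairs $(\cJ_1^\pm,\cJ_2^\pm)$, we get $L_1^\pm\sim_R L_2^\pm$ and $\overline L{}_1^\pm\sim_R\overline L{}_2^\pm$ (Remark~\ref{rem:3rddefinition}). Moreover, $R$ is a generalised isometry, so it has trivial kernel and cokernel. This gives two gradings
\begin{align}
\cS_R^\IC=\bigoplus_{p}\cS_R^{p,\bullet}=\bigoplus_q \cS_R^{\bullet,q} \ ,
\end{align}
where $\cS_R^{p,\bullet}$ (resp. $\cS_R^{\bullet,q}$) is built from $\Omega_i^{p}$ of $\cJ_i^+$ (resp. of $\cJ_i^-$). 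The candidate summands are then $\cS_R^{p,q}:=\cS_R^{p,\bullet}\cap\cS_R^{\bullet,q}$. Using $\Omega_i^{p,q}=\Omega_i^{p,\bullet}\cap\Omega_i^{\bullet,q}$ in $\midwedge^\bullet T^*M_i^\IC$, this agrees with the stated formula $\big((\Omega_1^{p,q}\times\Omega_2^{p,q})\otimes|\det\sfS_R^*|^{\frac1{2^n}}\big)\cap\cS_R^\IC$.

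To show the decomposition is exhaustive I will use the four-fold splitting of Proposition~\ref{prop:GKRdecomp}. First take a generator of the top line
\begin{align}
\cS_R^{n,n}=\big((\Omega_1^{+}\cap\Omega_1^{-})\text{-part}\big)\cap\cS_R^\IC \ .
\end{align}
For a generalised K\"ahler structure, $\Omega^{n,n}$ lies in $\Omega^{n,\bullet}\cap\Omega^{\bullet,n}$ and is generated by the Clifford action of the appropriate intersections of eigenbundles on the pure spinors, as in Gualtieri's bigrading. Next define
\begin{align}
\cS_R^{n-a-b,\,n-a+b}\ \ (\text{suitably indexed}) \ = \ \midwedge^a R^{-\ii,-\ii}\cdot\midwedge^b R^{-\ii,+\ii}\cdot \cS_R^{n,\bullet}\text{-type lines} \ ,
\end{align}
mirroring Gualtieri's construction, in which the $\Omega^{p,q}$ are obtained by acting with $\overline L{}^+\cap\overline L{}^-$ and $\overline L{}^+\cap L^-$ on a top line. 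These pieces stay inside $\cS_R^\IC$ because $\cS_R$ is $\Cl(R)$-invariant. Their projections under $\rmp_i$ are exactly $\Omega_i^{p,q}$, because $\rmp_i(R^{\pm\ii,\pm\ii})$ are the full intersections: $\rmp_i(R)=E_i$, and the decomposition of $R$ restricts to that of each $E_i$.

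Directness of the sum follows by projecting to the first factor, where the $\Omega_1^{p,q}$ are independent, together with injectivity (Proposition~\ref{prop:Uisomorphismsmooth}). Exhaustiveness follows from two facts: each single grading from the previous proposition is already exhaustive, and the two gradings are compatible. Compatibility holds because $\cJ^+$ and $\cJ^-$ commute, so the Clifford generators in $R^{\pm\ii,\pm\ii}$ simultaneously shift both degrees by $\pm1$.

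The main obstacle is this compatibility step, namely checking that intersecting the two gradings loses nothing. The issue is that an element of $\cS_R^{p,\bullet}$ need not a priori decompose within $\cS_R$ according to the $q$-grading. I would handle it through $(\cS_R)_c$ projecting isomorphically onto $\rmp_1$: the bigrading of $\midwedge^\bullet T^*M_1^\IC$ pulls back via this injective projection. The grading operators (the $\cJ_1^\pm$-actions as Lie algebra elements of spin) correspond under $R$ to $\cJ_2^\pm$ actions, because $(\cJ_1^\pm\times\cJ_2^\pm)(R)=R$. Hence $\cS_R$ is invariant under both commuting grading operators, and simultaneous diagonalisation yields the bigrading.
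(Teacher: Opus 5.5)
Your argument is correct, but its decisive step differs from the paper's. The paper gives no separate proof. It simply infers the bigrading from the single-structure grading ($\cS_R^{n-k}=\midwedge^k R^{-\ii}\cdot\cS_R^n$) applied to $\cJ^+$ and $\cJ^-$, and leaves implicit exactly the compatibility of the two gradings that you identify as the obstacle.

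Your final paragraph resolves that obstacle properly. Since $R$ is a generalised isometry, it has trivial kernel and cokernel, so it is pointwise the graph of an isometry $\phi_c$ with $\phi_c\,\cJ_1^\pm=\cJ_2^\pm\,\phi_c$. Hence the quadratic spin lift of $(\cJ_1^\pm\times\cJ_2^\pm)|_R$ lies in $\Cl(R)$ and acts on $\sfS_1\times\sfS_2$ as $\widehat{\cJ}{}_1^\pm\times\widehat{\cJ}{}_2^\pm$; the parity sign $(-1)^j$ squares away on quadratic elements. These two operators commute, are diagonalisable and preserve $\cS_R$. Their joint eigenspaces are $\Omega_1^{p,q}\times\Omega_2^{p,q}$, because the grading pieces $\Omega^k$ are the eigenbundles of $\widehat{\cJ}$ in the spin representation.

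This gives the bigrading directly, so neither the single-grading proposition nor the four-fold splitting is actually needed. What the paper's generator-based route buys instead is an explicit description of each $\cS_R^{p,q}$ as Clifford translates of a distinguished line.

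If you keep your middle paragraph, correct its starting point. The line built from the intersection of the pure spinor lines of $\cJ_1^+$ and $\cJ_1^-$ is zero, because $L^+\neq L^-$ (otherwise $\cJ^+=\cJ^-$ and $\tau=-\cJ^+\cJ^-=\unit$, which is not a generalised metric). The construction must instead start from the pure spinor line of a single structure. For example, take the line of $\cS_R$ lying over the pure spinor lines of $\cJ_1^+$ and $\cJ_2^+$, which is nonzero by Remark~\ref{rem:fullrankcase} and is automatically a $\widehat{\cJ}{}^-$-eigenline.
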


\medskip

\subsection{T-Duality for Generalised K\"{a}hler Structures}~\\[5pt]
Discussing T-duality for generalised K{\"a}hler structures follows naturally from Proposition \ref{prop:TdualGCS} concerning T-duality for generalised complex structures.

Suppose that $R \colon (\IT \cQ_1 , \red H{}_1) \rel (\IT \cQ_2, \red H{}_2)$ is a geometric T-duality relation. By a \emph{$D_1$-invariant generalised K\"ahler structure} $\cK_1=(\cJ_1^+,\cJ_1^-)$ on $\cQ_1$ we mean a generalised K\"ahler structure both of whose generalised complex structures $\cJ_1^+$ and $\cJ_1^-$ are $D_1$-invariant. Then from Proposition \ref{prop:TdualGCS} we immediately arrive at

\begin{proposition}\label{prop:TdualGKS}
    If $\cQ_1$ has a $D_1$-invariant generalised K\"ahler structure $\cK_1$, then there is a generalised K\"ahler structure $\cK_2$ on $\cQ_2$ such that $R \colon (\IT \cQ_1, \red H{}_1, \cK_1) \rel (\IT \cQ_2, \red H{}_2, \cK_2)$ is a generalised K\"ahler relation.
\end{proposition}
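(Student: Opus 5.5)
The plan is to apply Proposition~\ref{prop:TdualGCS} to each of $\cJ_1^+$ and $\cJ_1^-$ separately, and then to check that the two resulting T-dual generalised complex structures $\cJ_2^\pm$ on $\cQ_2$ together form a generalised K\"ahler structure. Since $\cK_1$ is $D_1$-invariant, both $\cJ_1^\pm$ are $D_1$-invariant generalised complex structures. Proposition~\ref{prop:TdualGCS} therefore gives generalised complex structures $\cJ_2^\pm$ on $(\IT\cQ_2,\red H{}_2)$ such that $(\cJ_1^\pm\times\cJ_2^\pm)(R)=R$. By Definition~\ref{def:genkahlerrel}, $R$ is then a generalised K\"ahler relation as soon as we know that $\cK_2=(\cJ_2^+,\cJ_2^-)$ is generalised K\"ahler. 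Integrability of each $\cJ_2^\pm$ is already part of Proposition~\ref{prop:TdualGCS}, obtained there from the DGO relation of Theorem~\ref{prop:canonicalDGOforR} via Proposition~\ref{prop:integrabilityrelation}. What remains is to verify the two algebraic conditions of Definition~\ref{def:GKstructure}: that $\cJ_2^\pm$ commute, and that $\tau_2=-\cJ_2^+\cJ_2^-$ is a generalised metric.

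Both conditions are transported through $R$. A geometric T-duality relation is a generalised isometry, so its kernel and cokernel are trivial by \cite[Proposition 4.17]{DeFraja:2023fhe}. Because $C=\red C$ projects surjectively onto $\cQ_2$, each fibre $R_c$ is then the graph of a pointwise isomorphism $\Phi_c\colon(\IT\cQ_1)_{c_1}\to(\IT\cQ_2)_{c_2}$ that intertwines the pairings. The relation $(\cJ_1^\pm\times\cJ_2^\pm)(R)=R$ says exactly that $\cJ_2^\pm\circ\Phi_c=\Phi_c\circ\cJ_1^\pm$, as in Example~\ref{ex:gencomplexrelgraph}. It follows that $\cJ_2^+\cJ_2^-=\Phi_c\,\cJ_1^+\cJ_1^-\,\Phi_c^{-1}$, and the same holds with the order of the factors reversed, so $\cJ_2^\pm$ commute at $c_2$. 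Likewise $\tau_2=\Phi_c\,\tau_1\,\Phi_c^{-1}$ squares to the identity, and $\ip{e,\tau_2 e'}_2=\ip{\Phi_c^{-1}e,\tau_1\Phi_c^{-1}e'}_1$ is positive definite because $\Phi_c$ is an isometry. Hence $\cK_2$ is a generalised almost K\"ahler structure, it is integrable, and $R$ is a generalised K\"ahler relation.

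The main obstacle is consistency. $\Phi_c$ depends on the point $c=(c_1,c_2)$ and not just on $c_2$, and $\cJ_2^\pm$ were built independently from the two invariant spinor lines, so we must make sure the conjugation identities do not depend on the choice of $c_1$ over a fixed $c_2$. This is automatic here, because $\cJ_2^\pm$ are fixed single-valued structures on $\cQ_2$ and the intertwining holds at every $c$. An independent cross-check is also available. The metric $\tau_1=-\cJ_1^+\cJ_1^-$ is $D_1$-invariant, since $\ad^{\red\sigma_1}$ acts as a derivation of compositions of invariant endomorphisms. Theorem~\ref{thm:maingeneral} then gives a unique T-dual metric $V_2^+$, and by uniqueness it must coincide with the metric of $-\cJ_2^+\cJ_2^-$, because $R$ is a generalised isometry for the latter as well.
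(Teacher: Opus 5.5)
Your proposal is correct and follows the paper's route: the paper simply applies Proposition~\ref{prop:TdualGCS} to $\cJ_1^+$ and to $\cJ_1^-$ separately and treats the conclusion as immediate. Your extra step supplies the check that the paper leaves implicit, namely that $\cJ_2^\pm$ commute and that $-\cJ_2^+\cJ_2^-$ is a generalised metric. It does so by noting that each $R_c$ is the graph of an isometry $\Phi_c$ intertwining $\cJ_1^\pm$ with $\cJ_2^\pm$. The graph property itself comes from $R_c$ being Lagrangian with trivial kernel and cokernel; surjectivity of $\red C\to\cQ_2$ only ensures that every point of $\cQ_2$ is reached.
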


This approach to T-duality for generalised K{\"a}hler structures fully recovers the approach proposed in \cite{Lindstrom:2007sq}.

\begin{example}[\textbf{T-Dual Bi-Hermitian Structures for Torus Bundles}]
    In the setting of T-duality for principal torus bundles, there is an explicit formula for the T-dual bi-Hermitian structure given by \cite[Example 4.7]{cavalcanti2011generalized}
    \begin{align}
        J_2^\pm = \red\rho{}_2^\pm \, \mathscr{R} \, \red\rho{}_1^\pm{}^{-1} \, J_1^\pm \, \big(\red\rho{}_2^\pm \, \mathscr{R} \, \red\rho{}_1^\pm{}^{-1}\big)^{-1} \ .
    \end{align}
    where the projections $\red\rho{}^\pm_i:V_i^\pm\to T\cQ_i$ are isomorphisms from the generalised metrics
    \begin{align}
        V_i^\pm = \{X_i + b_i(X_i)\pm g_i(X_i) \ \vert \ X_i\in T\cQ_i\} \ ,
    \end{align}
    and $\ccR$ is the $C^\infty(\cB)$-module isomorphism \eqref{eq:ccRiso} induced by $R$. 
    
    For $(X_1, X_2) \in T^\pm=\rho(R^\pm)$, we compute
    \begin{equation}
    \begin{aligned}
        (J_1^\pm \times J_2^\pm)(X_1,X_2) &= 
        \big(\red\rho{}_1^\pm\, \cJ_1^\pm \, \red\rho{}_1^\pm{}^{-1} X_1 \, , \, \red\rho{}_2^\pm\, \mathscr{R} \, \red\rho{}_1^\pm{}^{-1}\, J_1^\pm\, \red\rho{}_1^\pm \, \mathscr{R}^{-1} \, \red\rho{}_2^\pm{}^{-1} X_2\big)\\[4pt]
        &=\big(\red\rho{}_1^\pm\, \cJ_1^\pm X_1^\pm \, , \, \red\rho{}_2^\pm \, \mathscr{R} \, \red\rho{}_1^\pm{}^{-1}\, \red\rho{}_1^\pm\, \cJ_1^\pm \, \red\rho{}_1^\pm{}^{-1}\, \red\rho{}_1^\pm\, \mathscr{R}^{-1} X_2^\pm\big)\\[4pt]
        &=\big(\red\rho{}_1^\pm\, \cJ_1^\pm X_1^\pm\, , \,  \red\rho{}_2^\pm\, \mathscr{R}\, \cJ_1^\pm\, \mathscr{R}^{-1} \, X_2^\pm\big)\\[4pt]
        &=\big(\red\rho{}_1^\pm \times \red\rho{}_2^\pm\big)\big(\cJ_1^\pm X^\pm_1 \, , \, \mathscr{R}\, \cJ_1^\pm X_1^\pm\big) \ ,
    \end{aligned}
    \end{equation}
    where $X_i^\pm = \red\rho{}_i^\pm{}^{-1}X_i\in V_i^\pm$ for $i=1,2$.
    Since $\mathscr{R}$ is induced by the relation $R$, it follows that $$(J_1^\pm \times J_2^\pm)(T^\pm) = T^\pm \ , $$ as anticipated by Remark~\ref{rem:Bihermitian}.
\end{example}

\begin{remark}[\textbf{T-Duality for Supersymmetric Sigma-Models}]
    The Lagrangian density of the $\mathcal{N}=(2,2)$ supersymmetric sigma-model of \cite{Gates:1984nk} is determined by the Poisson structure $$\Pi = [J^+,J^-]\,g^{-1}$$  associated with the generalised K{\"a}hler structure $\cK = (\cJ^+, \cJ^-)$ \cite{Lindstrom:2005zr}, where $(g,b,J^+,J^-)$ is the corresponding bi-Hermitian structure. In the setting of Proposition \ref{prop:TdualGKS}, we may think of the generalised K{\"a}hler structure $\cK_1$ as the one determining the Lagrangian density of a supersymmetric sigma-model via $\Pi_1 = [J_1^+,J_1^-]\,g_1^{-1}$ so that its T-dual sigma-model is obtained directly by constructing the T-dual generalised K{\"a}hler structure $\cK_2$ along the lines of Proposition \ref{prop:TdualGCS} and Theorem~\ref{thm:relatedspinorline}. 
    
    The Lagrangian formulation of the $\cN=(2,2)$ supersymmetric sigma-model is tied to the existence of a generalised K{\"a}hler potential.
    A generalised K{\"a}hler structure $\cK=(\cJ^+, \cJ^-)$ is associated with the real Poisson structures $$\Pi_\pm = \tfrac{1}{2}\,(J^+ \pm J^-)\,g^{-1} \ . $$ If these Poisson structures are integrable, then a generalised K{\"a}hler potential can be defined from a bisection of the resulting $(1,1)$-morphism of the integrating double symplectic  groupoids \cite{Alvarez:2024wki}. Thus a suitable integration of Courant algebroid relations is needed to describe the ``morphism'' of generalised K{\"a}hler potentials dictated by T-duality, as hinted to in \cite{Alvarez:2024wki}. This lies outside the scope of the present paper. 
\end{remark}

\begin{proposition}\label{prop:TdualGCY}
    Let $R\colon (\IT \cQ_1,\red H{}_1) \rel (\IT \cQ_2,\red H{}_2)$ be a geometric T-duality relation. If $\cK_1$ is a $D_1$-invariant generalised Calabi-Yau structure on $\cQ_1$, then there is a generalised Calabi-Yau structure $\cK_2$ on $\cQ_2$ such that $R\colon (\IT \cQ_1, \red H{}_1, \cK_1) \rel (\IT \cQ_2, \red H{}_2, \cK_2)$ is a generalised Calabi-Yau relation.
\end{proposition}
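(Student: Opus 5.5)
By Proposition~\ref{prop:TdualGKS}, there is already a generalised K\"ahler structure $\cK_2=(\cJ_2^+,\cJ_2^-)$ on $\cQ_2$ such that $R$ is a generalised K\"ahler relation. So it only remains to show that $\cK_2$ is generalised Calabi--Yau: we need nowhere vanishing $\de_{\red H{}_2}$-closed sections $\psi_2^\pm$ of the pure spinor lines $\Omega_2^\pm$, with Mukai pairings related by the required sign. The idea is to transport the closed spinors $\psi_1^\pm$ of $\cK_1$ through the $R$-Clifford relation $\sfS_R$ of \eqref{eqn:Tdualityrelspinrep}, and to use two facts. First, $\sfS_R$ intertwines the twisted differentials (Theorem~\ref{prop:canonicalDGOforR}). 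Second, $\sfS_R$ relates Mukai pairings up to a fixed constant (Proposition~\ref{prop:pairingspinor}).

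\textbf{Step 1: transport the spinors.} The $D_1$-invariance of $\cJ_1^\pm$ means the lines $\Omega_1^\pm$ are $D_1$-invariant. I would further arrange, by rescaling if needed, that the closed sections $\psi_1^\pm$ are themselves invariant along $\frk{k}_{\tau_1}$. Then Theorem~\ref{thm:relatedspinorline} together with Proposition~\ref{prop:Uisomorphismsmooth} and Corollary~\ref{cor:URinjectivesmooth} gives unique spinors $\psi_2^\pm$ with $(\psi_1^\pm,\psi_2^\pm)\in\sfS_R^\IC$. By Proposition~\ref{prop:Tdualspinor}, these $\psi_2^\pm$ are pure and generate $\Omega_2^\pm$. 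They are nowhere vanishing because $\sfS_R$ has trivial intersection with $\sfS_{E_1}\times\set{0}$.

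A subtlety arises because $\psi_2^\pm$ are defined over $\red C=\cQ_1\times_\cB\cQ_2$. I would argue they descend to $\cQ_2$ using the invariance of $\psi_1^\pm$, working in the form $\sfS_R=\midwedge^\bullet T^*\cB\otimes\sfS_\ccG$, which is pulled back from $\cB$.

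\textbf{Step 2: closedness.} Choose $\lambda$ as in Theorem~\ref{prop:canonicalDGOforR}; this is possible since $\cK_1$ is $D_1$-invariant and $\lambda$ can be built as in the proof of Proposition~\ref{prop:Tdualspinor}. Then $\sfS_R$ is a $(\de_{\red H{}_1},\de_{\red H{}_2})$-DGO relation, so
$$\big(\de_{\red H{}_1}\psi_1^\pm,(-1)^j\,\de_{\red H{}_2}\psi_2^\pm\big)\in\sfS_R^\IC .$$
Since $\de_{\red H{}_1}\psi_1^\pm=0$, Corollary~\ref{cor:URinjectivesmooth} forces $\de_{\red H{}_2}\psi_2^\pm=0$.

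\textbf{Step 3: Mukai pairing condition.} Complex conjugation preserves $\sfS_R$ because $R$ is real, so $(\overline{\psi_1^\pm},\overline{\psi_2^\pm})\in\sfS_R^\IC$. Applying Proposition~\ref{prop:pairingspinor} (after tensoring with the half-density factors) gives
$$(\!(\psi_1^\pm,\overline{\psi_1^\pm})\!)_1=\ii^{\,n}(-1)^{n^2+n|\psi_1^\pm|}\,(\!(\psi_2^\pm,\overline{\psi_2^\pm})\!)_2 .$$
If the parities of $\psi_1^+$ and $\psi_1^-$ agree, the constant is the same for both and cancels, so the relation $(\!(\psi_2^+,\overline{\psi_2^+})\!)=\pm(\!(\psi_2^-,\overline{\psi_2^-})\!)$ follows directly. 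If the parities differ, the constants differ by $(-1)^n$, which only flips the sign $\pm$ and is still allowed by the definition.

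\textbf{Main obstacle.} I expect the hardest part to be Step 3, together with the descent issue in Step 1. For Step 3, the difficulty is that Proposition~\ref{prop:pairingspinor} is stated for homogeneous components, while $\psi^\pm$ are mixed-degree forms. I would handle this by noting that only the top-degree component of the pairing matters, and that the pure spinors of a generalised complex structure have fixed parity, so $|\psi_1^\pm|$ is well defined modulo $2$.
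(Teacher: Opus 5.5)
Your proposal follows the same route as the paper's proof. You obtain $\cK_2$ from Proposition~\ref{prop:TdualGKS}, transfer $\de_{\red H{}_1}$-closedness of $\psi_1^\pm$ to $\psi_2^\pm$ through the DGO relation of Theorem~\ref{prop:canonicalDGOforR} and Proposition~\ref{prop:Uisomorphismsmooth}, and compare Mukai pairings with Proposition~\ref{prop:pairingspinor}; your remarks on descent to $\cQ_2$ and on the fixed parity of pure spinors supply details the paper leaves implicit. One caution: if you rescale $\psi_1^\pm$ by a function $g$ to make it invariant, you need $\de g\wedge\psi_1^\pm=0$ (i.e.\ $\de g\in L_1^\pm$) to keep $\de_{\red H{}_1}$-closedness, so an arbitrary rescaling will not do, and the paper's proof does not address this point either.
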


\begin{proof}
    Since $\cK_1$ is in particular a generalised  K\"ahler structure, the existence of a generalised K\"ahler structure $\cK_2$ on $\cQ_2$ is guaranteed by Proposition \ref{prop:TdualGKS}. The integrability of the generalised K\"ahler structures is determined by the $(\de_{\red H{}_1}, \de_{\red H{}_2})$-DGO relation $\sfS_R$ constructed in Section~\ref{ssec:SpinorrelationTduality}, and since there are nowhere vanishing sections $\psi_1^\pm\in \sfGamma(\Omega_1^\pm)$ such that $\de_{\red H{}_1} \psi^\pm_1 = 0$, it follows from Proposition~\ref{prop:Uisomorphismsmooth} that $\de_{\red H{}_2}\psi^\pm_2 = 0$ for $\psi^\pm_2$ such that $(\psi^\pm_1, \psi^\pm_2) \in \sfS_R$. The Mukai pairings are preserved by Proposition~\ref{prop:pairingspinor},  and hence $\cK_2$ is a generalised Calabi-Yau structure.
\end{proof}

\begin{example}[\textbf{Mirror Symmetry}]
\label{ex:mirror}
    Consider the setting~\cite{cavalcanti2011generalized} of a Calabi-Yau manifold $\cQ_1$ with T-duality taken along a special Lagrangian fibration $\red \cF{}_1$, i.e. $\dim_\IR \red \cF{}_1 = \frac 12 \dim_\IR \cQ_1$, the symplectic form $\omega_1$ vanishes along $\red \cF{}_1$, and the complex structure $J_1$ is real. This induces a generalised Calabi-Yau structure with $\cJ_1^+ = \cJ_{\omega_1}$ and $\cJ_1^- = \cJ_{J_1}$. From Example~\ref{tab:structures_and_fibres}, the T-dual generalised Calabi-Yau structure has swapped symplectic and complex structures, $\cJ_2^+ = \cJ_{J_2}$ and $\cJ_2^- = \cJ_{\omega_2}$. The bigrading of Proposition \ref{prop:bigradingGKR} allows us to associate $\Omega_1^{p,q}$ with $\Omega_2^{p,q}$.

    For an ordinary Calabi-Yau structure, there is an isomorphism of the ordinary $(p,q)$ decomposition with the decomposition into $\Omega^{p,q}$ given by
    \begin{align}
        \Omega_J^{p-q}\cap \Omega_\omega^{n-p-q} = \exp(\ii\,\omega) \cdot \exp(\ii\,\omega^{-1}) \cdot \midwedge^{p,q} T^*M^\IC \ .
    \end{align}
    Thus in T-duality we may associate $\midwedge^{p,q} T^*\cQ_1^\IC$ and $\midwedge^{n-p,q} T^*\cQ_2^\IC$, giving the usual relation between the Hodge diamonds of Betti numbers in mirror symmetry.
\end{example}

\begin{example}[\textbf{Semi-Flat Calabi-Yau Metrics}]
\label{ex:semi-flat}
    Following~\cite{Garcia-Fernandez:2020ope}, let $\cQ_1= \cB\times \sfT^k$ where $\cB\subset \IR^k$ is the unit $k$-ball. Take local coordinates $\set{x^1,\ldots, x^k}$ on $\cB$ and $\set{y^1,\ldots,y^k}$ on $\sfT^k$. For any $c>0$ there is a unique solution $u \colon \cB \to \IR$ to the Monge-Amp\`ere equation
    \begin{align}
        \det(u_{ij}) = c
    \end{align}
    with $u|_{\partial \cB} = 0$ and $u_{ij} >0$ in $\cB\setminus\partial\cB$, where $u_{ij} = \frac{\partial^2 u}{\partial {x^i}\, \partial {x^j}}$. In the following we set $c=1$ without loss of generality.
    
    We can define a $\sfT^k$-invariant Riemannian metric on $\cQ_1$ by
    \begin{align}
        g = \sum_{i,j}\, u_{ij}\, \big(\de x^i \otimes \de x^j + \de y^i \otimes \de y^j \big) \ .
    \end{align}
    With the complex structure $J\in{\sf Aut}(T\cQ_1)$ defined by $J(\frac{\partial}{\partial x^i}) = \frac\partial{\partial y^i}$ and $J(\frac\partial{\partial y^i}) = -\frac\partial{\partial x^i}$, the metric $g$ is K\"ahler with K\"ahler form given by
    \begin{align}
        \omega = \sum_{i,j}\, u_{ij}\ \de x^i \wedge \de y^j= \frac \ii2\, \sum_{i,j}\, u_{ij}\ \de z^i \wedge \de \overline z^j  \ ,
    \end{align}
    where $z^i=x^i+\ii\,y^i$.
    
    Thus we can define a generalised K\"ahler structure $\cK_1 = (\cJ_J, \cJ_\omega)$ with generalised metric given by $g$. This has Ricci form equal to
    \begin{align}
        {\rm Ric}_g = -\ii\, \overline \partial\, \partial \log \det(u_{ij}) = 0 \ ,
    \end{align}
    and hence $g$ is Ricci flat.
    The associated pure spinors are $$\psi_J = \de z^1 \wedge \cdots \wedge \de z^k \qquad \text{and} \qquad \psi_\omega = \exp(\ii\,\omega) \ , $$ which are both closed and hence $\cK_1$ defines a generalised Calabi-Yau structure on $(\IT\cQ_1,0)$. 

    Consider $\cQ_2 = \cB \times \tilde{\sfT}^k$, with coordinates $\set{\tilde y^1 , \ldots, \tilde y^k}$ on the dual torus $\tilde{\sfT}^k$, and define the correspondence space $$M = \cQ_1 \times_\cB \cQ_2 \ . $$
    We take as $B$-field on $M$ the two-form $$B = \de y^1 \wedge \de \tilde y^1 + \de y^2 \wedge \de \tilde y^2 \ , $$ and hence we may write the T-duality relation
    \begin{align}
        R = {\rm Span}_{C^\infty(\cB)}\set{(\tfrac\partial{\partial x^i}, \tfrac\partial{\partial x^i})\,,\, (\de x^i,\de x^i)\,,\,(\tfrac\partial{\partial y^i},\de \tilde y^i)\,,\, (\de y^i,\tfrac\partial{\partial\tilde y^i})} \ .
    \end{align}
    
    Since $\cJ_J$ and $\cJ_\omega$ are both $\sfT^k$-invariant, we can transport them via Proposition \ref{prop:TdualGKS} to a T-dual generalised K\"ahler structure $\cK_2$ on $(\IT\cQ_2,0)$. To see what the T-dual structure looks like, we can apply the constructions of Proposition \ref{prop:Tdualspinor}. There are two equivalent approaches to this, which illustrate two ways of viewing the T-dual structure.
    
    Firstly, we can write
    \begin{align}
        \psi_J = \de z^1 \wedge \cdots \wedge \de z^k = \mu_{\rm mix} \qquad \text{and} \qquad
        \psi_\omega = \exp(\ii\,\omega)= \exp(\alpha_{\rm mix})
    \end{align}
    where
    \begin{align}
        \alpha_{\rm mix} = \ii\,\sum_{i,j}\, u_{ij}\ \de x^i \wedge \de y^j = \ii\,\de x^1 \wedge f^1 +\cdots + \ii\,\de x^k \wedge f^k
    \end{align}
    for $f^i = \sum_j\,u_{ij}\, \de y^j$. Then \smash{$\widehat f{}^{\,i} = \sum_j\, u^{ij}\, \tfrac\partial{\partial\tilde y^j}$}, where $(u^{ij})$ is the matrix inverse of $(u_{ij})$, and thus $\widetilde f{}^{\,i} = \sum_j\,u^{ij}\,\de \tilde y^j$. 
    Hence the pure spinors \smash{$\tilde \psi _J$} and $\tilde \psi_\omega$ T-dual to $\psi_J$ and $\psi_\omega$ are given by
    \begin{align}
        \tilde \psi_J &= \exp\big(\ii\, \de x^1 \wedge \de \tilde y^1 +\cdots + \ii\,\de x^k \wedge \de \tilde y^k\big) \ , \\[4pt]
        \tilde \psi_{\omega} &=\Big(\de x^1 + \ii\,\sum_j\,u^{1j}\, \de \tilde y^j\Big)\wedge \cdots \wedge \Big(\de x^k + \ii\, \sum_j\,u^{kj}\, \de \tilde y^j\Big) \ .
    \end{align}
    
    Thus $\tilde \psi_J$ and $\tilde \psi_\omega$ correspond to the symplectic structure $\tilde \omega$ and complex structure $\tilde J$ respectively given by (in the basis $(\tfrac\partial{\partial x^i},\tfrac\partial{\partial\tilde y^i})$)
    \begin{align}
        \tilde \omega = \begin{pmatrix}
            0 & -\unit\\
            \unit & 0 \\
        \end{pmatrix} \qquad \text{and} \qquad
        \tilde J = \begin{pmatrix}
            0 & U^{-1}\\
            U & 0
        \end{pmatrix} \ ,
    \end{align}
    where $U_{ij} = u_{ij}$. This defines a K\"ahler structure with K\"ahler metric
    \begin{align}
         \tilde g = \sum_{i,j}\, u_{ij}\,\de x^i \otimes \de x^j + u^{ij}\,\de \tilde y^i \otimes \de\tilde y^j
    \end{align}
    which agrees with the Buscher rules. This again defines a generalised Calabi-Yau structure $\cK_2=(\cJ_{\tilde\omega},\cJ_{\tilde J})$, though it is not immediately clear that $\tilde g$ is classically Ricci flat. To see this, we may use a different approach whereby we rewrite the two-form $\alpha_{\rm mix}$.

    To this end, write
    \begin{align}
        \alpha_{\rm mix} = \ii\,b^1 \wedge \de y^1 +\cdots + \ii\,b^k \wedge \de y^k
    \end{align}
    where $b^i = \sum_j\,u_{ji}\, \de x^j $. Then the T-dual pure spinors are given by
    \begin{align}
        {\tilde \psi}_J &= \exp\big(\ii\, \de x^1 \wedge \de \tilde y^1 +\cdots + \ii\,\de x^k \wedge \de \tilde y^k\big) \ , \\[4pt]
        {\tilde \psi}_\omega &= \Big( \sum_j\, u_{j1}\, \de x^j + \ii\,\de \tilde y^1\big) \wedge \cdots \wedge \Big(\sum_j\, u_{jk}\, \de x^j + \ii\,\de \tilde y^k\Big) \ .
    \end{align}
    If we define new coordinates $\tilde x^i = u_i \coloneqq \frac{\partial u}{\partial x_i}$, so that $\frac{\partial \tilde x^i}{\partial x^j} = u_{ij}$, then 
    \begin{align}
        \de \tilde x^i = \sum_j\,u_{ij}\, \de x^j \quad \Longrightarrow \quad \de x^i =\sum_j\,u^{ij}\, \de \tilde x^j \ ,
    \end{align}
     hence
    \begin{align}
        {\tilde \psi}_J = \exp\Big(\ii\, \sum_j\, u^{ij}\,\de \tilde x^j \wedge \de \tilde y^i\Big) \qquad \text{and} \qquad
        {\tilde \psi}_\omega = \big( \de \tilde x^1 + \ii\,\de \tilde y^1\big) \wedge \cdots \wedge \big(\de \tilde x^k + \ii\,\de \tilde y^k\big) \ .
    \end{align}
    
    The associated K\"ahler metric in these coordinates is therefore
    \begin{align}
         \tilde g = \sum_{i,j}\,u^{ij}\,\big(\de \tilde x^i \otimes \de \tilde x^j + \de \tilde y^i \otimes \de \tilde y^j\big) \ ,
    \end{align}
    whose Ricci form is $${\rm Ric}_{\tilde g} = -\ii\,\overline \partial\, \partial \log \det (u^{ij}) = 0 \ , $$ so $\tilde g$ is Ricci flat.
\end{example}

\begin{example}[\textbf{Harmonic Functions and Symplectic Structures}]
\label{ex:harmonic-symplectic}
    Consider $\cQ_1 = \cB \times \sfT^2$ for some two-manifold $\cB$. Take local coordinates $\set{x^1,x^2}$ on $\cB$ and $\set{y^1,y^2}$ on $\sfT^2$. We can endow $\cQ_1$ with the complex structures
    \begin{align}
        J^+ = \begin{pmatrix}
            0 & -\unit \\
            \unit & 0
        \end{pmatrix} \qquad \text{and}
        \qquad J^- = \begin{pmatrix}
            \eta & 0\\
            0 & \eta
        \end{pmatrix}
    \end{align}
    where $$\eta = \begin{pmatrix}
        0 & -1 \\
        1 & 0
    \end{pmatrix} \ . $$ Consider the metric $$g = u\, g_{\text{\tiny E}} \ , $$ where $g_{\text{\tiny E}}$ is the flat Euclidean metric and $u\in C^\infty(\cB)$. 
    
    Then $J^\pm$ preserve the metric $g$, and we may define the fundamental two-forms $\omega_\pm = g\,J^\pm$ which are given by
    \begin{align}
        \omega_+ = u\,(\de x^1 \wedge \de x^2 + \de y^1 \wedge \de y^2) \qquad \text{and} \qquad \omega_- = u \, (\de x^1 \wedge \de y^1+ \de x^2 \wedge \de y^2) \ .
    \end{align}
    We may calculate
    \begin{align}
        \de_-^c \omega_- = -\de_+^c \omega_+ = (u_2 \,\de x^1 - u_1\, \de x^2)\wedge \de y^1 \wedge \de y^2 =: \red H \ ,
    \end{align}
    where $u_i = \frac{\partial u}{\partial x^i}$.
    The three-form $\red H\in\sfOmega^3(\cQ_1)$ is closed if and only if $$\Delta u=0 \ , $$ where $\Delta$ is the Laplacian on $\cB$. 
    
    It then follows that $\cK = (\cJ^+, \cJ^-)$ is an $\red H$-twisted generalised K\"ahler structure, where
    \begin{align}
        \cJ^\pm = \frac 12\,
        \begin{pmatrix}
            J^+ +J^- & -(\omega_+^{-1} \mp \omega_-^{-1})\\
            \omega_+ \mp \omega_- & -(J^+{}^{\tt t} \pm J^-{}^{\tt t})
        \end{pmatrix} \ .
    \end{align}
    The corresponding $+\ii$-eigenbundles are given by
    \begin{align}
        L^\pm = {\rm Span}_{C^\infty(\cQ_1)^\IC}\lbrace &\mp \tfrac\partial{\partial x^1} +\ii\,\tfrac\partial{\partial x^2} \pm \ii\,\tfrac\partial{\partial y^1} + \tfrac\partial{\partial y^2}\,,\,\mp \, \ii\,\tfrac\partial{\partial x^1} - \tfrac\partial{\partial x^2} \pm \ii\, u\, \de x^1 + u\, \de x^2\,,\\
        & \quad \ii\,\tfrac\partial{\partial x^1} + \tfrac\partial{\partial y^1} + \ii\, u\, \de x^1 + u\, \de y^1\,,\, \ii\, \tfrac\partial{\partial x^2} \mp \ii\, \tfrac\partial{\partial y^1} \pm u\, \de x^1 + u\,\de y^2 \, \rbrace \ ,
    \end{align}
    which can be written as $$L^\pm = \gr\big(\exp(\alpha_\pm)\big|_{E_\pm}\big) \oplus \ann(E_\pm) \ , $$ where $$E_\pm = \red\rho{}_1(L^\pm) \qquad \text{and} \qquad \alpha_\pm = u \, (\ii\,\de x^1 + \de y^1)\wedge(\ii\,\de x^1 \pm \de x^2) \ . $$ 
    
    Hence the generating spinors are given by
    \begin{align}
        \psi^\pm = \exp(\alpha_\pm) \wedge \theta_\pm \ ,
    \end{align}
    where $$\theta_\pm = \pm\,  \ii\,\de x^1 + \de x^2 \pm \de y^1 - \ii\, \de y^2 $$ are closed one-forms. We then find
    \begin{align}
        \de_{\red H} \psi^\pm &= \de\alpha_\pm\wedge \exp(\alpha_\pm) \wedge \theta_\pm + \red H\wedge \exp(\alpha_\pm) \wedge \theta_\pm\\[4pt]
        &=\de\alpha_\pm\wedge  \theta_\pm + \red H\wedge  \theta_\pm\\[4pt]
        &=(\pm\,\ii\,u_1 + u_2)\,\de x^1 \wedge \de x^2 \wedge \de y^1 \wedge \de y^2 + (\mp\,\ii\,u_1 - u_2)\,\de x^1 \wedge \de x^2 \wedge \de y^1 \wedge \de y^2\\[4pt]
        &=0 \ ,
    \end{align}
    and therefore $\cK$ is a generalised Calabi-Yau structure on $(\IT\cQ_1,\red H)$.
    
    We now apply T-duality in the $y^2$ direction. Thus let $\cQ_2$ be a principal circle bundle over $\cB\times \sfT^1$, with coordinate $\tilde y^2$ on the fibres. On the correspondence space $$M = \cQ_1 \times_{\cB \times \sfT^1} \cQ_2$$ we take the $B$-field
    \begin{align}
        B = \de y^2 \wedge \varTheta
    \end{align}
    where $$\varTheta = \de \tilde y^2 - y^1\, u_2\, \de x^1 + y^1\, u_1\, \de x^2 \ . $$ Note that $\cQ_2$ therefore has Chern class
    \begin{align}
        [\de\varTheta] = \big[u_2\,\de x^1 \wedge \de y^1 - u_1\, \de x^2 \wedge \de y^1\big] \ \in \ \sfH^2(\cB\times\sfT^1 ,\IR) \ .
    \end{align}
    Then $\de B = \red H$, and there is a T-duality relation 
    \begin{equation}
\begin{tikzcd}
 R  \colon (\IT \cQ_1, \red H) \ar[r,dashed] & (\IT \cQ_2,0)
 \end{tikzcd}
\end{equation} 
which is generated as a $C^\infty(\cB\times\sfT^1)$-module by
    \begin{align}
        \set{(\tfrac\partial{\partial x^i}, \tfrac\partial{\partial x^i} - (-1)^i\,y^1\, u_{3-i}\, \tfrac\partial{\partial \tilde y^2})\,,\,(\de x^i, \de x^i)\,,\,(\tfrac\partial{\partial y^1}, \tfrac\partial{\partial y^1})\,,\, (\de y^1, \de y^1)\,,\, (\tfrac\partial{\partial y^2}, \varTheta)\,,\, (\de y^2, \tfrac\partial{\partial \tilde y^2})}_{i=1,2} \ .
    \end{align}
    
    By Proposition \ref{prop:Tdualspinor}, we may write the T-dual pure spinors as $$\tilde \psi^\pm = \exp(\tilde \alpha_\pm) \ , $$ where 
    \begin{align}
        \tilde \alpha_\pm = \mp\, u\, \de x^2 \wedge \de y^1 - \ii\,u\,\de x^1 \wedge \de y^1 \pm \ii\,u \, \de x^1 \wedge \de x^2 -\ii\,(\pm\, \ii\,\de x^1 + \de x^2 \pm \de y^1)\wedge \varTheta \ .
    \end{align}
    By T-duality, this defines a generalised Calabi-Yau structure on $(\IT\cQ_2,0)$, i.e. $\de \tilde \psi_\pm =0$.
    We may decompose $\tilde \alpha_\pm = \pm \tilde B + \ii\, \tilde\omega_\pm$, where
    \begin{align}
        \tilde B = y^1\, u_1\,\de x^1 \wedge \de x^2 - u\, \de x^2 \wedge \de y^1 + \de x^1 \wedge \de \tilde y^2 \ .
    \end{align}
    Then $\de \tilde B = 0$ and hence $\de \tilde \omega_\pm=0$, so that $\tilde \psi_\pm$  define $\pm\tilde B$-twists of genuine symplectic structures on~$\cQ_2$.
\end{example}

\medskip

\appendix

\section{A Note on Fibred Integration}\label{appendix}

In the main text we use some graded commutativity properties of fibred integration which we summarise here. Let $M$ be a fibred manifold with $\dim_\IR M = n$ and oriented fibres $\cF$ of codimension $n-k$. Denote by $\varpi:M\to M/\cF$ the smooth quotient map.

The fibred integral over $\cF$ of a $p$-form $\omega$ on $M$ is the $(p{-}k)$-form on $M/\cF$ defined as
\begin{align}
    \left (\int_{\cF}\, \omega\right )(X_1,\dots,X_{p-k}) = \int_{\cF}\, \mu_\omega \ ,
\end{align}
where $\mu_\omega$ is a fibre volume form such that
\begin{align}
    \omega(Y_1,\dots,Y_k, X^*_1,\dots, X^*_{p-k}) = \mu_\omega(Y_1,\dots,Y_k)
\end{align}
for vertical vector fields $Y_1,\dots,Y_k$, and vector fields $X_1,\dots,X_{p-k}\in T(M/\cF)$ with horizontal lifts $X_1^*,\dots,X_{p-k}^*\in TM$, i.e. $\varpi_*(X_i^*)=X_i$. 

Thus if $X\in T(M/\cF)$, then
\begin{align}
   \left ( \iota_X \int_{\cF}\,\omega \right)(X_2,\dots,X_{p-k}) = \left (\int_{\cF}\, \omega \right)(X,X_2,\dots,X_{p-k}) \ .
\end{align}
On the other hand
\begin{align}
    \left (\int_{\cF}\, \iota_{ X^*} \omega \right ) (X_2,\dots,X_{p-k}) = \int_{\cF}\, \mu_{\iota_{X^*}\omega}
\end{align}
where 
\begin{align}
    \mu_{\iota_{X^*}\omega}(Y_1,\dots,Y_k) = (\iota_{ X^*}\omega)(Y_1,\dots,Y_k,  X^*_2,\dots, X^*_{p-k}) = (-1)^k\, \omega(Y_1,\dots,Y_k,  X^*,  X^*_2,\dots,  X^*_{p-k}) \ .
\end{align}
Thus
\begin{align}
    \iota_X \int_\cF\, \omega = (-1)^k\, \int_\cF\, \iota_{ X^*} \omega \ .
\end{align}

Similarly
\begin{align}
    \xi \wedge \int_\cF\, \omega = (-1)^k\, \int_\cF\, \varpi^* \xi \wedge \omega \qquad \text{and} \qquad \de \int_\cF\, \omega = (-1)^k\, \int_\cF\, \de \omega
\end{align}
for any form $\xi$ on $M/\cF$. 

In particular, this has the consequence of introducing a grading on the smooth Fourier-Mukai transform $\varrho$ in \eqref{eqn:FMTproperties}.

\bibliographystyle{ourstyle}
\bibliography{bibprova5}

\end{document}